\newtheorem{theorem}{\bf Theorem}
\newtheorem{claim}{\bf Claim}
\newtheorem{remark}[theorem]{\bf Remark}
\numberwithin{equation}{section}
\newcommand{\thicktilde}[1]{\mathbf{\tilde{\text{$#1$}}}}
\newcommand{\la}{\mathrm{b}}
\DeclareMathSymbol{\gimel}{\mathord}{MnSyC}{"B1}
\newcommand{\od}[2]{\frac{d#1}{d#2}}
\newcommand{\su}{\mathtt{u}}
\DeclareMathOperator{\saux}{\mathcal{W}}
\newcommand{\sgnM}{\gimel}
\newcommand{\poly}[1]{\mathcal{#1}}
\newcommand{\recip}[1]{{#1}^\dagger}
\newcommand{\old}{{\mathrm{inc}}}
\newcommand{\redc}{\Lambda}
\newcommand{\pertbn}[1]{\ensuremath{\breve{#1}}}
\let\oldmarginpar\marginpar\renewcommand\marginpar[1]{\-\oldmarginpar[\raggedleft\footnotesize #1]{\raggedright\footnotesize #1}}
\begin{document}

\title{Discrete scattering by two staggered semi-infinite defects: reduction of matrix Wiener--Hopf problem\thanks{Revised submission to `Journal of Engineering Mathematics'}}
\author{Basant Lal Sharma\thanks{Department of Mechanical Engineering, Indian Institute of Technology Kanpur, Kanpur, U. P. 208016, India ({bls@iitk.ac.in}).}}
\date{}
\maketitle

\begin{abstract}
{As an extension of the discrete Sommerfeld problems on lattices, the scattering of a time harmonic 
wave 
is considered {on an infinite square lattice} when there exists a pair of semi-infinite cracks or rigid constraints. Due to {the} presence of stagger, also called offset, in the alignment of the defect edges the asymmetry in the problem leads to a matrix Wiener--Hopf kernel that cannot be reduced to scalar Wiener--Hopf in {any known} way. In the corresponding continuum model the same problem is {a} well known {formidable one which} possesses certain special structure with exponentially growing elements on {the} diagonal of kernel. {From this viewpoint} the present paper tackles a discrete analogue of the same by reformulating the Wiener--Hopf problem and reducing it to a {finite} set of linear algebraic equations; the coefficients of which can be found by an application of {the} scalar Wiener--Hopf factorization.
The {considered} discrete paradigm involving lattice waves is relevant for modern applications of mechanics and physics
at small length scales.}
\end{abstract}


\section*{Introduction}
The Wiener--Hopf 
technique \cite{Wiener,Paley} has many applications in understanding singular phenomena in mechanics and physics \cite{Mikhlin,Danielebook}. 
The scattering of waves in electro-magnetism, acoustics, and allied subjects \cite{williams_1954,jones_1952,jull1973aperture,johansen1965radiation,crease1958propagation,james1979double,kapoulitsas1984propagation,michaeli1985new,michaeli1996asymptotic} is one such wherein
the presence of sharp edges and {an} assortment of mixed boundary conditions 
allows an application of the method conceived by Wiener and Hopf.
A typical wave diffraction phenomenon 
that has been an interesting problem for researchers \cite{Heins1,Heins2,cheney1951diffraction,jones1973double,Jones3planes,Jones1,meister1996factorization,thompson2005mode,Abrahams0,Abrahams2,Abrahams4} involves an incident time harmonic wave on more than one semi-infinite parallel rows with either Neumann or Dirichlet condition.
Apart from some special cases \cite{Heins, Levine, Carlson, Levine1, Levine2,Heinslim,MeisterRottbrand,Meistersys1,Meistersys2,daniele1984solution,Khrapkov2} such multiple diffraction problems often give rise to the matrix {{Wiener--Hopf}} kernels for which there is, as yet, no general constructive method of factorization \cite{KreinGoh,rogosin2015constructive}. 
One such canonical problem is the determination of the sound field scattered by two semi-infinite parallel plates whose edges are not aligned. 
The asymmetry in this physical problem leads to the occurrence of certain exponential phase factors in the {{Wiener--Hopf}} kernel, for instance, the 
main term has form \cite{Abrahams1},
\begin{equation}\begin{split}
\begin{bmatrix}
1&{e}^{-h\gamma}e^{i\xi a}\\
{e}^{-h\gamma}e^{-i\xi a}&1
\end{bmatrix},
\label{Abrakernel}
\end{split}\end{equation}
for $\xi$ belonging to a {(infinite)} strip surrounding the real line in the complex plane, where $a$ is the `horizontal' offset between the edges along the plate direction, while $h$ is the vertical spacing between the plates, and $\gamma{\,:=}\gamma(\xi)=\sqrt{\xi^2-k^2}$ with $k$ as incident wave number.
{A} few decades ago, within the {{Wiener--Hopf}} formulation \cite{Noble} for finding the scattered velocity potential for this problem, a method was announced \cite{Abrahams1} that successfully reduced \eqref{Abrakernel} to solving a complex linear functional {{Wiener--Hopf}} equation. As a generalization, a method for factorizing such general class of matrix kernels, with exponential phase factors, has been given in \cite{AbrahamsExpo}.
{The subtlety behind such Wiener--Hopf kernel factorization has been investigated in several accounts as well, for example, see \cite{GohbergKaashoek}.}

The present paper is a discrete analogue of the work on diffraction by parallel staggered plates \cite{Abrahams1} {(see \cite{sharma2019wienerhopf} for a catalogue of many discrete scattering problems)}.
{Within the discrete scattering theory, it has been recently shown that} in different types of lattice models, certain mechanical analogues of soft or hard screens, namely, rigid constraints or cracks \cite{Bls0,Bls1,Bls2,Bls3,Bls4,Bls5,Bls6,Bls9s,Bls8arrayfinite}, respectively, {as well as steps on lattice surfaces \cite{Bls10mixed,Bls9s,Victor_Bls_surf2}} can be analyzed. 
Indeed, the problem of scattering by defects in arbitrary lattices has rich history \cite{Lifshitz, maradudin,Maradudinbook}.
{Modern applications of the mechanical models are also relevant to physics at small length scales as exemplified by the interest in transport across channels involving phononic \cite{Bls5k_tube} and electronic signal \cite{Bls5c_tube,Bls5ek_tube,Bls5c_tube_media}. From the viewpoint of specific geometry of staggered edges under consideration,}
it has been found \cite{GMthesis,Bls8staggerpair_asymp} that the discrete scattering problem involving a pair of staggered cracks or rigid constraints also involves a factor in the {{Wiener--Hopf}} matrix kernel, a counterpart of \eqref{Abrakernel} belonging to a formidable class of kernels \cite{rogosin2015constructive}, of the form
\begin{equation}\begin{split}
\begin{bmatrix}
1&{{\lambda}}^{{\mathtt{N}}}{{{z}}^{-{{\mathtt{M}}}}}\\
{{\lambda}}^{{\mathtt{N}}}{{{z}}^{{\mathtt{M}}}}&1
\end{bmatrix}
\label{myAbrakernel}
\end{split}\end{equation}
for ${z}$ belonging to an annulus surrounding the unit circle in the complex plane.
\begin{figure}[h!]
\centering
{(a)}\includegraphics[width=.5\textwidth]{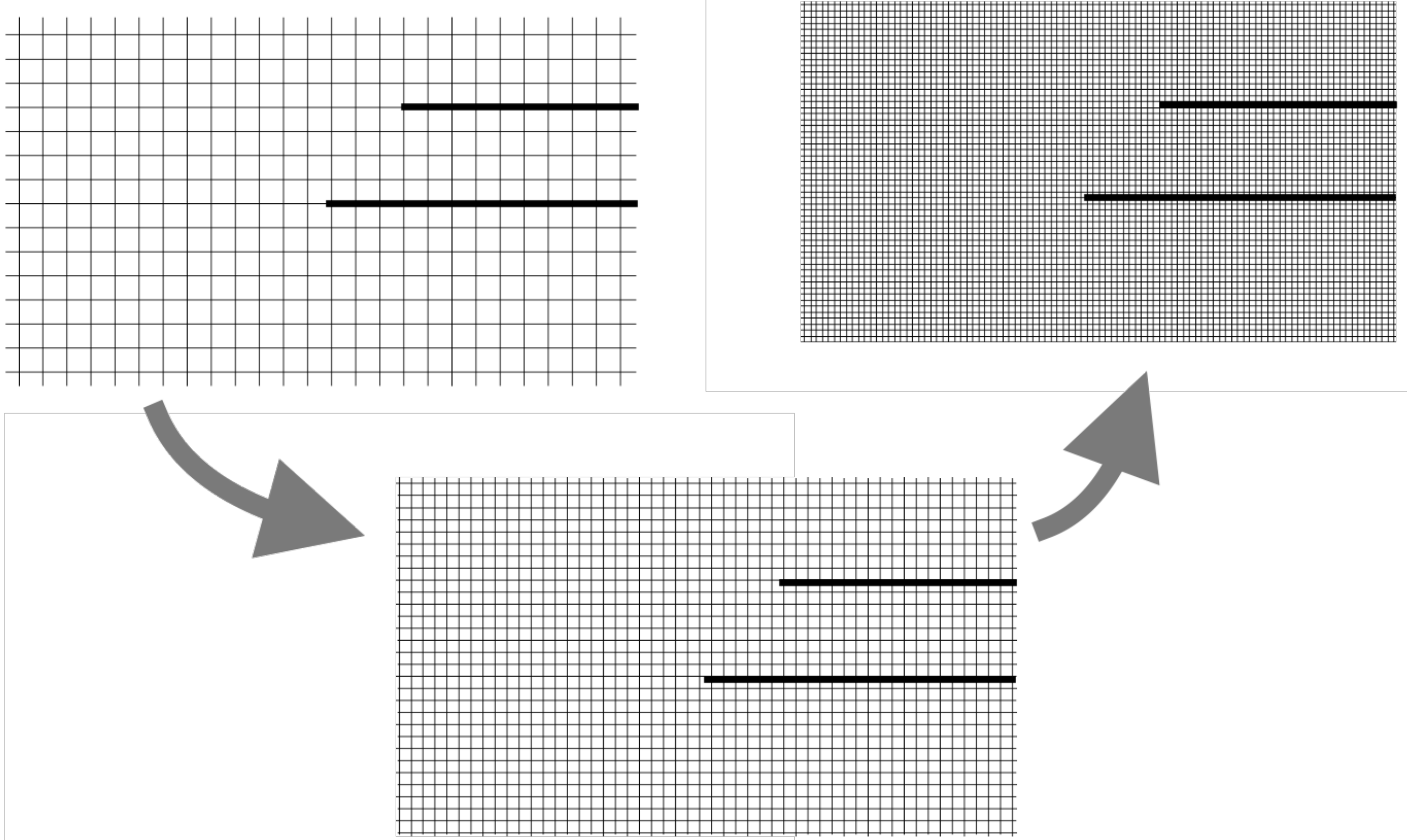}{(b)}\includegraphics[width=.4\textwidth]{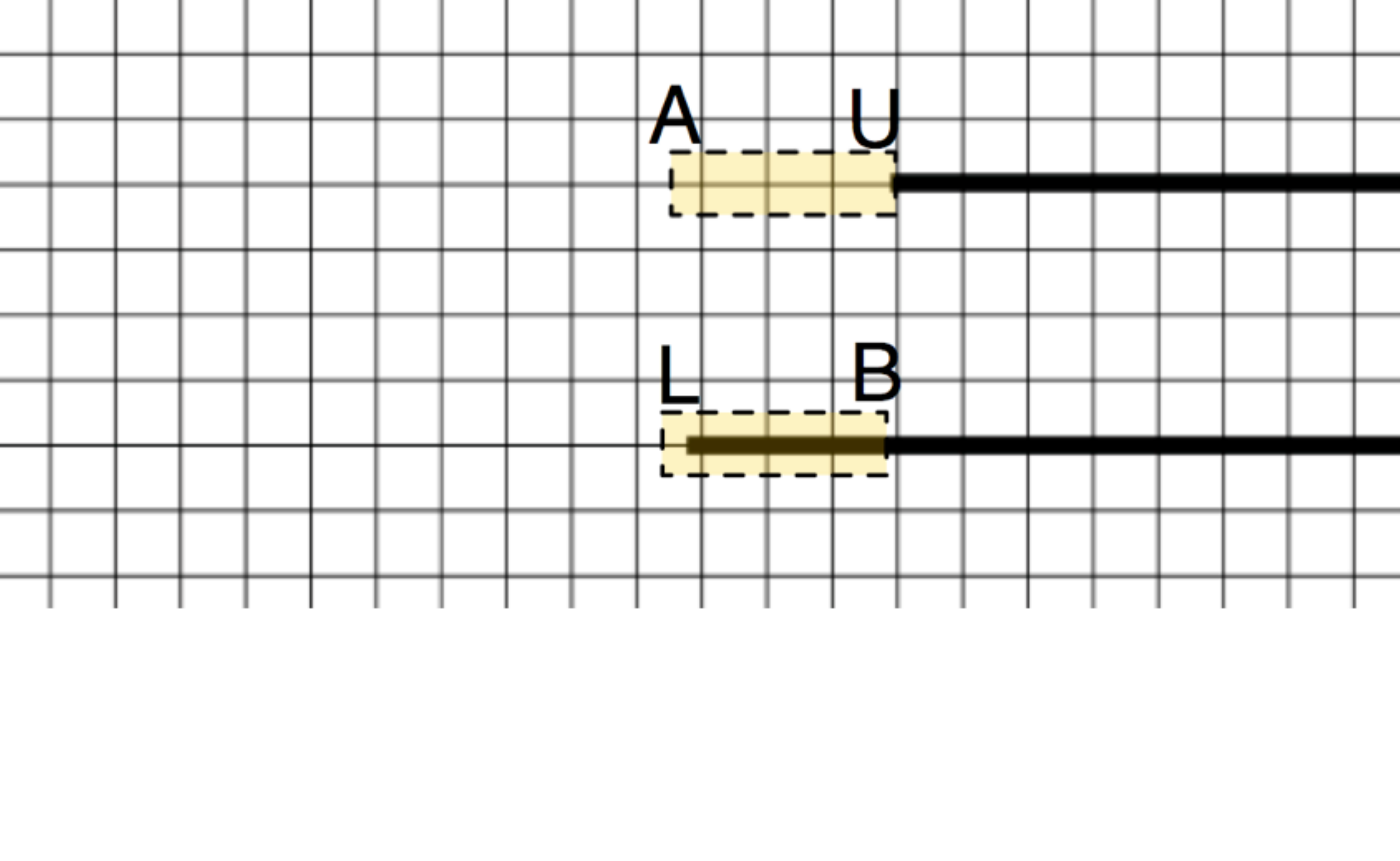}
\caption{{(a) Schematic of the lattice model with a shrinking nearest neighbour distance while keeping the vertical separation between edges and offset between tips as same. (b) Schematic of the two portions of near the staggered edges where the relevant wave field is evaluated.}}
\label{scalingschematic_2}
\end{figure}
In equation \eqref{myAbrakernel}, ${\lambda}{\,:=}{\lambda}({z})$ is the discrete analogue of $\gamma$ and ${z}$ that of $\xi$ \cite{Bls31}. 
In fact, as the ratio of square lattice grid spacing to the incident wavelength, i.e., $\la\times k=:{\upepsilon}\to0$, \eqref{Abrakernel} is recovered with $a={\mathtt{M}}{\upepsilon}, h={\mathtt{N}}{\upepsilon}, {\lambda}=e^{-{\upepsilon} \gamma}, {z}=e^{-i{\xi}{{\upepsilon}}}$ {(see Fig. \ref{scalingschematic_2}(a) for the lattice structure variation as the value of ${\upepsilon}$ is made smaller by halving it in each stage)}. 
In \cite{Bls8staggerpair_asymp}, {which is based on some results obtained in \cite{GMthesis}},  an asymptotic method \cite{Mishuris2014} was applied to factorize the kernel \eqref{myAbrakernel}. 
The primary precursor to the success of {such} a method 
appears to be the possibility of an exact solution for the zero offset case \cite{Bls8pair1}.
In this paper, following this line of reasoning, it is shown that the {{Wiener--Hopf}} problem can be reformulated so that eventually only a system of linear algebraic equation needs to be solved whose coefficients can be obtained using scalar {{Wiener--Hopf}} method \cite{Noble}. This is reminiscent of the distinguished work of \cite{Abrahams1,AbrahamsExpo}. 

{As the last point in this introduction, but not the least, it is noted that}
even though the negative offset {case of discrete scattering due to the staggered defects is physically equivalent to the case with} the positive offset, for the purpose of completeness both cases are studied in the paper. {The two cases of offset signs can be mapped into each other by flipping the structure. Moreover, after solving the reduced algebraic equations for the case of offsets with same magnitude, but opposite sign, paves an alternate way of simultaneous evaluation of certain relevant entity derived from the wave field on both segments AU and LB of the lower edge and upper edge, respectively, depicted in Fig. \ref{scalingschematic_2}(b).}

{The paper is organized as follows. In \S\ref{sqLattFT}, the lattice structure is described and the scattering problem is posed for both kinds of staggered edges. A general solution of the scattered field is also stated for the portion between the edges as well as above and below them in terms of a minimal set of unknown functions.
In \S\ref{WHeqformK}, the scattering problem associated with a pair of staggered cracks is analyzed within a Wiener--Hopf formulation and the final set of finite number of linear algebraic equations is derived based on scalar Wiener--Hopf factors of certain characteristic functions.
In \S\ref{WHeqformC}, the scattering problem arising due to rigid constraints is attended by the same method.
\S\ref{briefnumerics} provides some graphical results that enable a comparison of computations based on analytical method vis-a-vis direct numerical method, and also
presents a discussion of two special aspects of the analysis and resulting calculations; this is followed by the concluding remarks. The paper also includes auxiliary calculations, definitions and derivations in five appendices.}

{\bf Notation:}
Let $\mathbb{Z}$ stand for {the} set of integers. {Let} ${\mathbb{Z}^+}$ {denote} {the} non-negative integers and ${\mathbb{Z}^-}$ {denote} {the} negative integers, {i.e.,
\begin{equation}
{\mathbb{Z}^+}=\{0, 1, 2, \dotsc\},\quad
{\mathbb{Z}^-}=\{-1, -2, \dotsc\}.
\label{ZpZn}
\end{equation}}
Let {a discrete interval be denoted by}
{\begin{equation}
\mathbb{Z}_{a}^{b}=\{a, a+1, \dotsc, b\}\subset\mathbb{Z}.
\label{Zab}
\end{equation}}
{In this paper, it is supposed that the} letter ${{\mathcal{H}}}$ stands for the discrete Heaviside function: 
{\begin{equation}
{{\mathcal{H}}}({{\mathtt{x}}})=0, {{\mathtt{x}}}<0\text{ and }{{\mathcal{H}}}({{\mathtt{x}}})=1, {\mathtt{x}}\ge0.
\label{Hstep}
\end{equation}}
The discrete Fourier transform, simply addressed as Fourier transform, of a sequence $\{{\su}_m\}_{m\in\mathbb{Z}}$ is denoted by ${\su}^{{\mathrm{F}}}$ \cite{Bls0}; for instance, for the field $\{{\su}_{{\mathtt{x}}, {\mathtt{y}}}\}_{{\mathtt{x}}\in\mathbb{Z}}$ at given ${\mathtt{y}}\in\mathbb{Z}$, the Fourier transform is defined by \cite{jury, Silbermann}
\begin{equation}\begin{split}
{\su}_{{\mathtt{y}}}^{{\mathrm{F}}}{\,:=}{\su}_{{\mathtt{y}}}^{-}+{\su}_{{\mathtt{y}}}^{+}, \quad {\su}_{{\mathtt{y}}}^{\pm}=\sum\limits_{{{\mathtt{x}}}\in\mathbb{Z}}{{z}}^{-{{\mathtt{x}}}}{{\mathcal{H}}}(\pm{{\mathtt{x}}}-{\frac{1}{2}}\pm{\frac{1}{2}}){\su}_{{{\mathtt{x}}}, {\mathtt{y}}}.
\label{discreteFT}\end{split}\end{equation}
In general, the {decoration} $-$ (resp. $+$) is associated with a complex function which is analytic inside (resp. outside) and on an annulus {${{\mathscr{A}}}$ in the complex plane (shown in the schematic of Fig. \ref{schematicannulus_2})}.
\begin{figure}[h!]
\centering
\includegraphics[width=.6\textwidth]{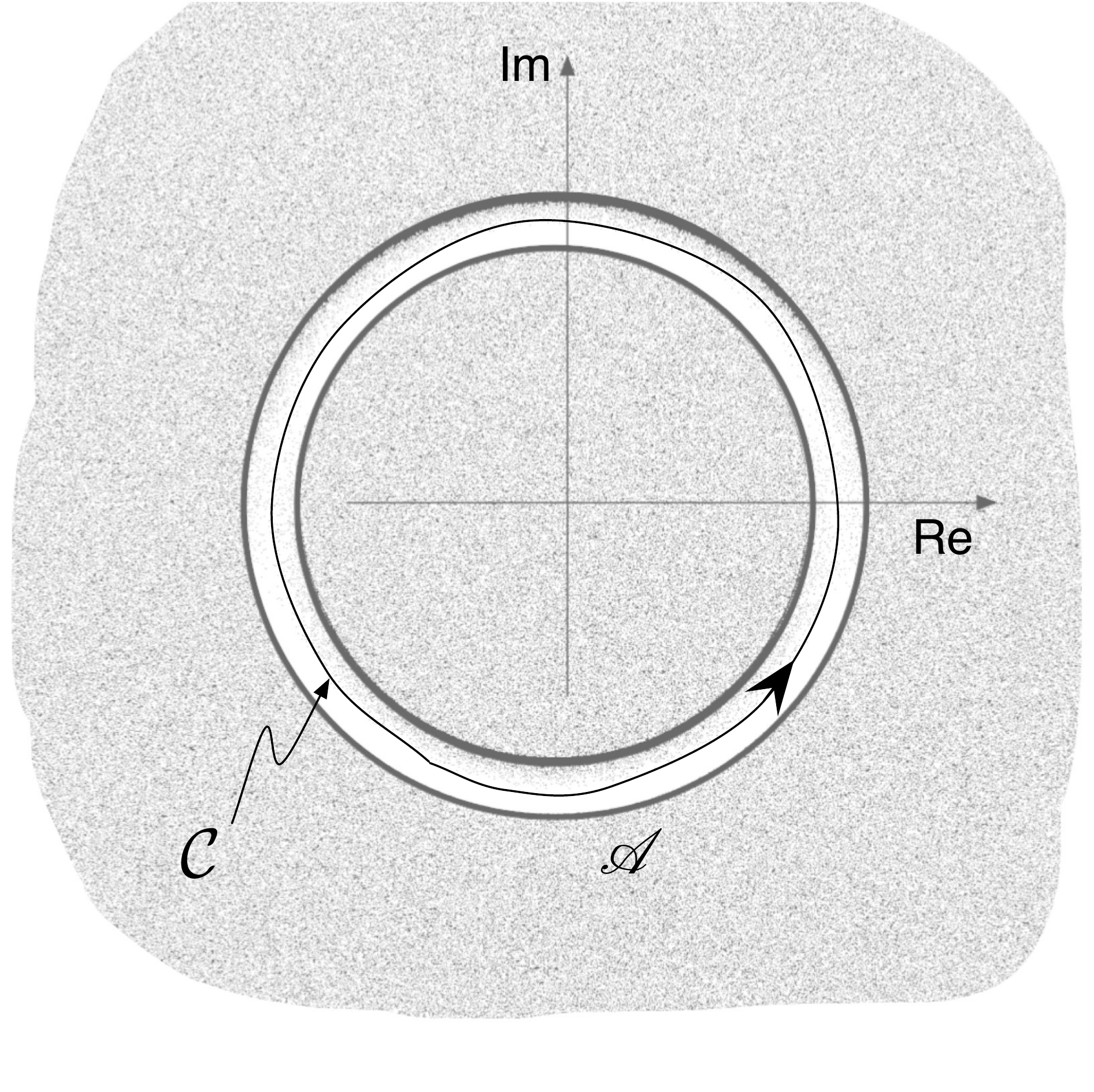}
\caption{{Schematic of the annulus ${{\mathscr{A}}}$ in the complex plane and a contour ${\mathcal{C}}$ for the inverse Fourier transform.}}
\label{schematicannulus_2}
\end{figure}
{Eventually, the inverse Fourier transform leads to the following expression of ${\su}_{{\mathtt{x}},{\mathtt{y}}}$ in terms of ${\su}_{{\mathtt{y}}}^{{\mathrm{F}}}$,
\begin{equation}\begin{split}
{\su}_{{\mathtt{x}},{\mathtt{y}}}=\frac{1}{2\pi i}\oint_{{\mathcal{C}}} {\su}_{{\mathtt{y}}}^{{\mathrm{F}}}({z}){{z}}^{{\mathtt{x}}-1}d{z}, 
 \quad\quad {\mathtt{x}}\in\mathbb{Z}, 
\label{uFTinv}
\end{split}\end{equation}
where ${\mathcal{C}}$ is a counter-clockwise contour in the annulus ${\mathscr{A}}$ where both ${\su}_{{\mathtt{y}}}^{+}$ and ${\su}_{{\mathtt{y}}}^{-}$ are analytic.
{\em In the paper, the {Wiener--Hopf} formulation is posed on the annulus ${\mathscr{A}}$, while this fact is not emphasized repeatedly for brevity, and all of the entities and terms appearing in {Wiener--Hopf} equation are analytic on ${\mathscr{A}}$ and the {Wiener--Hopf} kernel is regular (non-vanishing) too on ${\mathscr{A}}$.}}
{The additive {Wiener--Hopf} factors are denoted by {\em super}script $\pm$ while multiplicative ones by {\em sub}script $\pm$, i.e., for a suitable function $f$, 
$$
f=f^-+f^+,\quad f=f_+f_-.
$$}
The symbol ${{z}}$ is exclusively used throughout as a complex variable for the Fourier transform. 
{However, in order to avoid a cluttering of symbols, at several places {\em the argument $z$ of some relevant complex functions has been suppressed while in the same equation it appears for some other function; the dependence on $z$ is clear from the context, however}. In this paper, following the traditional choice,}
the square root function, $\sqrt{\cdot}$, has the usual branch cut in the complex plane running from $-\infty$ to $0$ {on the real axis}. 
To avoid cumbersome notation {and a host of supplementary statements}, whenever a series is provided in the paper, it is assumed that it describes an analytic function, i.e., it converges in the specified region of the complex plane {(which usually often happens to include the fixed annulus ${\mathscr{A}}$ in the complex plane)}.

\section{Square lattice model}
\label{sqLattFT}
Let
$\ensuremath{\hat{\mathbf{e}}}_1$, $\ensuremath{\hat{\mathbf{e}}}_2$ be the {standard} unit basis vectors in $\mathbb{R}^2$ {($\mathbb{R}^2$ can be considered as the plane corresponding to $x_3=0$ in three dimensional physical space described by $\mathbb{R}^3$)}.
Consider an infinite square lattice, denoted by ${\mathfrak{S}}$, of identical particles,
$$
{\mathfrak{S}}\simeq\mathbb{Z}^2=\{({\mathtt{x}},{\mathtt{y}})| {\mathtt{x}}\in\mathbb{Z}, {\mathtt{y}}\in\mathbb{Z}\}.
$$
{From a visualization viewpoint, above is understood according to the notation, $x{\hat{\mathbf{e}}}_1+y{\hat{\mathbf{e}}}_2={\mathtt{x}}\la {\hat{\mathbf{e}}}_1+{\mathtt{y}}\la {\hat{\mathbf{e}}}_2$ for the position vector of a typical point that belongs to ${\mathfrak{S}}$, here ${\mathtt{x}}\in\mathbb{Z}, {\mathtt{y}}\in\mathbb{Z}$. In the rest of the paper, the notation $({\mathtt{x}}, {\mathtt{y}})$ will be used to identify the corresponding site in ${\mathfrak{S}}$ in accordance with this description.}
The out-of-plane displacement of a particle in ${\mathfrak{S}}$, indexed by its {\em coordinates} $({\mathtt{x}}, {\mathtt{y}})\in{{\mathbb{Z}^2}}$, is described by ${\su}_{{\mathtt{x}}, {\mathtt{y}}}\in\mathbb{C}$. 
{In vectorial notation, the displacement is ${\su}_{{\mathtt{x}}, {\mathtt{y}}}{\hat{\mathbf{e}}}_3$, (where ${\hat{\mathbf{e}}}_3$ is the unit vector orthogonal to ${\hat{\mathbf{e}}}_1$ and ${\hat{\mathbf{e}}}_2$) in physical space $\mathbb{R}^3$, for the particle located at the site of ${\mathfrak{S}}$ with the position vector $x{\hat{\mathbf{e}}}_1+y{\hat{\mathbf{e}}}_2+0{\hat{\mathbf{e}}}_3$. However, in the rest of the paper, the three dimensional nature of the physical problem, or its visualization, does not play any role in the mathematical analysis and therefore such remarks will not appear henceforth (more so because there are other strictly two dimensional geometries which entertain problems of similar nature \cite{Bls5c_tube,Bls5ek_tube,Bls5c_tube_media})}.
{On other other hand, the more important part of the `physical' assumption is that} each particle in ${\mathfrak{S}}$ interacts with {(atmost)} its four nearest neighbors, separated by an in-plane spacing $\la$ (see Fig. \ref{Fig1}), through `linearly elastic' identical (massless) bonds. 

\begin{figure}[h!]
\centering
\includegraphics[width=\textwidth]{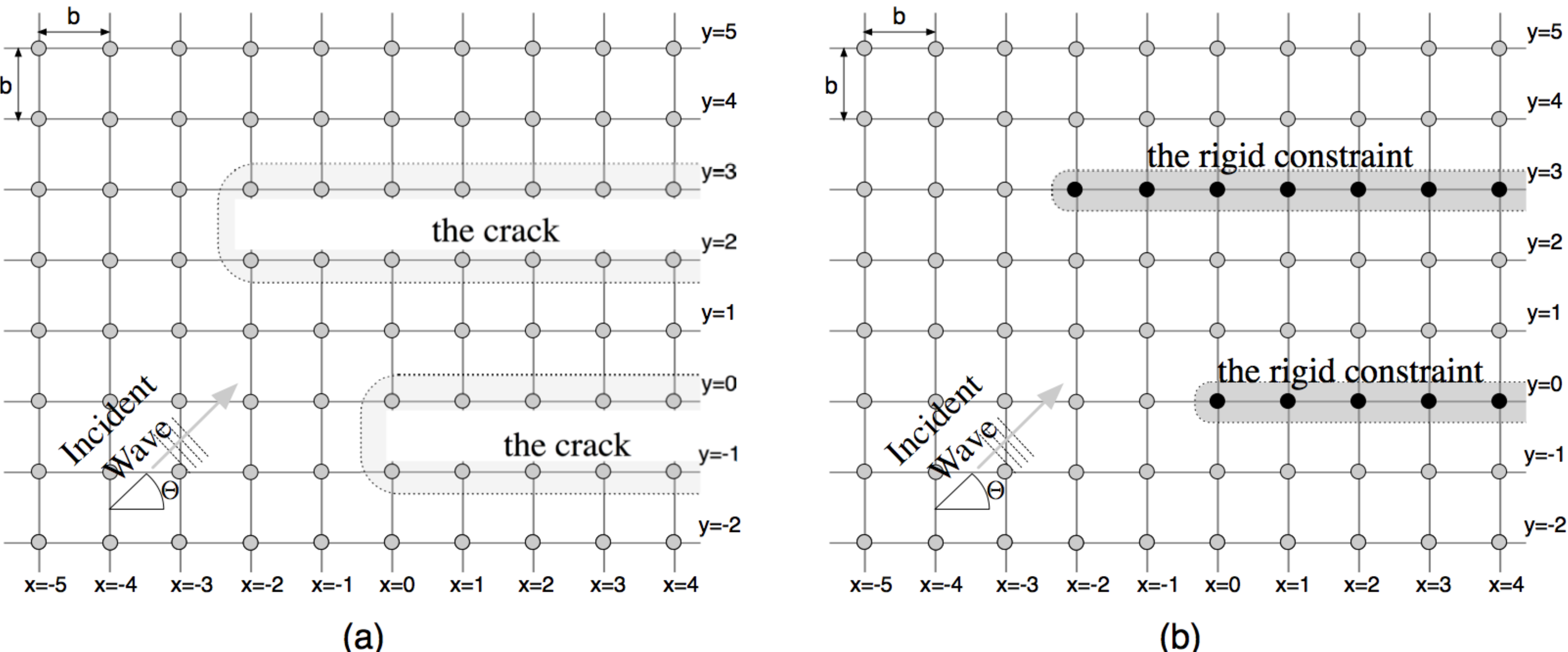}
\caption{(a) Square lattice ${\mathfrak{S}}$ with one semi-infinite crack between ${\mathtt{y}}=0$ and ${\mathtt{y}}=-1$, i.e., broken bonds at all ${\mathtt{x}}\ge0$, and another such crack above it between ${\mathtt{y}}={{\mathtt{N}}}=3$ and ${\mathtt{y}}={{\mathtt{N}}}-1=2$ with broken bonds at all ${\mathtt{x}}\ge {{\mathtt{M}}}=-2$.
(b) Square lattice ${\mathfrak{S}}$ with one semi-infinite rigid constraint at ${{\mathtt{y}}}=0$ with zero displacement at all ${{\mathtt{x}}}\ge0$ and another such constraint above it at ${{\mathtt{y}}}={{\mathtt{N}}}=3$ and all ${{\mathtt{x}}}\ge {\mathtt{M}}=-2$.}
\label{Fig1}
\end{figure}

{From the viewpoint of discrete scattering on the assumed square lattice model \cite{Martin}, the two types of scatterer geometries considered in this paper are depicted in Fig. \ref{Fig1}.} 
The cracks exist {between} two rows of particles as schematically shown in Fig. \ref{Fig1}(a), where the specific crack faces occur at ${\mathtt{y}}=0, -1$ and ${\mathtt{y}}={\mathtt{N}}, {\mathtt{N}}-1$. 
Also in another problem, semi-infinite rigid constraints are assumed to exist at two rows of particles as schematically shown in Fig. \ref{Fig1}(b), where the specific rows are indexed by ${\mathtt{y}}=0$ and ${\mathtt{y}}={\mathtt{N}}$. 
The semi-infinite crack is modeled by assuming zero spring constant between the crack faces \cite{Bls0}
while the rigid constraint is modeled by assuming zero total displacement at constrained sites \cite{Bls1}. 
Let 
\begin{equation}\begin{split}
{{\Sigma}}_{{k}}=\{({\mathtt{x}}, {\mathtt{y}})\in{{\mathbb{Z}^2}}: {\mathtt{x}}\ge0, {\mathtt{y}}=0, -1\}\cup\{({\mathtt{x}}, {\mathtt{y}})\in{{\mathbb{Z}^2}}: {\mathtt{x}}\ge{\mathtt{M}}, {\mathtt{y}}={\mathtt{N}}, {\mathtt{N}}-1\},
\label{defectK}
\end{split}\end{equation}
shown as gray dots in Fig. \ref{Fig1}(a).
Above can be interpreted as the union of both, upper and lower, crack faces. 
Let ${{\Sigma}}_{{c}}$ denote the set of all lattice sites in ${\mathfrak{S}}$ that are rigidly constrained, i.e.,
\begin{equation}\begin{split}
{{\Sigma}}_{{c}}=\{({\mathtt{x}}, 0)\in{{\mathbb{Z}^2}}: {\mathtt{x}}\ge0\}\cup\{({\mathtt{x}}, {\mathtt{N}})\in{{\mathbb{Z}^2}}: {\mathtt{x}}\ge{\mathtt{M}}\},
\label{defectC}
\end{split}\end{equation}
shown as black dots in Fig. \ref{Fig1}(b).
{Following \cite{Bls0,Bls1}}, henceforth, the two dimensional lattice ${\mathfrak{S}}$ is considered, with each particle of unit mass, and, an interaction with only its four nearest neighbours through bonds with a spring constant ${1}/{\la^2}$. 
On the square lattice model ${\mathfrak{S}}$ described thus far, a time harmonic lattice wave is considered incident (see Fig. \ref{Fig1}) and its diffraction by two cracks or rigid constraints is studied.

The equation of motion of particles in the lattice ${\mathfrak{S}}$, while excluding the perturbed sites ${{\Sigma}}$, i.e., constrained set ${{\Sigma}}_{{c}}$ or crack ${{\Sigma}}_{{k}}$, and suppressing an explicit dependence of {${\su}_{{\mathtt{x}}, {\mathtt{y}}}$} on time $t$, is
\begin{equation}\begin{split}
\od{^2}{t^2}{\su}_{{\mathtt{x}}, {\mathtt{y}}}&=\frac{1}{\la^2}{\triangle}{\su}_{{\mathtt{x}}, {\mathtt{y}}}, 
\quad\quad({\mathtt{x}}, {\mathtt{y}})\in{{\mathbb{Z}^2}}\setminus{\Sigma},\\
\text{where }
{\triangle}{\su}_{{\mathtt{x}}, {\mathtt{y}}}&{\,:=}{\su}_{{\mathtt{x}}+1, {\mathtt{y}}}+{\su}_{{\mathtt{x}}-1, {\mathtt{y}}}+{\su}_{{\mathtt{x}}, {\mathtt{y}}+1}+{\su}_{{\mathtt{x}}, {\mathtt{y}}-1}-4{\su}_{{\mathtt{x}}, {\mathtt{y}}}, \quad({\mathtt{x}}, {\mathtt{y}})\in{{\mathbb{Z}^2}}.
\label{dimnewtoneq}
\end{split}\end{equation}

Suppose ${\su}^{{\mathrm{inc}}}$ describes the {\em incident lattice wave} with frequency ${\omega}$ (in the pass band of lattice ${\mathfrak{S}}$ \cite{Brillouin}) and a {\em lattice wave vector} $({\upkappa}_x, {\upkappa}_y)$.
Specifically, it is assumed that ${\su}^{{\mathrm{inc}}}$ is given by the expression
\begin{equation}\begin{split}
{\su}_{{\mathtt{x}}, {\mathtt{y}}}^{{\mathrm{inc}}}{\,:=}{{\mathrm{A}}}e^{i{\upkappa}_x {\mathtt{x}}+i{\upkappa}_y {\mathtt{y}}-i{\omega} t}, 
\quad\quad({\mathtt{x}}, {\mathtt{y}})\in{{\mathbb{Z}^2}}, 
\label{uinc}
\end{split}\end{equation}
where ${{\mathrm{A}}}\in\mathbb{C}$ is constant. 
In the remaining text, the explicit time dependence factor, $e^{-i{\omega} t}$, is suppressed.

By virtue of \eqref{dimnewtoneq} in intact lattice ${\mathfrak{S}}$, taking ${\su}={\su}^{{\mathrm{inc}}}$, with ${\upomega}{\,:=} \la{\omega},$ it is easy to see that ${\upomega}, {\upkappa}_x,$ and ${\upkappa}_y$ satisfy the dispersion relation \cite{Brillouin,Bls0} 
\begin{equation}\begin{split}
{\upomega}^2
=4(\sin^2{\frac{1}{2}}{\upkappa}_x+\sin^2{\frac{1}{2}}{\upkappa}_y), \quad\quad({\upkappa}_x, {\upkappa}_y)\in [-\pi, \pi]^2.
\label{dispersion}
\end{split}\end{equation}
The lattice wave \eqref{uinc} is diffracted by the crack tips and the rigid constraint tips. In order to avoid non-decaying wavefronts and associated technical issues, following a traditional choice in diffraction theory \cite{Bouwkamp, Noble,Bls0,Bls1}, it is assumed that
\begin{equation}\begin{split}
{\upomega}={\upomega}_1+i{\upomega}_2, \quad
{\upomega}_2>0.
\label{complexfreq}
\end{split}\end{equation}
Due to the dispersion relation \eqref{dispersion}, and \eqref{complexfreq}, ${\upkappa}_x$ and ${\upkappa}_y$ are also complex numbers. {It is noteworthy that the existence of annulus of Fig. \ref{schematicannulus_2}, which allows a convenient Wiener--Hopf formulation analyzed in the paper, owes to this assumption \eqref{complexfreq}}. 
The total displacement field ${\su}^{{t}}$ satisfies the discrete Helmholtz equation
\begin{subequations}\begin{eqnarray}
{\triangle}{\su}^{{t}}_{{\mathtt{x}}, {\mathtt{y}}}+{\upomega}^2{\su}^{{t}}_{{\mathtt{x}}, {\mathtt{y}}}=0, 
\quad({\mathtt{x}}, {\mathtt{y}})\in{{\mathbb{Z}^2}}\setminus{{\Sigma}}, \label{dHelmholtz}\\
\text{where }{\su}_{{\mathtt{x}}, {\mathtt{y}}}^{{t}}={\su}_{{\mathtt{x}}, {\mathtt{y}}}^{{\mathrm{inc}}}+{\su}_{{\mathtt{x}}, {\mathtt{y}}}, ({\mathtt{x}}, {\mathtt{y}})\in{{\mathbb{Z}^2}},
\label{utsplit}
\end{eqnarray}\label{dHelmholtzfull}\end{subequations}
and certain other equations on the defects ${\Sigma}$ (\eqref{defectK}, \eqref{defectC}) {as it is clear in the sequel where individual problem formulation appears}.
\begin{figure}[h!]
\centering
\includegraphics[width=\textwidth]{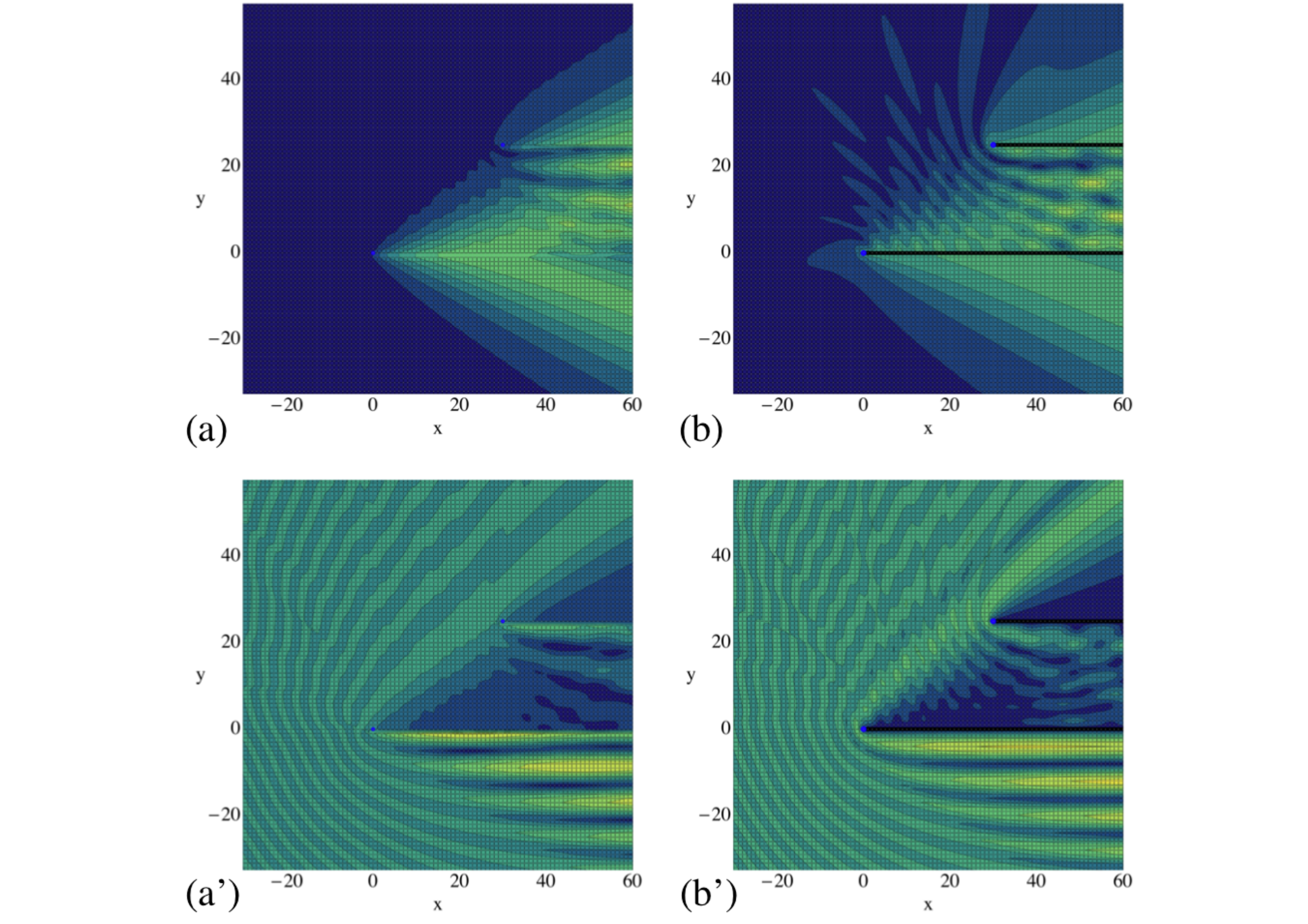}
\caption{{
Contour plot of the scattered field $|{\su}|$ on square lattice (a) with a pair of staggered cracks and (b) with a pair of staggered rigid constraints for ${\upomega}=0.9, \Theta=25$ deg, ${\mathtt{N}}=25, {\mathtt{M}}=30$.
Contour plot of the total field $\Re{\su}^{t}$ shown in (a') corresponding to (a) and in (b') that corresponding to (b).
The blue dots mark the edge sites with broken bond in (a), (a') and rigid constraint in (b), (b').
Numerical grid is $2N_g\times1\times2N_g\times1$ with $N_g=91+|{\mathtt{M}}|.$}}
\label{Fig1num}
\end{figure}
{As an illustration of the associated scattering phenomenon, Fig. \ref{Fig1num} presents the graphical results (contour plot of modulus of the scattered field and real part of the total field) based on a numerical solution (i.e, using the scheme described in Appendix D of \cite{Bls0}) of the discrete Helmholtz equation \eqref{dHelmholtz} in conjunction with the conditions at the scatterers. It is clear that the wave field generated due to the presence of a stagger involves a chain of complicated inter-tip interactions, so called multiple scattering effect. In a way, this observation is also related to the difficulty \cite{SpeckPros,GohbergKaashoek} of the associated Wiener--Hopf formulation \cite{GMthesis,Bls8staggerpair_asymp}, when it is defined using definition of Fourier transforms in a way natural to the defect tip geometry, and resides in the presence of peculiar off-diagonal factors \cite{Abrahams1} (recall \eqref{Abrakernel}).} 

Since the incident wave ${\su}^{{\mathrm{inc}}}_{{\mathtt{x}}, {\mathtt{y}}}$ \eqref{uinc} satisfies \eqref{dHelmholtz}, the scattered wave ${\su}_{{\mathtt{x}}, {\mathtt{y}}}$ \eqref{utsplit} also satisfies \eqref{dHelmholtz} {in the perfect lattice, i.e., away from the scatterers}. With {some auxiliary} details provided in Appendix \ref{applyFT}, a general solution of the latter can be prepared so that the scattered field, in between the defects as well as above and below them, can be written in terms of {certain} unknown functions.

In particular, for the problem of two cracks, since \eqref{complexfreq} {(along with a well known hypothesis of causality)} implies that ${\su}_{{\mathtt{y}}}^{{\mathrm{F}}}\to 0$ when ${\mathtt{y}}\to \pm\infty$, the Fourier transform {\eqref{discreteFT}} of the scattered wave field (as complex function, analytic on an annulus ${{\mathscr{A}}}$, described in Appendix \ref{applyFT} {and schematically illustrated in Fig. \ref{schematicannulus_2}}, see also \cite{Bls0,Bls1})
is found to be
\begin{equation}\begin{split}
{\su}_{{\mathtt{y}}}^{{\mathrm{F}}}
&={\su}^{{\mathrm{F}}}_0(\frac{{{\lambda}}^{-2{{\mathtt{N}}}+2}{{\lambda}}^{{\mathtt{y}}}-{{\lambda}}^{-{\mathtt{y}}}}{{{\lambda}}^{-2{{\mathtt{N}}}+2}-1})+{\su}^{{\mathrm{F}}}_{{{\mathtt{N}}}-1}(\frac{{{\lambda}}^{-{{\mathtt{N}}}+1}{{\lambda}}^{-{\mathtt{y}}}-{{\lambda}}^{-{{\mathtt{N}}}+1}{{\lambda}}^{{\mathtt{y}}}}{{{\lambda}}^{-2{{\mathtt{N}}}+2}-1})\\
&={{\mathscr{F}}}_{{\mathtt{y}}}{\su}^{{\mathrm{F}}}_0+{{\mathscr{G}}}_{{\mathtt{y}}}{\su}^{{\mathrm{F}}}_{{{\mathtt{N}}}-1}~({{\mathtt{y}}}\in\mathbb{Z}_0^{{{\mathtt{N}}}-1}),\\
{\su}_{{\mathtt{y}}}^{{\mathrm{F}}}&=\su^{{\mathrm{F}}}_{{\mathtt{N}}}{{\lambda}}^{{\mathtt{y}}-{{\mathtt{N}}}}~({{\mathtt{y}}} \ge {{\mathtt{N}}}), \quad{\su}_{{\mathtt{y}}}^{{\mathrm{F}}}=\su^{{\mathrm{F}}}_{-1}{{\lambda}}^{-({{\mathtt{y}}} +1)}~({{\mathtt{y}}} \le-1),
\label{ubulkK}
\end{split}\end{equation}
where the function ${\su}_{-1}^{{\mathrm{F}}}$, ${\su}_{0}^{{\mathrm{F}}}$, ${\su}_{{\mathtt{N}}-1}^{{\mathrm{F}}}$ and ${\su}_{{\mathtt{N}}}^{{\mathrm{F}}}$ are unknown functions,
{while, the definition of ${\lambda}$ is provided in \eqref{lamL} as part of Appendix \ref{applyFT}.}
As expected due to symmetry between the two cracked rows, various ${\mathscr{F}}$s are related to ${\mathscr{G}}$s {(in \eqref{ubulkK}${}_2$)}, for instance,
${\su}_{1}^{{\mathrm{F}}}={{\mathscr{F}}}_1{\su}^{{\mathrm{F}}}_0+{{\mathscr{F}}}_{{{\mathtt{N}}}-2}{\su}^{{\mathrm{F}}}_{{{\mathtt{N}}}-1}, {\su}_{{{\mathtt{N}}}-2}^{{\mathrm{F}}}={{\mathscr{F}}}_{{{\mathtt{N}}}-2}{\su}^{{\mathrm{F}}}_0+{{\mathscr{F}}}_1{\su}^{{\mathrm{F}}}_{{{\mathtt{N}}}-1}.$
In the following section, it is found that instead of four there are mainly two unknown functions to be determined, as expected \cite{Bls0}.

Similarly, for the problem of two rigid constraints,
the Fourier transform {\eqref{discreteFT}} of the scattered wave field
(as complex function, analytic on the annulus ${{\mathscr{A}}}$) is
\begin{equation}\begin{split}
{\su}_{{\mathtt{y}}}^{{\mathrm{F}}}
&={\su}_0^{{\mathrm{F}}}(\frac{{{\lambda}}^{-2{\mathtt{N}}}{{\lambda}}^{{\mathtt{y}}}}{{{\lambda}}^{-2{\mathtt{N}}}-1}-\frac{{{\lambda}}^{-{{\mathtt{y}}}}}{{{\lambda}}^{-2{\mathtt{N}}}-1})+{\su}^{{\mathrm{F}}}_{{\mathtt{N}}}(\frac{{{\lambda}}^{-{\mathtt{N}}}{{\lambda}}^{-{{\mathtt{y}}}}}{{{\lambda}}^{-2{\mathtt{N}}}-1}-\frac{{{\lambda}}^{-{\mathtt{N}}}{{\lambda}}^{{\mathtt{y}}}}{{{\lambda}}^{-2{\mathtt{N}}}-1})\\
&={{\mathscr{F}}}_{{\mathtt{y}}}{\su}^{{\mathrm{F}}}_0+{{\mathscr{G}}}_{{\mathtt{y}}}{\su}^{{\mathrm{F}}}_{{\mathtt{N}}}~({{\mathtt{y}}}\in\mathbb{Z}_0^{{{\mathtt{N}}}}), \\
{\su}_{{\mathtt{y}}}^{{\mathrm{F}}}&={\su}^{{\mathrm{F}}}_{{\mathtt{N}}}{{\lambda}}^{{{\mathtt{y}}}-{\mathtt{N}}}~({{\mathtt{y}}}\ge {{\mathtt{N}}}), {\su}_{{\mathtt{y}}}^{{\mathrm{F}}}={\su}^{{\mathrm{F}}}_{0}{{\lambda}}^{-{{\mathtt{y}}}}~({{\mathtt{y}}}\le0), 
\label{ubulkC}
\end{split}\end{equation}
where the two functions ${\su}_0^{{\mathrm{F}}}$ and ${\su}_{{\mathtt{N}}}^{{\mathrm{F}}}$ are unknown. 
{Again, note that the function ${\lambda}$ is provided in \eqref{lamL}.}
Indeed, the expressions \eqref{ubulkC} can be also written using the one provided {above} for the crack \eqref{ubulkK} by mapping ${\mathtt{y}}$ appropriately.
{The determination of ${\su}_0^{{\mathrm{F}}}$ and ${\su}_{{\mathtt{N}}}^{{\mathrm{F}}}$ is the purpose of a Wiener--Hopf technique based approach carried out in the section following the next one.}

\section{{Pair of cracks}}
\label{WHeqformK}
{In this section, the scattering problem associated with a pair of staggered cracks is analyzed within a Wiener--Hopf formulation, while building on the defining form of the equation of motion \eqref{dHelmholtz} and the incident wave \eqref{uinc} in the previous section \S\ref{sqLattFT}, as well as the expression in terms of the Fourier transform \eqref{ubulkK}. The scatterer is located at the sites stated in \eqref{defectK}. The latter clearly brings out the exact specification of the unknown functions as described in the statements immediately following \eqref{ubulkK}}.

{On the lattice rows ${\mathtt{y}}=0$ and ${\mathtt{y}}=-1$ (referred as the `lower' crack sometimes), as schematically shown in Fig. \ref{Fig1}(a), the sites with ${\mathtt{x}}\in{\mathbb{Z}^+}$ (recall the definition ${\mathbb{Z}^+}$ of in \eqref{ZpZn}) do not interact with all four nearest neighbours, and in fact miss one interaction. The bonds between the sites at ${\mathtt{y}}=0$ and ${\mathtt{y}}=-1$ are broken for ${\mathtt{x}}\in{\mathbb{Z}^+}$. On the other hand the incident wave satisfies \eqref{dHelmholtz} automatically owing to the dispersion relation \eqref{dispersion}. It is useful to recall also} the splitting of the total wave field according to \eqref{utsplit}. 
{After substitution of \eqref{utsplit} in the equation for the total displacement field, i.e., 
\begin{equation}\begin{split}
{\su}^{t}_{{\mathtt{x}}+1, {\mathtt{y}}}+{\su}^{t}_{{\mathtt{x}}-1, {\mathtt{y}}}+{\su}^{t}_{{\mathtt{x}}, {\mathtt{y}}+1}-3{\su}^{t}_{{\mathtt{x}}, {\mathtt{y}}}+{\upomega}^2{\su}^{{t}}_{{\mathtt{x}}, {\mathtt{y}}}=0, \\
{\su}^{t}_{{\mathtt{x}}+1, {\mathtt{y}}}+{\su}^{t}_{{\mathtt{x}}-1, {\mathtt{y}}}+{\su}^{t}_{{\mathtt{x}}, {\mathtt{y}}-1}-3{\su}^{t}_{{\mathtt{x}}, {\mathtt{y}}}+{\upomega}^2{\su}^{{t}}_{{\mathtt{x}}, {\mathtt{y}}}=0,
\label{twocrackeqn}
\end{split}\end{equation}
for ${\mathtt{y}}=0, -1$, respectively, and ${\mathtt{x}}\in{\mathbb{Z}^+}$, it is clear that the contribution of the incident wave via the broken bonds appears as a source term. Specifically,} the equation of motion {for the scattered component} at the rows corresponding to the lower crack is
\begin{subequations}\begin{eqnarray}-{\upomega}^2{\su}_{{\mathtt{x}}, -1}+({\su}_{{\mathtt{x}}, -1}-{\su}_{{\mathtt{x}}, 0}){\mathcal{H}}(-{\mathtt{x}}-1)&=&({\su}^{{\mathrm{inc}}}_{{\mathtt{x}}, -1}-{\su}^{{\mathrm{inc}}}_{{\mathtt{x}}, 0}){\mathcal{H}}({\mathtt{x}})+{\su}_{{\mathtt{x}}+1, -1}\notag\\
&&+{\su}_{{\mathtt{x}}-1, -1}+{\su}_{{\mathtt{x}}, -2}-3{\su}_{{\mathtt{x}}, -1},\\-{\upomega}^2{\su}_{{\mathtt{x}}, 0}+({\su}_{{\mathtt{x}}, 0}-{\su}_{{\mathtt{x}}, -1}){\mathcal{H}}(-{\mathtt{x}}-1)&=&({\su}^{{\mathrm{inc}}}_{{\mathtt{x}}, 0}-{\su}^{{\mathrm{inc}}}_{{\mathtt{x}}, -1}){\mathcal{H}}({\mathtt{x}})+{\su}_{{\mathtt{x}}+1, 0}\notag\\
&&+{\su}_{{\mathtt{x}}-1, 0}+{\su}_{{\mathtt{x}}, 1}-3{\su}_{{\mathtt{x}}, 0}.
\end{eqnarray}\label{crackt1}\end{subequations}
{Indeed, for ${\mathtt{x}}\in{\mathbb{Z}^-},$ \eqref{crackt1} reduces to the discrete Helmholtz equation (with ${\su}$ replacing ${\su}^{{t}}$ in \eqref{dHelmholtz}), where the definition of ${\mathbb{Z}^-}$ has been provided in \eqref{ZpZn}.}

For the rows ${{\mathtt{y}}}={\mathtt{N}}, {\mathtt{N}}-1$, {the presence of offset ${{\mathtt{M}}}\in\mathbb{Z}$ leads to it being a negative or positive integer}, while ${\mathtt{M}}=0$ as a special {zero offset case, associated with the geometric configuration of aligned parallel crack edges}, has been dealt with earlier \cite{Bls8pair1}. 
{The necessary details of the Wiener--Hopf formulation for the two semi-infinite arrays of broken bonds, i.e., `lower' and `upper' cracks, are provided below for both cases when ${{\mathtt{M}}}>0$ as well as when ${{\mathtt{M}}}<0$.}

\subsection{{${\mathtt{M}}>0$}}
\label{crackMP}
{For the rows ${\mathtt{y}}={{\mathtt{N}}}, {{\mathtt{N}}}-1$ (referred as the `upper' crack sometimes), respectively, 
\eqref{twocrackeqn} holds,
whenever ${\mathtt{x}}-{{\mathtt{M}}}\in{\mathbb{Z}^+}.$}
Then, {after substitution of \eqref{utsplit} and an addition and subtraction of the same terms}, the equation of motion at the rows corresponding to the upper crack, as schematically shown in Fig. \ref{Fig1}(a), {can be written as, at ${\mathtt{y}}={{\mathtt{N}}}-1, {{\mathtt{N}}}$, respectively,}
\begin{subequations}\begin{eqnarray}
&&-{\upomega}^2{\su}_{{\mathtt{x}}, {{\mathtt{N}}}-1}+({\su}_{{\mathtt{x}}, {{\mathtt{N}}}-1}-{\su}_{{\mathtt{x}}, {{\mathtt{N}}}}){\mathcal{H}}(-{\mathtt{x}}+{{\mathtt{M}}}-1)+{{{\mathtt{f}}}}_{{\mathtt{x}}}-{{{\mathtt{f}}}}_{{\mathtt{x}}}\notag\\
&=&({\su}^{{\mathrm{inc}}}_{{\mathtt{x}}, {{\mathtt{N}}}-1}-{\su}^{{\mathrm{inc}}}_{{\mathtt{x}}, {{\mathtt{N}}}}){\mathcal{H}}({\mathtt{x}}-{{\mathtt{M}}})-{{{\mathtt{f}}}}^{{\mathrm{inc}}}_{{\mathtt{x}}}+{{{\mathtt{f}}}}^{{\mathrm{inc}}}_{{\mathtt{x}}}+{\su}_{{\mathtt{x}}+1, {{\mathtt{N}}}-1}\notag\\
&&+{\su}_{{\mathtt{x}}-1, {{\mathtt{N}}}-1}+{\su}_{{\mathtt{x}}, {{\mathtt{N}}}-2}-3{\su}_{{\mathtt{x}}, {{\mathtt{N}}}-1},\label{ckt2a}\\
&&-{\upomega}^2{\su}_{{\mathtt{x}}, {{\mathtt{N}}}}+({\su}_{{\mathtt{x}}, {{\mathtt{N}}}}-{\su}_{{\mathtt{x}}, {{\mathtt{N}}}-1}){\mathcal{H}}(-{\mathtt{x}}+{{\mathtt{M}}}-1)-{{{\mathtt{f}}}}_{{\mathtt{x}}}+{{{\mathtt{f}}}}_{{\mathtt{x}}}\notag\\
&=&({\su}^{{\mathrm{inc}}}_{{\mathtt{x}}, {{\mathtt{N}}}}-{\su}^{{\mathrm{inc}}}_{{\mathtt{x}}, {{\mathtt{N}}}-1}){\mathcal{H}}({\mathtt{x}}-{{\mathtt{M}}})+{{{\mathtt{f}}}}^{{\mathrm{inc}}}_{{\mathtt{x}}}-{{{\mathtt{f}}}}^{{\mathrm{inc}}}_{{\mathtt{x}}}+{\su}_{{\mathtt{x}}+1, {{\mathtt{N}}}}\notag\\
&&+{\su}_{{\mathtt{x}}{{\mathtt{N}}}-1, {{\mathtt{N}}}}+{\su}_{{\mathtt{x}}, {{\mathtt{N}}}+1}-3{\su}_{{\mathtt{x}}, {{\mathtt{N}}}},\label{ckt2b}
\end{eqnarray}\label{crackt2}\end{subequations}
{where the adopted definition of ${{{\mathtt{f}}}}_{{\mathtt{x}}}$ and ${{{\mathtt{f}}}}^{{\mathrm{inc}}}_{{\mathtt{x}}}$ is}
\begin{equation}\begin{split}
{{{\mathtt{f}}}}_{{\mathtt{x}}}{\,:=}-({\su}_{{\mathtt{x}}, {{\mathtt{N}}}-1}-{\su}_{{\mathtt{x}}, {{\mathtt{N}}}}){\mathcal{H}}({\mathtt{x}}){\mathcal{H}}(-{\mathtt{x}}+{{\mathtt{M}}}-1),\\
{{{\mathtt{f}}}}^{{\mathrm{inc}}}_{{\mathtt{x}}}{\,:=}-({\su}^{{\mathrm{inc}}}_{{\mathtt{x}}, {{\mathtt{N}}}-1}-{\su}^{{\mathrm{inc}}}_{{\mathtt{x}}, {{\mathtt{N}}}}){\mathcal{H}}({\mathtt{x}}){\mathcal{H}}(-{\mathtt{x}}+{{\mathtt{M}}}-1).
\label{defppinc}
\end{split}\end{equation}
{Indeed, for ${\mathtt{x}}-{{\mathtt{M}}}\in{\mathbb{Z}^-},$ \eqref{crackt2} reduces to the discrete Helmholtz equation (with ${\su}$ replacing ${\su}^{{t}}$ in \eqref{dHelmholtz}).}
{In light of the peculiar structure of the equations \eqref{crackt1} and \eqref{crackt2} along with the definitions \eqref{defppinc}, it is natural to introduce a separate notation for the bond length in the lattice rows where a crack is present}.
Let the scattered and incident component of the displacement field relative to the vertical bonds in cracked rows be defined by
\begin{subequations}\begin{eqnarray}
{\mathrm{v}}_{{\mathtt{x}}, 0}{\,:=}{\su}_{{\mathtt{x}}, 0}-{\su}_{{\mathtt{x}}, -1}, \quad
{\mathrm{v}}^{{\mathrm{inc}}}_{{\mathtt{x}}, 0}{\,:=}{\su}^{{\mathrm{inc}}}_{{\mathtt{x}}, 0}-{\su}^{{\mathrm{inc}}}_{{\mathtt{x}}, -1},
\label{v0defAbra}\\
{\mathrm{v}}_{{\mathtt{x}}, {{\mathtt{N}}}}{\,:=}{\su}_{{\mathtt{x}}, {{\mathtt{N}}}}-{\su}_{{\mathtt{x}}, {{\mathtt{N}}}-1},
 \quad
{\mathrm{v}}^{{\mathrm{inc}}}_{{\mathtt{x}}, {{\mathtt{N}}}}{\,:=}{\su}^{{\mathrm{inc}}}_{{\mathtt{x}}, {{\mathtt{N}}}}-{\su}^{{\mathrm{inc}}}_{{\mathtt{x}}, {{\mathtt{N}}}-1},\label{vNdefAbra}
\end{eqnarray}\label{vdefAbrafull}\end{subequations}
respectively. 
{Thus, according to \eqref{defppinc},
$$
\text{for }{{\mathtt{x}}\in\mathbb{Z}_0^{{\mathtt{M}}-1}}, {{{\mathtt{f}}}}_{{\mathtt{x}}}={\mathrm{v}}_{{\mathtt{x}}, {{\mathtt{N}}}}, {{{\mathtt{f}}}}^{{\mathrm{inc}}}_{{\mathtt{x}}}={\mathrm{v}}^{{\mathrm{inc}}}_{{\mathtt{x}}, {{\mathtt{N}}}}.
$$
The reason for introduction of certain added and subtracted terms in \eqref{crackt2} is clear from a re-grouping of the terms appearing therein, that is,
\begin{subequations}\begin{eqnarray}
-{\upomega}^2{\su}_{{\mathtt{x}}, {{\mathtt{N}}}-1}-{\mathrm{v}}_{{\mathtt{x}}, {{\mathtt{N}}}}{\mathcal{H}}(-{\mathtt{x}}-1)&=&-{\mathrm{v}}^{{\mathrm{inc}}}_{{\mathtt{x}}, {{\mathtt{N}}}}{\mathcal{H}}({\mathtt{x}})+{\su}_{{\mathtt{x}}+1, {{\mathtt{N}}}-1}+{\su}_{{\mathtt{x}}-1, {{\mathtt{N}}}-1}\notag\\
&&+{\su}_{{\mathtt{x}}, {{\mathtt{N}}}-2}-3{\su}_{{\mathtt{x}}, {{\mathtt{N}}}-1}+[{{{\mathtt{f}}}}_{{\mathtt{x}}}+{{{\mathtt{f}}}}^{{\mathrm{inc}}}_{{\mathtt{x}}}],\label{t2a}\\
-{\upomega}^2{\su}_{{\mathtt{x}}, {{\mathtt{N}}}}+{\mathrm{v}}_{{\mathtt{x}}, {{\mathtt{N}}}}{\mathcal{H}}(-{\mathtt{x}}-1)&=&{\mathrm{v}}^{{\mathrm{inc}}}_{{\mathtt{x}}, {{\mathtt{N}}}}{\mathcal{H}}({\mathtt{x}})+{\su}_{{\mathtt{x}}+1, {{\mathtt{N}}}}+{\su}_{{\mathtt{x}}{{\mathtt{N}}}-1, {{\mathtt{N}}}}\notag\\
&&+{\su}_{{\mathtt{x}}, {{\mathtt{N}}}+1}-3{\su}_{{\mathtt{x}}, {{\mathtt{N}}}}
-[{{{\mathtt{f}}}}_{{\mathtt{x}}}+{{{\mathtt{f}}}}^{{\mathrm{inc}}}_{{\mathtt{x}}}],\label{t2b}
\end{eqnarray}\end{subequations}
which allows the equations at ${\mathtt{y}}={\mathtt{N}}-1, {\mathtt{N}}$ to have the same form as the equations at ${\mathtt{y}}=-1, 0$ in \eqref{crackt1} (using \eqref{v0defAbra}) modulo the presence of `source terms' shown in square brackets. In this manner, the equation of motion for the faces of both staggered cracks have been written in terms of zero offset, i.e., `aligned' cracks (for which it is to be noted that the corresponding scattering problem has been endowed with an exact solution \cite{Bls8pair1}).}

Let
\begin{equation}\begin{split}
{{\mathrm{v}}}_0^{{\mathrm{inc}}}{}^+=
\sum\limits_{{\mathtt{x}}\in{\mathbb{Z}^+}}{{z}}^{-{\mathtt{x}}}{\mathrm{v}}^{{\mathrm{inc}}}_{{\mathtt{x}}, 0}, \quad
{{\mathrm{v}}}_{{\mathtt{N}}}^{{\mathrm{inc}}}{}^+=
\sum\limits_{{\mathtt{x}}\in{\mathbb{Z}^+}}{{z}}^{-{\mathtt{x}}}{\mathrm{v}}^{{\mathrm{inc}}}_{{\mathtt{x}}, {\mathtt{N}}},
\label{t3}
\end{split}\end{equation}
Indeed, {in an another symbolic notation}, ${{\mathrm{v}}}_0^{{\mathrm{inc}}}{}^+={{\mathrm{A}}}(1-e^{-i\upkappa_y})\delta_{D}^{+}({{z}} {z}_{{P}}^{-1})$ and ${{\mathrm{v}}}_{{\mathtt{N}}}^{{\mathrm{inc}}}{}^+={{\mathrm{A}}}(1-e^{-i\upkappa_y})e^{i\upkappa_y{\mathtt{N}}}\delta_{D}^{+}({{z}} {z}_{{P}}^{-1})$ {according to the expression of the incident wave \eqref{uinc}}.
{Here,}
\begin{equation}\begin{split}
{\su}_{0}^{+}({{z}})=\sum\limits_{{\mathtt{x}}\in{\mathbb{Z}^+}} {\su}_{{\mathtt{x}}, 0}{{z}}^{-{\mathtt{x}}}, \quad
{\su}_{0}^{-}({{z}})=\sum\limits_{{\mathtt{x}}\in{\mathbb{Z}^-}} {\su}_{{\mathtt{x}}, 0}{{z}}^{-{\mathtt{x}}},
\label{u0pmAbra}
\end{split}\end{equation}
\begin{equation}\begin{split}
\delta_{D}^+({z})=\sum\limits_{{\mathtt{x}}\in{\mathbb{Z}^+}}{z}^{-{\mathtt{x}}}=\frac{1}{1-{z}^{-1}}~(|{z}|>1),
\label{delDp}
\end{split}\end{equation}
\begin{equation}\begin{split}
\text{and }
{{z}}_{{P}}{\,:=} e^{{+}i{\upkappa}_x}.
\label{zP}
\end{split}\end{equation}
Consider definitions similar to \eqref{u0pmAbra} for ${\su}_{{\mathtt{N}}+1}^{\pm}$ and ${\su}_{{{\mathtt{N}}}-1}^{\pm},$ and so on.
Note that 
according to \eqref{defppinc}, {using \eqref{vdefAbrafull}},
\begin{equation}\begin{split}
{{{\mathtt{f}}}}^{+}({z})=\sum\limits_{{\mathtt{x}}\in\mathbb{Z}}{{{\mathtt{f}}}}_{{\mathtt{x}}}{z}^{-{\mathtt{x}}}=\sum\limits_{{\mathtt{x}}\in\mathbb{Z}_0^{{\mathtt{M}}-1}}{\mathrm{v}}_{{\mathtt{x}}, {{\mathtt{N}}}}{z}^{-{\mathtt{x}}},\quad
{{{\mathtt{f}}}}^{{\mathrm{inc}} {+}}({z})=\sum\limits_{{\mathtt{x}}\in\mathbb{Z}}{{{\mathtt{f}}}}^{{\mathrm{inc}}}_{{\mathtt{x}}}{z}^{-{\mathtt{x}}}=\sum\limits_{{\mathtt{x}}\in\mathbb{Z}_0^{{\mathtt{M}}-1}}{\mathrm{v}}^{{\mathrm{inc}}}_{{\mathtt{x}}, {{\mathtt{N}}}}{z}^{-{\mathtt{x}}}.
\label{ppiAbra}
\end{split}\end{equation}
{The decorative notation $\mathtt{f}^{+}$ and $\mathtt{f}^{{\mathrm{inc}} {+}}$ has been adopted as $\mathtt{M}>0$ implies that these complex functions are polynomials in the variable ${z}^{-1}$; recall the statement following \eqref{discreteFT}}.

Taking the Fourier transform \eqref{discreteFT} of \eqref{crackt1} and \eqref{crackt2}, using the definition of {a relevant complex function} ${\mathtt{H}}$ \eqref{h2} {(stated as part of Appendix \ref{applyFT})}, it is found that
\begin{subequations}\begin{eqnarray}
({{\mathtt{H}}}+1-{{\lambda}})({\su}_{-1}^{-}+{\su}_{-1}^{+})-({\su}_{0}^{-}-{\su}_{-1}^{-})&=&-{{\mathrm{v}}}_0^{{\mathrm{inc}}}{}^+, 
\label{twocracksn1Abra}
\\
({{\mathtt{H}}}+1-{{{\mathscr{F}}}_1})({\su}_{0}^{-}+{\su}_{0}^{+})+({\su}_{0}^{-}-{\su}_{-1}^{-})
&=&{{\mathrm{v}}}_0^{{\mathrm{inc}}}{}^++{\su}_{{{\mathtt{N}}}-1}^{{\mathrm{F}}}{{{\mathscr{F}}}_{{{\mathtt{N}}}-2}}, \label{twocracks0Abra}
\\
({{\mathtt{H}}}+1-{{{\mathscr{F}}}_1})({\su}_{{{\mathtt{N}}}-1}^{-}+{\su}_{{{\mathtt{N}}}-1}^{+})+({\su}_{{{\mathtt{N}}}-1}^{-}-{\su}_{{{\mathtt{N}}}}^{-})
&=&-{{\mathrm{v}}}_{{\mathtt{N}}}^{{\mathrm{inc}}}{}^++{{{\mathtt{f}}}}^{+}+{{{\mathtt{f}}}}^{{\mathrm{inc}} {+}}+{\su}_{0}^{{\mathrm{F}}}{{{\mathscr{F}}}_{{{\mathtt{N}}}-2}}, \label{twocracksNn1Abra}
\\
({{\mathtt{H}}}+1-{{\lambda}})({\su}_{{{\mathtt{N}}}}^{-}+{\su}_{{{\mathtt{N}}}}^{+})-({\su}_{{{\mathtt{N}}}-1}^{-}-{\su}_{{{\mathtt{N}}}}^{-})&=&{{\mathrm{v}}}_{{\mathtt{N}}}^{{\mathrm{inc}}}{}^+-{{{\mathtt{f}}}}^{+}-{{{\mathtt{f}}}}^{{\mathrm{inc}} {+}}.
\label{twocracksNAbra}
\end{eqnarray}\label{t5}\end{subequations}
Note that according to \eqref{ubulkK},
${\su}_{-2}^{{\mathrm{F}}}=\su^{{\mathrm{F}}}_{-1}{{\lambda}}, {\su}_{1}^{{\mathrm{F}}}={{\mathscr{F}}}_1{\su}_0+{{\mathscr{G}}}_1{\su}_{{{\mathtt{N}}}-1}, {\su}_{{{\mathtt{N}}}-2}^{{\mathrm{F}}}={{\mathscr{F}}}_{{{\mathtt{N}}}-2}{\su}_0+{{\mathscr{G}}}_{{{\mathtt{N}}}-2}{\su}_{{{\mathtt{N}}}-1}, {\su}_{{{\mathtt{N}}}+1}^{{\mathrm{F}}}=\su^{{\mathrm{F}}}_{{\mathtt{N}}}{{\lambda}},$ which has been used in above equations.
The particular ${\mathscr{F}}$s and ${\mathscr{G}}$s appearing are according to the definitions that can be read from \eqref{ubulkK}.
The four equations \eqref{t5} are coupled through the terms ${\su}_{0}^{{\mathrm{F}}}$ and ${\su}_{{{\mathtt{N}}}-1}^{{\mathrm{F}}}$ which becomes weaker as ${{\mathtt{N}}}$ increases. 
But this system of four equations can be {effectively} reduced to two equations whose solution is sufficient to solve the problem {of discrete scattering due to a pair of cracks}.

Using \eqref{twocracksn1Abra} and \eqref{twocracksNAbra}, respectively,
\begin{subequations}\begin{eqnarray}
{\su}_{0}^{-}&=&({{\lambda}}^{-1}-1)({\su}_{-1}^{-}+{\su}_{-1}^{+})+{\su}_{-1}^{-}+{{\mathrm{v}}}_0^{{\mathrm{inc}}}{}^+, 
\label{u0nAbra}
\\
{\su}_{{{\mathtt{N}}}-1}^{-}&=&({{\lambda}}^{-1}-1)({\su}_{{{\mathtt{N}}}}^{-}+{\su}_{{{\mathtt{N}}}}^{+})+{\su}_{{{\mathtt{N}}}}^{-}-{{\mathrm{v}}}_{{\mathtt{N}}}^{{\mathrm{inc}}}{}^++{{{\mathtt{f}}}}^{+}+{{{\mathtt{f}}}}^{{\mathrm{inc}} {+}}.
\label{uNn1Abra}
\end{eqnarray}\label{t6}\end{subequations}

Adding first and second equation (\eqref{twocracksn1Abra} and \eqref{twocracks0Abra}), and third and fourth (\eqref{twocracksNn1Abra} and \eqref{twocracksNAbra}), respectively, 
\begin{subequations}\begin{eqnarray}
({{\lambda}}^{-1}-1)({\su}_{-1}^{-}+{\su}_{-1}^{+})&=&-({{\mathtt{H}}}+1-{{{\mathscr{F}}}_1}){\su}_{0}^{{\mathrm{F}}}+{{{\mathscr{F}}}_{{{\mathtt{N}}}-2}}({\su}_{{{\mathtt{N}}}-1}^{-}+{\su}_{{{\mathtt{N}}}-1}^{+}),
\label{un1Abra} 
\\
({{\lambda}}^{-1}-1)({\su}_{{{\mathtt{N}}}}^{-}+{\su}_{{{\mathtt{N}}}}^{+})&=&{{{\mathscr{F}}}_{{{\mathtt{N}}}-2}}{\su}_{0}^{{\mathrm{F}}}-({{\mathtt{H}}}+1-{{{\mathscr{F}}}_1})({\su}_{{{\mathtt{N}}}-1}^{-}+{\su}_{{{\mathtt{N}}}-1}^{+}).
\label{uNAbra}
\end{eqnarray}\label{t7}\end{subequations}
{In view of \eqref{discreteFT}, it is observed that} above equations, {i.e., \eqref{un1Abra} and \eqref{uNAbra},} are merely algebraic equations {involving the `full' Fourier transforms} and provide an expression of ${\su}_{-1}^{{\mathrm{F}}}$ and ${\su}_{{\mathtt{N}}}^{{\mathrm{F}}}$ in terms of ${\su}_{0}^{{\mathrm{F}}}$ and ${\su}_{{{\mathtt{N}}}-1}^{{\mathrm{F}}}$, or vice versa. 
Eventually, using {\eqref{v0defAbra}${}_1$ and \eqref{vNdefAbra}${}_1$} in \eqref{un1Abra} and \eqref{uNAbra}, respectively,
\begin{subequations}\begin{eqnarray}
({{\lambda}}^{-1}-1){\mathrm{v}}_{0}^{{\mathrm{F}}}
&=&({{\mathtt{H}}}-{{{\mathscr{F}}}_1}+{{\lambda}}^{-1}){\su}_{0}^{{\mathrm{F}}}-{{{\mathscr{F}}}_{{{\mathtt{N}}}-2}}{\su}_{{{\mathtt{N}}}-1}^{{\mathrm{F}}}, 
\label{v0nrelAbra}
\\
({{\lambda}}^{-1}-1){\mathrm{v}}^{{\mathrm{F}}}_{{\mathtt{N}}}&=&
{{{\mathscr{F}}}_{{{\mathtt{N}}}-2}}{\su}_{0}^{{\mathrm{F}}}-({{\mathtt{H}}}-{{{\mathscr{F}}}_1}+{{\lambda}}^{-1}){\su}_{{{\mathtt{N}}}-1}^{{\mathrm{F}}},
\label{vNnrelAbra}
\end{eqnarray}\label{v0NAbra}\end{subequations}
indeed, {yields a $2\times2$ matrix based relation}
\begin{equation}\begin{split}\begin{bmatrix}{\su}_{0}^{{\mathrm{F}}}\\{\su}_{{\mathtt{N}}-1}^{{\mathrm{F}}}\end{bmatrix}&=\frac{1-{{\lambda}}^{-1}}{({{\mathscr{F}}}_1-{{\mathtt{H}}}-{{\lambda}}^{-1})^2- {{\mathscr{F}}}_{{{\mathtt{N}}}-2}^2}\begin{bmatrix}({{\mathscr{F}}}_1-{{\mathtt{H}}} -{{\lambda}}^{-1})&{{{\mathscr{F}}}_{{{\mathtt{N}}}-2}}\\-{{{\mathscr{F}}}_{{{\mathtt{N}}}-2}}&-({{\mathscr{F}}}_1-{{\mathtt{H}}} -{{\lambda}}^{-1})\end{bmatrix}\begin{bmatrix}{\mathrm{v}}_{0}^{{\mathrm{F}}}\\{{\mathrm{v}}}_{{\mathtt{N}}}^{{\mathrm{F}}}\end{bmatrix},\label{v0NAbraeqn}\end{split}\end{equation}
which can be substituted in \eqref{u0nAbra} and \eqref{uNn1Abra}, respectively, i.e., {after a bit of re-writing, leading to}
\begin{subequations}\begin{eqnarray}
{\mathrm{v}}_{0-}={\su}_{0}^{-}-{\su}_{-1}^{-}&=&({{\lambda}}^{-1}-1){\su}_{-1}^{{\mathrm{F}}}+{{\mathrm{v}}}_0^{{\mathrm{inc}}}{}^+, 
\label{v0nAbra}
\\
{\mathrm{v}}_{{{\mathtt{N}}}-}={\su}_{{{\mathtt{N}}}}^{-}-{\su}_{{{\mathtt{N}}}-1}^{-}&=&
-({{\lambda}}^{-1}-1){\su}^{{\mathrm{F}}}_{{\mathtt{N}}}+{{\mathrm{v}}}_{{\mathtt{N}}}^{{\mathrm{inc}}}{}^+-{{{\mathtt{f}}}}^{+}-{{{\mathtt{f}}}}^{{\mathrm{inc}} {+}},
\label{vNnAbra}
\\
\text{and {using}}
\begin{bmatrix}
{\su}_{-1}^{{\mathrm{F}}}\\
{\su}_{{\mathtt{N}}}^{{\mathrm{F}}}
\end{bmatrix}&=&\begin{bmatrix}
{\su}_{0}^{{\mathrm{F}}}\\
{\su}_{{{\mathtt{N}}}-1}^{{\mathrm{F}}}
\end{bmatrix}-\begin{bmatrix}
{\mathrm{v}}_{0}^{{\mathrm{F}}}\\ 
-{\mathrm{v}}^{{\mathrm{F}}}_{{\mathtt{N}}}
\end{bmatrix}.
\label{un1uNF}
\end{eqnarray}\label{v0NAbrafull}\end{subequations}

\subsection{{${\mathtt{M}}<0$}}
\label{crackMN}
{The manipulations for the negative values of ${\mathtt{M}}$ follow those in \S\ref{crackMP}.}
Note that {the equation of motion at ${\mathtt{y}}={{\mathtt{N}}}-1, {{\mathtt{N}}}$, respectively, continues to be
\eqref{t2a} and \eqref{t2b} with the definitions}
\begin{equation}\begin{split}
{{{\mathtt{f}}}}_{{\mathtt{x}}}=+({\su}_{{\mathtt{x}}, {{\mathtt{N}}}-1}-{\su}_{{\mathtt{x}}, {{\mathtt{N}}}}){\mathcal{H}}(-{\mathtt{x}}-1){\mathcal{H}}(-{\mathtt{x}}+{{\mathtt{M}}}),\\
{{{\mathtt{f}}}}^{{\mathrm{inc}}}_{{\mathtt{x}}}=+({\su}^{{\mathrm{inc}}}_{{\mathtt{x}}, {{\mathtt{N}}}-1}-{\su}^{{\mathrm{inc}}}_{{\mathtt{x}}, {{\mathtt{N}}}}){\mathcal{H}}(-{\mathtt{x}}-1){\mathcal{H}}(-{\mathtt{x}}+{{\mathtt{M}}}),
\label{defppincN}
\end{split}\end{equation}
in place of \eqref{defppinc}, so that {(as analogues of \eqref{ppiAbra})}
\begin{equation}\begin{split}
{{{\mathtt{f}}}}^{-}({z})=\sum\limits_{{\mathtt{x}}\in\mathbb{Z}}{{{\mathtt{f}}}}_{{\mathtt{x}}}{z}^{-{\mathtt{x}}}
=-\sum\limits_{{\mathtt{x}}\in\mathbb{Z}_{{\mathtt{M}}}^{-1}}{{\mathrm{v}}}_{{\mathtt{x}}, {{\mathtt{N}}}}{z}^{-{\mathtt{x}}}, 
{{{\mathtt{f}}}}^{{\mathrm{inc}} {-}}({z})=\sum\limits_{{\mathtt{x}}\in\mathbb{Z}}{{{\mathtt{f}}}}^{{\mathrm{inc}}}_{{\mathtt{x}}}{z}^{-{\mathtt{x}}}
=-\sum\limits_{{\mathtt{x}}\in\mathbb{Z}_{{\mathtt{M}}}^{-1}}{{\mathrm{v}}}^{{\mathrm{inc}}}_{{\mathtt{x}}, {{\mathtt{N}}}}{z}^{-{\mathtt{x}}},
\label{ppiNAbra}
\end{split}\end{equation}
while other {steps} follow above case of ${\mathtt{M}}>0$. 
{The notation $\mathtt{f}^{-}$ and $\mathtt{f}^{{\mathrm{inc}} {-}}$ has been adopted as $\mathtt{M}>0$ implies that these complex functions are polynomials in the variable ${z}$; recall the statement following \eqref{discreteFT}}.

{After the introduction of these definitions, it is found that \eqref{t5} appears in the same form except that} the functions ${{{\mathtt{f}}}}^{-}$ and ${{{\mathtt{f}}}}^{{\mathrm{inc}} {-}}$ {(stated above \eqref{ppiNAbra})} take the place of ${{{\mathtt{f}}}}^{+}$ and ${{{\mathtt{f}}}}^{{\mathrm{inc}} {+}}$.
Finally, \eqref{v0nAbra} and \eqref{vNnAbra} need to be replaced with the equations
\begin{subequations}\begin{eqnarray}
{\mathrm{v}}_{0-}={\su}_{0}^{-}-{\su}_{-1}^{-}&=&({{\lambda}}^{-1}-1){\su}_{-1}^{{\mathrm{F}}}+{{\mathrm{v}}}_0^{{\mathrm{inc}}}{}^+, 
\label{v0nAbraN}
\\
{\mathrm{v}}_{{{\mathtt{N}}}-}={\su}_{{{\mathtt{N}}}}^{-}-{\su}_{{{\mathtt{N}}}-1}^{-}&=&
-({{\lambda}}^{-1}-1){\su}^{{\mathrm{F}}}_{{\mathtt{N}}}+{{\mathrm{v}}}_{{\mathtt{N}}}^{{\mathrm{inc}}}{}^+-{{{\mathtt{f}}}}^{-}-{{{\mathtt{f}}}}^{{\mathrm{inc}} {-}}.
\label{vNnAbraN}
\end{eqnarray}\label{v0NAbrafullN}\end{subequations}

\subsection{{Discrete {{Wiener--Hopf}} Equation}}
\label{WHcrackeqns}
In view of the {similarity of} expressions for the two cases in \S\ref{crackMP} and \S\ref{crackMN}, {it is desirable that a unified presentation is placed}. {For this purpose, suppose that} $\sgnM$ denotes the sign of ${\mathtt{M}}$.
{Introducing the definition, for convenience,}
\begin{equation}\begin{split}
\boldsymbol{p}^{\sgnM}&=\begin{bmatrix}
0\\
-{{{\mathtt{f}}}}^{\sgnM}-{{{\mathtt{f}}}}^{{\mathrm{inc}} {\sgnM}}
\end{bmatrix},
\label{qPAbra}
\end{split}\end{equation}
it is found from \eqref{v0NAbrafull} and \eqref{v0NAbrafullN} that\footnote{\label{invnote}For convenience of writing some long expressions, the notation $\recip{A}$ is adopted to denote the reciprocal, multiplicative inverse of $A$ (for matrix functions, it denotes the inverse matrix function while for non-zero real functions, it is simply the algebraic reciprocal).}

\begin{equation}\begin{split}
\begin{bmatrix}
{\mathrm{v}}_{0}^{-}\\ {\mathrm{v}}_{{\mathtt{N}}}^{-}\end{bmatrix}=(\recip{{\mathbf{K}}}-\mathbf{I})\begin{bmatrix}
{\mathrm{v}}_{0}^{+}\\ {\mathrm{v}}_{{\mathtt{N}}}^{+}\end{bmatrix}+\recip{{\mathbf{K}}}(\boldsymbol{q}^{{\mathrm{inc}}}_++\boldsymbol{p}^{\sgnM}), 
\label{WHKpre}
\end{split}\end{equation}
where {(with ${z}_{{P}}$ defined in \eqref{zP})}
\begin{equation}\begin{split}
\boldsymbol{q}^{{\mathrm{inc}}}_+=\begin{bmatrix}{{\mathrm{v}}}_0^{{\mathrm{inc}}}{}^+\\{{\mathrm{v}}}_{{\mathtt{N}}}^{{\mathrm{inc}}}{}^+\end{bmatrix}
={{\mathrm{A}}}(1-e^{-i\upkappa_y})\delta_{D}^{+}({{z}} {z}_{{P}}^{-1})\begin{bmatrix}
1\\
e^{i\upkappa_y{\mathtt{N}}}
\end{bmatrix},
\end{split}\end{equation}
\begin{equation}\begin{split}
\text{and }
\recip{{\mathbf{K}}}
=\frac{1}{{{\lambda}}+1}\begin{bmatrix}
{2 {{\lambda}}} & {({{\lambda}}-1) {{\lambda}}^{{\mathtt{N}}}} \\ {({{\lambda}}-1) {{\lambda}}^{{\mathtt{N}}}} & {2 {{\lambda}}}
\end{bmatrix}.\label{defAtilde}
\end{split}\end{equation}
{Using the expression \eqref{defAtilde},
it is found that
$$\recip{{\mathbf{K}}}-\mathbf{I}=-\frac{1-{\lambda}}{1+{\lambda}}\begin{bmatrix}
1&{\lambda}^{{\mathtt{N}}}\\
{\lambda}^{{\mathtt{N}}}&1
\end{bmatrix}.$$
Using the definitions of ${\lambda}, {\mathtt{h}}$, and ${\mathtt{r}}$ as provided in \eqref{lamL} (also see \eqref{h2}), the function
$(1-{\lambda})/(1+{\lambda})$ can be also written as ${{\mathtt{h}}}/{{\mathtt{r}}}$.
In fact, ${{\mathtt{h}}}/{{\mathtt{r}}}$ is recognized as the (scalar) {{Wiener--Hopf}} kernel for the single crack scattering problem \cite{Bls0,Bls2}, so that it is pertinent to denote it by the symbol
\begin{equation}\begin{split}
{{\mathcal{L}}}_{{k}}{\,:=}\frac{{\mathtt{h}}}{{\mathtt{r}}}.
\label{defLk}\end{split}\end{equation}}
Re-arranging the equation \eqref{WHKpre} to {the} standard form {\cite{Noble}}, the {{Wiener--Hopf}} equation is recognized as
\begin{equation}\begin{split}
\boldsymbol{{\mathrm{v}}}^-+\mathbf{L}\boldsymbol{{\mathrm{v}}}^+=\widetilde{\boldsymbol{c}}, 
\label{WHKAbra}
\end{split}\end{equation}
with the ${2\times2}$ kernel matrix
\begin{equation}\begin{split}
\mathbf{L}={{\mathcal{L}}}_{{k}}\begin{bmatrix}
1&{\lambda}^{{\mathtt{N}}}\\
{\lambda}^{{\mathtt{N}}}&1
\end{bmatrix}, 
\label{WHKkernel}
\end{split}\end{equation}
\begin{equation}\begin{split}
\text{and }
\boldsymbol{{\mathrm{v}}}^\pm=\begin{bmatrix}
{\mathrm{v}}_{0}^{ \pm}\\ {\mathrm{v}}_{{\mathtt{N}}}^{ \pm}\end{bmatrix}, \quad
\widetilde{\boldsymbol{c}}=\recip{{\mathbf{K}}}(\boldsymbol{q}^{{\mathrm{inc}}}{}^++\boldsymbol{p}^{\sgnM})
=(\mathbf{I}-\mathbf{L})(\boldsymbol{q}^{{\mathrm{inc}}}{}^++\boldsymbol{p}^{\sgnM}).
\end{split}\end{equation}
Here $\boldsymbol{p}^{\sgnM}$ is an unknown polynomial in ${z}^{-1}$ (resp. ${z}$) for $\sgnM=1$, i.e., ${\mathtt{M}}>0$ (resp. $\sgnM=-1$, i.e., ${\mathtt{M}}<0$) given by \eqref{ppiAbra}, \eqref{ppiNAbra}, and \eqref{qPAbra}. 
For example, the set of unknowns in case of ${\mathtt{M}}=-2$ {(resp. ${\mathtt{M}}=+2$)} has two elements, see Fig. \ref{Fig1_newunknowns1}(a) {(resp. Fig. \ref{Fig1_newunknowns1}(b))}.

\begin{remark}
As derived in \cite{Bls8staggerpair_asymp} (and \cite{GMthesis}), when the natural definition {(accounting for the ``shift'')} of Fourier transforms is considered related to the staggered edges, then the {{Wiener--Hopf}} kernel is found to contain a factor \eqref{myAbrakernel} in place of 
{$$\begin{bmatrix}
1&{\lambda}^{{\mathtt{N}}}\\
{\lambda}^{{\mathtt{N}}}&1
\end{bmatrix}$$}
in \eqref{WHKkernel}. According to the well known result \cite{SpeckPros}, its factorization falls in the category of unsolved problems. However, due to a alternate definition of Fourier transform {\eqref{discreteFT}} considered above, the problem can be reduced to an algebraic equation whose coefficients depend on certain scalar {{Wiener--Hopf}} factorization. {This is a succinct way to capture the essence of the technique of this paper that enables progress on a formidable Wiener--Hopf factorization problem \cite{SpeckPros,GohbergKaashoek}}.
\end{remark}

\begin{remark}
It is noteworthy that the zero offset case \cite{Bls8pair1} is a special case of the presented formulation in an elegant sense that $\boldsymbol{p}^{\sgnM}\equiv\boldsymbol{0}$.
\label{perturbcrack}
\end{remark}

\begin{figure}[h!]
\centering
\includegraphics[width=\textwidth]{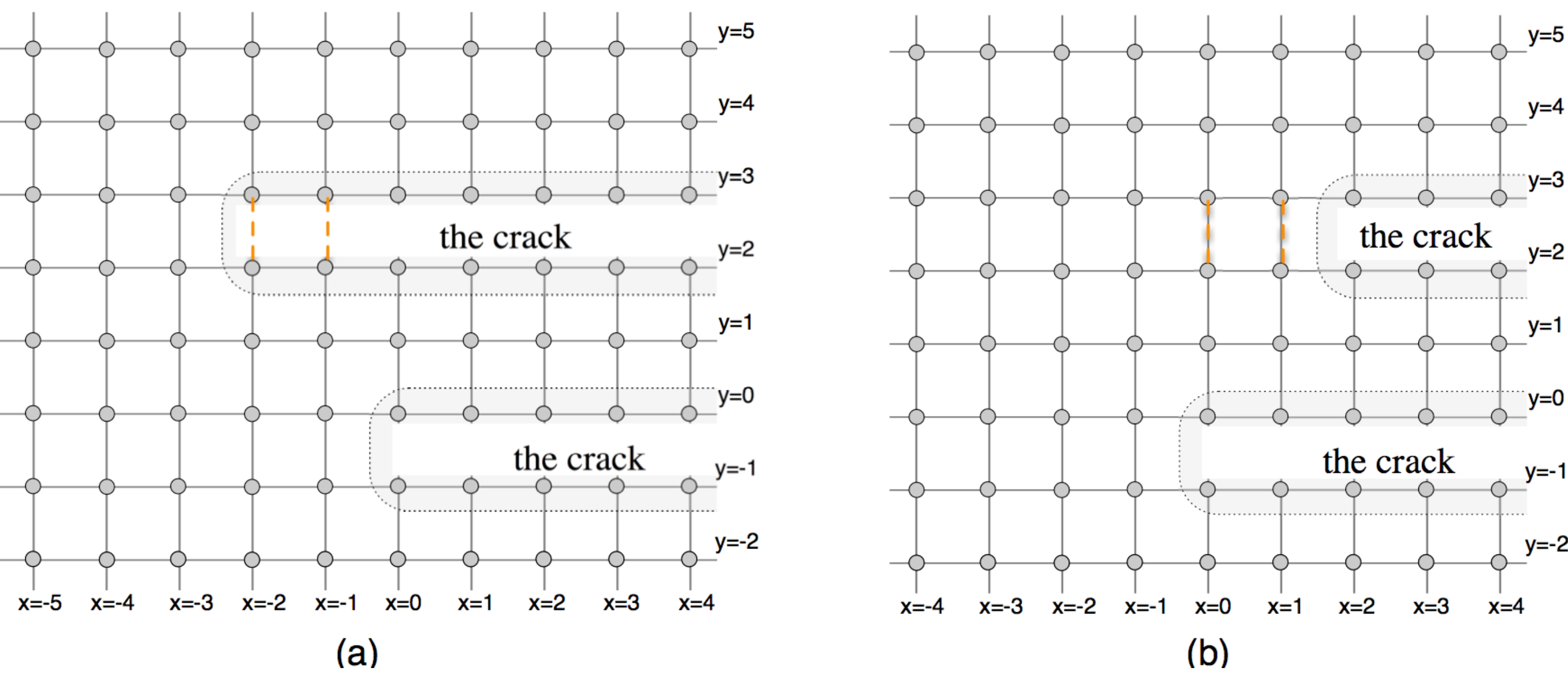}
\caption{
{
For the reduced (algebraic) problem, the unknowns are marked in orange, corresponding to Fig. \ref{Fig1}(a).}
(a) ${{\mathtt{M}}}=-2$ and (b) ${{\mathtt{M}}}=2$.
}
\label{Fig1_newunknowns1}
\end{figure}

\subsection{{Reduction of matrix Wiener--Hopf problem}}
\label{reductionK}
{It can be observed} that in \eqref{WHKkernel}, ${{\mathcal{L}}}_{{k}}={{{\mathcal{L}}}_{{k}{+}}}{{{\mathcal{L}}}_{{k}{-}}}$ with\footnote{\label{multfacsnote}{Here, for the multiplicative factorization of an admissible $f$, $f({z})=f_+({z})f_-({z})$ with $f_+({z})=\sum_{m\in\mathbb{Z}^+}f_m{z}^{-m}$ and $f_-({z})=\sum_{m\in\mathbb{Z}^-\cup\{0\}}f_m{z}^{-m}$; note that $m=0$ is in general present in the Fourier expansion of multiplicative factors. Typically, depending on $f$ a property $f_+(1/z)=f_-(z)$ can be also be incorporated so that $f_+(\infty)=f_-(\infty)$.}} \cite{Bls0} ${{\mathcal{L}}}_{{k}{\pm}}=\frac{{\mathtt{h}}_{\pm}}{{\mathtt{r}}_{\pm}}$.
Further, $\mathbf{L}$ stated in \eqref{WHKkernel} can be expressed as 
\begin{equation}\begin{split}
\mathbf{L}&=\recip{{\mathbf{J}}}{\mathbf{D}}{{\mathbf{J}}},\\
\text{where }
{{\mathbf{D}}}&={{{\mathcal{L}}}_{{k}}}\begin{bmatrix}
1-{{\lambda}}^{{\mathtt{N}}}&0\\
0&1+{{\lambda}}^{{\mathtt{N}}}
\end{bmatrix}=\begin{bmatrix}
{\upalpha}&0\\
0&{\upbeta}
\end{bmatrix}, \quad
{\mathbf{J}}=\frac{1}{\sqrt{2}}\begin{bmatrix}
1&{-}1\\
{+}1&1
\end{bmatrix}.
\end{split}\end{equation}
{It is also useful to consider ${{\mathbf{D}}}={{{\mathcal{L}}}_{{k}}}\tilde{{\mathbf{D}}}$}; the factorization of $\tilde{{\mathbf{D}}}$ is discussed in Appendix \ref{appDfacs}.
Thus,
{$\mathbf{L}=\mathbf{L}_-\mathbf{L}_+$ where}
(also recall Footnote \ref{invnote}) $\mathbf{L}_{-}={\mathbf{J}}_-{{\mathbf{D}}}_-, \mathbf{L}_{+}={{\mathbf{D}}}_+{\mathbf{J}}_+, $ i.e.,
\begin{subequations}\begin{eqnarray}
(\mathbf{L}_\pm)^{\pm1}&=&{({{\mathbf{D}}}_\pm)^{\pm1}({\mathbf{J}}_\pm)^{\pm1}=}({{\mathcal{L}}}_{k\pm})^{\pm1}(\tilde{{\mathbf{D}}}_\pm)^{\pm1}({\mathbf{J}}_\pm)^{\pm1},\label{Lfactors}\\
\text{with }
{{\mathbf{D}}}_\pm&=&({{\mathcal{L}}}_{{k}{\pm}})^{\pm1}\begin{bmatrix}
(1-{{\lambda}}^{{\mathtt{N}}})_\pm&0\\
0&(1+{{\lambda}}^{{\mathtt{N}}})_\pm
\end{bmatrix}=\begin{bmatrix}
{\upalpha}_\pm&0\\
0&{\upbeta}_\pm
\end{bmatrix},\label{Dfactors}\\
{\mathbf{J}}_+&=&{\mathbf{J}}=\frac{1}{\sqrt{2}}\begin{bmatrix}
1&-1\\
+1&1
\end{bmatrix}, \quad{\mathbf{J}}_-=\recip{{\mathbf{J}}}=\frac{1}{\sqrt{2}}\begin{bmatrix}
1&+1\\
-1&1
\end{bmatrix}\label{Efactors}.
\end{eqnarray}\label{Lfactorsfull}\end{subequations}
As $\mathbf{L}$ admits a multiplicative factorization $\mathbf{L}=\mathbf{L}{}_-\mathbf{L}{}_+,$ the {{Wiener--Hopf}} equation \eqref{WHKAbra} is re-written as
\begin{equation}\begin{split}
\recip{\mathbf{L}}_-\boldsymbol{{\mathrm{v}}}^-+\mathbf{L}_+\boldsymbol{{\mathrm{v}}}^+=\boldsymbol{c}{\,:=}(\recip{\mathbf{L}}_--\mathbf{L}_+)(\boldsymbol{q}^{{\mathrm{inc}}}{}^++\boldsymbol{p}^{\sgnM}), 
\end{split}\end{equation}
Therefore, by an application of the expressions \eqref{Lfactors},
\begin{equation}\begin{split}
\boldsymbol{c}
&=(\recip{{{\mathbf{D}}}}_-{\mathbf{J}}-{{\mathbf{D}}}_+{\mathbf{J}})(\boldsymbol{q}^{{\mathrm{inc}}}{}^++\boldsymbol{p}^{\sgnM}).
\label{cfunction}
\end{split}\end{equation}
Due to {(formal)} availability \cite{Bls8pair1} of the factorization of the kernel $\mathbf{L}$, the question regarding the {{Wiener--Hopf}} factorization is thus reduced to the {\em additive} factorization $\boldsymbol{c}=\boldsymbol{c}^++\boldsymbol{c}^-$ {according to \eqref{cfunction} and that $\boldsymbol{c}^+=\sum_{m\in\mathbb{Z}^+}c_m{z}^{-m}$ and $\boldsymbol{c}^-=\sum_{m\in\mathbb{Z}^-}c_m{z}^{-m}$}.
As a consequence of the standard application of Lioville's Theorem \cite{Ablowitz,Bls0},
finally, the solution can be written in terms of Fourier transforms {\eqref{discreteFT}} as
\begin{equation}\begin{split}
\recip{\mathbf{L}}_-\boldsymbol{{\mathrm{v}}}^-=\boldsymbol{c}^-, \quad  \mathbf{L}_+\boldsymbol{{\mathrm{v}}}^+=\boldsymbol{c}^+.
\label{WHsolK}
\end{split}\end{equation}
The additive factors {$\boldsymbol{c}^\pm$} of $\boldsymbol{c}$ are discussed in the following.

Substitution of {\eqref{Dfactors} and \eqref{Efactors}} in \eqref{cfunction} leads to, after certain natural re-arrangement,
\begin{equation}\begin{split}
\boldsymbol{c}&=\boldsymbol{c}^{\old}+\boldsymbol{c}^{\redc},\quad
\boldsymbol{c}^{\old}=[\cdot]\boldsymbol{q}^{{\mathrm{inc}}}{}^+, \quad
\boldsymbol{c}^{\redc}=[\cdot]\boldsymbol{p}^{\sgnM},\\
\text{where }
[\cdot]&=\frac{1}{\sqrt{2}}(\begin{bmatrix}
\recip{{\upalpha}}_-&-\recip{{\upalpha}}_-\\
\recip{{\upbeta}}_-&\recip{{\upbeta}}_-
\end{bmatrix}-\begin{bmatrix}
{\upalpha}_+&-{\upalpha}_+\\
{\upbeta}_+&{\upbeta}_+
\end{bmatrix}).
\label{ccincp}
\end{split}\end{equation}

{\bf Additive factorization of $\boldsymbol{c}^{\old}$:}
Note that according to \eqref{cfunction} {(${z}_{{P}}$ is defined in \eqref{zP})},
\begin{equation}\begin{split}
\boldsymbol{c}^{\old}({z})&=(\recip{{{\mathbf{D}}}}_-({z}){\mathbf{J}}-{(\recip{{{\mathbf{D}}}}_-({z}))}|_{{z}={z}_{{P}}}{\mathbf{J}}\\
&+{(\recip{{{\mathbf{D}}}}_-({z}))}|_{{z}={z}_{{P}}}{\mathbf{J}}-{{\mathbf{D}}}_+({z}){\mathbf{J}})\boldsymbol{q}^{{\mathrm{inc}}}{}^+({z})\\
&=\boldsymbol{c}^{\old}{}^-({z})+\boldsymbol{c}^{\old}{}^+({z}).
\label{cincfacs}
\end{split}\end{equation}
The last expression in \eqref{cincfacs} satisfies the requirements of the Cauchy's theorem \cite{Noble} and possess the desired behavior as ${z}\to0$ and $\infty$ in $\boldsymbol{c}^{\old}{}^-({z})$ and $\boldsymbol{c}^{\old}{}^+({z})$, respectively.
The additive factors of $\boldsymbol{c}^{\redc}$ are discussed in the following.

\begin{remark}
In view of the Remark \ref{perturbcrack}, it is noted that $\boldsymbol{c}^{\redc}$ describes the effect of a perturbation introduced by offset ${\mathtt{M}}.$ 
{A re-look at \eqref{WHKpre} confirms that the exact solution for $\mathtt{M}\ne0$ can be expressed as a superposition of that for the case $\mathtt{M}=0$ \cite{Bls8pair1} as it corresponds to the vanishing of $\boldsymbol{c}^{\redc}$, i.e., the result when the right hand side in the Wiener--Hopf equation involves only
\begin{equation}\begin{split}
\mathring{\boldsymbol{c}}:=\boldsymbol{c}^{\old}.
\label{caligned}
\end{split}\end{equation}
The exact nature of the effect of stagger is thus brought out by $\pertbn{\boldsymbol{c}}:=\boldsymbol{c}^{\redc}$ (i.e., ${\boldsymbol{c}}=\mathring{\boldsymbol{c}}+\pertbn{\boldsymbol{c}}$) which is studied after this remark.}
{By addition of $\pm$ functions in \eqref{WHsolK}, 
$$\boldsymbol{{\mathrm{v}}}^{\mathrm{F}}=\boldsymbol{{\mathrm{v}}}^-+\boldsymbol{{\mathrm{v}}}^+={\mathbf{L}_-}\boldsymbol{c}^-+\recip{\mathbf{L}}_+\boldsymbol{c}^+=\mathring{\boldsymbol{{\mathrm{v}}}}^{\mathrm{F}}+\pertbn{\boldsymbol{{\mathrm{v}}}}^{\mathrm{F}},
$$
$$
\text{where }
\mathring{\boldsymbol{{\mathrm{v}}}}^{\mathrm{F}}
={\mathbf{L}_-}\mathring{\boldsymbol{c}}^{-}+\recip{\mathbf{L}}_+\mathring{\boldsymbol{c}}^{+},
$$
\begin{equation}\begin{split}
\text{and }
\pertbn{\boldsymbol{{\mathrm{v}}}}^{\mathrm{F}}={\mathbf{L}_-}\boldsymbol{c}^{\redc-}+\recip{\mathbf{L}}_+\boldsymbol{c}^{\redc+}.
\label{vhatK}
\end{split}\end{equation}
}
\label{perturbcrack2}
\end{remark}

The formidable matrix Wiener-Hopf equation \cite{GMthesis,Bls8staggerpair_asymp} has been thus reduced to handling 
the occurrence of these two terms depend on the sign of ${\mathtt{M}}$:
\begin{equation}\begin{split}
{\recip{{\mathbf{D}}}_-({z})}{\mathbf{J}}\boldsymbol{p}^{+}({z})\text{ and }-{{\mathbf{D}}}_+({z}){\mathbf{J}}\boldsymbol{p}^{-}({z}).
\label{difficultK}
\end{split}\end{equation}
As shown below, the respective cases can be reduced to additive factorization of $f_-\poly{P}^+$ and $g_+\poly{Q}^-$ where $\poly{P}^+$ and ${z}^{-1}\poly{Q}^-$ are polynomials in ${z}^{-1}$ and ${z}$ of degree $|{\mathtt{M}}|-1$.
Naturally, it is sufficient to explore the terms of type $f_-{z}^{-m}$ and $g_+{z}^m$ for $m>0$.
Using the expansion of the functions $f_-$ and $g_+$ in their region of analyticity, for $m\ge0,$
\begin{subequations}\begin{eqnarray}
\phi_m({z})&=&f_-({z}){z}^{-m}=\sum\limits_{x\in{\mathbb{Z}^-}\cup\{0\}}f_x{z}^{-x}{z}^{-m}\notag\\
&=&\sum\limits_{x\in{\mathbb{Z}^-}}f_{x-m}{z}^{-x}+\sum\limits_{x=-m}^{0}f_x{z}^{-x-m}=\phi_{m}^{-}({z})+\phi_{m}^{+}({z}),\label{deffminus}\\
\Phi_{-m}({z})&=&F_+({z}){z}^m=\sum\limits_{x\in{\mathbb{Z}^+}}F_x{z}^{-x}{z}^{m}\notag\\
&=&\sum\limits_{x\in{\mathbb{Z}^+}}F_{x+m}{z}^{-x}+\sum\limits_{x=0}^{m-1}F_x{z}^{-x+m}=\Phi_{-m}^{+}({z})+\Phi_{-m}^{-}({z}).
\label{defFplus}
\end{eqnarray}\label{deffF}\end{subequations}
\begin{remark}
It is emphasized that $\phi_{m}^{+}$ (resp. $\Phi_{-m}^{-}$) involves a {{\em finite}} number of Fourier coefficients of $f_-$ (resp. $F_+$).
$\phi_{0}^{+}=f_-(0)=f_0, \phi_{1}^{+}=f_-(0)=f_0{z}^{-1}+f_{-1},$ and so on,
and
$\Phi_{0}^{-}=0, \Phi_{-1}^{-}=F_0{z}, \Phi_{-2}^{-}=F_0{z}^2+F_{1}{z},$ and so on.
\label{finiteremark1}
\end{remark}

Notice that $\{{{\mathrm{v}}}_{{{\mathtt{x}}},{\mathtt{N}}}\}_{{{\mathtt{x}}}=0}^{{\mathtt{M}}-1}$ or $\{{{\mathrm{v}}}_{{{\mathtt{x}}},{\mathtt{N}}}\}_{{{\mathtt{x}}}={\mathtt{M}}}^{-1}$ are unknowns. 
In other words, $\boldsymbol{p}^{\sgnM}$ is an unknown polynomial in ${z}^{-1}$ (resp. ${z}$) for $\sgnM=1$, i.e., ${\mathtt{M}}>0$ (resp. $\sgnM=-1$, i.e., ${\mathtt{M}}<0$) given by \eqref{ppiAbra} (resp. \eqref{ppiNAbra}). {The goal of the reduction technique of this paper is to obtain a finite set of linear algebraic equations that determine precisely these unknowns.}

For {an} illustration {of the final expressions and steps in an expanded form, for convenience}, consider the case ${\mathtt{M}}>0$, the details for the other case {(${\mathtt{M}}<0$)} are provided in Appendix \ref{AppcrackMN}. {As evident from Fig. \ref{Fig1_newunknowns1}(a) and Fig. \ref{Fig1_newunknowns1}(b), the difference between the two cases lies in the appearance of (shown orange colored links) the unknown bond lengths. ${\mathtt{M}}<0$ is associated with the occurrence of extra set of broken bonds while ${\mathtt{M}}>0$ leads to formation of intact bonds out of broken bonds.}

{\bf Additive factorization of $\boldsymbol{c}^{\redc}$:}
In accordance with the expressions provided in \eqref{ppiAbra}, let
\begin{equation}\begin{split}
\poly{P}^+({z})={{{\mathtt{f}}}}^{+}({z})+{{{\mathtt{f}}}}^{{\mathrm{inc}} {+}}({z})
=\sum\limits_{{\mathtt{x}}\in{{}\mathbb{D}}}{\mathrm{v}}^{{t}}_{{{\mathtt{x}}},{\mathtt{N}}}{z}^{-{\mathtt{x}}},
\end{split}\end{equation}
{where it is supposed that}
\begin{equation}\begin{split}
{\text{${{}\mathbb{D}}$ denotes the set $\mathbb{Z}_0^{{\mathtt{M}}-1}$.}}
\label{defnDcrack}
\end{split}\end{equation}
{It is natural to} recall the definition of $\boldsymbol{p}^{\sgnM}$ \eqref{qPAbra}, so that
{$$\boldsymbol{p}^{\sgnM}=\boldsymbol{p}^{+}=\begin{bmatrix}
0\\
-\poly{P}^+
\end{bmatrix}.$$}
After substitution of the detailed expression of the Wiener--Hopf kernel factorization \eqref{Lfactorsfull}, the first term in \eqref{difficultK} becomes
\begin{equation}\begin{split}
{\recip{{\mathbf{D}}}_-({z})}{\mathbf{J}}\boldsymbol{p}^{+}({z})
=\frac{1}{\sqrt{2}}\begin{bmatrix}
\recip{{\upalpha}}_-\poly{P}^+({z})\\
-\recip{{\upbeta}}_-\poly{P}^+({z})
\end{bmatrix}
\label{difficulttermMp1}
\end{split}\end{equation}
Using the splitting suggested in \eqref{deffminus},
\begin{equation}\begin{split}
\recip{{\upalpha}}_-\poly{P}^+&=f^-\poly{P}^+
=\sum\limits_{{\mathtt{x}}\in{{}\mathbb{D}}}{\mathrm{v}}^{{t}}_{{{\mathtt{x}}},{\mathtt{N}}}\phi_{{\mathtt{x}}}=\sum\limits_{{\mathtt{x}}\in{{}\mathbb{D}}}{\mathrm{v}}^{{t}}_{{{\mathtt{x}}},{\mathtt{N}}}(\phi_{{\mathtt{x}}}^{+}+\phi_{{\mathtt{x}}}^{-}),\\
\recip{{\upbeta}}_-\poly{P}^+&=g^-\poly{P}^+
=\sum\limits_{{\mathtt{x}}\in{{}\mathbb{D}}}{\mathrm{v}}^{{t}}_{{{\mathtt{x}}},{\mathtt{N}}}\psi_{{\mathtt{x}}}=\sum\limits_{{\mathtt{x}}\in{{}\mathbb{D}}}{\mathrm{v}}^{{t}}_{{{\mathtt{x}}},{\mathtt{N}}}(\psi_{{\mathtt{x}}}^{+}+\psi_{{\mathtt{x}}}^{-}).
\label{phipsicrack}
\end{split}\end{equation}
Hence, the additive factorization of $\boldsymbol{c}^{\redc}$ defined in \eqref{ccincp} follows as
\begin{equation}\begin{split}
{\sqrt{2}}\boldsymbol{c}^{\redc}&={\sqrt{2}}\boldsymbol{c}^{\redc}{}^-+{\sqrt{2}}\boldsymbol{c}^{\redc}{}^+\\
&=\begin{bmatrix}
\sum\limits_{{\mathtt{x}}\in{{}\mathbb{D}}}{\mathrm{v}}^{{t}}_{{{\mathtt{x}}},{\mathtt{N}}}\phi_{{\mathtt{x}}}^{-}\\
-\sum\limits_{{\mathtt{x}}\in{{}\mathbb{D}}}{\mathrm{v}}^{{t}}_{{{\mathtt{x}}},{\mathtt{N}}}\psi_{{\mathtt{x}}}^{-}\\
\end{bmatrix}\\
&+\begin{bmatrix}
\sum\limits_{{\mathtt{x}}\in{{}\mathbb{D}}}{\mathrm{v}}^{{t}}_{{{\mathtt{x}}},{\mathtt{N}}}\phi_{{\mathtt{x}}}^{+}\\
-\sum\limits_{{\mathtt{x}}\in{{}\mathbb{D}}}{\mathrm{v}}^{{t}}_{{{\mathtt{x}}},{\mathtt{N}}}\psi_{{\mathtt{x}}}^{+}
\end{bmatrix}
-\begin{bmatrix}
{\upalpha}_+&-{\upalpha}_+\\
{\upbeta}_+&{\upbeta}_+
\end{bmatrix}\boldsymbol{p}^+({z}).
\label{cpfacs}
\end{split}\end{equation}
The expressions of $\boldsymbol{c}^{\redc}{}^-$ and $\boldsymbol{c}^{\redc}{}^+$ can be read from \eqref{cpfacs} and can be easily seen to satisfy the requirements of the Cauchy's theorem \cite{Noble} and possess the desired behavior as ${z}\to0$ and $\infty$, respectively.

As the last statement of closure of the problem for the case ${\mathtt{M}}>0,$ the polynomial $\boldsymbol{p}^+({z})$ can be determined as follows.

{\bf Equation for $\{{\mathrm{v}}_{{{\mathtt{x}}},{\mathtt{N}}}\}_{{\mathtt{x}}\in{{}\mathbb{D}}}$:}
Let $\mathfrak{P}_{{}\mathbb{D}}$ denote the projection of Fourier coefficients of a typical $f^+(z)$ for $|z|>{\mathrm{R}}_+$ to the set ${{}\mathbb{D}}$ {(defined in \eqref{defnDcrack})}. Thus,
\begin{equation}\begin{split}
\mathfrak{P}_{{}\mathbb{D}}(f^+)=\sum\limits_{{\mathtt{x}}\in{{}\mathbb{D}}}f_{{\mathtt{x}}}{z}^{-{\mathtt{x}}}, \forall f^+=\sum\limits_{{\mathtt{x}}\in{\mathbb{Z}^+}}f_{{\mathtt{x}}}{z}^{-{\mathtt{x}}}, |{z}|>{\mathrm{R}}_+.
\label{defdomainproj}
\end{split}\end{equation}
{It is clear that $\mathfrak{P}_{{}\mathbb{D}}$ is a linear operator.}

Using \eqref{WHsolK}${}_2$, the set of first ${\mathtt{M}}$ Fourier coefficients of the second component of $\boldsymbol{{\mathrm{v}}}^+$, i.e., {(recall \eqref{Lfactors} for the expression of ${\mathbf{L}^+}$)}
\begin{equation}\begin{split}
{\mathfrak{P}_{{}\mathbb{D}}}(\ensuremath{\hat{\mathbf{e}}}_2\cdot\boldsymbol{{\mathrm{v}}}^+)={\mathfrak{P}_{{}\mathbb{D}}}(\ensuremath{\hat{\mathbf{e}}}_2\cdot\recip{\mathbf{L}}_+\boldsymbol{c}^+)={\mathfrak{P}_{{}\mathbb{D}}}(\ensuremath{\hat{\mathbf{e}}}_2\cdot\recip{\mathbf{L}}_+(\boldsymbol{c}^{\old}{}^++\boldsymbol{c}^{\redc}{}^+))
\label{reducedcrackP}
\end{split}\end{equation}
yields a ${\mathtt{M}}\times{\mathtt{M}}$ coefficient matrix for the total number of ${\mathtt{M}}$ unknowns, i.e., $\{{\mathrm{v}}_{{{\mathtt{x}}},{\mathtt{N}}}\}_{{\mathtt{x}}\in{{}\mathbb{D}}}$ {(recall \eqref{defnDcrack} giving the definition of ${{}\mathbb{D}}$)}. The equation \eqref{reducedcrackP} is {\em the reduced algebraic problem} and it can be observed that its coefficients can be found by using only the scalar {{Wiener--Hopf}} factorization \cite{Noble}.

Using \eqref{cincfacs} and \eqref{cpfacs}, with 
{
\begin{equation}\begin{split}
\boldsymbol{\tau}^+=\frac{1}{\sqrt{2}}\begin{bmatrix}
\sum\limits_{{\mathtt{x}}\in{{}\mathbb{D}}}{\mathrm{v}}^{{t}}_{{{\mathtt{x}}},{\mathtt{N}}}\phi_{{\mathtt{x}}}^{+}\\
-\sum\limits_{{\mathtt{x}}\in{{}\mathbb{D}}}{\mathrm{v}}^{{t}}_{{{\mathtt{x}}},{\mathtt{N}}}\psi_{{\mathtt{x}}}^{+}
\end{bmatrix},
\label{taucrackP1}
\end{split}\end{equation}
}
it is found that {(i.e., $\recip{\mathbf{L}}_+\boldsymbol{c}^+$ in \eqref{reducedcrackP} equals)}
\begin{equation}\begin{split}
{\recip{\mathbf{L}}_+(\boldsymbol{c}^{\old}{}^++\boldsymbol{c}^{\redc}{}^+)}
&=(\recip{{\mathbf{J}}}\recip{{{\mathbf{D}}}}_+\recip{{{\mathbf{D}}}}_-({z}_{{P}}){\mathbf{J}}-\mathbf{I})\boldsymbol{q}^{{\mathrm{inc}}}{}^+
+\recip{{\mathbf{J}}}\recip{{{\mathbf{D}}}}_+{\boldsymbol{\tau}}^+-\boldsymbol{p}^+.
\end{split}\end{equation}
{In above, ${z}_{{P}}$ is defined by \eqref{zP}.}
Indeed, using above and expanding and re-arranging the terms in \eqref{reducedcrackP} further,
\begin{subequations}\begin{equation}\begin{split}
&{\mathfrak{P}_{{}\mathbb{D}}}({{\mathrm{v}}}_{{\mathtt{N}}}^+({z})+{{\mathrm{v}}}^{{\mathrm{inc}}}_{{\mathtt{N}}}{}^+({z})-{{{\mathtt{f}}}}^+({z})-{{{\mathtt{f}}}}^{{\mathrm{inc}}}{}^+({z}))\\
&={\mathfrak{P}_{{}\mathbb{D}}}({\frac{1}{2}}\mathcal{F}^{{\mathrm{inc}}}({z})-{\frac{1}{2}}\sum\limits_{{\mathtt{x}}\in{{}\mathbb{D}}}{\mathrm{v}}^{{t}}_{{{\mathtt{x}}},{\mathtt{N}}}\mathcal{A}_{{\mathtt{x}}}({z})),
\end{split}\end{equation}
\begin{eqnarray}
\text{where }
\mathcal{A}_{{\mathtt{x}}}({z})&{\,:=}&\phi_{{\mathtt{x}}}^{+}({z})\recip{{\upalpha}}_+({z})+\psi_{{\mathtt{x}}}^{+}({z})\recip{{\upbeta}}_+({z}),
\label{Axcrack}\\
\text{and }
\mathcal{F}^{{\mathrm{inc}}}({z})&{\,:=}&\big(- (1-e^{i\upkappa_y{\mathtt{N}}})\recip{{\upalpha}}_-({z}_{{P}})\recip{{\upalpha}}_+({z})\notag\\
&&+ (1+e^{i\upkappa_y{\mathtt{N}}})\recip{{\upbeta}}_-({z}_{{P}})\recip{{\upbeta}}_+({z})\big)e^{-i\upkappa_y{\mathtt{N}}}{{\mathrm{v}}}_{{\mathtt{N}}}^{{\mathrm{inc}}}{}^+({z}).
\label{Finccrack}
\end{eqnarray}
\end{subequations}

In view of the definitions of ${{{\mathtt{f}}}}^+$ and ${{{\mathtt{f}}}}^{{\mathrm{inc}}}{}^+$ given by \eqref{ppiAbra}, {above} equation leads to
\begin{equation}\begin{split}
\sum\limits_{{\mathtt{x}}\in{{}\mathbb{D}}}{\mathrm{v}}^{{t}}_{{\mathtt{x}},{\mathtt{N}}}\mathfrak{P}_{{}\mathbb{D}}(\mathcal{A}_{{\mathtt{x}}})
=\mathfrak{P}_{{}\mathbb{D}}\big(\mathcal{F}^{{\mathrm{inc}}}\big),
\label{eqnMbyMcrack}
\end{split}\end{equation}
which {is} a ${\mathtt{M}}\times{\mathtt{M}}$ system of linear algebraic equations for $\{{\mathrm{v}}^{{t}}_{{\mathtt{x}},{\mathtt{N}}}\}_{{\mathtt{x}}\in{{}\mathbb{D}}}$, i.e., 
$\{{\mathrm{v}}_{{\mathtt{x}},{\mathtt{N}}}\}_{{{}\mathbb{D}}}$ since $\{{\mathrm{v}}^{{\mathrm{inc}}}_{{\mathtt{x}},{\mathtt{N}}}\}_{{{}\mathbb{D}}}$ are known.
{\eqref{eqnMbyMcrack} can be also conveniently written in terms of the familiar matrix-column format.}
Indeed, with the notation $\mathfrak{C}_{{{\mu}}}(p)$ to denote the coefficient of $z^{-{{\mu}}}$ for polynomials $p$ of the form {$$\mathfrak{C}_0+\mathfrak{C}_1{z}^{-1}+\mathfrak{C}_2{z}^{-2}+\dotsc,$$} 
it is easy to see that
\begin{subequations}
\begin{equation}\begin{split}
{\sum_{\nu=1}^{{\mathtt{M}}}}a_{{{\mu}}\nu}\chi_\nu=b_{{\mu}}\quad\quad({{\mu}}=1, \dotsc, {\mathtt{M}})
\label{aknueqnPcrack}
\end{split}\end{equation}
where {(for ${{\mu}}, \nu=1, \dotsc, {\mathtt{M}}$)}
\begin{equation}\begin{split}
\begin{aligned}
a_{{{\mu}}\nu}&=\mathfrak{C}_{{{\mu}}-1}\big(\mathfrak{P}_{{}\mathbb{D}}(\mathcal{A}_{\nu-1})\big), \quad
\chi_\nu&={\mathrm{v}}^{{t}}_{\nu-1,{\mathtt{N}}}, \quad
b_{{\mu}}&=\mathfrak{C}_{{{\mu}}-1}(\mathfrak{P}_{{}\mathbb{D}}\big(\mathcal{F}^{{\mathrm{inc}}}\big)).
\end{aligned}
\end{split}\end{equation}
\end{subequations}
Let ${\recip{a}_{\nu{{\mu}}}}$ denote the components of the inverse of $[a_{{{\mu}}\nu}]_{{{\mu}}, \nu=1, \dotsc, {\mathtt{M}}}$. Then
\begin{equation}\begin{split}
{\mathrm{v}}^{{t}}_{{\mathtt{x}},{\mathtt{N}}}={\sum_{{{\mu}}=1}^{{\mathtt{M}}}}{\recip{a}_{({\mathtt{x}}+1){{\mu}}}}\mathfrak{C}_{{{\mu}}-1}(\mathfrak{P}_{{}\mathbb{D}}\big(\mathcal{F}^{{\mathrm{inc}}}\big)), {\mathtt{x}}\in{{}\mathbb{D}}.
\label{exactvmsol}
\end{split}\end{equation}
In view of Remark \ref{finiteremark1}, the reduced problem \eqref{eqnMbyMcrack} 
{needs} an evaluation of only a {{\em finite}} number of Fourier coefficients  {beyond the $\mathtt{M}=0$ case (due to appearance of $\phi_{{\mathtt{x}}}^{+}$ and $\psi_{{\mathtt{x}}}^{+}$ in $\mathcal{A}_{{\mathtt{x}}}$)}.

\section{{Pair of rigid constraints}}
\label{WHeqformC}
{In this section, the discrete scattering problem associated with a pair of staggered rigid constraints is analyzed within a Wiener--Hopf formulation by pursuing closely the symbolic manipulations and definitions of the previous section. The scatterer (with discrete Dirichlet condition) is located at the sites stated in \eqref{defectC}, shown schematically in Fig. \ref{Fig1}. 
The details of the equation of motion in the intact lattice and the incident wave are same as those provided earlier in \S\ref{sqLattFT}. Moreover, the application of the Fourier transform \eqref{discreteFT} brings out the specification of the exact equations for the unknown functions as described in the statements immediately following \eqref{ubulkC}}.

{On the lattice rows ${\mathtt{y}}=0$ (referred as the `lower' rigid constraint sometimes) and ${\mathtt{y}}={\mathtt{N}}$ (referred as the `upper' rigid constraint sometimes), as schematically shown in Fig. \ref{Fig1}(b), the sites with ${\mathtt{x}}\in{\mathbb{Z}^+}$ and ${\mathtt{x}}-{\mathtt{M}}\in{\mathbb{Z}^+}$ (recall the definition ${\mathbb{Z}^+}$ of in \eqref{ZpZn}) are restrained to admit only zero value of total displacement, i.e.,
\begin{equation}\begin{split}
{\su}^{t}_{{{\mathtt{x}}}, {{\mathtt{y}}}}=0.
\label{rigidconstraintdef}
\end{split}\end{equation}
{Combining the discrete Helmholtz equation \eqref{dHelmholtz} for ${{\mathtt{x}}}\in{\mathbb{Z}^-}$ (recall the definition ${\mathbb{Z}^-}$ of in \eqref{ZpZn}) and the Dirichlet condition \eqref{rigidconstraintdef}, i.e., ${\su}^{t}_{{{\mathtt{x}}}, 0}=0,$ for ${{\mathtt{x}}}\in{\mathbb{Z}^+}$ in the same equation at ${{\mathtt{y}}}=0$, it is found that the relevant}
equation of motion is
\begin{equation}\begin{split}
-{\upomega}^2{\su}_{{{\mathtt{x}}}, 0}={\triangle}{\su}_{{{\mathtt{x}}}, {\mathtt{y}}}|_{{\mathtt{y}}=0}{{\mathcal{H}}}(-{\mathtt{x}}-1)+{\upomega}^2 {\su}^{{\mathrm{inc}}}_{{{\mathtt{x}}}, {0}}{{\mathcal{H}}}({\mathtt{x}}),
\label{rigideq1}
\end{split}\end{equation}
{where the notation $|_{{\mathtt{y}}=0}$ has been used because of the presence of shift operators in the discrete Laplacian \eqref{dimnewtoneq}${}_2$.
To obtain above equation \eqref{rigideq1} from \eqref{dHelmholtz} and \eqref{rigidconstraintdef}, recall} that the total displacement ${\su}^{{t}}$ of an arbitrary particle in the lattice ${\mathfrak{S}}$ is a sum of the incident wave displacement ${\su}^{{\mathrm{inc}}}$ \eqref{uinc} and the scattered wave displacement ${\su}$ ({including} the reflected waves), {i.e., \eqref{utsplit} holds},
on the other hand the incident wave satisfies \eqref{dHelmholtz} due to the dispersion relation \eqref{dispersion}.}

{By an application of} the Fourier transform \eqref{discreteFT} {to \eqref{rigideq1}, using the elementary shift property of the half-infinite Fourier transform, it is found that}
$-{\upomega}^2{\su}_{0}^{{\mathrm{F}}}={\upomega}^2 {{\su}^{{\mathrm{inc}}}_{0}{}^+}+{{z}}({\su}_{0}^{-}+{\su}_{0, 0})+{{z}}^{-1}({\su}_{0}^{-}-{{z}} {\su}_{-1, 0})+{\su}_{1}^{-}+{\su}_{-1}^{-}-4{\su}_{0}^{-},$ {is the equation satisfied by the scattered field at ${\mathtt{y}}=0$.}
{This equation can be re-arranged (using \eqref{delDp} and \eqref{zP} to express ${\su}^{{\mathrm{inc}}}_{0}{}^+$) so that \eqref{rigideq1} is equivalently stated as
\begin{equation}\begin{split}
{{\mathtt{Q}}}{\su}_0^{-}-{\upomega}^2{\su}_0^{+}={{\saux}_0+{\su}_{1}^{-}+{\su}_{-1}^{-}}+{\upomega}^2{\su}^{{\mathrm{inc}}}_{0,0}\delta_{D}^{+}({{z}} {z}_{{P}}^{-1}),
\label{preu0expn}
\end{split}\end{equation}
\begin{equation}\begin{split}
\text{where }
{\saux}_0{({{z}}):=}-{\su}_{-1, 0}+{{z}} {\su}_{0, 0},\\
{\text{and }{{\mathtt{Q}}}({{z}}){\,:=}4-{{z}}-{{z}}^{-1}-{\upomega}^2.}
\label{defW0Q}
\end{split}\end{equation}
}
{As a result of the assumption \eqref{complexfreq}, it is found that ${\mathtt{Q}}$ does not have any zeros on the annulus ${{\mathscr{A}}}$ (described in Appendix \ref{applyFT}, see also \cite{Bls1}). Thus, recognizing the rigid constraint \eqref{rigidconstraintdef} at ${\mathtt{x}}\in{\mathbb{Z}^+}, {{\mathtt{y}}}=0$, that is ${{\su}_{0}^+}+{{\su}^{{\mathrm{inc}}}_{0}{}^+}=0$, \eqref{rigideq1} leads to the final form}
\begin{equation}\begin{split}
{\su}_0^{{\mathrm{F}}}{({{z}})}=\frac{{\saux}_0{({{z}})}+{\su}_{1}^{-}{({{z}})}+{\su}_{-1}^{-}{({{z}})}}{{\mathtt{Q}}{({{z}})}}-{{\su}^{{\mathrm{inc}}}_{0,0}}\delta_{D}^{+}({{z}} {z}_{{P}}^{-1}).
\label{u0expn}
\end{split}\end{equation}

\begin{remark}
Note that ${\su}_{-1, 0}$ is an unknown complex number while ${\su}_{0, 0}$ is known {(and equals $-{\su}_{0, 0}^{{\mathrm{inc}}}$ due to the `lower' rigid constraint ${\su}^{t}_{0, 0}=0$, to be substituted henceforth)}. 
{In particular, in \eqref{defW0Q}, ${\saux}_0({{z}})=-{\su}_{-1, 0}-{{z}} {\su}^{{\mathrm{inc}}}_{0, 0}.$}
\label{u00remark}
\end{remark}
{In terms of the definition of ${{\mathtt{H}}}$ provided in \eqref{h2} as part of Appendix \ref{applyFT}, it is a mild observation that ${{\mathtt{Q}}}={{\mathtt{H}}}+2$.}

For the row ${{\mathtt{y}}}={\mathtt{N}}$, {the horizontal stagger or offset is} ${{\mathtt{M}}}\in\mathbb{Z}$, {i.e. an integer}. {In this paper, the interesting case concerns those configurations when the offset ${\mathtt{M}}$ is negative or positive, since the zero offset, i.e.,} ${\mathtt{M}}=0$, is the case of aligned parallel constraint edges which {appears in a separate exposition} \cite{Bls8pair1}. 

\subsection{{${\mathtt{M}}>0$}}
\label{slitMP}
For the row ${{\mathtt{y}}}={\mathtt{N}}$, {using the schematic} shown in Fig. \ref{Fig1}(b),
{it is clear that the discrete Helmholtz equation \eqref{dHelmholtz} holds for ${{\mathtt{x}}}-{\mathtt{M}}\in{\mathbb{Z}^-}$ whereas the Dirichlet condition ${\su}^{t}_{{{\mathtt{x}}}, 0}=0$ holds for ${{\mathtt{x}}}-{\mathtt{M}}\in{\mathbb{Z}^+}$.
In terms of the splitting \eqref{utsplit} of the total field between the incident and scattered components, it is found that, as a counterpart of \eqref{rigideq1},}
the equation satisfied is
{
\begin{equation}\begin{split}
-{\upomega}^2{\su}_{{{\mathtt{x}}}, {\mathtt{N}}}={\triangle}{\su}_{{{\mathtt{x}}}, {\mathtt{y}}}|_{{\mathtt{y}}={\mathtt{N}}}{{\mathcal{H}}}(-{\mathtt{x}}+{\mathtt{M}}-1)+{\upomega}^2 {\su}^{{\mathrm{inc}}}_{{{\mathtt{x}}}, {{\mathtt{N}}}}{{\mathcal{H}}}({\mathtt{x}}-{\mathtt{M}}).
\label{rigideq2}
\end{split}\end{equation}
By adding and subtracting certain terms (in analogy with the case of cracks \eqref{crackt2}), \eqref{rigideq2} can be re-arranged so that it captures the same form rigid constraint as \eqref{rigideq1} but now for $\mathtt{y}={\mathtt{N}}$ (modulo the natural appearance of certain source terms in square brackets)}
\begin{equation}\begin{split}
-{\upomega}^2{\su}_{{{\mathtt{x}}}, {{\mathtt{N}}}}
&={\triangle}{\su}_{{{\mathtt{x}}}, {\mathtt{y}}}{|_{{\mathtt{y}}={\mathtt{N}}}}{{\mathcal{H}}}(-{\mathtt{x}}-1)+{\upomega}^2 {\su}^{{\mathrm{inc}}}_{{{\mathtt{x}}}, {{\mathtt{N}}}}{{\mathcal{H}}}({\mathtt{x}})\\
&+[({\triangle}{\su}_{{{\mathtt{x}}}, {\mathtt{y}}}{|_{{\mathtt{y}}={\mathtt{N}}}}+{\upomega}^2{\su}_{{{\mathtt{x}}}, {\mathtt{N}}}
-{\upomega}^2{\su}_{{{\mathtt{x}}}, {\mathtt{N}}}-{\upomega}^2 {\su}^{{\mathrm{inc}}}_{{{\mathtt{x}}}, {{\mathtt{N}}}}){{\mathcal{H}}}({\mathtt{M}}-1-{\mathtt{x}}){{\mathcal{H}}}({\mathtt{x}})].
\label{geneqPy0}
\end{split}\end{equation}
{Indeed, for ${\mathtt{x}}-{{\mathtt{M}}}\in{\mathbb{Z}^-},$ \eqref{geneqPy0} (evidently, identical to \eqref{rigideq2}) reduces to the discrete Helmholtz equation (with ${\su}$ replacing ${\su}^{{t}}$ in \eqref{dHelmholtz}) as in \eqref{rigideq2}.
In this manner, the equation of motion for the rows with rigid constraints have been written in terms of zero offset, i.e., `aligned' tips (the corresponding scattering problem is graced with the availability of an exact solution \cite{Bls8pair1}). This manner of re-writing \eqref{rigideq2} plays an important role in carrying out the reduction technique which is stated in the title of this paper.}

{
Catering to the need of convenience, due to repetition of similar expressions, in the remainder of this section, let 
\begin{equation}\begin{split}
{\mathtt{B}}_{\mathtt{x}}(m, n){\,:=}({\triangle}{\su}_{{{\mathtt{x}}}, {\mathtt{y}}}|_{{\mathtt{y}}={\mathtt{N}}}+{\upomega}^2{\su}_{{{\mathtt{x}}}, {\mathtt{N}}}){{\mathcal{H}}}({\mathtt{x}}-m){{\mathcal{H}}}(n-{{\mathtt{x}}}), {{\mathtt{x}}}\in\mathbb{Z}.
\label{anidentityB}
\end{split}\end{equation}
Evidently,
the sum of first and second term inside the square brackets of \eqref{geneqPy0} becomes ${\mathtt{B}}_{\mathtt{x}}(0, {\mathtt{M}}-1)$;
in view of the later occurrence 
as well, the following is established as an identity.}
\begin{claim}
{The Fourier transform \eqref{discreteFT} of $\{{\mathtt{B}}_{\mathtt{x}}(m, n)\}_{{\mathtt{x}}\in\mathbb{Z}}$, i.e., ${\mathtt{B}}^{{\mathrm{F}}}(m, n;{z})=\sum\limits_{{{\mathtt{x}}}\in\mathbb{Z}}{{z}}^{-{{\mathtt{x}}}}{\mathtt{B}}_{\mathtt{x}}(m, n),$
is given by
\begin{equation}\begin{split}
{\mathtt{B}}^{{\mathrm{F}}}(m, n;{z})&=-{\mathtt{Q}}{({{z}})} \sum\limits_{{{\mathtt{x}}}\in\mathbb{Z}_m^n}{{z}}^{-{{\mathtt{x}}}}{\su}_{{{\mathtt{x}}}, {\mathtt{N}}}
+\sum\limits_{{{\mathtt{x}}}\in\mathbb{Z}_m^n}{{z}}^{-{{\mathtt{x}}}}({\su}_{{{\mathtt{x}}}, {\mathtt{N}}+1}+{\su}_{{{\mathtt{x}}}, {\mathtt{N}}-1})\\
&
+{z}^{-m}(-{z}\su_{m, {\mathtt{N}}}+\su_{m-1, {\mathtt{N}}})+{z}^{-n}(\su_{n+1, {\mathtt{N}}}-{z}^{-1}\su_{n, {\mathtt{N}}})
,
\label{anidentitygen}
\end{split}\end{equation}
where ${\mathtt{Q}}$ is defined by \eqref{defW0Q}${}_2$.}
\label{claimone}
\end{claim}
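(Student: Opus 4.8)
The plan is a direct, if somewhat bookkeeping-heavy, computation. The starting point is to expand the combination inside the bracket of \eqref{anidentityB} by means of the definition \eqref{dimnewtoneq}${}_2$ of the discrete Laplacian at the row ${\mathtt{y}}={\mathtt{N}}$,
\[
{\triangle}{\su}_{{\mathtt{x}}, {\mathtt{y}}}|_{{\mathtt{y}}={\mathtt{N}}}+{\upomega}^2{\su}_{{\mathtt{x}}, {\mathtt{N}}}={\su}_{{\mathtt{x}}+1, {\mathtt{N}}}+{\su}_{{\mathtt{x}}-1, {\mathtt{N}}}+{\su}_{{\mathtt{x}}, {\mathtt{N}}+1}+{\su}_{{\mathtt{x}}, {\mathtt{N}}-1}-(4-{\upomega}^2){\su}_{{\mathtt{x}}, {\mathtt{N}}},
\]
and to observe that ${\mathcal{H}}({\mathtt{x}}-m){\mathcal{H}}(n-{\mathtt{x}})$ is the indicator of the finite index block $\mathbb{Z}_m^n$ (recall \eqref{Hstep}, \eqref{Zab}). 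Hence $\{{\mathtt{B}}_{\mathtt{x}}(m, n)\}_{{\mathtt{x}}\in\mathbb{Z}}$ is supported in $\mathbb{Z}_{m-1}^{n+1}$, so ${\mathtt{B}}^{{\mathrm{F}}}(m, n;{z})$ is a finite Laurent polynomial in ${z}$ and every rearrangement below is legitimate term by term; no convergence question arises.

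The next step is to write
\[
{\mathtt{B}}^{{\mathrm{F}}}(m, n;{z})=\sum_{{\mathtt{x}}\in\mathbb{Z}_m^n}{z}^{-{\mathtt{x}}}\bigl({\su}_{{\mathtt{x}}+1, {\mathtt{N}}}+{\su}_{{\mathtt{x}}-1, {\mathtt{N}}}+{\su}_{{\mathtt{x}}, {\mathtt{N}}+1}+{\su}_{{\mathtt{x}}, {\mathtt{N}}-1}-(4-{\upomega}^2){\su}_{{\mathtt{x}}, {\mathtt{N}}}\bigr)
\]
and to treat the five resulting sums separately. The vertical-neighbour sums $\sum_{{\mathtt{x}}\in\mathbb{Z}_m^n}{z}^{-{\mathtt{x}}}({\su}_{{\mathtt{x}}, {\mathtt{N}}+1}+{\su}_{{\mathtt{x}}, {\mathtt{N}}-1})$ already appear verbatim in \eqref{anidentitygen} and require nothing. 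For the two horizontal-shift sums I would apply the shift property of the finite-window transform: the substitution ${\mathtt{x}}\mapsto{\mathtt{x}}-1$ gives $\sum_{{\mathtt{x}}\in\mathbb{Z}_m^n}{z}^{-{\mathtt{x}}}{\su}_{{\mathtt{x}}+1, {\mathtt{N}}}={z}\sum_{{\mathtt{x}}\in\mathbb{Z}_{m+1}^{n+1}}{z}^{-{\mathtt{x}}}{\su}_{{\mathtt{x}}, {\mathtt{N}}}$, and then rewriting $\mathbb{Z}_{m+1}^{n+1}$ as $\mathbb{Z}_m^n$ with the endpoint ${\mathtt{x}}=m$ deleted and ${\mathtt{x}}=n+1$ adjoined yields ${z}\sum_{{\mathtt{x}}\in\mathbb{Z}_m^n}{z}^{-{\mathtt{x}}}{\su}_{{\mathtt{x}}, {\mathtt{N}}}-{z}^{1-m}{\su}_{m, {\mathtt{N}}}+{z}^{-n}{\su}_{n+1, {\mathtt{N}}}$; symmetrically, ${\mathtt{x}}\mapsto{\mathtt{x}}+1$ gives $\sum_{{\mathtt{x}}\in\mathbb{Z}_m^n}{z}^{-{\mathtt{x}}}{\su}_{{\mathtt{x}}-1, {\mathtt{N}}}={z}^{-1}\sum_{{\mathtt{x}}\in\mathbb{Z}_m^n}{z}^{-{\mathtt{x}}}{\su}_{{\mathtt{x}}, {\mathtt{N}}}+{z}^{-m}{\su}_{m-1, {\mathtt{N}}}-{z}^{-1-n}{\su}_{n, {\mathtt{N}}}$.

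Finally I would collect the pieces. The bulk contributions $({z}+{z}^{-1})\sum_{{\mathtt{x}}\in\mathbb{Z}_m^n}{z}^{-{\mathtt{x}}}{\su}_{{\mathtt{x}}, {\mathtt{N}}}$ from the two shifts and $-(4-{\upomega}^2)\sum_{{\mathtt{x}}\in\mathbb{Z}_m^n}{z}^{-{\mathtt{x}}}{\su}_{{\mathtt{x}}, {\mathtt{N}}}$ from the on-diagonal term combine, using \eqref{defW0Q}${}_2$, into $-(4-{z}-{z}^{-1}-{\upomega}^2)\sum_{{\mathtt{x}}\in\mathbb{Z}_m^n}{z}^{-{\mathtt{x}}}{\su}_{{\mathtt{x}}, {\mathtt{N}}}=-{\mathtt{Q}}({z})\sum_{{\mathtt{x}}\in\mathbb{Z}_m^n}{z}^{-{\mathtt{x}}}{\su}_{{\mathtt{x}}, {\mathtt{N}}}$, while the four endpoint corrections regroup as ${z}^{-m}(-{z}{\su}_{m, {\mathtt{N}}}+{\su}_{m-1, {\mathtt{N}}})$ and ${z}^{-n}({\su}_{n+1, {\mathtt{N}}}-{z}^{-1}{\su}_{n, {\mathtt{N}}})$, which is exactly \eqref{anidentitygen}. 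I do not foresee a genuine obstacle; the one point demanding care is the bookkeeping of these four endpoint terms produced by the index shifts (and, as a consistency check, one verifies that the right-hand side of \eqref{anidentitygen} indeed vanishes when $n=m-1$, i.e.\ when the block $\mathbb{Z}_m^n$ is empty, as it must). The computation is the analogue, for a finite rather than a half-infinite window, of the shift manipulation already used to pass from \eqref{rigideq1} to \eqref{preu0expn}.
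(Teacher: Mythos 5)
Your proof is correct and follows essentially the same route as the paper's: expand the discrete Laplacian over the finite window $\mathbb{Z}_m^n$, shift the indices in the two horizontal-neighbour sums to extract the four endpoint corrections, and recombine the bulk terms into $-{\mathtt{Q}}({z})\sum_{{\mathtt{x}}\in\mathbb{Z}_m^n}{z}^{-{\mathtt{x}}}{\su}_{{\mathtt{x}},{\mathtt{N}}}$ via \eqref{defW0Q}${}_2$. Your version is in fact slightly cleaner, since you correctly write the prefactor ${z}^{-1}$ on the second shifted sum (the paper's displayed intermediate step carries a typographical ${z}$ there, though its final result is the same) and you add the useful empty-window consistency check.
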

\begin{proof}
{
Using \eqref{anidentityB} and \eqref{dimnewtoneq}${}_2$,
\begin{equation}\begin{split}
\sum\limits_{{{\mathtt{x}}}\in\mathbb{Z}}{{z}}^{-{{\mathtt{x}}}}{\mathtt{B}}_{\mathtt{x}}(m, n)&=\sum\limits_{{{\mathtt{x}}}\in\mathbb{Z}_m^n}{{z}}^{-{{\mathtt{x}}}}({\triangle}{\su}_{{{\mathtt{x}}}, {\mathtt{y}}}|_{{\mathtt{y}}={\mathtt{N}}}+{\upomega}^2{\su}_{{{\mathtt{x}}}, {\mathtt{N}}})\\
&=\sum\limits_{{{\mathtt{x}}}\in\mathbb{Z}_m^n}{{z}}^{-{{\mathtt{x}}}}({\su}_{{{\mathtt{x}}}+1, {\mathtt{N}}}+{\su}_{{{\mathtt{x}}}-1, {\mathtt{N}}}+({\upomega}^2-4){\su}_{{{\mathtt{x}}}, {\mathtt{N}}})\\
&+\sum\limits_{{{\mathtt{x}}}\in\mathbb{Z}_m^n}{{z}}^{-{{\mathtt{x}}}}({\su}_{{{\mathtt{x}}}, {\mathtt{N}}+1}+{\su}_{{{\mathtt{x}}}, {\mathtt{N}}-1}),
\label{claimeq}
\end{split}\end{equation}
where the first and second term in the first sum (i.e., in \eqref{claimeq}${}_3$) can be re-written as 
$$\sum\limits_{{{\mathtt{x}}}\in\mathbb{Z}_m^n}{{z}}^{-{{\mathtt{x}}}}{\su}_{{{\mathtt{x}}}+1, {\mathtt{N}}}={{z}}\sum\limits_{{{\mathtt{x}}}\in\mathbb{Z}_m^n}{{z}}^{-{{\mathtt{x}}}}{\su}_{{{\mathtt{x}}}, {\mathtt{N}}}-{z}^{-m+1}{\su}_{m, {\mathtt{N}}}+{z}^{-n}{\su}_{n+1, {\mathtt{N}}}
$$ and
$$\sum\limits_{{{\mathtt{x}}}\in\mathbb{Z}_m^n}{{z}}^{-{{\mathtt{x}}}}{\su}_{{{\mathtt{x}}}-1, {\mathtt{N}}}={{z}}\sum\limits_{{{\mathtt{x}}}\in\mathbb{Z}_m^n}{{z}}^{-{{\mathtt{x}}}}{\su}_{{{\mathtt{x}}}, {\mathtt{N}}}+{z}^{-m}{\su}_{m-1, {\mathtt{N}}}-{z}^{-n-1}{\su}_{n, {\mathtt{N}}}.$$
Using the definition of ${\mathtt{Q}}$ stated in \eqref{defW0Q}${}_2$, the claim follows.
}
\end{proof}
{By inspection of the terms on the right hand side of \eqref{anidentitygen} for ${\mathtt{B}}_{\mathtt{x}}(0, {\mathtt{M}}-1)$, for brevity, it is useful \cite{Bls1,Bls3} to consider the notation},
\begin{equation}\begin{split}
{\thicktilde{\saux}}_{{\mathtt{N}}}{({z})}{\,:=}-{\su}_{-1, {\mathtt{N}}}+{{z}} {\su}_{0, {\mathtt{N}}},\quad
\text{and }
{\widehat{\saux}}_{{\mathtt{N}}}{({z})}{\,:=}-{\su}_{{\mathtt{M}}-1, {\mathtt{N}}}+{{z}} {\su}_{{\mathtt{M}}, {\mathtt{N}}},
\label{WhP}
\end{split}\end{equation}
{so that, ${\mathtt{B}}^{{\mathrm{F}}}(0, {\mathtt{M}}-1;{z})$ can be expressed as}
\begin{subequations}
\begin{equation}\begin{split}
{{{\mathtt{B}}^{{\mathrm{F}}}(0, {\mathtt{M}}-1;{z})}=-{\mathtt{Q}}({z}) {{\mathtt{g}}}^{+}({z})-{\thicktilde{\saux}}_{{\mathtt{N}}}({z})+{z}^{-{\mathtt{M}}}{\widehat{\saux}}_{{\mathtt{N}}}({z})+{{{\mathtt{f}}}}^{+}({z}),}
\label{anidentity1}\end{split}\end{equation}
\begin{equation}\begin{split}
\text{where }
{{\mathtt{g}}}^{+}({z})&{\,:=}+\sum\limits_{{{\mathtt{x}}}\in\mathbb{Z}_0^{{\mathtt{M}}-1}}{{z}}^{-{{\mathtt{x}}}}{\su}_{{{\mathtt{x}}}, {{\mathtt{N}}}},\\
{{{\mathtt{f}}}}^{+}({z})&{\,:=}+\sum\limits_{{{\mathtt{x}}}\in\mathbb{Z}_0^{{\mathtt{M}}-1}}{{z}}^{-{{\mathtt{x}}}}({\su}_{{{\mathtt{x}}}, {\mathtt{N}}+1}+{\su}_{{{\mathtt{x}}}, {\mathtt{N}}-1}).
\label{defp10}
\end{split}\end{equation}
\end{subequations}
{The notation $\mathtt{g}^{+}$ and $\mathtt{f}^{+}$ has been adopted as $\mathtt{M}>0$ implies that these complex functions are polynomials in the variable ${z}^{-1}$; recall again the statement following \eqref{discreteFT}}.
\begin{remark}
{Similar to the observation stated below \eqref{defW0Q}, note that} ${\su}_{{\mathtt{M}}-1, {\mathtt{N}}}$ is an unknown complex number while ${\su}_{{\mathtt{M}}, {\mathtt{N}}}$ is known {(and equals $-{\su}_{{\mathtt{M}}, {\mathtt{N}}}^{{\mathrm{inc}}}$ by virtue of the `upper' rigid constraint ${\su}^{t}_{{\mathtt{M}}, {\mathtt{N}}}=0$, to be substituted henceforth)}.
{In particular, in \eqref{WhP}, $\widehat{\saux}_{\mathtt{N}}(z)=-{\su}_{{\mathtt{M}}-1, {\mathtt{N}}}-{{z}} {\su}^{{\mathrm{inc}}}_{{\mathtt{M}}, {\mathtt{N}}}.$}
\label{uMNremark}
\end{remark}

{
Employing the Fourier transform \eqref{discreteFT} to \eqref{geneqPy0}, in view of the identity \eqref{anidentity1}, it is found that (with a resemblance to \eqref{preu0expn})
\begin{equation}\begin{split}
{{\mathtt{Q}}}{\su}_{\mathtt{N}}^{-}-{\upomega}^2{\su}_{\mathtt{N}}^{+}&={\thicktilde{\saux}}_{\mathtt{N}}+{\su}_{{\mathtt{N}}+1}^{-}+{\su}_{{\mathtt{N}}-1}^{-}+{\upomega}^2{\su}^{{\mathrm{inc}}}_{0, {\mathtt{N}}}\delta_{D}^{+}({{z}} {z}_{{P}}^{-1})\\
&+{{\mathtt{B}}^{{\mathrm{F}}}(0, {\mathtt{M}}-1;{z})}-{\upomega}^2({{\mathtt{g}}}^{+}({z})+{{\mathtt{g}}}^{{{\mathrm{inc}}}+}({z})),
\label{geneqPy0FT1}
\end{split}\end{equation}
\begin{equation}\begin{split}
\text{where }
{{\mathtt{g}}}^{{{\mathrm{inc}}}+}({z}){\,:=}+\sum\limits_{{{\mathtt{x}}}\in\mathbb{Z}_0^{{\mathtt{M}}-1}}{{z}}^{-{{\mathtt{x}}}}{\su}^{{\mathrm{inc}}}_{{{\mathtt{x}}}, {{\mathtt{N}}}}.\label{defq1inc}
\end{split}\end{equation}
Using the Fourier transform \eqref{discreteFT}, the part corresponding to ${\mathtt{x}}\in{\mathbb{Z}^+}$, 
in conjunction with the rigid constraint \eqref{rigidconstraintdef} for ${\mathtt{x}}-{\mathtt{M}}\in{\mathbb{Z}^+}$, i.e., ${\su}_{{{\mathtt{x}}}, {{\mathtt{N}}}}+{\su}^{{\mathrm{inc}}}_{{{\mathtt{x}}}, {{\mathtt{N}}}}=0$ (the same also is part of \eqref{geneqPy0}), leads to the equation
\begin{equation}\begin{split}
{\su}_{{{\mathtt{N}}}}^{+}({z})=\sum\limits_{{{\mathtt{x}}}\in\mathbb{Z}^+}{z}^{-{\mathtt{x}}}{\su}_{{\mathtt{x}},{{\mathtt{N}}}}=-{\su}^{{\mathrm{inc}}}_{0, {\mathtt{N}}}\delta_{D}^{+}({{z}} {{z}}_{{P}}^{-1})
+({{\mathtt{g}}}^{+}({z})+{{\mathtt{g}}}^{{{\mathrm{inc}}}+}({z})).
\label{uNpC}
\end{split}\end{equation}
In above, recall that ${z}_{{P}}$ is defined by \eqref{zP}.}

{Substitution of $(-{\upomega}^2)\times$\eqref{uNpC} and the expression of ${\mathtt{B}}^{{\mathrm{F}}}(0, {\mathtt{M}}-1;{z})$ from \eqref{anidentity1} into \eqref{geneqPy0FT1} leads to a statement which is equivalent to \eqref{geneqPy0} (and \eqref{geneqPy0FT1}), namely,
\begin{equation}\begin{split}
{\mathtt{Q}}{\su}_{{{\mathtt{N}}}}^{-}({z})&={z}^{-{\mathtt{M}}}{\widehat{\saux}}_{{\mathtt{N}}}({z})+{\su}_{{{\mathtt{N}}}+1}^{-}({z})+{\su}_{{{\mathtt{N}}}-1}^{-}({z})-{\mathtt{Q}}({z}){{\mathtt{g}}}^{+}({z})+{{{\mathtt{f}}}}^{+}({z}),
\label{geneqPy0FT2}
\end{split}\end{equation}
where ${\mathtt{g}}^+$ and ${\mathtt{f}}^+$ are defined in \eqref{defq1inc} while ${\widehat{\saux}}_{{\mathtt{N}}}$ is defined in \eqref{WhP}.
Similar to the reason stated before \eqref{u0expn},
that ${\mathtt{Q}}$ does not have any zeros on the annulus ${{\mathscr{A}}}$,
adding ${\mathtt{Q}}\times$\eqref{uNpC} to \eqref{geneqPy0FT2} leads to
\begin{equation}\begin{split}
{\mathtt{Q}}({\su}_{{{\mathtt{N}}}}^{-}+{\su}_{{{\mathtt{N}}}}^{+})&={z}^{-{\mathtt{M}}}{\widehat{\saux}}_{{\mathtt{N}}}+{\su}_{{{\mathtt{N}}}+1}^{-}+{\su}_{{{\mathtt{N}}}-1}^{-}-{\su}^{{\mathrm{inc}}}_{0, {\mathtt{N}}}{\mathtt{Q}}\delta_{D}^{+}({{z}} {z}_{{P}}^{-1})\\
&+{\mathtt{Q}}{{\mathtt{g}}}^{{{\mathrm{inc}}}+}({z})+{{{\mathtt{f}}}}^{+}({z}).
\label{geneqPy0FT3}
\end{split}\end{equation}
}
{It is useful to simplify further the expression of the 
first term in the right hand side of \eqref{anidentity1inc}, i.e.,
${\mathtt{Q}} {{\mathtt{g}}}^{{\mathrm{inc}} {+}}({z})$ (see \eqref{defq1inc} for the definition of ${{\mathtt{g}}}^{{\mathrm{inc}} {+}}$). By an application of the claim \ref{claimone} (i.e., the identity \eqref{anidentity1}) to the incident wave},
\begin{subequations}
\begin{equation}\begin{split}
&0=\sum\limits_{{{\mathtt{x}}}\in\mathbb{Z}}{{z}}^{-{{\mathtt{x}}}}({\triangle}{\su}^{{\mathrm{inc}}}_{{{\mathtt{x}}}, {\mathtt{y}}}+{\upomega}^2{\su}^{{\mathrm{inc}}}_{{{\mathtt{x}}}, {\mathtt{y}}}){{\mathcal{H}}}({\mathtt{x}}){{\mathcal{H}}}({\mathtt{M}}-1-{{\mathtt{x}}})\\
&=-{\mathtt{Q}} \sum\limits_{{{\mathtt{x}}}\in\mathbb{Z}_0^{{\mathtt{M}}-1}}{{z}}^{-{{\mathtt{x}}}}{\su}^{{\mathrm{inc}}}_{{{\mathtt{x}}}, {\mathtt{N}}}-{z}\su^{{\mathrm{inc}}}_{0, {\mathtt{N}}}+\su^{{\mathrm{inc}}}_{-1, {\mathtt{N}}}+{z}^{-{\mathtt{M}}+1}\su^{{\mathrm{inc}}}_{{\mathtt{M}}, {\mathtt{N}}}-{z}^{-{\mathtt{M}}}\su^{{\mathrm{inc}}}_{{\mathtt{M}}-1, {\mathtt{N}}}+{{{\mathtt{f}}}}^{{\mathrm{inc}}+}({z}),
\label{anidentity1inc}
\end{split}\end{equation}
\begin{equation}\begin{split}
\text{where }
{{{\mathtt{f}}}}^{{\mathrm{inc}}+}({z}){\,:=}+\sum\limits_{{{\mathtt{x}}}\in\mathbb{Z}_0^{{\mathtt{M}}-1}}{{z}}^{-{{\mathtt{x}}}}({\su}^{{\mathrm{inc}}}_{{{\mathtt{x}}}, {\mathtt{N}}+1}+{\su}^{{\mathrm{inc}}}_{{{\mathtt{x}}}, {\mathtt{N}}-1}).
\label{defp10inc}
\end{split}\end{equation}
\end{subequations}
{Note that the left hand side in \eqref{anidentity1inc} is zero by virtue of the fact that the incident wave satisfies the discrete Helmholtz equation \eqref{dHelmholtz}}.
Therefore, 
{the fifth term in the right hand side of \eqref{geneqPy0FT3}, which is same as the negative of the first term in the right hand side of \eqref{anidentity1inc}${}_2$, can be re-written so that}
${\mathtt{Q}}({z}){{\mathtt{g}}}^{{\mathrm{inc}} {+}}({z})=-{\thicktilde{\saux}}^{{\mathrm{inc}}}_{{\mathtt{N}}}+{z}^{-{\mathtt{M}}}{\widehat{\saux}}^{{\mathrm{inc}}}_{{\mathtt{N}}}+{{{\mathtt{f}}}}^{{\mathrm{inc}}+}({z}),$
where
\begin{equation}\begin{split}
{\thicktilde{\saux}}^{{\mathrm{inc}}}_{{\mathtt{N}}}{\,:=}-{\su}^{{\mathrm{inc}}}_{-1, {\mathtt{N}}}+{{z}} {\su}^{{\mathrm{inc}}}_{0, {\mathtt{N}}},\quad
\text{and }
{\widehat{\saux}}^{{\mathrm{inc}}}_{{\mathtt{N}}}{\,:=}-{\su}^{{\mathrm{inc}}}_{{\mathtt{M}}-1, {\mathtt{N}}}+{{z}} {\su}^{{\mathrm{inc}}}_{{\mathtt{M}}, {\mathtt{N}}}.
\label{WhincP}
\end{split}\end{equation}
Hence, {the equation \eqref{geneqPy0FT3} can be replaced by an equivalent form}
\begin{equation}\begin{split}
{\mathtt{Q}}({\su}_{{{\mathtt{N}}}}^{-}+{\su}_{{{\mathtt{N}}}}^{+})&={z}^{-{\mathtt{M}}}{\widehat{\saux}}_{{\mathtt{N}}}+{\su}_{{{\mathtt{N}}}+1}^{-}+{\su}_{{{\mathtt{N}}}-1}^{-}-{\su}^{{\mathrm{inc}}}_{0, {\mathtt{N}}}{\mathtt{Q}}\delta_{D}^{+}({{z}} {z}_{{P}}^{-1})\\
&-{\thicktilde{\saux}}^{{\mathrm{inc}}}_{{\mathtt{N}}}+{z}^{-{\mathtt{M}}}{\widehat{\saux}}^{{\mathrm{inc}}}_{{\mathtt{N}}}+{{{\mathtt{f}}}}^{{\mathrm{inc}}+}({z})+{{{\mathtt{f}}}}^{+}({z}).
\label{masterCeqnP}
\end{split}\end{equation}
{Keeping in mind upcoming manipulation of expressions,} let the scattered and incident component of the sum of displacement field at the rows adjacent to the constrained rows be defined by
\begin{subequations}\begin{eqnarray}
{\mathrm{w}}_{{\mathtt{x}}, 0}&{\,:=}&{\su}_{{\mathtt{x}}, 1}+{\su}_{{\mathtt{x}}, -1}, \quad
{\mathrm{w}}^{{\mathrm{inc}}}_{{\mathtt{x}}, 0}{\,:=}{\su}^{{\mathrm{inc}}}_{{\mathtt{x}}, 1}+{\su}^{{\mathrm{inc}}}_{{\mathtt{x}}, -1}, \label{w0defAbra}\\
{\mathrm{w}}_{{\mathtt{x}}, {{\mathtt{N}}}}&{\,:=}&{\su}_{{\mathtt{x}}, {{\mathtt{N}}}+1}+{\su}_{{\mathtt{x}}, {{\mathtt{N}}}-1},
\quad
{\mathrm{w}}^{{\mathrm{inc}}}_{{\mathtt{x}}, {{\mathtt{N}}}}{\,:=}{\su}^{{\mathrm{inc}}}_{{\mathtt{x}}, {{\mathtt{N}}}+1}+{\su}^{{\mathrm{inc}}}_{{\mathtt{x}}, {{\mathtt{N}}}-1}.
\label{wNdefAbra}
\end{eqnarray}\label{wdefAbrafull}\end{subequations}
Thus, \eqref{defp10} and \eqref{defp10inc} can be written as, respectively,
\begin{equation}\begin{split}
{{{\mathtt{f}}}}^{+}({z})=+\sum\limits_{{{\mathtt{x}}}\in\mathbb{Z}_0^{{\mathtt{M}}-1}}{{z}}^{-{{\mathtt{x}}}}{\mathrm{w}}_{{\mathtt{x}}, {\mathtt{N}}},\quad {{{\mathtt{f}}}}^{{\mathrm{inc}}}{}^{+}({z})=+\sum\limits_{{{\mathtt{x}}}\in\mathbb{Z}_0^{{\mathtt{M}}-1}}{{z}}^{-{{\mathtt{x}}}}{\mathrm{w}}^{{\mathrm{inc}}}_{{\mathtt{x}}, {\mathtt{N}}}.
\label{defp1}
\end{split}\end{equation}
{The definition of expressions \eqref{wdefAbrafull} is utilized mainly during the application of Wiener--Hopf technique after a derivation of the Wiener--Hopf equation; however, in the following, the manner of writing sums on the right hand sides is deployed for a few paragraphs in the sequel.}

\subsection{{${\mathtt{M}}<0$}}
\label{slitMN}
{As remarked in the case of cracks, the manipulations of negative offset in the case of a pair of rigid constraints also follow closely those for positive offset as detailed above in \S\ref{slitMP}.}
{Analogous to the manner of writing \eqref{geneqPy0}},
for the row ${{\mathtt{y}}}={\mathtt{N}}$, the equation {\eqref{rigideq2} can re-arranged by adding and subtracting terms so that}
\begin{equation}\begin{split} 
-{\upomega}^2{\su}_{{{\mathtt{x}}}, {{\mathtt{N}}}}&={\triangle}{\su}_{{{\mathtt{x}}}, {\mathtt{y}}}|_{{\mathtt{y}}={\mathtt{N}}}{{\mathcal{H}}}(-1-{\mathtt{x}})+{\upomega}^2 {\su}^{{\mathrm{inc}}}_{{{\mathtt{x}}}, {{\mathtt{N}}}}{{\mathcal{H}}}({\mathtt{x}})\\
&+{[}(-{\triangle}{\su}_{{{\mathtt{x}}}, {\mathtt{y}}}|_{{\mathtt{y}}={\mathtt{N}}}-{{\upomega}^2 {\su}_{{{\mathtt{x}}}, {{\mathtt{N}}}}+{\upomega}^2 {\su}_{{{\mathtt{x}}}, {{\mathtt{N}}}}}+{\upomega}^2 {\su}^{{\mathrm{inc}}}_{{{\mathtt{x}}}, {{\mathtt{N}}}}){{\mathcal{H}}}(-1-{\mathtt{x}}){{\mathcal{H}}}({\mathtt{x}}-{\mathtt{M}}){]}. 
\label{geneqNy0}
\end{split}\end{equation}
{It is readily verified that for ${\mathtt{x}}-{{\mathtt{M}}}\in{\mathbb{Z}^-},$ \eqref{geneqNy0} reduces to \eqref{dHelmholtz} (with ${\su}$ replacing ${\su}^{{t}}$).}
{By an application of the claim \ref{claimone} to ${\mathtt{B}}^{{\mathrm{F}}}({\mathtt{M}}, -1;{z})$, while taking the Fourier transform \eqref{discreteFT} of the sum of the first and second term in the square brackets of \eqref{geneqNy0}, i.e., $-(\sum\nolimits_{{{\mathtt{x}}}\in\mathbb{Z}}{{z}}^{-{{\mathtt{x}}}}{\triangle}{\su}_{{{\mathtt{x}}}, {\mathtt{y}}}|_{{\mathtt{y}}={\mathtt{N}}}+\sum\nolimits_{{{\mathtt{x}}}\in\mathbb{Z}}{{z}}^{-{{\mathtt{x}}}}{\upomega}^2 {\su}^{{\mathrm{inc}}}_{{{\mathtt{x}}}, {{\mathtt{N}}}}){{\mathcal{H}}}({\mathtt{x}}-{\mathtt{M}}){{\mathcal{H}}}(-{\mathtt{x}}-1)$, yields an identity (using \eqref{wNdefAbra})}
\begin{subequations}
\begin{equation}\begin{split}
{-{\mathtt{B}}^{{\mathrm{F}}}({\mathtt{M}}, -1;{z})}&=-{\mathtt{Q}} {{\mathtt{g}}}^{{-}}({z})-{z}\su_{0, {\mathtt{N}}}+\su_{-1, {\mathtt{N}}}+{z}^{-{\mathtt{M}}}({-{z}\su^{\mathrm{inc}}_{{\mathtt{M}}, {\mathtt{N}}}}-\su_{{\mathtt{M}}-1, {\mathtt{N}}})
+{{{\mathtt{f}}}}^-({z}),
\end{split}\end{equation}
\label{BFMN}
\begin{equation}\begin{split}
\text{where }
{{\mathtt{g}}}^{{-}}({z}){\,:=}-\sum\limits_{{{\mathtt{x}}}\in\mathbb{Z}_{{\mathtt{M}}}^{-1}}{{z}}^{-{{\mathtt{x}}}}{\su}_{{{\mathtt{x}}}, {\mathtt{N}}}, \quad
{{{\mathtt{f}}}}^-({z}){\,:=}-\sum\limits_{{{\mathtt{x}}}\in\mathbb{Z}_{{\mathtt{M}}}^{-1}}{{z}}^{-{{\mathtt{x}}}}{\mathrm{w}}_{{\mathtt{x}}, {{\mathtt{N}}}}.
\label{defp1N}
\end{split}\end{equation}
\end{subequations}
{The notation $\mathtt{g}^{-}$ and $\mathtt{f}^{-}$ has been adopted as $\mathtt{M}<0$ implies that these complex functions are polynomials in ${z}$; recall the statement following \eqref{discreteFT}}.

{Therefore,} taking the Fourier transform \eqref{discreteFT} of {the entire equation \eqref{geneqNy0} yields an equation similar to \eqref{geneqPy0FT1},
\begin{equation}\begin{split}
{{\mathtt{Q}}}{\su}_{\mathtt{N}}^{-}-{\upomega}^2{\su}_{\mathtt{N}}^{+}&={\thicktilde{\saux}}_{\mathtt{N}}+{\su}_{{\mathtt{N}}+1}^{-}+{\su}_{{\mathtt{N}}-1}^{-}+{\upomega}^2{\su}^{{\mathrm{inc}}}_{0, {\mathtt{N}}}\delta_{D}^{+}({{z}} {z}_{{P}}^{-1})\\
&-{{\mathtt{B}}^{{\mathrm{F}}}({\mathtt{M}}, -1;{z})}-{\upomega}^2({{\mathtt{g}}}^{-}({z})+{{\mathtt{g}}}^{{{\mathrm{inc}}}-}({z})),
\label{geneqNy0FT1}
\end{split}\end{equation}
\begin{equation}\begin{split}
\text{where }
{{\mathtt{g}}}^{{{\mathrm{inc}}}-}({z}){\,:=}-\sum\limits_{{{\mathtt{x}}}\in\mathbb{Z}_{\mathtt{M}}^{-1}}{{z}}^{-{{\mathtt{x}}}}{\su}^{{\mathrm{inc}}}_{{{\mathtt{x}}}, {{\mathtt{N}}}}.\label{defq2inc}
\end{split}\end{equation}
In above, recall that ${z}_{{P}}$ is defined by \eqref{zP}.
Due to the rigid constraint \eqref{rigidconstraintdef} for ${\mathtt{x}}-{\mathtt{M}}\in{\mathbb{Z}^+}$, i.e., ${\su}_{{{\mathtt{x}}}, {{\mathtt{N}}}}+{\su}^{{\mathrm{inc}}}_{{{\mathtt{x}}}, {{\mathtt{N}}}}=0$ (the same also is part of \eqref{geneqPy0}), unlike \eqref{uNpC}, here,
\begin{subequations}
\begin{equation}\begin{split}
{\su}_{{{\mathtt{N}}}}^{+}({z})=\sum\limits_{{{\mathtt{x}}}\in\mathbb{Z}^+}{z}^{-{\mathtt{x}}}{\su}_{{\mathtt{x}},{{\mathtt{N}}}}=-{\su}^{{\mathrm{inc}}}_{0, {\mathtt{N}}}\delta_{D}^{+}({{z}} {{z}}_{{P}}^{-1}),
\label{uNnC}
\end{split}\end{equation}
\begin{equation}\begin{split}
\text{and }{{\mathtt{g}}}^{-}({z})+{{\mathtt{g}}}^{{{\mathrm{inc}}}-}({z})=0.
\label{partofCeq}
\end{split}\end{equation}
\label{uNnCall}
\end{subequations}
}
Note that ${{\mathtt{g}}}^{{-}}$, ${{\mathtt{g}}}^{{\mathrm{inc}} {-}}$, and ${{{\mathtt{f}}}}^-$ as given by 
\eqref{defp1N} and \eqref{defq2inc}, are counterparts to \eqref{defq1inc} and \eqref{defp1}, respectively, while {the latter also} uses \eqref{wNdefAbra}.
Hence, {by substitution of \eqref{BFMN} and \eqref{uNnCall} to \eqref{geneqNy0FT1} and an addition of ${\mathtt{Q}}\times$\eqref{uNnC} (since ${\mathtt{Q}}$ does not vanish on the annulus ${{\mathscr{A}}}$) to the result, it is found that}
\begin{equation}\begin{split}
{\mathtt{Q}}({\su}_{{{\mathtt{N}}}}^{-}+{\su}_{{{\mathtt{N}}}}^{+})&={z}^{-{\mathtt{M}}}{\widehat{\saux}}_{{\mathtt{N}}}+{\su}_{{{\mathtt{N}}}+1}^{-}+{\su}_{{{\mathtt{N}}}-1}^{-}-{\su}^{{\mathrm{inc}}}_{0, {\mathtt{N}}}{\mathtt{Q}}\delta_{D}^{+}({{z}} {z}_{{P}}^{-1})\\&+{{\mathtt{Q}} {{\mathtt{g}}}^{{\mathrm{inc}}{-}}({z})}+{{{\mathtt{f}}}}^-({z}).
\label{masterCeqN}
\end{split}\end{equation}
As ${\su}^{{\mathrm{inc}}}_{{{\mathtt{x}}}, {\mathtt{N}}}$ satisfies the discrete Helmholtz equation, the expression ${\mathtt{Q}} {{\mathtt{g}}}^{{\mathrm{inc}} {-}}({z})$ {can be expanded as before in \eqref{anidentity1inc}. In particular, by the claim \ref{claimone} (i.e., identity \eqref{anidentity1}),}
\begin{equation}\begin{split}
{0}&=-\sum\limits_{{{\mathtt{x}}}\in\mathbb{Z}}{{z}}^{-{{\mathtt{x}}}}({\triangle}{\su}^{{\mathrm{inc}}}_{{{\mathtt{x}}}, {\mathtt{y}}}+{\upomega}^2{\su}^{{\mathrm{inc}}}_{{{\mathtt{x}}}, {\mathtt{y}}}){{\mathcal{H}}}({\mathtt{x}}-{\mathtt{M}}){{\mathcal{H}}}(-1-{{\mathtt{x}}})\\
&={\mathtt{Q}} \sum\limits_{{{\mathtt{x}}}\in\mathbb{Z}_{{\mathtt{M}}}^{-1}}{{z}}^{-{{\mathtt{x}}}}{\su}^{{\mathrm{inc}}}_{{{\mathtt{x}}}, {\mathtt{N}}}-{z}\su^{{\mathrm{inc}}}_{0, {\mathtt{N}}}+\su^{{\mathrm{inc}}}_{-1, {\mathtt{N}}}+{z}^{-{\mathtt{M}}+1}\su^{{\mathrm{inc}}}_{{\mathtt{M}}, {\mathtt{N}}}-{z}^{-{\mathtt{M}}}\su^{{\mathrm{inc}}}_{{\mathtt{M}}-1, {\mathtt{N}}}+{{{\mathtt{f}}}}^{{\mathrm{inc}}-}({z}),
\end{split}\end{equation}
\begin{equation}\begin{split}
\text{where }
{{{\mathtt{f}}}}^{{\mathrm{inc}}-}({z}){\,:=}-\sum\limits_{{{\mathtt{x}}}\in\mathbb{Z}_{{\mathtt{M}}}^{-1}}{{z}}^{-{{\mathtt{x}}}}{\mathrm{w}}^{{\mathrm{inc}}}_{{\mathtt{x}}, {{\mathtt{N}}}}.
\label{defp1incN}
\end{split}\end{equation}
Therefore,
\begin{equation}\begin{split}
{\mathtt{Q}}({z}){{\mathtt{g}}}^{{\mathrm{inc}} {-}}({z})&=-{\thicktilde{\saux}}^{{\mathrm{inc}}}_{{\mathtt{N}}}+{z}^{-{\mathtt{M}}}{\widehat{\saux}}^{{\mathrm{inc}}}_{{\mathtt{N}}}+{{{\mathtt{f}}}}^{{\mathrm{inc}}-}({z}),
\end{split}\end{equation}
where
${\thicktilde{\saux}}^{{\mathrm{inc}}}_{{\mathtt{N}}}$
and
${\widehat{\saux}}^{{\mathrm{inc}}}_{{\mathtt{N}}}$
are given by 
\eqref{WhincP}, respectively.
Hence, \eqref{masterCeqN} becomes
\begin{equation}\begin{split}
{\mathtt{Q}}({\su}_{{{\mathtt{N}}}}^{-}+{\su}_{{{\mathtt{N}}}}^{+})&={z}^{-{\mathtt{M}}}{\widehat{\saux}}_{{\mathtt{N}}}+{\su}_{{{\mathtt{N}}}+1}^{-}+{\su}_{{{\mathtt{N}}}-1}^{-}-{\su}^{{\mathrm{inc}}}_{0, {\mathtt{N}}}{\mathtt{Q}}\delta_{D}^{+}({{z}} {z}_{{P}}^{-1})\\
&-{\thicktilde{\saux}}^{{\mathrm{inc}}}_{{\mathtt{N}}}+{z}^{-{\mathtt{M}}}{\widehat{\saux}}^{{\mathrm{inc}}}_{{\mathtt{N}}}+{{{\mathtt{f}}}}^{{\mathrm{inc}}-}({z})+{{{\mathtt{f}}}}^{-}({z}).
\label{masterCeqnN}
\end{split}\end{equation}
{Indeed, it has been an inspired attempt to bring above into a form similar to \eqref{masterCeqnP}.}

\subsection{{{{Wiener--Hopf}} equation}}
\label{WHconstrainteqns}
In {line with the program attained in the unification of both signs of stagger for the case of cracks in \S\ref{WHconstrainteqns}, the same can be carried out for the pair of rigid constraints}.
The resulting equation for the two cases {(namely, \eqref{masterCeqnP} and \eqref{masterCeqnN})} in {\S\ref{slitMP}} and {\S\ref{slitMN}}, with $\sgnM$ denoting the sign of ${\mathtt{M}}$, can be written as
\begin{equation}\begin{split}
{\su}_{{{\mathtt{N}}}}^{{\mathrm{F}}}&=\recip{{\mathtt{Q}}}({\su}_{{\mathtt{N}}+1}^{-}+{\su}_{{{\mathtt{N}}}-1}^{-})+\recip{{\mathtt{Q}}}{\saux}_{{\mathtt{N}}}+\recip{{\mathtt{Q}}}({{{\mathtt{f}}}}^{{\mathrm{inc}}{\sgnM}}+{{{\mathtt{f}}}}^{{\sgnM}})-{\su}^{{\mathrm{inc}}}_{0, {\mathtt{N}}}\delta_{D}^{+}({{z}} {z}_{{P}}^{-1}),
\label{uNexpn}
\end{split}\end{equation}
\begin{subequations}\begin{eqnarray}
\text{where }{\saux}_{{\mathtt{N}}}({z})&=&{z}^{-{\mathtt{M}}}({\widehat{\saux}}_{{\mathtt{N}}}+{\widehat{\saux}}^{{\mathrm{inc}}}_{{\mathtt{N}}})-{\thicktilde{\saux}}^{{\mathrm{inc}}}_{{\mathtt{N}}},\label{WcN}\\
{{{\mathtt{f}}}}^{+}({z})&=&+\sum\limits_{{{\mathtt{x}}}\in\mathbb{Z}_0^{{\mathtt{M}}-1}}{{z}}^{-{{\mathtt{x}}}}{\mathrm{w}}_{{\mathtt{x}}, {{\mathtt{N}}}},\quad {{{\mathtt{f}}}}^{-}({z})=-\sum\limits_{{{\mathtt{x}}}\in\mathbb{Z}_{{\mathtt{M}}}^{-1}}{{z}}^{-{{\mathtt{x}}}}{\mathrm{w}}_{{\mathtt{x}}, {{\mathtt{N}}}}, \label{ppnC}\\
\text{and }
{{{\mathtt{f}}}}^{{\mathrm{inc}}+}({z})&=&+\sum\limits_{{{\mathtt{x}}}\in\mathbb{Z}_0^{{\mathtt{M}}-1}}{{z}}^{-{{\mathtt{x}}}}{\mathrm{w}}^{{\mathrm{inc}}}_{{\mathtt{x}}, {{\mathtt{N}}}},\quad
{{{\mathtt{f}}}}^{{\mathrm{inc}}-}({z})=-\sum\limits_{{{\mathtt{x}}}\in\mathbb{Z}_{{\mathtt{M}}}^{-1}}{{z}}^{-{{\mathtt{x}}}}{\mathrm{w}}^{{\mathrm{inc}}}_{{\mathtt{x}}, {{\mathtt{N}}}},\label{ppnincC}
\end{eqnarray}\label{uNexpnfull}\end{subequations}
along with {\eqref{WhP}${}_2$, \eqref{WhincP}${}_2$, and \eqref{WhincP}${}_1$, } i.e.,
{$
\widehat{\saux}_{{\mathtt{N}}}=-{\su}_{{\mathtt{M}}-1, {\mathtt{N}}}-{{{z}} {\su}^{\mathrm{inc}}_{{\mathtt{M}}, {\mathtt{N}}}},
\widehat{\saux}^{{\mathrm{inc}}}_{{\mathtt{N}}}=-{\su}^{{\mathrm{inc}}}_{{\mathtt{M}}-1, {\mathtt{N}}}+{{z}} {\su}^{{\mathrm{inc}}}_{{\mathtt{M}}, {\mathtt{N}}},
\text{ and }
\thicktilde{\saux}^{{\mathrm{inc}}}_{{\mathtt{N}}}=-{\su}^{{\mathrm{inc}}}_{-1, {\mathtt{N}}}+{{z}} {\su}^{{\mathrm{inc}}}_{0, {\mathtt{N}}}.$}

{
\begin{remark}
In an expanded form, ${\saux}_{{\mathtt{N}}}={z}^{-{\mathtt{M}}}(-{\su}^{t}_{{\mathtt{M}}-1, {\mathtt{N}}})+{\su}^{{\mathrm{inc}}}_{-1, {\mathtt{N}}}-{{z}} {\su}^{{\mathrm{inc}}}_{0, {\mathtt{N}}}$.
When $\mathtt{M}=0,$ the expression of ${\saux}_{{\mathtt{N}}}$ according to \eqref{WcN} reduces to that for the zero offset, i.e., aligned rigid constraints, as discussed by \cite{Bls8pair1}, namely, ${\saux}_{{\mathtt{N}}}=-{\su}_{-1, {\mathtt{N}}}-{{z}} {\su}^{\mathrm{inc}}_{0, {\mathtt{N}}}$.
\label{zeroM}
\end{remark}
}

\begin{figure}[h!]
\centering
\includegraphics[width=\textwidth]{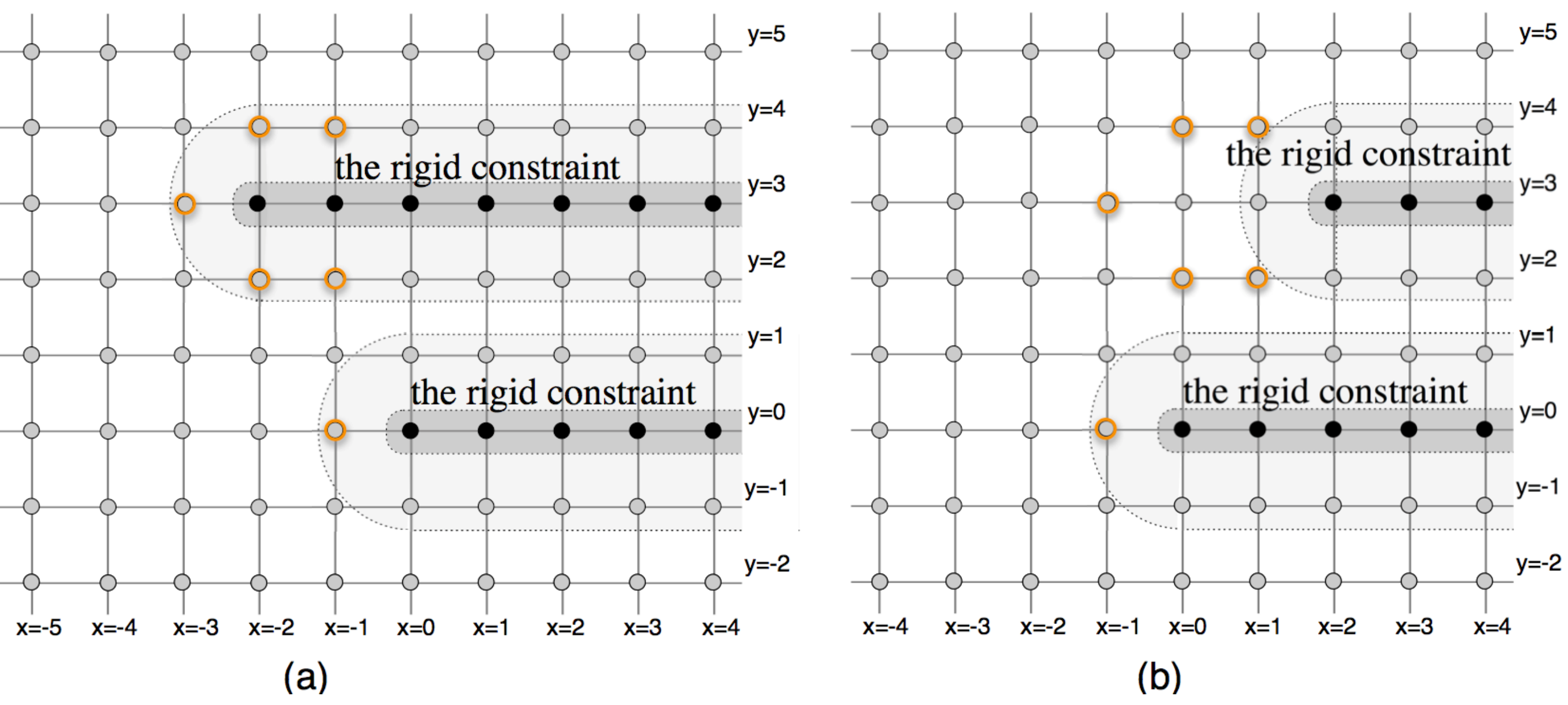}
\caption{
{
For the reduced (algebraic) problem, the unknowns are marked in orange, corresponding to Fig. \ref{Fig1}(b).
The gray portion represents 
the set of all lattice sites in ${\mathfrak{S}}$ that index those particles coupled to atleast one constrained site, can be interpreted as a `boundary layer' around the constrained sites ${{\Sigma}}_{{c}}$ \cite{Bls1, Bls3}.
(a) ${{\mathtt{M}}}=-2$ and (b) ${{\mathtt{M}}}=2$.
}
}
\label{Fig1_newunknowns2}
\end{figure}

{In above, it is useful to recall that ${z}_{{P}}$ is defined by \eqref{zP}.}
Notice that ${\su}_{{\mathtt{M}}-1, {\mathtt{N}}}$ is an unknown as well as $\{{\mathrm{w}}_{{\mathtt{x}}, {{\mathtt{N}}}}\}_{{{\mathtt{x}}}=0}^{{\mathtt{M}}-1}$ or $\{{\mathrm{w}}_{{\mathtt{x}}, {{\mathtt{N}}}}\}_{{{\mathtt{x}}}={\mathtt{M}}}^{-1}$. {Also it is worth a recollection} that $\su_{-1, 0}$ is also unknown {(Remark \ref{u00remark})}.
For example, the set of unknowns in case of ${\mathtt{M}}=-2$ {and ${\mathtt{M}}=+2$} has four elements, see {respectively, Fig. \ref{Fig1_newunknowns2}(a) and Fig. \ref{Fig1_newunknowns2}(b)}.

According to \eqref{ubulkC}, ${\su}_{-1}^{{\mathrm{F}}}=\su^{{\mathrm{F}}}_{0}{{\lambda}}, {\su}_{1}^{{\mathrm{F}}}={{\mathscr{F}}}_1{\su}_0+{{\mathscr{G}}}_1{\su}_{{\mathtt{N}}}, {\su}_{{{\mathtt{N}}}-1}^{{\mathrm{F}}}={{\mathscr{F}}}_{{{\mathtt{N}}}-1}{\su}_0+{{\mathscr{G}}}_{{{\mathtt{N}}}-1}{\su}_{{\mathtt{N}}}, {\su}_{{{\mathtt{N}}}+1}^{{\mathrm{F}}}=\su^{{\mathrm{F}}}_{{\mathtt{N}}}{{\lambda}}.$ Writing the same in expanded form {(recall the additive splitting \eqref{discreteFT})},
\begin{equation}\begin{split}
{\su}_{-1}^{-}
+{\su}_{-1}^{+}&=({\su}_{0}^{-}+{\su}_{0}^{+}){{\lambda}}, \\
{\su}_{1}^{-}+
{\su}_{1}^{+}&={{\mathscr{F}}}_1({\su}_{0}^{-}+{\su}_{0}^{+})+{{\mathscr{G}}}_1({\su}_{{{\mathtt{N}}}}^{-}+{\su}_{{{\mathtt{N}}}}^{+}), \\
{\su}_{{{\mathtt{N}}}-1}^{-}+{\su}_{{{\mathtt{N}}}-1}^{+}&={{\mathscr{F}}}_{{{\mathtt{N}}}-1}({\su}_{0}^{-}+{\su}_{0}^{+})+{{\mathscr{G}}}_{{{\mathtt{N}}}-1}({\su}_{{{\mathtt{N}}}}^{-}+{\su}_{{{\mathtt{N}}}}^{+}), \\
{\su}_{{{\mathtt{N}}}+1}^{-}+{\su}_{{{\mathtt{N}}}+1}^{+}&=({\su}_{{{\mathtt{N}}}}^{-}+{\su}_{{{\mathtt{N}}}}^{+}){{\lambda}}.
\label{foureqns}
\end{split}\end{equation}
The particular ${\mathscr{F}}$s and ${\mathscr{G}}$s in above can be read out from \eqref{ubulkC}.
Using \eqref{u0expn} and \eqref{uNexpn}, the system \eqref{foureqns} of four equations, involving ${\su}_{\pm1; \pm}, {\su}_{{{\mathtt{N}}}\pm1; \pm}$ as $\pm$ parts of four unknown complex functions, can be written as (use \eqref{ubulkC})
\begin{equation}\begin{split}
\mathbf{I}_{4\times 4}\boldsymbol{u}^++\mathbf{K}_{4\times 4}\boldsymbol{u}^-=\widehat{\boldsymbol{c}}, 
\label{4WHCmatrix1}
\end{split}\end{equation}
\begin{equation}\begin{split}
\text{where }\boldsymbol{u}^{\pm}&{\,:=}\begin{bmatrix}
{\su}_{-1}^{\pm}\\
{\su}_{1}^{\pm}\\
{\su}_{{{\mathtt{N}}}-1}^{\pm}\\
{\su}_{{{\mathtt{N}}}+1}^{\pm}
\end{bmatrix}, \mathbf{K}_{4\times 4}
{\,:=}\mathbf{I}_{4\times 4}-\recip{{\mathtt{Q}}}\begin{bmatrix}
{{\lambda}}&{{\lambda}}&0&0\\
{{{\mathscr{F}}}_1}&{{{\mathscr{F}}}_1}&{{{\mathscr{F}}}_{{{\mathtt{N}}}-1}}&{{{\mathscr{F}}}_{{{\mathtt{N}}}-1}}\\
{{{\mathscr{F}}}_{{{\mathtt{N}}}-1}}&{{{\mathscr{F}}}_{{{\mathtt{N}}}-1}}&{{{\mathscr{F}}}_1}&{{{\mathscr{F}}}_1}\\
0&0&{{\lambda}}&{{\lambda}}
\end{bmatrix}, \\
\widehat{\boldsymbol{c}}&{\,:=}\begin{bmatrix}
(\dfrac{{\saux}_0}{{\mathtt{Q}}}+{\su}_{0}^{+}){{\lambda}}\\
{{\mathscr{F}}}_1(\dfrac{{\saux}_0}{{\mathtt{Q}}}+{\su}_{0}^{+})+{{\mathscr{F}}}_{{{\mathtt{N}}}-1}(\dfrac{{\saux}_{{\mathtt{N}}}+{{{\mathtt{f}}}}^{{\mathrm{inc}}{\sgnM}}+{{{\mathtt{f}}}}^{{\sgnM}}}{{\mathtt{Q}}}+{\su}_{{{\mathtt{N}}}}^{+})\\
{{\mathscr{F}}}_1(\dfrac{{\saux}_{{\mathtt{N}}}+{{{\mathtt{f}}}}^{{\mathrm{inc}}{\sgnM}}+{{{\mathtt{f}}}}^{{\sgnM}}}{{\mathtt{Q}}}+{\su}_{{{\mathtt{N}}}}^{+})+{{\mathscr{F}}}_{{{\mathtt{N}}}-1}(\dfrac{{\saux}_0}{{\mathtt{Q}}}+{\su}_{0}^{+})\\
(\dfrac{{\saux}_{{\mathtt{N}}}+{{{\mathtt{f}}}}^{{\mathrm{inc}}{\sgnM}}+{{{\mathtt{f}}}}^{{\sgnM}}}{{\mathtt{Q}}}+{\su}_{{{\mathtt{N}}}}^{+}){{\lambda}}
\end{bmatrix}.
\label{4WHCmatrix2}
\end{split}\end{equation}
Noticing {an obviously peculiar} structure of the matrix kernel $\mathbf{K}_{4\times 4}$ in \eqref{4WHCmatrix1} and \eqref{4WHCmatrix2}, 
{by} adding first and second as well as third and fourth equations, above system also implies a certain system of two {{Wiener--Hopf}} equations involving ${\su}_{-1}^{ \pm}+{\su}_{1}^{ \pm}, {\su}_{{{\mathtt{N}}}-1}^{ \pm}+{\su}_{{{\mathtt{N}}}+1}^{ \pm}$ as $\pm$ counterparts of two unknown functions as components of $\boldsymbol{{\mathrm{w}}}$ (recall {\eqref{w0defAbra} and \eqref{wNdefAbra}}), {namely,
\begin{equation}\begin{split}
{{\mathrm{w}}}_{0}^{ \pm}&=\sum_{{\mathtt{x}}\in\mathbb{Z}^\pm}{z}^{-{\mathtt{x}}}{\mathrm{w}}_{{\mathtt{x}}, 0}={\su}_{-1}^{ \pm}+{\su}_{1}^{ \pm},
\quad
{{\mathrm{w}}}_{{\mathtt{N}}}^{ \pm}=\sum_{{\mathtt{x}}\in\mathbb{Z}^\pm}{z}^{-{\mathtt{x}}}{\mathrm{w}}_{{\mathtt{x}}, {\mathtt{N}}}={\su}_{{{\mathtt{N}}}-1}^{ \pm}+{\su}_{{{\mathtt{N}}}+1}^{ \pm}.
\label{FTw0Npn}
\end{split}\end{equation}
} 
{
\begin{remark}
As stated in the context of \eqref{ubulkC}, now it is clear that the two functions ${\su}_0^{{\mathrm{F}}}$ and ${\su}_{{\mathtt{N}}}^{{\mathrm{F}}}$ can be found in terms of ${\mathrm{w}}_0^{{\mathrm{F}}}$ and ${\mathrm{w}}_{\mathtt{N}}^{{\mathrm{F}}}$ via \eqref{u0expn} and \eqref{uNexpn}, namely,
${\su}_0^{{\mathrm{F}}}({{z}})
=\recip{\mathtt{Q}}({\saux}_0({{z}})+{\mathrm{w}}_0^{-})-{{\su}^{{\mathrm{inc}}}_{0,0}}\delta_{D}^{+}({{z}} {z}_{{P}}^{-1})$ and
${\su}_{{{\mathtt{N}}}}^{{\mathrm{F}}}
=\recip{{\mathtt{Q}}}({\mathrm{w}}_{\mathtt{N}}^{-}+{\saux}_{{\mathtt{N}}}+{{\mathtt{f}}}^{{\mathrm{inc}}{\sgnM}}+{{\mathtt{f}}}^{{\sgnM}})-{\su}^{{\mathrm{inc}}}_{0, {\mathtt{N}}}\delta_{D}^{+}({{z}} {z}_{{P}}^{-1}).$
\label{u0uNFrem}
\end{remark}
}

After 
simplifying {\eqref{4WHCmatrix1} as described in preceding paragraph}, using the definitions of ${\mathscr{F}}$s and ${\mathscr{G}}$s in \eqref{ubulkC},
it is found that
\begin{subequations}
\begin{equation}\begin{split}
\boldsymbol{{\mathrm{w}}}^-+\mathbf{L}\boldsymbol{{\mathrm{w}}}^+=\widetilde{\boldsymbol{c}}, 
\label{WHCAbra}
\end{split}\end{equation}
\begin{equation}\begin{split}
\text{where }\mathbf{L}&{\,:=}\frac{{{\lambda}}^{-1}+{{\lambda}}}{{{\lambda}}^{-1}-{{\lambda}}}(\mathbf{I}+{{\lambda}}^{{\mathtt{N}}}\begin{bmatrix}0&1\\1&0\end{bmatrix})
={{\mathcal{L}}}_{{c}}\begin{bmatrix}
1&{{\lambda}}^{{\mathtt{N}}}\\
{{\lambda}}^{{\mathtt{N}}}&1
\end{bmatrix}, {{\mathcal{L}}}_{{c}}{\,:=}\frac{{\mathtt{Q}}}{{\mathtt{r}}{\mathtt{h}}},
\label{WHCkernel}
\end{split}\end{equation}
\begin{equation}\begin{split}
\text{and }
\boldsymbol{{\mathrm{w}}}^\pm&{\,:=}
\begin{bmatrix}
{{\mathrm{w}}}_{0}^{ \pm}\\
{{\mathrm{w}}}_{{\mathtt{N}}}^{ \pm}
\end{bmatrix}, \widetilde{\boldsymbol{c}}{\,:=}-(\mathbf{I}-\mathbf{L})({\mathtt{Q}}\boldsymbol{q}^{{\mathrm{inc}}}{}^++\boldsymbol{p}^{\saux}+\boldsymbol{p}^{\sgnM}),
\label{ctildeslits}
\end{split}\end{equation}
with 
\begin{equation}\begin{split}
\boldsymbol{q}^{{\mathrm{inc}}}{}^+&{\,:=-\begin{bmatrix}
{\su}^{{\mathrm{inc}}}_{{0}}{}^{+}\\
{\su}^{{\mathrm{inc}}}_{\mathtt{N}}{}^{+}
\end{bmatrix}=}-{\su}^{{\mathrm{inc}}}_{{0}}{}^{+}\begin{bmatrix}
1\\
e^{i\upkappa_y{\mathtt{N}}}
\end{bmatrix}=-{{\mathrm{A}}}\delta_{D}^{+}({{z}} {z}_{{P}}^{-1})\begin{bmatrix}
1\\
e^{i\upkappa_y{\mathtt{N}}}
\end{bmatrix}, \\
\boldsymbol{p}^{\saux}&{\,:=}\begin{bmatrix}
{\saux}_{0}\\
{\saux}_{{\mathtt{N}}}
\end{bmatrix},
\boldsymbol{p}^{\sgnM}{\,:=}\begin{bmatrix}
0\\
{{{\mathtt{f}}}}^{{\mathrm{inc}}{\sgnM}}+{{{\mathtt{f}}}}^{{\sgnM}}
\end{bmatrix}
\label{qPcAbra}
\end{split}\end{equation}
\label{WHCequations}
\end{subequations}
{In the context of the definition of ${{\mathcal{L}}}_{{c}}$ \eqref{WHCkernel}${}_3$, recall that ${\mathtt{Q}}$ is defined by \eqref{defW0Q}${}_2$, while the definitions of ${\mathtt{h}}$ and ${\mathtt{r}}$ provided in \eqref{lamL}. In order to obtain \eqref{WHCkernel}${}_2$, an identity is used, that is, $({{\lambda}}^{-1}+{{\lambda}})/({{\lambda}}^{-1}-{{\lambda}})={\mathtt{Q}}/({\mathtt{r}}{\mathtt{h}})$ which utilizes the definition of ${\lambda}$ in terms of ${\mathtt{h}}$ and ${\mathtt{r}}$ as provided in \eqref{lamL} and the definition of ${\mathtt{Q}}$ in \eqref{defW0Q}${}_2$ (i.e., ${\mathtt{Q}}={\mathtt{H}}+2={\mathtt{h}}^2+2$).}

{In the expression of $\widetilde{\boldsymbol{c}}$ provided in \eqref{ctildeslits}${}_2$, it is emphasized that} $\boldsymbol{p}^{\sgnM}$ is an unknown polynomial in ${z}^{-1}$ (resp. ${z}$) for $\sgnM=1$, i.e., ${\mathtt{M}}>0$ (resp. $\sgnM=-1$, i.e., ${\mathtt{M}}<0$) given by \eqref{defp1}, \eqref{defp1N}, and \eqref{defp1incN}.
For example, the set of unknowns in case of ${\mathtt{M}}=-2$ has four elements, see Fig. \ref{Fig1_newunknowns2}(a).

\begin{remark}
It is noteworthy that the zero offset case \cite{Bls8pair1} is a special case of the presented formulation in a sense that $\boldsymbol{p}^{\sgnM}\equiv\boldsymbol{0}$, however, the effect of non-zero ${\mathtt{M}}$ is not only captured via $\boldsymbol{p}^{\sgnM}$ but also $\boldsymbol{p}^{\saux}$.
\label{perturbrigid}
\end{remark}

\subsection{{Reduction of matrix Wiener--Hopf problem}}
\label{reductionC}
From the multiplicative factorization of the kernel $\mathbf{L}$ \eqref{WHCkernel}, i.e., $\mathbf{L}=\mathbf{L}_-\mathbf{L}_+,$ it is found that the {{Wiener--Hopf}} equation \eqref{WHCAbra} becomes 
\begin{equation}\begin{split}
\recip{\mathbf{L}}_-\boldsymbol{{\mathrm{w}}}^-+\mathbf{L}_+\boldsymbol{{\mathrm{w}}}^+=\boldsymbol{c}{\,:=}-(\recip{\mathbf{L}}_--\mathbf{L}_+)({\mathtt{Q}}\boldsymbol{q}^{{\mathrm{inc}}}{}^++\boldsymbol{p}^{\saux}+\boldsymbol{p}^{\sgnM}), 
\label{WHCn}
\end{split}\end{equation}
where the second component of $\boldsymbol{p}^{\sgnM}$ is an unknown polynomial in ${z}^{-1}$ (resp. ${z}$) for ${\mathtt{M}}>0$ (resp. ${\mathtt{M}}<0$) given by \eqref{qPcAbra}; also {two more unknown values, via the expression of $\boldsymbol{p}^{\saux}$, }
are present in the same equation.
Analogous to the detailed expressions provided earlier (note that ${{\mathcal{L}}}_{{c}}={{{\mathcal{L}}}_{c+}}{{{\mathcal{L}}}_{c-}}$, ${{{\mathcal{L}}}_{c\pm}}={{{\mathtt{Q}}_\pm}/({{\mathtt{r}}_\pm{\mathtt{h}}_\pm})}$ \cite{Bls1}) in {\eqref{Lfactorsfull}}, in this case,
\begin{equation}\begin{split}
(\mathbf{L}_\pm)^{\pm1}&={({{\mathbf{D}}}_\pm)^{\pm1}({\mathbf{J}}_\pm)^{\pm1}}=({{\mathcal{L}}}_{c\pm})^{\pm1}(\tilde{{\mathbf{D}}}_\pm)^{\pm1}({\mathbf{J}}_\pm)^{\pm1},\\
\text{with }
{{\mathbf{D}}}_\pm&=({{\mathcal{L}}}_{{c}{\pm}})^{\pm1}\begin{bmatrix}
(1-{{\lambda}}^{{\mathtt{N}}})_\pm&0\\
0&(1+{{\lambda}}^{{\mathtt{N}}})_\pm
\end{bmatrix}=\begin{bmatrix}
{\upalpha}_\pm&0\\
0&{\upbeta}_\pm
\end{bmatrix}.
\label{Dfactorsrigid}
\end{split}\end{equation}
Therefore,
in \eqref{WHCn}, the right hand side $\boldsymbol{c}$ becomes
\begin{equation}\begin{split}
\boldsymbol{c}
&=-(\recip{{{\mathbf{D}}}}_-{\mathbf{J}}-{{\mathbf{D}}}_+{\mathbf{J}})({\mathtt{Q}}\boldsymbol{q}^{{\mathrm{inc}}}{}^++\boldsymbol{p}^{\saux}+\boldsymbol{p}^{\sgnM}).
\label{cformC}
\end{split}\end{equation}
The additive factorization $\boldsymbol{c}=\boldsymbol{c}^++\boldsymbol{c}^-$ is needed. {To reach this stage, it is useful to introdice a convenient splitting provided by the definitions}
\begin{equation}\begin{split}
\boldsymbol{c}&=\boldsymbol{c}^{\old}+\boldsymbol{c}^{\saux}+\boldsymbol{c}^{\redc},\\
\text{where }\boldsymbol{c}^{\old}&{\,:=}[\cdot]{\mathtt{Q}}\boldsymbol{q}^{{\mathrm{inc}}}{}^+,\quad \boldsymbol{c}^{\saux}{\,:=}[\cdot]\boldsymbol{p}^{{\saux}},\quad \boldsymbol{c}^{\redc}{\,:=}[\cdot]\boldsymbol{p}^{\sgnM},\\
\text{and }[\cdot]&{\,:=}-\frac{1}{\sqrt{2}}(\begin{bmatrix}
\recip{{\upalpha}}_-&-\recip{{\upalpha}}_-\\
\recip{{\upbeta}}_-&\recip{{\upbeta}}_-
\end{bmatrix}-\begin{bmatrix}
{\upalpha}_+&-{\upalpha}_+\\
{\upbeta}_+&{\upbeta}_+
\end{bmatrix}).
\label{ccincC}
\end{split}\end{equation}

Finally, the {(formal)} solution can {be} written in terms of Fourier transforms {\eqref{discreteFT}} as
\begin{equation}\begin{split}
\recip{\mathbf{L}}_-\boldsymbol{{\mathrm{w}}}^-=\boldsymbol{c}^-,
\quad
\mathbf{L}_+\boldsymbol{{\mathrm{w}}}^+=\boldsymbol{c}^+.
\label{WHsolC}
\end{split}\end{equation}
{As a result,} \eqref{u0expn} and \eqref{uNexpn} yield the expressions for $\su_{0}$ and $\su_{{\mathtt{N}}},$ respectively. Eventually, the field is determined by \eqref{ubulkC}. 
However, the problem is far from solved yet due to presence of unknowns in the right hand side {through the presence} of $\boldsymbol{p}^{\sgnM}$ and $\boldsymbol{p}^{\saux}$.
In view of the Remark \ref{perturbrigid}, it is noted that the sum $\boldsymbol{c}^{\saux}+\boldsymbol{c}^{\redc}$ describes the effect of a perturbation introduced by offset ${\mathtt{M}}.$

{\bf Additive factorization of $\boldsymbol{c}^{\old}$:}
Note that 
\begin{equation}\begin{split}
\boldsymbol{c}^{\old}
&=-\big({\mathtt{Q}}\recip{{{\mathbf{D}}}}_--{\mathtt{Q}}({z}_{{P}})\recip{{{\mathbf{D}}}}_-({z}_{{P}})+({z}^{-1}-{z}_{{P}}^{-1})\recip{{{\mathbf{D}}}}_-(0)+({z}-{z}_{{P}}){{\mathbf{D}}}_+(\infty)\big){\mathbf{J}}\boldsymbol{q}^{{\mathrm{inc}}}{}^+\\
&-\big(-{\mathtt{Q}}{{\mathbf{D}}}_++{\mathtt{Q}}({z}_{{P}})\recip{{{\mathbf{D}}}}_-({z}_{{P}})+({z}_{{P}}^{-1}-{z}^{-1})\recip{{{\mathbf{D}}}}_-(0)+({z}_{{P}}-{z}){{\mathbf{D}}}_+(\infty)
\big){\mathbf{J}}\boldsymbol{q}^{{\mathrm{inc}}}{}^+\\
&=\boldsymbol{c}^{\old}{}^-+\boldsymbol{c}^{\old}{}^+.
\label{cincfacsC}
\end{split}\end{equation}
{In above, recall that ${z}_{{P}}$ is defined by \eqref{zP}.}
{Note that ${{\mathbf{D}}}^-(0)$ \eqref{Dfactorsrigid} equals the diagonal matrix $\lim_{{z}\to0}\text{diag}({\upalpha}^-({z}),{\upbeta}^-({z}))$,
while ${{\mathbf{D}}}_+(\infty)$ \eqref{Dfactorsrigid} equals the diagonal matrix $\lim_{{z}\to\infty}\text{diag}({\upalpha}_+({z}),{\upbeta}_+({z}))$.}

Depending on the sign of ${\mathtt{M}}$, the terms 
\begin{equation}\begin{split}
-{\recip{{\mathbf{D}}}_-({z})}{\mathbf{J}}\boldsymbol{p}^{+}({z})\text{ and }{{\mathbf{D}}}_+({z}){\mathbf{J}}\boldsymbol{p}^{-}({z})
\label{difficultC}
\end{split}\end{equation}
in $\boldsymbol{c}^{\redc}$, as well as {a similar term in} $\boldsymbol{c}^{\saux}$, need to be factorized in a manner {analogous to that following \eqref{difficultK};
this is discussed below.}

For illustration, consider the case ${\mathtt{M}}>0$, the details for the other case {(${\mathtt{M}}<0$)} are provided in Appendix \ref{ApprigidMN}.
{As evident from Fig. \ref{Fig1_newunknowns2}(a) and Fig. \ref{Fig1_newunknowns2}(b), the difference between the two cases lies in the appearance of (shown as sites colored in orange) the unknown field at designated sites. {It is clear now that} ${\mathtt{M}}<0$ is associated with the imposition of the condition that the total field needs to be zero on an `extra' finite set of sites whereas ${\mathtt{M}}>0$ leads to `relaxation' of the constraint of zero total wave field on a finite set of sites.}

{\bf Additive factorization of $\boldsymbol{c}^{\saux}$:}
{For the task of relevant factorization, it is useful to recall} the definition of $\boldsymbol{c}^{\saux}$ \eqref{ccincC} and $\boldsymbol{p}^{\saux}$ \eqref{qPcAbra} with $\saux$s given by \eqref{u0expn} and \eqref{WcN} along with \eqref{WhP} and \eqref{WhincP};
specifically,
\begin{equation}\begin{split}
\boldsymbol{c}^{\saux}&=-(\recip{{{\mathbf{D}}}}_-{\mathbf{J}}-{{\mathbf{D}}}_+{\mathbf{J}})\boldsymbol{p}^{\saux},\\
\boldsymbol{p}^{\saux}
&=\begin{bmatrix}
{-{\su}_{-1, 0}-{{z}} {\su}^{\mathrm{inc}}_{0, 0}}\\
{-{z}^{-{\mathtt{M}}}{\su}^{{t}}_{{\mathtt{M}}-1, {\mathtt{N}}}}
+{\su}^{{\mathrm{inc}}}_{-1, {\mathtt{N}}}-{{z}} {\su}^{{\mathrm{inc}}}_{0, {\mathtt{N}}}
\end{bmatrix},
\label{caux}
\end{split}\end{equation}
{while recalling Remark \ref{u00remark} and Remark \ref{uMNremark}.}

Using \eqref{caux} {(which is first defined in \eqref{qPcAbra})},
\begin{equation}\begin{split}
\boldsymbol{p}^{\saux}&=\boldsymbol{p}^{\saux0}+{{z}} \boldsymbol{p}^{\saux1}-{z}^{-{\mathtt{M}}}{\su}^{{t}}_{{\mathtt{M}}-1, {\mathtt{N}}}\ensuremath{\hat{\mathbf{e}}}_2,\\
\boldsymbol{p}^{\saux0}&{\,:=}\begin{bmatrix}
-{\su}_{-1, 0}\\
{\su}^{{\mathrm{inc}}}_{-1, {\mathtt{N}}}
\end{bmatrix}{=\begin{bmatrix}
-{\su}^{t}_{-1, 0}+{\su}^{{\mathrm{inc}}}_{-1, 0}\\
{\su}^{{\mathrm{inc}}}_{-1, {\mathtt{N}}}
\end{bmatrix}}, \quad 
\boldsymbol{p}^{\saux1}{\,:=}\begin{bmatrix}
{-{\su}^{{\mathrm{inc}}}_{0, 0}}\\
-{\su}^{{\mathrm{inc}}}_{0, {\mathtt{N}}}
\end{bmatrix},
\label{formpW}
\end{split}\end{equation}
then {(in connection with \eqref{cformC} and \eqref{ccincC} as the only term remaining after factors presented in \eqref{cincfacsC} and \eqref{cpfacsC})} by the splitting suggested in \eqref{deffminus} and according to \eqref{phipsirigid},
it is easy to see that 
\begin{equation}\begin{split}
\boldsymbol{c}^{\saux}&={\boldsymbol{c}^{\saux-}+\boldsymbol{c}^{\saux+}}\\
&=-(\recip{{{\mathbf{D}}}}_--\recip{{{\mathbf{D}}}}_-(0)){\mathbf{J}}\boldsymbol{p}^{\saux0}-{z}(\recip{{{\mathbf{D}}}}_--{{\mathbf{D}}}_+(\infty)){\mathbf{J}}\boldsymbol{p}^{\saux1}\\
&+{z}({{\mathbf{D}}}_+-{{\mathbf{D}}}_+(\infty)){\mathbf{J}}\boldsymbol{p}^{\saux1}+({{\mathbf{D}}}_+-\recip{{{\mathbf{D}}}}_-(0)){\mathbf{J}}\boldsymbol{p}^{\saux0}\\
&{+(-{z}^{-{\mathtt{M}}}{\su}^{{t}}_{{\mathtt{M}}-1, {\mathtt{N}}}){{\mathbf{D}}}_+{\mathbf{J}}\ensuremath{\hat{\mathbf{e}}}_2-\frac{1}{\sqrt{2}}{\su}^{{t}}_{{\mathtt{M}}-1, {\mathtt{N}}}\begin{bmatrix}
\phi_{{\mathtt{M}}}^-+\phi_{{\mathtt{M}}}^+\\
-\psi_{{\mathtt{M}}}^--\psi_{{\mathtt{M}}}^+
\end{bmatrix},}
\label{cauxfacs}
\end{split}\end{equation}
{
\begin{equation}\begin{split}
\text{where }\phi_{{\mathtt{M}}}=\recip{{\upalpha}}_-{z}^{-{\mathtt{M}}}&=f^-{z}^{-{\mathtt{M}}}
=\phi_{{\mathtt{M}}}^{+}+\phi_{{\mathtt{M}}}^{-},\\
\psi_{{\mathtt{M}}}=\recip{{\upbeta}}_-{z}^{-{\mathtt{M}}}&=g^-{z}^{-{\mathtt{M}}}
=\psi_{{\mathtt{M}}}^{+}+\psi_{{\mathtt{M}}}^{-}.
\label{cauxfacsphiM}
\end{split}\end{equation}
}
{
\begin{remark}
According to the definition of $\boldsymbol{p}^{\saux1}$ in \eqref{formpW}, a straightforward calculation reveals that it equals the limit of $\boldsymbol{q}^{{\mathrm{inc}}}{}^+$ stated in \eqref{qPcAbra} as ${z}\to\infty$. Thus,
$\boldsymbol{p}^{\saux1}=\boldsymbol{q}^{{\mathrm{inc}}}{}^+(\infty)$.
\label{pW1rem}
\end{remark}
}

{
\begin{remark}
Following up the observation stated in Remark \ref{perturbrigid}, using a form similar to ${\saux}_0$ from \eqref{defW0Q}${}_1$ (recall Remark \ref{u00remark}, i.e., ${\saux}_0({{z}})=-{\su}_{-1, 0}-{{z}} {\su}^{{\mathrm{inc}}}_{0, 0}$), it is easy to see that ${\saux}_{{\mathtt{N}}}$ from \eqref{WcN} can be expressed as 
\begin{equation}\begin{split}
{\saux}_{{\mathtt{N}}}
=-{\su}_{-1, {\mathtt{N}}}-{{z}} {\su}^{{\mathrm{inc}}}_{0, {\mathtt{N}}}+\pertbn{\saux}_{{\mathtt{N}}},
\text{where }\pertbn{\saux}_{{\mathtt{N}}}=-{z}^{-{\mathtt{M}}}{\su}^{t}_{{\mathtt{M}}-1, {\mathtt{N}}}+{\su}^{t}_{-1, {\mathtt{N}}},
\label{Wauxperturb}
\end{split}\end{equation}
so that $\boldsymbol{p}^{\saux}$ from \eqref{qPcAbra}${}_2$ can be re-written as
$$
\boldsymbol{p}^{\saux}=\begin{bmatrix}
{\saux}_{0}\\
{\saux}_{{\mathtt{N}}}
\end{bmatrix}=\begin{bmatrix}
-{\su}_{-1, 0}-{{z}} {\su}^{{\mathrm{inc}}}_{0, 0}\\
-{\su}_{-1, {\mathtt{N}}}-{{z}} {\su}^{{\mathrm{inc}}}_{0, {\mathtt{N}}}
\end{bmatrix}
+\pertbn{\saux}_{{\mathtt{N}}}\ensuremath{\hat{\mathbf{e}}}_2.
$$
In light of this, the expression $\boldsymbol{c}^{\saux}$ from \eqref{caux} motivates the definitions
\begin{equation}\begin{split}
\mathring{\boldsymbol{c}}^{\saux}&=-(\recip{{{\mathbf{D}}}}_-{\mathbf{J}}-{{\mathbf{D}}}_+{\mathbf{J}})\begin{bmatrix}
-{\su}_{-1, 0}-{{z}} {\su}^{{\mathrm{inc}}}_{0, 0}\\
-{\su}_{-1, {\mathtt{N}}}-{{z}} {\su}^{{\mathrm{inc}}}_{0, {\mathtt{N}}}
\end{bmatrix}, \\
\pertbn{\boldsymbol{c}}^{\saux}{\,}&:=-
\pertbn{\saux}_{{\mathtt{N}}}(\recip{{{\mathbf{D}}}}_-{\mathbf{J}}-{{\mathbf{D}}}_+{\mathbf{J}})\ensuremath{\hat{\mathbf{e}}}_2,
\label{cauxperturb}
\end{split}\end{equation}
so that
$$
\boldsymbol{c}^{\saux}=\mathring{\boldsymbol{c}}^{\saux}+\pertbn{\boldsymbol{c}}^{\saux}.
$$
It is noted that $\pertbn{\boldsymbol{c}}^{\saux}+\boldsymbol{c}^{\redc}$ describes the effect of a perturbation introduced by offset ${\mathtt{M}}.$ 
A re-look at \eqref{WHCequations} confirms that the exact solution for $\mathtt{M}\ne0$ can be expressed as a superposition of that for the case $\mathtt{M}=0$ \cite{Bls8pair1} as it corresponds to the vanishing of $\pertbn{\boldsymbol{c}}^{\saux}+\boldsymbol{c}^{\redc}$, i.e., the result when the right hand side in the Wiener--Hopf equation only involves 
\begin{equation}\begin{split}
\mathring{\boldsymbol{c}}:=\boldsymbol{c}^{\old}+\mathring{\boldsymbol{c}}^{\saux}.
\label{calignedC}
\end{split}\end{equation}
The exact nature of the effect of stagger is thus brought out by $\pertbn{\boldsymbol{c}}^{\saux}+\boldsymbol{c}^{\redc}$ in which $\boldsymbol{c}^{\redc}$ is further studied below.
By addition of $\pm$ functions in \eqref{WHsolC}, 
$$\boldsymbol{{\mathrm{w}}}^{\mathrm{F}}=\boldsymbol{{\mathrm{w}}}^-+\boldsymbol{{\mathrm{w}}}^+={\mathbf{L}_-}\boldsymbol{c}^-+\recip{\mathbf{L}}_+\boldsymbol{c}^+=\mathring{\boldsymbol{{\mathrm{w}}}}^{\mathrm{F}}+\pertbn{\boldsymbol{{\mathrm{w}}}}^{\mathrm{F}},
$$
$$
\text{where }
\mathring{\boldsymbol{{\mathrm{w}}}}^{\mathrm{F}}
={\mathbf{L}_-}\mathring{\boldsymbol{c}}^{-}+\recip{\mathbf{L}}_+\mathring{\boldsymbol{c}}^{+},
$$
\begin{equation}\begin{split}
\text{and }
\pertbn{\boldsymbol{{\mathrm{w}}}}^{\mathrm{F}}={\mathbf{L}_-}(\pertbn{\boldsymbol{c}}^{\saux-}+\boldsymbol{c}^{\redc-})+\recip{\mathbf{L}}_+(\pertbn{\boldsymbol{c}}^{\saux+}+\boldsymbol{c}^{\redc+}).
\label{whatK}
\end{split}\end{equation}
\label{perturbrigid2}
\end{remark}
}

{\bf Additive factorization of $\boldsymbol{c}^{\redc}$:}
Let ${{}\mathbb{D}}$ denote the set $\mathbb{Z}_0^{{\mathtt{M}}-1}$ {(same as the definition \eqref{defnDcrack})}.
Let
\begin{equation}\begin{split}
\poly{P}^+({z}){\,:=}{{{\mathtt{f}}}}^{+}({z})+{{{\mathtt{f}}}}^{{\mathrm{inc}} {+}}({z})
=\sum\limits_{{\mathtt{x}}\in{{}\mathbb{D}}}{\mathrm{w}}^{{t}}_{{{\mathtt{x}}},{\mathtt{N}}}{z}^{-{\mathtt{x}}},
\label{polyPC}
\end{split}\end{equation}
{so that by \eqref{qPcAbra}}
{
$$
\boldsymbol{p}^{\sgnM}=\boldsymbol{p}^{+}=\begin{bmatrix}
0\\
\poly{P}^+({z})
\end{bmatrix},
$$
}
according to \eqref{ppnC} and \eqref{ppnincC}.
{In relation to the term involving $\boldsymbol{p}^{+}$ in \eqref{cformC} (for preparing additive factors of $\boldsymbol{c}^{\redc}$ defined in \eqref{ccincC}), it is useful to expand}
\begin{equation}\begin{split}
-{\recip{{\mathbf{D}}}_-({z})}{\mathbf{J}}\boldsymbol{p}^{+}({z})
=\frac{1}{\sqrt{2}}\begin{bmatrix}
\recip{{\upalpha}}_-\poly{P}^+({z})\\
-\recip{{\upbeta}}_-\poly{P}^+({z})
\end{bmatrix},
\label{difficulttermMp1C}
\end{split}\end{equation}
which has right hand side of the same form as \eqref{difficulttermMp1}.
Using the splitting suggested in \eqref{deffminus}, 
\begin{equation}\begin{split}\recip{{\upalpha}}_-\poly{P}^+&=f^-\poly{P}^+
=\sum\limits_{{\mathtt{x}}\in{{}\mathbb{D}}}{\mathrm{w}}^{{t}}_{{{\mathtt{x}}},{\mathtt{N}}}\phi_{{\mathtt{x}}}=\sum\limits_{{\mathtt{x}}\in{{}\mathbb{D}}}{\mathrm{w}}^{{t}}_{{{\mathtt{x}}},{\mathtt{N}}}(\phi_{{\mathtt{x}}}^{+}+\phi_{{\mathtt{x}}}^{-}),\\
\recip{{\upbeta}}_-\poly{P}^+&=g^-\poly{P}^+
=\sum\limits_{{\mathtt{x}}\in{{}\mathbb{D}}}{\mathrm{w}}^{{t}}_{{{\mathtt{x}}},{\mathtt{N}}}\psi_{{\mathtt{x}}}=\sum\limits_{{\mathtt{x}}\in{{}\mathbb{D}}}{\mathrm{w}}^{{t}}_{{{\mathtt{x}}},{\mathtt{N}}}(\psi_{{\mathtt{x}}}^{+}+\psi_{{\mathtt{x}}}^{-}).\label{phipsirigid}\end{split}\end{equation}
Thus, \eqref{cpfacs} holds except that $+{{\mathcal{L}}}_{{c}}^+$ replaces $-{{\mathcal{L}}}_{{k}}^+$ and ${\mathrm{w}}$ replaces ${\mathrm{v}}$.
{In particular, in terms of the definition of $\boldsymbol{c}^{\redc}$ in \eqref{ccincC} (with $\sgnM=+1$),
\begin{equation}\begin{split}
{\sqrt{2}}\boldsymbol{c}^{\redc}&={\sqrt{2}}\boldsymbol{c}^{\redc}{}^-+{\sqrt{2}}\boldsymbol{c}^{\redc}{}^+=\begin{bmatrix}
\sum\limits_{{\mathtt{x}}\in{{}\mathbb{D}}}{\mathrm{w}}^{{t}}_{{{\mathtt{x}}},{\mathtt{N}}}\phi_{{\mathtt{x}}}^{-}\\
-\sum\limits_{{\mathtt{x}}\in{{}\mathbb{D}}}{\mathrm{w}}^{{t}}_{{{\mathtt{x}}},{\mathtt{N}}}\psi_{{\mathtt{x}}}^{-}\\
\end{bmatrix}+
{\sqrt{2}}\boldsymbol{\tau}^+
+{{{\mathbf{D}}}_+{\mathbf{J}}}
\boldsymbol{p}^+({z}),
\label{cpfacsC}
\end{split}\end{equation}
\begin{equation}\begin{split}
\text{where }
\boldsymbol{\tau}^+=\frac{1}{\sqrt{2}}\begin{bmatrix}
\sum\limits_{{\mathtt{x}}\in{{}\mathbb{D}}}{\mathrm{w}}^{{t}}_{{{\mathtt{x}}},{\mathtt{N}}}\phi_{{\mathtt{x}}}^{+}\\
-\sum\limits_{{\mathtt{x}}\in{{}\mathbb{D}}}{\mathrm{w}}^{{t}}_{{{\mathtt{x}}},{\mathtt{N}}}\psi_{{\mathtt{x}}}^{+}
\end{bmatrix},
\label{taurigidP1}
\end{split}\end{equation}
}

{\bf Equation for $\{{\mathrm{w}}_{{{\mathtt{x}}},{\mathtt{N}}}\}_{{\mathtt{x}}\in{{}\mathbb{D}}}$:}
{In order to finally arrive at the equation determining the coefficients of the polynomial \eqref{polyPC}}, by virtue of \eqref{WHsolC} the set of first ${\mathtt{M}}$ Fourier coefficients of the second component of $\boldsymbol{{\mathrm{w}}}^+$ {(which equals $\ensuremath{\hat{\mathbf{e}}}_2\cdot\recip{\mathbf{L}}_+\boldsymbol{c}^+$)}, namely, {$\mathfrak{P}_{{}\mathbb{D}}(\ensuremath{\hat{\mathbf{e}}}_2\cdot\boldsymbol{{\mathrm{w}}}^+)$ need to be evaluated, where the projection operator defined in \eqref{defdomainproj} is utlized. Thus, the condition}
\begin{equation}\begin{split}
{\mathfrak{P}_{{}\mathbb{D}}}(\ensuremath{\hat{\mathbf{e}}}_2\cdot\boldsymbol{{\mathrm{w}}}^+)={\mathfrak{P}_{{}\mathbb{D}}}(\ensuremath{\hat{\mathbf{e}}}_2\cdot\recip{\mathbf{L}}_+(\boldsymbol{c}^{\old}{}^++\boldsymbol{c}^{\saux}{}^++\boldsymbol{c}^{\redc}{}^+))
\label{reducedrigidP}
\end{split}\end{equation}
is {precisely the foundation of an ${\mathtt{M}}\times{\mathtt{M}}$ system of linear algebraic equations for $\{{\mathrm{w}}_{{{\mathtt{x}}},{\mathtt{N}}}\}_{{\mathtt{x}}\in{{}\mathbb{D}}}$ (recall \eqref{defnDcrack}); however, there remain two more unknowns for two more equations are needed}.

{
From \eqref{cincfacsC}, \eqref{cauxfacs}, and \eqref{cpfacsC}, respectively,
with \eqref{taurigidP1}
(recall \eqref{Dfactorsrigid} and \eqref{Lfactorsfull} too)
\begin{equation}\begin{split}
\recip{\mathbf{L}}_+\boldsymbol{c}^{\old}{}^+
&=-\big(-{\mathtt{Q}}{\mathbf{I}}+{\mathtt{Q}}({z}_{{P}})\recip{{\mathbf{J}}}\recip{{{\mathbf{D}}}}_+\recip{{{\mathbf{D}}}}_-({z}_{{P}}){\mathbf{J}}\\
&+({z}_{{P}}^{-1}-{z}^{-1})\recip{{\mathbf{J}}}\recip{{{\mathbf{D}}}}_+\recip{{{\mathbf{D}}}}_-(0){\mathbf{J}}+({z}_{{P}}-{z})\recip{{\mathbf{J}}}\recip{{{\mathbf{D}}}}_+{{\mathbf{D}}}_+(\infty){\mathbf{J}}\big)\boldsymbol{q}^{{\mathrm{inc}}}{}^+,\\
\recip{\mathbf{L}}_+\boldsymbol{c}^{\saux+}
&=({\mathbf{I}}-\recip{{\mathbf{J}}}\recip{{{\mathbf{D}}}}_+\recip{{{\mathbf{D}}}}_-(0){\mathbf{J}})\boldsymbol{p}^{\saux0}+{z}({\mathbf{I}}-\recip{{\mathbf{J}}}\recip{{{\mathbf{D}}}}_+{{\mathbf{D}}}_+(\infty){\mathbf{J}})\boldsymbol{p}^{\saux1}\\
&+(-{z}^{-{\mathtt{M}}}{\su}^{{t}}_{{\mathtt{M}}-1, {\mathtt{N}}})\ensuremath{\hat{\mathbf{e}}}_2-\frac{1}{\sqrt{2}}{\su}^{{t}}_{{\mathtt{M}}-1, {\mathtt{N}}}
\recip{{\mathbf{J}}}\recip{{{\mathbf{D}}}}_+\begin{bmatrix}
\phi_{{\mathtt{M}}}^+\\
-\psi_{{\mathtt{M}}}^+
\end{bmatrix},\\
\recip{\mathbf{L}}_+\boldsymbol{c}^{\redc}{}^+
&=\recip{{\mathbf{J}}}\recip{{{\mathbf{D}}}}_+\boldsymbol{\tau}^+
+\boldsymbol{p}^+({z}).
\label{LcincWcf}
\end{split}\end{equation}
}
{The sum of above gives the $\boldsymbol{{\mathrm{w}}}^+$ (i.e., $\recip{\mathbf{L}}_+\boldsymbol{c}^+$) in \eqref{reducedrigidP} while the same equals $\recip{{\mathbf{J}}^+}\recip{{{\mathbf{D}}}}_+(\boldsymbol{c}^{\old}{}^++\boldsymbol{c}^{\saux}{}^++\boldsymbol{c}^{\redc}{}^+)$.
Using the definition of ${\mathtt{Q}}$, a relevant expansion is ${\mathtt{Q}}{\su}^{\mathrm{inc}}_{{\mathtt{N}}}{}^{+}({z})=(4-z-z^{-1}-\upomega^2)\sum_{z\in\mathbb{Z}^+}{z}^{-\mathtt{x}}{\su}^{\mathrm{inc}}_{\mathtt{x}, {\mathtt{N}}}=-{z}\su^{{\mathrm{inc}}}_{0, {\mathtt{N}}}+\su^{{\mathrm{inc}}}_{-1, {\mathtt{N}}}+{{\mathtt{w}}}_{{\mathtt{N}}}^{{\mathrm{inc}}+}({z})$ (the same can be found using \eqref{anidentity1inc} for the choice ${\mathtt{M}}\to\infty$).
Hence, taking a part of the sum of the three expressions in \eqref{LcincWcf},
\begin{equation}\begin{split}
\ensuremath{\hat{\mathbf{e}}}_2\cdot({\mathtt{Q}}\boldsymbol{q}^{{\mathrm{inc}}}{}^++\boldsymbol{p}^{\saux0}+{z}\boldsymbol{p}^{\saux1}+\boldsymbol{p}^+({z}))&=-{\mathtt{Q}}{\su}^{{\mathrm{inc}}}_{{\mathtt{N}}}{}^{+}+{\su}^{{\mathrm{inc}}}_{-1, {\mathtt{N}}}-{z}{\su}^{{\mathrm{inc}}}_{0, {\mathtt{N}}}+{{\mathtt{f}}}^+({z})+{{\mathtt{f}}}^{{\mathrm{inc}}}{}^+({z})\\
&=-{{\mathtt{w}}}_{{\mathtt{N}}}^{{\mathrm{inc}}+}({z})+{{\mathtt{f}}}^+({z})+{{\mathtt{f}}}^{{\mathrm{inc}}}{}^+({z}).
\label{relatedtoidentity1inc}
\end{split}\end{equation}
It can be observed that $\mathfrak{P}_{{}\mathbb{D}}(-{{\mathtt{w}}}_{{\mathtt{N}}}^{{\mathrm{inc}}+}({z})+{{\mathtt{f}}}^{{\mathrm{inc}}}{}^+({z}))=0$ in \eqref{relatedtoidentity1inc}.
Indeed, using above and expanding and re-arranging the terms in \eqref{reducedrigidP} further,
\begin{subequations}\begin{equation}\begin{split}
&\mathfrak{P}_{{}\mathbb{D}}({{\mathrm{w}}}_{{\mathtt{N}}}^+({z})-{{\mathtt{f}}}^+({z}))\\
&=\mathfrak{P}_{{}\mathbb{D}}({\frac{1}{2}}{\mathtt{Q}}({z}_{{P}})\mathcal{F}^{{\mathrm{inc}}}({z})-{\frac{1}{2}}\sum\limits_{{\mathtt{x}}\in{{}\mathbb{D}}}{\mathrm{w}}^{{t}}_{{{\mathtt{x}}},{\mathtt{N}}}\mathcal{A}_{{\mathtt{x}}}({z})\\
&-{\su}^{{t}}_{-1, 0}\mathcal{J}(z)-{\su}^{{t}}_{{\mathtt{M}}-1, {\mathtt{N}}}(\mathcal{K}(z)+{z}^{-{\mathtt{M}}})+\mathcal{G}^{{\mathrm{inc}}}(z)),
\label{eqnMbyMrigidpre}
\end{split}\end{equation}
\begin{equation}\begin{split}
\text{where }
\mathcal{A}_{{\mathtt{x}}}({z})&{\,:=}\big(\phi_{{\mathtt{x}}}^{+}({z})\recip{{\upalpha}}_+({z})+\psi_{{\mathtt{x}}}^{+}({z})\recip{{\upbeta}}_+({z})\big),
\label{Axrigid}
\end{split}\end{equation}
\begin{equation}\begin{split}
\mathcal{F}^{{\mathrm{inc}}}({z})&{\,:=}\big(- (1-e^{i\upkappa_y{\mathtt{N}}})\recip{{\upalpha}}_-({z}_{{P}})\recip{{\upalpha}}_+({z})\\
&+ (1+e^{i\upkappa_y{\mathtt{N}}})\recip{{\upbeta}}_-({z}_{{P}})\recip{{\upbeta}}_+({z})\big)e^{-i\upkappa_y{\mathtt{N}}}{\su}_{{\mathtt{N}}}^{{\mathrm{inc}}}{}^+({z}),
\label{Fincrigid}
\end{split}\end{equation}
\begin{equation}\begin{split}
\mathcal{J}(z)&{\,:=}-\frac{1}{2}\big(-\recip{{\upalpha}}_-({0})\recip{{\upalpha}}_+({z})+\recip{{\upbeta}}_-({0})\recip{{\upbeta}}_+({z})\big),
\label{KrigidP}
\end{split}\end{equation}
\begin{equation}\begin{split}
\mathcal{K}(z)&{\,:=}-\frac{1}{2}(\phi_{{\mathtt{M}}}^+\recip{{\upalpha}}_+({z})+\psi_{{\mathtt{M}}}^+\recip{{\upbeta}}_+({z})),
\label{Grigid}
\end{split}\end{equation}
\text{and }
\begin{equation}\begin{split}
\mathcal{G}^{{\mathrm{inc}}}(z)&{\,:=}\frac{1}{2}\big(-(1-e^{i\upkappa_y{\mathtt{N}}})\recip{{\upalpha}}_-({0})\recip{{\upalpha}}_+({z})\\
&+(1+e^{i\upkappa_y{\mathtt{N}}})\recip{{\upbeta}}_-({0})\recip{{\upbeta}}_+({z})\big)e^{-i\upkappa_y{\mathtt{N}}}\{({z}_{{P}}^{-1}-{z}^{-1}){\su}_{{\mathtt{N}}}^{{\mathrm{inc}}}{}^+({z})-{\su}_{-1,{\mathtt{N}}}^{{\mathrm{inc}}}\}\\
&+\frac{1}{2}\big(-(1-e^{i\upkappa_y{\mathtt{N}}}){{\upalpha}_+}({\infty})\recip{{\upalpha}}_+({z})\\
&+(1+e^{i\upkappa_y{\mathtt{N}}}){{\upbeta}_+}({\infty})\recip{{\upbeta}}_+({z})\big)e^{-i\upkappa_y{\mathtt{N}}}\{({z}_{{P}}-{z}){\su}_{{\mathtt{N}}}^{{\mathrm{inc}}}{}^+({z})+{z}{\su}^{{\mathrm{inc}}}_{0, {\mathtt{N}}}\}.
\label{Gincrigid}
\end{split}\end{equation}
\label{eqnMbyMrigidprefull} 
\end{subequations}
Using the expression of ${\su}_{{\mathtt{N}}}^{{\mathrm{inc}}}{}^+$, after a sequence of elementary manipulations it is found that the term in curly brackets is zero, so that $\mathcal{G}^{{\mathrm{inc}}}$ is further simplified to be zero. 
In view of the definitions of ${{{\mathtt{f}}}}^+$ and ${{{\mathtt{f}}}}^{{\mathrm{inc}}}{}^+$ given by \eqref{defp1}, it is easy to see that above equation \eqref{eqnMbyMrigidpre} leads to
\begin{equation}\begin{split}
\sum\limits_{{\mathtt{x}}\in{{}\mathbb{D}}}{\mathrm{w}}^{{t}}_{{\mathtt{x}},{\mathtt{N}}}\mathfrak{P}_{{}\mathbb{D}}(\mathcal{A}_{{\mathtt{x}}})
&={\mathtt{Q}}({z}_{{P}})\mathfrak{P}_{{}\mathbb{D}}\big(\mathcal{F}^{{\mathrm{inc}}}\big)-2{\su}^{{t}}_{-1, 0}\mathfrak{P}_{{}\mathbb{D}}(\mathcal{J})-2{\su}^{{t}}_{{\mathtt{M}}-1, {\mathtt{N}}}\mathfrak{P}_{{}\mathbb{D}}(\mathcal{K}),
\label{eqnMbyMrigid}
\end{split}\end{equation}
which is a ${\mathtt{M}}\times{\mathtt{M}}$ system of linear algebraic equations for $\{{\mathrm{w}}^{{t}}_{{\mathtt{x}},{\mathtt{N}}}\}_{{\mathtt{x}}\in{{}\mathbb{D}}}$, i.e., 
$\{{\mathrm{w}}_{{\mathtt{x}},{\mathtt{N}}}\}_{{{}\mathbb{D}}}$ since $\{{\mathrm{w}}^{{\mathrm{inc}}}_{{\mathtt{x}},{\mathtt{N}}}\}_{{{}\mathbb{D}}}$ are known.}
\begin{remark}
{It is a neat result that the coefficient matrix $\mathcal{A}_{{\mathtt{x}}}$ retains the same form in both types of defects as seen from \eqref{Axcrack} vs \eqref{Axrigid}; also compare \eqref{eqnMbyMcrack} and \eqref{eqnMbyMrigid}. This is not surprising in view of the fact that both kinds of defects leads to effectively the same kernel (modulo a scalar factor). However, it also points towards a possibility of preparing an exact matrix Wiener--Hopf factors of such special kernels \cite{GMthesis,Bls8staggerpair_asymp}; the task is deferred to another forum in future.}
\end{remark}

{\bf Equation based on evaluation of $\su^{t}_{-1, 0}$ and $\su^{t}_{{\mathtt{M}}-1, {\mathtt{N}}}$:}
{The equation \eqref{reducedrigidP}, an analogue of \eqref{reducedcrackP} for cracks, is {\em the reduced algebraic problem} for the rigid constraints except that it needs to be supplemented by two more equations. In other words, the analogue of the equation \eqref{eqnMbyMcrack} for the rigid constraints needs to be coupled to two more equations.}
{Revisiting the expression of the right hand side of the Wiener--Hopf equation in \eqref{cformC} (mainly the term involving $\boldsymbol{p}^{\saux}$ \eqref{formpW}),} there still remain two unknowns in $\boldsymbol{p}^{+}$, i.e., $\su_{-1, 0}$ and $\su_{{\mathtt{M}}-1, {\mathtt{N}}}$, as the total number of unknowns in the reduced problem is ${{\mathtt{M}}}+2$.
The equations \eqref{u0expn} and \eqref{uNexpn} {encapsulate the scattered field in the half-row facing the rigid constraint, and thereby} yield the conditions relevant for the last two unknowns. In particular,
by virtue of \eqref{WHsolC} {(as $\boldsymbol{{\mathrm{w}}}^{{\mathrm{F}}}=\boldsymbol{{\mathrm{w}}}^++\boldsymbol{{\mathrm{w}}}^-$)} {and the inverse Fourier transform \cite{Bls0,Bls1,Bls2,Bls3}},
\begin{equation}\begin{split}
\su_{-1, 0}&=\frac{1}{2\pi i}\int_{{\mathcal{C}}}(\recip{{\mathtt{Q}}}({\saux}_0+\ensuremath{\hat{\mathbf{e}}}_1\cdot\mathbf{L}^-(\boldsymbol{c}^{\old}{}^-+\boldsymbol{c}^{\saux}{}^-+\boldsymbol{c}^{\redc}{}^-))\\
&-{{\su}^{{\mathrm{inc}}}_{0, 0}}\delta_{D}^{+}({{z}} {z}_{{P}}^{-1})){z}^{-2}d{z},
\label{un10pre}
\end{split}\end{equation}
\begin{equation}\begin{split}
\text{and }
\su_{{\mathtt{M}}-1, {\mathtt{N}}}&=\frac{1}{2\pi i}\int_{{\mathcal{C}}}(\recip{{\mathtt{Q}}}({\saux}_{{\mathtt{N}}}+\ensuremath{\hat{\mathbf{e}}}_2\cdot\mathbf{L}^-(\boldsymbol{c}^{\old}{}^-+\boldsymbol{c}^{\saux}{}^-+\boldsymbol{c}^{\redc}{}^-))\\
&+\recip{{\mathtt{Q}}}\poly{P}^{+}-{\su}^{{\mathrm{inc}}}_{0, {\mathtt{N}}}\delta_{D}^{+}({{z}} {z}_{{P}}^{-1})){z}^{{\mathtt{M}}-2}d{z},
\label{uMn10Ppre}
\end{split}\end{equation}
where ${\mathcal{C}}$ is a counter-clockwise contour in the annulus ${\mathscr{A}}$. Above give the required equations that need to be solved for $\su_{-1, 0}$ and $\su_{{\mathtt{M}}-1, {\mathtt{N}}}$ in conjunction with that obtained from the reduced equation \eqref{reducedrigidP} via the projection operator defined in \eqref{defdomainproj}.
{The remaining symbolism leading to a matrix formulation follows that presented for two cracks with positive offset, i.e., \eqref{aknueqnPcrack}.}

{With the details provided in Appendix \ref{ApprigidMPextra}, it takes some effort to further simplify \eqref{un10pre} and \eqref{uMn10Ppre} so that \eqref{eqnMbyMrigid} leads to
\begin{subequations}
\begin{equation}\begin{split}
\sum_{\nu=1}^{{\mathtt{M}}}a_{{{\mu}}\nu}\chi_\nu+a_{{{\mu}}({\mathtt{M}}+1)}\chi_{\mathtt{M}+1}+a_{{{\mu}}({\mathtt{M}}+2)}\chi_{\mathtt{M}+2}=b_{{\mu}}\quad\quad({{\mu}}, \dotsc, {\mathtt{M}})
\label{aknueqnPrigid}
\end{split}\end{equation}
\begin{equation}\begin{split}
\begin{aligned}
\text{where }
a_{{{\mu}}\nu}&=\mathfrak{C}_{{{\mu}}-1}\big(\mathfrak{P}_{{}\mathbb{D}}(\mathcal{A}_{\nu-1})\big)\\
a_{{{\mu}}({\mathtt{M}}+1)}&=2\mathfrak{C}_{{{\mu}}-1}(\mathfrak{P}_{{}\mathbb{D}}(\mathcal{J})),\quad
a_{{{\mu}}({\mathtt{M}}+2)}=2\mathfrak{C}_{{{\mu}}-1}(\mathfrak{P}_{{}\mathbb{D}}(\mathcal{K})),\\
\chi_\nu&={\mathrm{w}}^{{t}}_{\nu-1,{\mathtt{N}}},\quad \chi_{\mathtt{M}+1}={\su}^{{t}}_{-1, 0},\quad \chi_{\mathtt{M}+2}={\su}^{{t}}_{{\mathtt{M}}-1, {\mathtt{N}}},\\
b_{{\mu}}&={\mathtt{Q}}({z}_{{P}})\mathfrak{C}_{{{\mu}}-1}(\mathfrak{P}_{{}\mathbb{D}}\big(\mathcal{F}^{{\mathrm{inc}}}\big)),
\end{aligned}
\end{split}\end{equation}
while,
\eqref{un10} and \eqref{uMn10P}, lead to, respectively,
\begin{equation}\begin{split}
\sum_{\nu=1}^{{\mathtt{M}}}a_{({\mathtt{M}}+1)\nu}\chi_\nu+a_{({\mathtt{M}}+1)({\mathtt{M}}+1)}\chi_{\mathtt{M}+1}+a_{({\mathtt{M}}+1)({\mathtt{M}}+2)}\chi_{\mathtt{M}+2}=b_{({\mathtt{M}}+1)},
\end{split}\end{equation}
\begin{equation}\begin{split}
\text{and }
\sum_{\nu=1}^{{\mathtt{M}}}a_{({\mathtt{M}}+2)\nu}\chi_\nu+a_{({\mathtt{M}}+2)({\mathtt{M}}+1)}\chi_{\mathtt{M}+1}+a_{({\mathtt{M}}+2)({\mathtt{M}}+2)}\chi_{\mathtt{M}+2}=b_{({\mathtt{M}}+2)},
\end{split}\end{equation}
\begin{equation}\begin{split}
\begin{aligned}
\text{with }
a_{({\mathtt{M}}+1)\nu}&=-\frac{1}{2}\big(\phi_{{\nu-1}}^{-}({z}_q){{\upalpha}_-}({z}_q)-\psi_{{\nu-1}}^{-}({z}_q){{\upbeta}_-}({z}_q)\big), \\
a_{({\mathtt{M}}+1)({\mathtt{M}}+1)}&=\frac{1}{2}\big(\recip{{\upalpha}}_-(0){{\upalpha}_-}({z}_q)+\recip{{\upbeta}}_-(0){{\upbeta}_-}({z}_q)\big),\\
a_{({\mathtt{M}}+1)({\mathtt{M}}+2)}&=\frac{1}{2}\big(\phi_{{\mathtt{M}}}^{-}({z}_q){{\upalpha}_-}({z}_q)-\psi_{{\mathtt{M}}}^{-}({z}_q){{\upbeta}_-}({z}_q)\big),
\end{aligned}
\end{split}\end{equation}
\begin{equation}\begin{split}
\begin{aligned}
a_{({\mathtt{M}}+2)\nu}&=-({z}_q^{-{({\nu-1})}}-\frac{1}{2}\big(\phi_{{{\nu-1}}}^{-}({z}_q){{\upalpha}_-}({z}_q)+\psi_{{{\nu-1}}}^{-}({z}_q){{\upbeta}_-}({z}_q)\big)),\\
a_{({\mathtt{M}}+2)({\mathtt{M}}+1)}&=\frac{1}{2}\big(-\recip{{\upalpha}}_-(0){{\upalpha}_-}({z}_q)+\recip{{\upbeta}}_-(0){{\upbeta}_-}({z}_q)\big),\\
a_{({\mathtt{M}}+2)({\mathtt{M}}+2)}&=({z}_q^{-{\mathtt{M}}}-\frac{1}{2}\big(\phi_{{\mathtt{M}}}^{-}({z}_q){{\upalpha}_-}({z}_q)+\psi_{{\mathtt{M}}}^{-}({z}_q){{\upbeta}_-}({z}_q)\big)),
\end{aligned}
\end{split}\end{equation}
\begin{equation}\begin{split}
\begin{aligned}
b_{({\mathtt{M}}+1)}&=-\frac{{z}_q}{{z}_q-{z}_{P}}{\frac{1}{2}}{\mathtt{Q}}({z}_{{P}})\big((1-e^{i\upkappa_y{\mathtt{N}}})\recip{{\upalpha}}_-({z}_{{P}}){{\upalpha}_-}({z}_q)\\
&+ (1+e^{i\upkappa_y{\mathtt{N}}})\recip{{\upbeta}}_-({z}_{{P}}){{\upbeta}_-}({z}_q)\big)e^{-i\upkappa_y{\mathtt{N}}}{\su}_{0,{\mathtt{N}}}^{{\mathrm{inc}}},\\
b_{({\mathtt{M}}+2)}&=-\frac{{z}_q}{{z}_q-{z}_{P}}{\frac{1}{2}}{\mathtt{Q}}({z}_{{P}})\big(- (1-e^{i\upkappa_y{\mathtt{N}}})\recip{{\upalpha}}_-({z}_{{P}}){{\upalpha}_-}({z}_q)\\
&+ (1+e^{i\upkappa_y{\mathtt{N}}})\recip{{\upbeta}}_-({z}_{{P}}){{\upbeta}_-}({z}_q)\big)e^{-i\upkappa_y{\mathtt{N}}}{\su}_{0,{\mathtt{N}}}^{{\mathrm{inc}}}.
\end{aligned}
\end{split}\end{equation}
\label{aknueqnPrigidcon}
\end{subequations}
Let $\recip{a}_{\nu{{\mu}}}$ denote the components of the inverse of $[a_{{{\mu}}\nu}]_{{{\mu}}, \nu=1, \dotsc, {\mathtt{M}}}$. Then
\begin{equation}\begin{split}
{\mathrm{w}}^{{t}}_{{\mathtt{x}},{\mathtt{N}}}&=\sum_{\nu=1}^{{\mathtt{M}}}{\mathtt{Q}}({z}_{{P}})\recip{a}_{({\mathtt{x}}+1){{\mu}}}\mathfrak{C}_{{{\mu}}-1}(\mathfrak{P}_{{}\mathbb{D}}\big(\mathcal{F}^{{\mathrm{inc}}}\big))-\frac{{z}_q}{{z}_q-{z}_{P}}{\frac{1}{2}}{\mathtt{Q}}({z}_{{P}})\\
&\times\big((\recip{a}_{({\mathtt{x}}+1)(\mathtt{M}+1)}-\recip{a}_{({\mathtt{x}}+1)(\mathtt{M}+2)})(1-e^{i\upkappa_y{\mathtt{N}}})\recip{{\upalpha}}_-({z}_{{P}}){{\upalpha}_-}({z}_q)\\
&+(\recip{a}_{({\mathtt{x}}+1)(\mathtt{M}+1)}+\recip{a}_{({\mathtt{x}}+1)(\mathtt{M}+2)}) (1+e^{i\upkappa_y{\mathtt{N}}})\recip{{\upbeta}}_-({z}_{{P}}){{\upbeta}_-}({z}_q)\big)e^{-i\upkappa_y{\mathtt{N}}}{\su}_{0,{\mathtt{N}}}^{{\mathrm{inc}}},
\label{exactwmsol}
\end{split}\end{equation}
provides the solution $\{{\mathrm{w}}^{{t}}_{{\mathtt{x}},{\mathtt{N}}}\}_{{\mathtt{x}}\in{{}\mathbb{D}}}$; similarly, the expression for ${\su}^{{t}}_{-1, 0}$, and ${\su}^{{t}}_{{\mathtt{M}}-1, {\mathtt{N}}}$ is obtained. 
}

\section{Numerical results and discussion}
\label{briefnumerics}

\begin{figure}[h!]
\centering
\includegraphics[width=\textwidth]{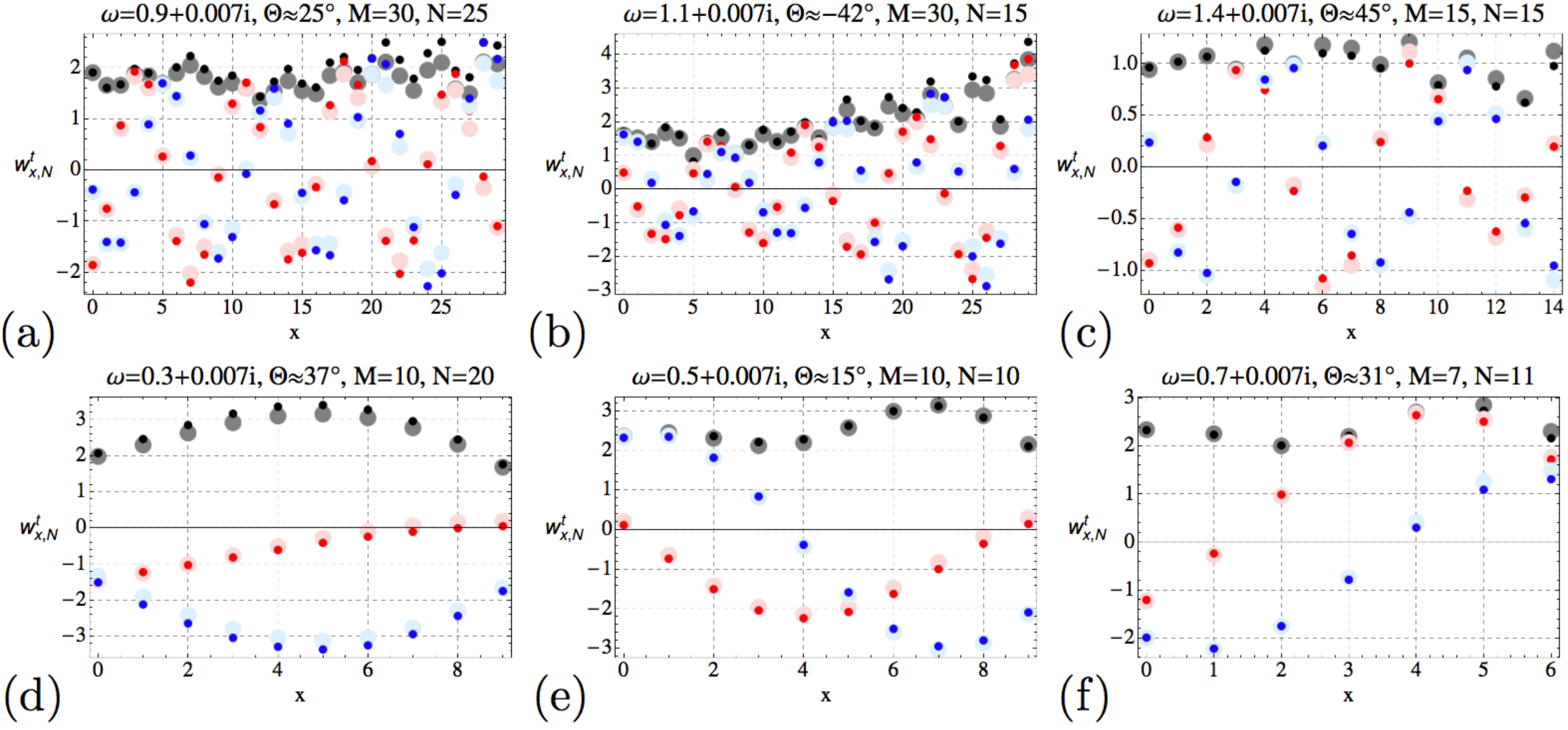}
\caption{Comparison of $\{{\mathrm{v}}^{{t}}_{{\mathtt{x}},{\mathtt{N}}}\}_{{\mathtt{x}}\in\mathbb{Z}_{{0}}^{{\mathtt{M}-1}}}$ based on analytical solution ($|v|$: gray dots, $\Re v$: light red dots, $\Im v$: light blue dots) vs numerical solution (resp. black dots, red dots, blue dots) {when $\mathtt{M}>0$}.}
\label{example1}
\end{figure}

\begin{figure}[h!]
\centering
\includegraphics[width=\textwidth]{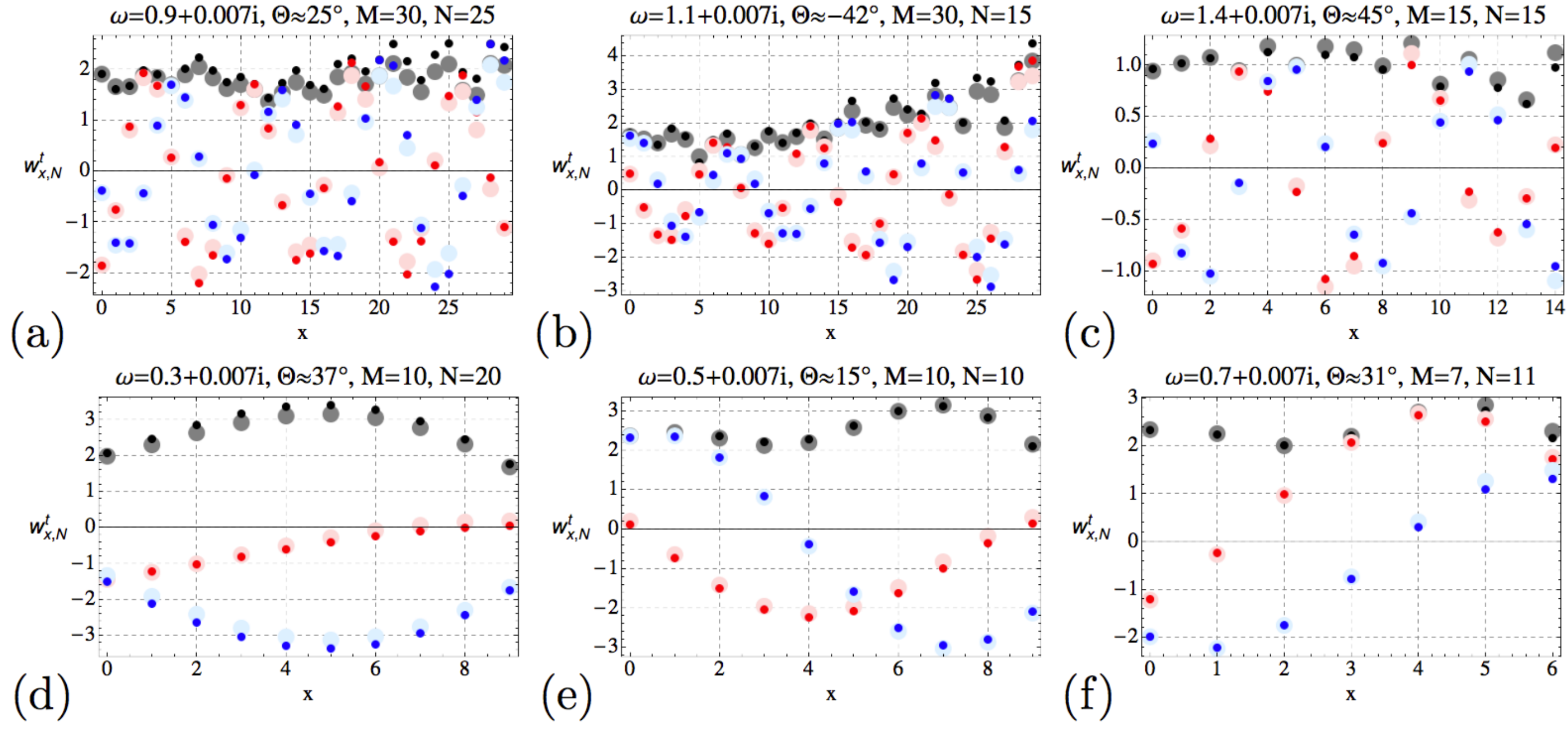}
\caption{Comparison of $\{{\mathrm{w}}^{{t}}_{{\mathtt{x}},{\mathtt{N}}}\}_{{\mathtt{x}}\in\mathbb{Z}_{{0}}^{{\mathtt{M}-1}}}$ based on analytical solution ($|w|$: gray dots, $\Re w$: light red dots, $\Im w$: light blue dots) vs numerical solution (resp. black dots, red dots, blue dots) {when $\mathtt{M}>0$}.}
\label{example2}
\end{figure}

\begin{figure}[h!]
\centering
\includegraphics[width=\textwidth]{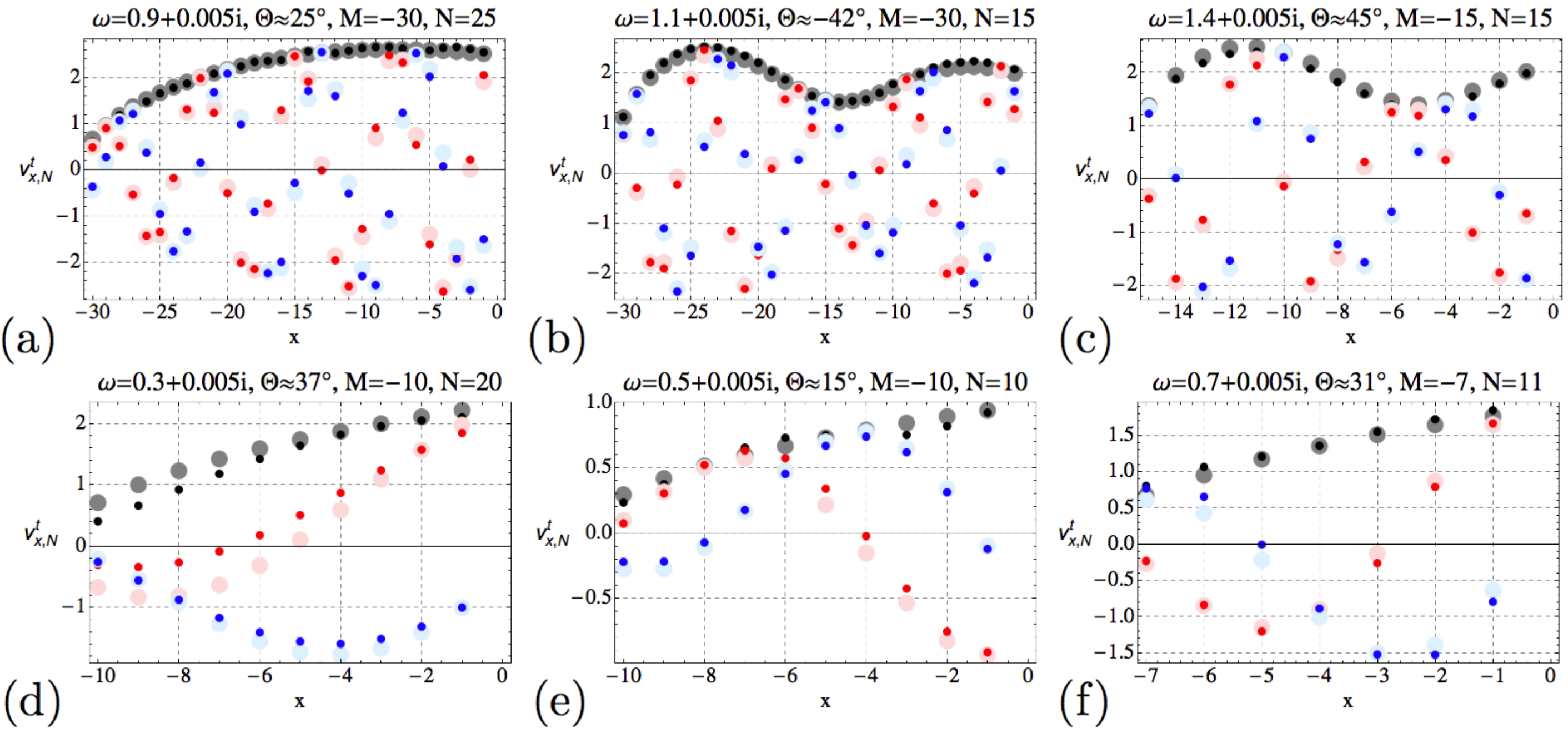}
\caption{Comparison of $\{{\mathrm{v}}^{{t}}_{{\mathtt{x}},{\mathtt{N}}}\}_{{\mathtt{x}}\in\mathbb{Z}_{{\mathtt{M}}}^{-1}}$ based on analytical solution ($|v|$: gray dots, $\Re v$: light red dots, $\Im v$: light blue dots) vs numerical solution (resp. black dots, red dots, blue dots) {when $\mathtt{M}<0$}.}
\label{example1p}
\end{figure}

\begin{figure}[h!]
\centering
\includegraphics[width=\textwidth]{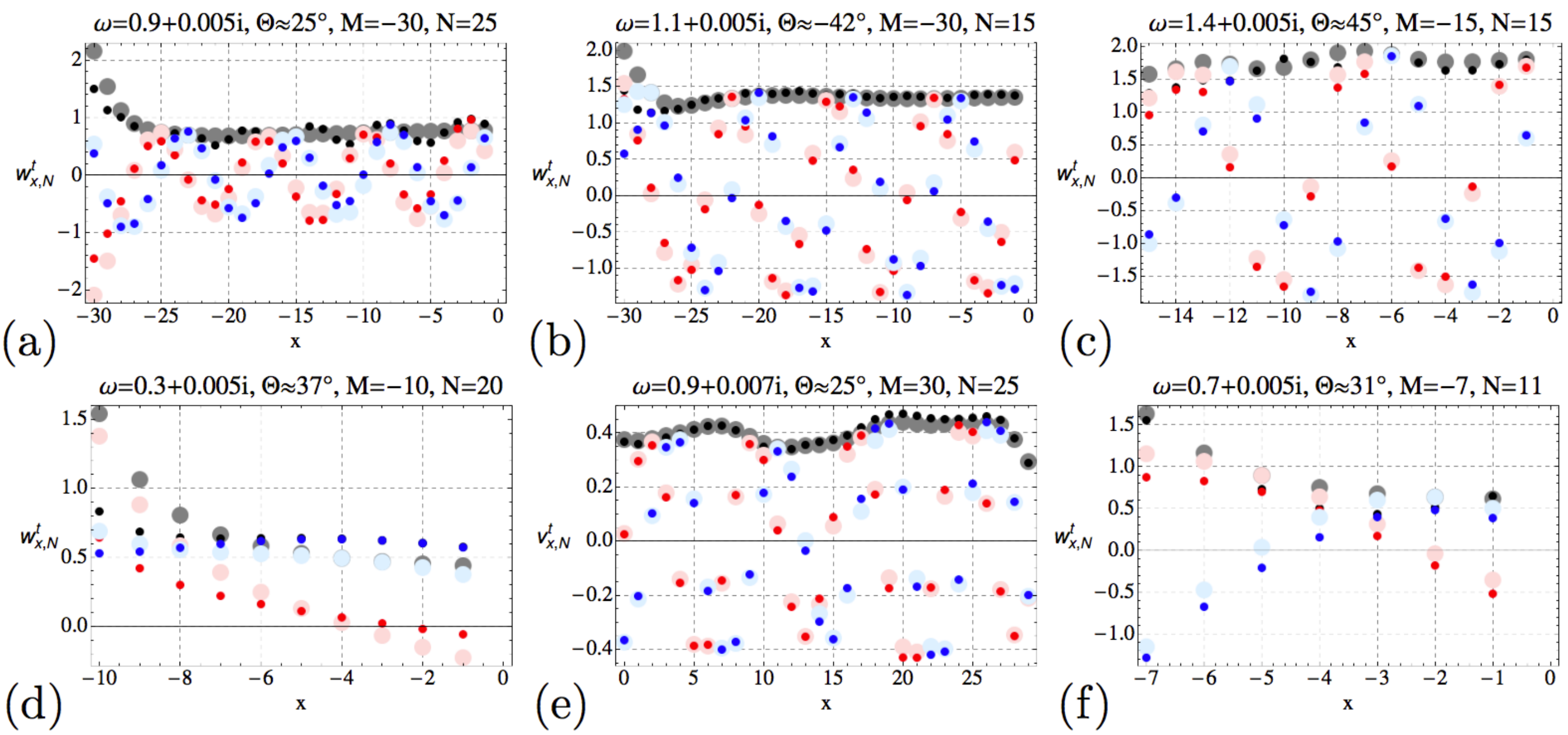}
\caption{Comparison of $\{{\mathrm{w}}^{{t}}_{{\mathtt{x}},{\mathtt{N}}}\}_{{\mathtt{x}}\in\mathbb{Z}_{{\mathtt{M}}}^{-1}}$ based on analytical solution ($|w|$: gray dots, $\Re w$: light red dots, $\Im w$: light blue dots) vs numerical solution (resp. black dots, red dots, blue dots) {when $\mathtt{M}<0$}.}
\label{example2p}
\end{figure}

{The pair of Fig. \ref{example1} and Fig. \ref{example2} (as well as Fig. \ref{example1p} and Fig. \ref{example2p}) provides examples of a comparison between numerical solution (i.e, using the scheme described in Appendix D of \cite{Bls0}) of the discrete scattering problem (on a grid described in caption of Fig. \ref{Fig1num}) and evaluation of the analytical solution for identical values of the parameters except 
for the 
imaginary part of $\upomega$.
As a part of a graphical illustration of the wave field on the lattice structure, the plots presented earlier in Fig. \ref{Fig1num}(a') and Fig. \ref{Fig1num}(b'), correspond to Fig. \ref{example1}(a) and Fig. \ref{example2}(a), respectively.}
{It is observed, with an allowance for small deviations in some cases for some sites, that the solution of the reduced algebraic problem coincides with a numerical solution (see Appendix D of \cite{Bls0}) of the discussed scattering problem.}

{The illustration in Fig. \ref{example1} and Fig. \ref{example1p} shows the plots of the solution for 
$\{{\mathrm{v}}^{{t}}_{{\mathtt{x}},{\mathtt{N}}}\}_{{\mathtt{x}}\in\mathbb{Z}_{{\mathtt{M}}}^{-1}}\text{ and }\{{\mathrm{v}}^{{t}}_{{\mathtt{x}},{\mathtt{N}}}\}_{{\mathtt{x}}\in\mathbb{Z}_0^{{\mathtt{M}}-1}}$ when ${\mathtt{M}}>0$ and ${\mathtt{M}}<0$, respectively, in case of crack scattering problem. Thus, the expression \eqref{exactvmsol} has been verified using numerical solution of the discrete Helmholtz equation.
}
{As an illustration of the numerical evaluation of the exact solution for the case of rigid constraints, Fig. \ref{example2} and Fig. \ref{example2p} show the solution for 
$\{{\mathrm{w}}^{{t}}_{{\mathtt{x}},{\mathtt{N}}}\}_{{\mathtt{x}}\in\mathbb{Z}_{{\mathtt{M}}}^{-1}}\text{ and }\{{\mathrm{w}}^{{t}}_{{\mathtt{x}},{\mathtt{N}}}\}_{{\mathtt{x}}\in\mathbb{Z}_0^{{\mathtt{M}}-1}}$ when ${\mathtt{M}}<0$ and ${\mathtt{M}}>0$, respectively.}
{The width of the annulus ${{\mathscr{A}}}$ (described in Appendix \ref{applyFT} and schematically illustrated in Fig. \ref{schematicannulus_2}) where the Wiener--Hopf problem is posed depends on the imaginary part of $\upomega$. In particular, the numerical accuracy of contribution of terms containing the result of scalar Wiener--Hopf factorization has been found to depend on it, specially when ${\mathtt{M}}$ or ${\mathtt{N}}$ are large compared to $1$.}

\subsection{Effect of stagger on the scattered wave field}
\label{briefperturb}

{Taking forward the statement of the Remark \ref{perturbcrack2},
using the expression of ${\su}_{0}^{{\mathrm{F}}}$ and ${\su}_{{\mathtt{N}}-1}^{{\mathrm{F}}}$ in \eqref{v0NAbraeqn},
the effect of stagger is given by 
\begin{equation}\begin{split}
\begin{bmatrix}\pertbn{\su}_{0}^{{\mathrm{F}}}\\\pertbn{\su}_{{\mathtt{N}}-1}^{{\mathrm{F}}}\end{bmatrix}&=\frac{1-{{\lambda}}^{-1}}{({{\mathscr{F}}}_1-{{\mathtt{H}}}-{{\lambda}}^{-1})^2- {{\mathscr{F}}}_{{{\mathtt{N}}}-2}^2}\begin{bmatrix}({{\mathscr{F}}}_1-{{\mathtt{H}}} -{{\lambda}}^{-1})&{{{\mathscr{F}}}_{{{\mathtt{N}}}-2}}\\-{{{\mathscr{F}}}_{{{\mathtt{N}}}-2}}&-({{\mathscr{F}}}_1-{{\mathtt{H}}} -{{\lambda}}^{-1})\end{bmatrix}\begin{bmatrix}\pertbn{\mathrm{v}}_{0}^{{\mathrm{F}}}\\{\pertbn{\mathrm{v}}}_{{\mathtt{N}}}^{{\mathrm{F}}}\end{bmatrix},\label{v0NAbraeqn2}
\end{split}\end{equation}
using \eqref{vhatK}. 
Note that $\boldsymbol{c}^{\redc\pm}$ in \eqref{vhatK} includes the contribution of \eqref{exactvmsol} as expected.
The expressions of
$\pertbn{\su}_{-1}^{{\mathrm{F}}}$ and $\pertbn{\su}_{{\mathtt{N}}}^{{\mathrm{F}}}$
corresponding to \eqref{v0NAbraeqn2} are given by \eqref{un1uNF}, i.e., $\pertbn{\su}_{-1}^{{\mathrm{F}}}=\pertbn{\su}_{0}^{{\mathrm{F}}}-\pertbn{\mathrm{v}}_{0}^{{\mathrm{F}}},$ $\pertbn{\su}_{{\mathtt{N}}}^{{\mathrm{F}}}=\pertbn{\su}_{{{\mathtt{N}}}-1}^{{\mathrm{F}}}-\pertbn{\mathrm{v}}^{{\mathrm{F}}}_{{\mathtt{N}}}.$
These functions determine the solution for scattered wave field everywhere through \eqref{ubulkK}.
}

{Analogously, in the background of the Remark \ref{u0uNFrem} and Remark \ref{perturbrigid2},
using \eqref{whatK}, it is convenient to split the expression of ${\su}_0^{{\mathrm{F}}}$ and ${\su}_{\mathtt{N}}^{{\mathrm{F}}}$ so that the effect of stagger is captured by (using \eqref{Wauxperturb} for the second expression)
\begin{equation}\begin{split}
\pertbn{\su}_0^{{\mathrm{F}}}
=\recip{\mathtt{Q}}\pertbn{\mathrm{w}}_0^{-},\quad
\pertbn{\su}_{{{\mathtt{N}}}}^{{\mathrm{F}}}
=\recip{{\mathtt{Q}}}(\pertbn{\mathrm{w}}_{\mathtt{N}}^{-}+\pertbn{\saux}_{{\mathtt{N}}}+{{\mathtt{f}}}^{{\mathrm{inc}}{\sgnM}}+{{\mathtt{f}}}^{{\sgnM}}).
\end{split}\end{equation}
These functions determine the solution for scattered wave field everywhere through \eqref{ubulkC}.
}

{
Eventually, the counterpart of \eqref{ubulkK} provides the Fourier transform of $\pertbn{\su}_{{\mathtt{x}},{\mathtt{y}}}$ everywhere in both instances of defects as well as for both signs of the offset. The inverse Fourier transform \eqref{uFTinv} based expression also admits the far-field approximation via the method of stationary phase following \cite{Bls0,Bls1,GMthesis,Bls8staggerpair_asymp} to allow interpretation relative to the effect on scattered field far from the cracks. 
The details are omitted and shall be presented elsewhere for the purpose of complete analysis of problem relative to the scattering parameters in the context of far-field and near-tip field behaviour.}

\subsection{Wave field on a portion of both edges}

\begin{figure}[h!]
\centering
\includegraphics[width=.7\textwidth]{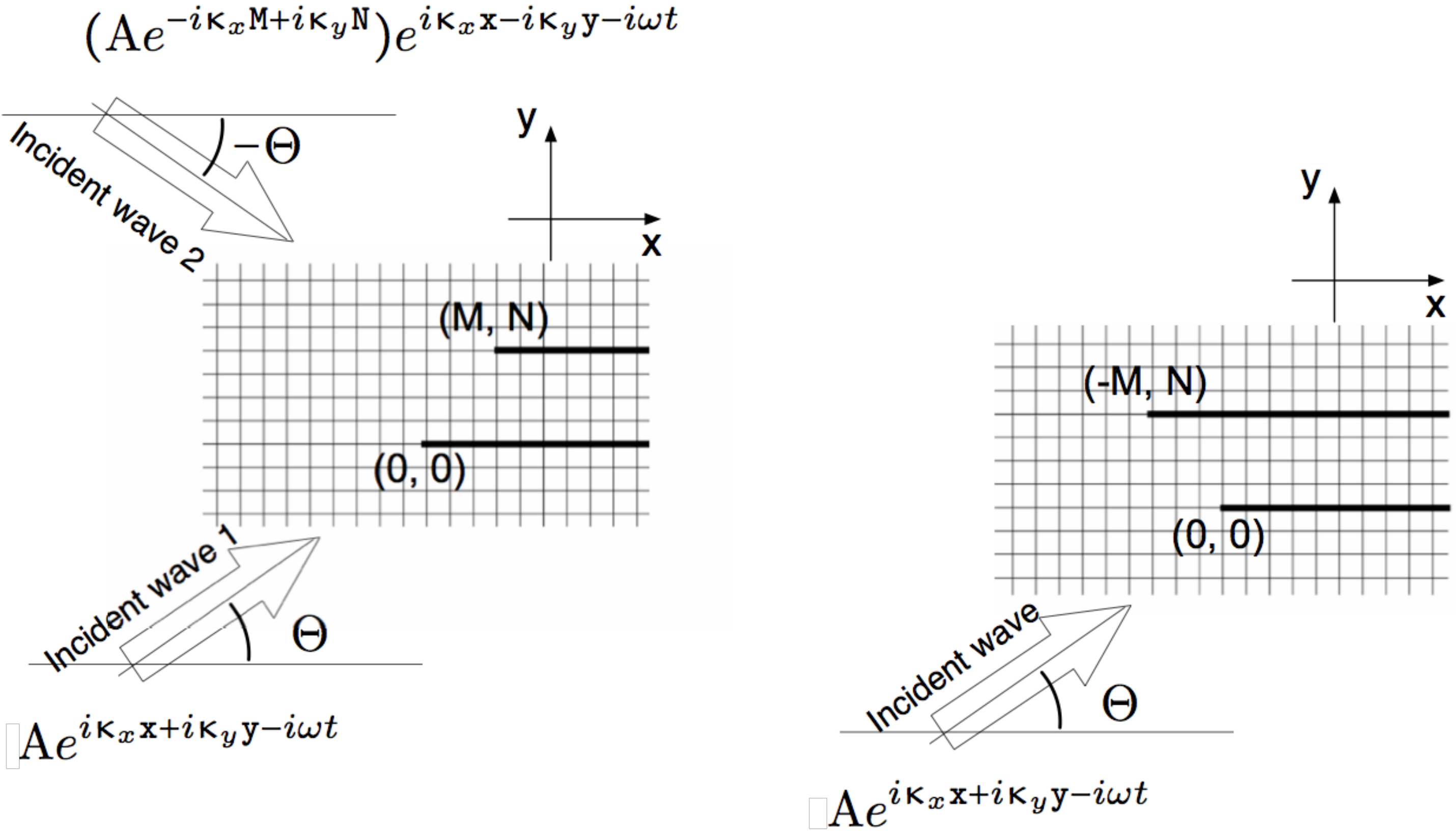}
\caption{{Schematic of the relation between different signs of stagger, i.e., positive and negative offset. The case of positive offset (with $\mathtt{M}>0$) on the left can be used to analyze the one on the right with negative offset as well.}}
\label{flipschematic}
\end{figure}

{Taking forward the last statement in the introduction to this paper (in the context of Fig. \ref{scalingschematic_2}(b)), it is worth pointing out some details concerning it.
Indeed, it is easily seen by flipping the lattice structure vertically that the mapping can be found by changing the angle of incidence and the phase of the incident wave appropriately, as shown schematically in Fig. \ref{flipschematic}; in other words, the analysis of the case with positive offset (here, the offset is denoted by ${\mathtt{M}}$) for an entire range of incidence angle from $-\pi$ to $\pi$ is sufficient to handle negative offset as well. On the other hand, with the analysis of both signs of offsets at disposal, a simultaneous evaluation of the relevant wave field is possible as hinted in Fig. \ref{scalingschematic_2}(b) earlier. This is stated in the form of claims as follows.}

{
Suppose that the list of $|\mathtt{M}|$ number of components of $\mathtt{v}_{\mathtt{N}}$ that satisfies 
\eqref{exactvmsol} for $\mathtt{M}>0$ and \eqref{exactvmsolN} for $\mathtt{M}<0$, and the corresponding scattered field $\mathtt{u}_{\mathtt{x},\mathtt{y}}$ is denoted by
$\mathfrak{V}(\mathtt{M},\mathtt{N},{\mathrm{A}},{\upkappa}_x,{\upkappa}_y)$
and
$\mathfrak{U}_{\mathtt{x},\mathtt{y}}(\mathtt{M},\mathtt{N},{\mathrm{A}},{\upkappa}_x,{\upkappa}_y)$, respectively.
The relevant claims are stated below without proof.
\begin{claim}
For $\mathtt{M}>0,$
$\mathfrak{V}_{-\mathtt{x}-1,{0}}(-\mathtt{M},\mathtt{N},{\mathrm{A}},{\upkappa}_x,{\upkappa}_y)=\mathfrak{U}_{\mathtt{x},1}(\mathtt{M},\mathtt{N},{{\mathrm{A}}}e^{-i{\upkappa}_x{\mathtt{M}}+i{\upkappa}_y{\mathtt{N}}},{\upkappa}_x,-{\upkappa}_y)
+\mathfrak{U}_{\mathtt{x},-1}(\mathtt{M},\mathtt{N},{{\mathrm{A}}}e^{-i{\upkappa}_x{\mathtt{M}}+i{\upkappa}_y{\mathtt{N}}},{\upkappa}_x,-{\upkappa}_y), \mathtt{x}\in\mathbb{Z}_0^{\mathtt{M}-1}.$
\label{claim21}
\end{claim}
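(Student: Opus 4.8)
\textbf{Proof proposal for Claim \ref{claim21}.}

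The plan is to exploit the elementary geometric symmetry of the lattice configuration under a vertical flip, combined with the freedom to relabel the incident wave. Consider the square lattice ${\mathfrak{S}}$ with the pair of staggered cracks at rows $\mathtt{y}=0,-1$ (broken bonds for $\mathtt{x}\ge0$) and rows $\mathtt{y}={\mathtt{N}},{\mathtt{N}}-1$ (broken bonds for $\mathtt{x}\ge{\mathtt{M}}$), and apply the reflection $\mathtt{y}\mapsto {\mathtt{N}}-1-\mathtt{y}$ together with a horizontal shift. First I would check that this involution maps the $\mathtt{M}>0$ crack configuration (with offset $+\mathtt{M}$) onto the $\mathtt{M}<0$ configuration (with offset $-\mathtt{M}$): the lower crack rows $\{0,-1\}$ are sent to $\{{\mathtt{N}}-1,{\mathtt{N}}\}$ and the upper rows $\{{\mathtt{N}},{\mathtt{N}}-1\}$ to $\{-1,0\}$, while the half-line $\mathtt{x}\ge0$ of broken bonds in the former becomes the relevant half-line of broken bonds in the latter; tracking the tip positions shows the offset changes sign. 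This is essentially the content of the schematic in Fig. \ref{flipschematic}.

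Next I would transport the incident wave under this flip. The incident plane wave \eqref{uinc} is ${\su}^{{\mathrm{inc}}}_{{\mathtt{x}},{\mathtt{y}}}={{\mathrm{A}}}e^{i{\upkappa}_x{\mathtt{x}}+i{\upkappa}_y{\mathtt{y}}}$; under $\mathtt{y}\mapsto {\mathtt{N}}-1-\mathtt{y}$ it becomes ${{\mathrm{A}}}e^{i{\upkappa}_y({\mathtt{N}}-1)}e^{i{\upkappa}_x{\mathtt{x}}-i{\upkappa}_y{\mathtt{y}}}$, i.e.\ a plane wave with vertical wavenumber $-{\upkappa}_y$ and a modified amplitude. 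After also performing the horizontal shift $\mathtt{x}\mapsto \mathtt{x}-\mathtt{M}$ needed to align the tip of the transformed upper crack with the origin, the amplitude picks up the extra factor $e^{-i{\upkappa}_x{\mathtt{M}}}$, and combining with a relabeling of which row is called ``${\mathtt{N}}$'' one arrives at the amplitude ${{\mathrm{A}}}e^{-i{\upkappa}_x{\mathtt{M}}+i{\upkappa}_y{\mathtt{N}}}$ appearing in the statement (the discrepancy between ${\mathtt{N}}-1$ and ${\mathtt{N}}$ being absorbed into the bookkeeping of the $e^{i{\upkappa}_y}$ phases). Since the dispersion relation \eqref{dispersion} is even in ${\upkappa}_y$, the frequency ${\upomega}$ is unchanged, so the transformed problem is a bona fide scattering problem of the same type; uniqueness of the scattered field (guaranteed by the decay forced by \eqref{complexfreq}) then identifies the scattered field of the $-\mathtt{M}$ problem with the flipped-and-shifted scattered field of the $\mathtt{M}$ problem. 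Finally I would translate this field identification into the statement about the bond quantities: ${\mathrm{v}}^{{t}}_{\mathtt{x},\mathtt{N}}={\su}^{t}_{\mathtt{x},\mathtt{N}}-{\su}^{t}_{\mathtt{x},\mathtt{N}-1}$ \eqref{vdefAbrafull} is, after the flip, the difference of total displacements across the image of that vertical bond, which lies between the rows $\mathtt{y}=-1$ and $\mathtt{y}=0$; taking the difference and using ${\su}^{t}={\su}^{{\mathrm{inc}}}+{\su}$ reduces the $\mathfrak{V}$ on the left to $\mathfrak{U}_{\mathtt{x},1}+\mathfrak{U}_{\mathtt{x},-1}$ evaluated at the flipped incidence parameters — which is exactly the claimed identity, once the index $\mathtt{x}$ of the $-\mathtt{M}$ problem is matched with $-\mathtt{x}-1$ of the $\mathtt{M}$ problem as dictated by the shift.

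The bulk of the work, and the step I expect to be the main obstacle, is the careful bookkeeping of the three relabelings simultaneously — the reflection $\mathtt{y}\mapsto{\mathtt{N}}-1-\mathtt{y}$, the horizontal shift $\mathtt{x}\mapsto\mathtt{x}-\mathtt{M}$, and the interchange of the roles of the ``lower'' and ``upper'' cracks — so that the index ranges ($\mathbb{Z}_0^{\mathtt{M}-1}$ versus $\mathbb{Z}_{-\mathtt{M}}^{-1}$, cf.\ \eqref{defppinc} and \eqref{defppincN}), the phase factors $e^{i{\upkappa}_y}$ coming from the one-step row offsets, and the amplitude factor $e^{-i{\upkappa}_x{\mathtt{M}}+i{\upkappa}_y{\mathtt{N}}}$ all line up exactly as written; there is genuine risk of an off-by-one in the row index or a spurious $e^{\pm i{\upkappa}_y}$ if this is done sloppily. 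A clean way to carry it out is to verify the equality at the level of the defining lattice equations \eqref{crackt1}--\eqref{crackt2} and the radiation condition rather than at the level of the Wiener--Hopf solution \eqref{exactvmsol}/\eqref{exactvmsolN}, invoking uniqueness of the scattering problem at the end; this sidesteps any need to re-derive the reduced algebraic system for the flipped parameters. Since the paper states the claim without proof, I would keep the argument at this structural level and not grind through the explicit phase algebra.
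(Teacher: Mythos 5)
The paper offers no proof of this claim at all --- it is stated as ``verified numerically'' with the remark that a proof ``is possible by the execution of manipulations similar to those leading to the reduced algebraic equations'' --- so your flip-plus-uniqueness strategy cannot be measured against an existing argument. The symmetry backbone you describe is sound and is exactly the content of Fig.~\ref{flipschematic} and the introductory remarks: the reflection of the lattice together with a horizontal shift does map the $+\mathtt{M}$ crack configuration onto the $-\mathtt{M}$ one, the incident wave does go to one with vertical wavenumber $-\upkappa_y$ and a modified amplitude, and uniqueness (via \eqref{complexfreq}) does identify the two scattered fields. That part is fine.

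The genuine gap is in your last step. What the flip argument actually delivers is a relation between like quantities: $\mathrm{v}^{t}_{-\mathtt{x}-1,0}$ of the $-\mathtt{M}$ problem is carried to $\pm\,\mathrm{v}^{t}_{\mathtt{M}-1-\mathtt{x},\mathtt{N}}$ of the $+\mathtt{M}$ problem, because the bond between rows $0$ and $-1$ is sent to the bond between rows $\mathtt{N}-1$ and $\mathtt{N}$. The claim's right-hand side is instead $\mathfrak{U}_{\mathtt{x},1}+\mathfrak{U}_{\mathtt{x},-1}=\su_{\mathtt{x},1}+\su_{\mathtt{x},-1}$ --- a \emph{sum} of \emph{scattered}-field values at rows $\pm1$ adjacent to the \emph{lower} crack, indexed by $\mathtt{x}$ rather than $\mathtt{M}-1-\mathtt{x}$. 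Your sentence ``taking the difference and using $\su^{t}=\su^{\mathrm{inc}}+\su$ reduces the $\mathfrak{V}$ on the left to $\mathfrak{U}_{\mathtt{x},1}+\mathfrak{U}_{\mathtt{x},-1}$'' is an assertion, not a derivation: splitting off the incident wave cannot turn a difference across rows $\{0,-1\}$ into a sum over rows $\{1,-1\}$, and the equations of motion \eqref{crackt1} on the broken segment do not supply such a local identity either. Passing from the $\mathrm{v}$-to-$\mathrm{v}$ statement that the flip gives you to the stated identity is precisely the nontrivial algebra (the ``manipulations similar to those leading to the reduced algebraic equations'') that the paper flags as the actual work, and it is the part you explicitly decline to carry out. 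The unresolved phase bookkeeping you acknowledge --- the crack flip is $\mathtt{y}\mapsto\mathtt{N}-1-\mathtt{y}$, which produces the amplitude factor $e^{i\upkappa_y(\mathtt{N}-1)}$ rather than the $e^{i\upkappa_y\mathtt{N}}$ in the statement, and the index reversal $\mathtt{x}\leftrightarrow\mathtt{M}-1-\mathtt{x}$ --- is not cosmetic: it sits exactly where the claim's content lies, so the proposal as written does not establish the identity in the form stated.
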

\begin{claim}
For $\mathtt{M}<0,$
$\mathfrak{V}_{\mathtt{x}+1,{0}}(-\mathtt{M},\mathtt{N},{\mathrm{A}},{\upkappa}_x,{\upkappa}_y)=\mathfrak{U}_{\mathtt{x},1}(\mathtt{M},\mathtt{N},{{\mathrm{A}}}e^{-i{\upkappa}_x{\mathtt{M}}+i{\upkappa}_y{\mathtt{N}}},{\upkappa}_x,-{\upkappa}_y)
+\mathfrak{U}_{\mathtt{x},-1}(\mathtt{M},\mathtt{N},{{\mathrm{A}}}e^{-i{\upkappa}_x{\mathtt{M}}+i{\upkappa}_y{\mathtt{N}}},{\upkappa}_x,-{\upkappa}_y), \mathtt{x}\in\mathbb{Z}_{\mathtt{M}}^{-1}.$
\label{claim21p}
\end{claim}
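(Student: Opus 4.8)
\textbf{Proof proposal for Claim \ref{claim21} and Claim \ref{claim21p}.}
The plan is to exploit the elementary geometric symmetry that was already invoked informally in the discussion around Fig.~\ref{flipschematic}: a vertical flip of the lattice ${\mathfrak{S}}$ carries the ``upper'' edge (rows ${\mathtt{y}}={\mathtt{N}}, {\mathtt{N}}-1$) onto the ``lower'' edge (rows ${\mathtt{y}}=0, -1$) and \emph{vice versa}, while preserving the equation of motion \eqref{dHelmholtz} and the crack condition \eqref{defectK}. Concretely, I would introduce the involution
\begin{equation}\begin{split}
\sigma:({\mathtt{x}},{\mathtt{y}})\mapsto({\mathtt{x}},{\mathtt{N}}-1-{\mathtt{y}}),
\end{split}\end{equation}
and check that it maps the crack set ${{\Sigma}}_{{k}}$ for offset ${\mathtt{M}}$ onto the crack set for offset $-{\mathtt{M}}$ (the horizontal tip positions $\{{\mathtt{x}}\ge0\}$ and $\{{\mathtt{x}}\ge{\mathtt{M}}\}$ get swapped; after the relabelling the new ``lower'' crack starts at $-{\mathtt{M}}$, which is exactly a configuration with offset $-{\mathtt{M}}$ once one also shifts the origin along ${\mathtt{x}}$ if one insists the lower crack begins at $0$). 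The incident wave \eqref{uinc} with wave vector $({\upkappa}_x,{\upkappa}_y)$ and amplitude ${{\mathrm{A}}}$ is carried by $\sigma$ to $e^{i{\upkappa}_x{\mathtt{x}}-i{\upkappa}_y({\mathtt{y}})+i{\upkappa}_y({\mathtt{N}}-1)}$, i.e.\ to an incident wave with the same ${\upkappa}_x$, reversed ${\upkappa}_y$, and a modified amplitude; matching this with the amplitude ${{\mathrm{A}}}e^{-i{\upkappa}_x{\mathtt{M}}+i{\upkappa}_y{\mathtt{N}}}$ appearing in the claim is then a bookkeeping exercise of tracking the extra ${\mathtt{x}}$-shift by $\pm{\mathtt{M}}$ that is needed to bring the flipped geometry back into the normalized form in which the lower crack starts at ${\mathtt{x}}=0$.

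Granting this, I would argue as follows. By uniqueness of the scattered field (which follows from the radiation/decay conditions enforced by \eqref{complexfreq}, exactly as used throughout \S\ref{sqLattFT}), the scattered field of the flipped-and-shifted problem \emph{is} the $\sigma$-image of the original scattered field, up to the amplitude/phase substitution identified above. Evaluating the $\sigma$-image on the rows adjacent to the crack: the quantity $\mathfrak{V}_{{\mathtt{x}},0}$ for the $-{\mathtt{M}}$ problem, which is by definition ${\su}_{{\mathtt{x}},1}+{\su}_{{\mathtt{x}},-1}$ read off from \eqref{w0defAbra}--\eqref{wNdefAbra} --- more precisely the relevant ${\mathrm{v}}$-type bond-difference for the lower crack --- corresponds under $\sigma$ to the combination of ${\su}_{\cdot,{\mathtt{N}}+1}$ and ${\su}_{\cdot,{\mathtt{N}}-1}$, hence to $\mathfrak{U}_{{\mathtt{x}},1}+\mathfrak{U}_{{\mathtt{x}},-1}$ for the ${\mathtt{M}}$ problem with the substituted parameters, which is exactly the right-hand side of the claimed identity. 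The two cases ${\mathtt{M}}>0$ (Claim \ref{claim21}) and ${\mathtt{M}}<0$ (Claim \ref{claim21p}) differ only in whether the horizontal index set that needs relabelling is $\mathbb{Z}_0^{{\mathtt{M}}-1}$ or $\mathbb{Z}_{{\mathtt{M}}}^{-1}$, which accounts for the index ${\mathtt{x}}\mapsto-{\mathtt{x}}-1$ versus ${\mathtt{x}}\mapsto{\mathtt{x}}+1$ on the left-hand sides; both come from composing $\sigma$ with the horizontal shift/reflection that normalizes the tip location. An alternative, more computational route --- which I would keep in reserve as a cross-check --- is to verify the identity directly at the level of the reduced linear systems: show that the coefficient matrix $[a_{{{\mu}}\nu}]$ and right-hand side $[b_{{\mu}}]$ of \eqref{aknueqnPcrack} (resp.\ its $\mathtt{M}<0$ analogue \eqref{exactvmsolN}) for parameters $(-{\mathtt{M}},{\mathtt{N}},{{\mathrm{A}}},{\upkappa}_x,{\upkappa}_y)$ are obtained from those for $({\mathtt{M}},{\mathtt{N}},{{\mathrm{A}}}e^{-i{\upkappa}_x{\mathtt{M}}+i{\upkappa}_y{\mathtt{N}}},{\upkappa}_x,-{\upkappa}_y)$ by the permutation of rows/columns induced by ${\mathtt{x}}\mapsto-{\mathtt{x}}-1$ (resp.\ ${\mathtt{x}}\mapsto{\mathtt{x}}+1$); this hinges on the fact that the scalar factors $\recip{{\upalpha}}_\pm,\recip{{\upbeta}}_\pm$ and the functions $\phi_{{\mathtt{x}}}^+,\psi_{{\mathtt{x}}}^+$ entering $\mathcal{A}_{{\mathtt{x}}}$ and $\mathcal{F}^{{\mathrm{inc}}}$ depend on the geometry only through ${\mathtt{N}}$ and on ${z}_{{P}}=e^{i{\upkappa}_x}$, both of which are invariant under ${\upkappa}_y\mapsto-{\upkappa}_y$.

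The main obstacle I anticipate is \emph{not} conceptual but notational: pinning down the precise horizontal shift and the precise phase factor $e^{-i{\upkappa}_x{\mathtt{M}}+i{\upkappa}_y{\mathtt{N}}}$ that makes the flipped problem coincide \emph{on the nose} with a ``standard'' offset-$(-{\mathtt{M}})$ problem in the paper's fixed normalization (lower crack beginning at ${\mathtt{x}}=0$, upper crack beginning at ${\mathtt{x}}={\mathtt{M}}$). One has to be careful that $\sigma$ alone produces a configuration whose \emph{lower} crack starts at ${\mathtt{x}}={\mathtt{M}}$ rather than at $0$, so a compensating translation $T_{{\mathtt{M}}}:{\mathtt{x}}\mapsto{\mathtt{x}}-{\mathtt{M}}$ (or $T_{-{\mathtt{M}}}$, depending on sign conventions) must be composed in, and this translation is what injects the $e^{\pm i{\upkappa}_x{\mathtt{M}}}$ into the incident amplitude and simultaneously converts ${\su}_{{\mathtt{x}},\cdot}$ into ${\su}_{{\mathtt{x}}\mp{\mathtt{M}},\cdot}$; the $e^{i{\upkappa}_y{\mathtt{N}}}$ comes from the $-{\upkappa}_y({\mathtt{N}}-1)$ offset in the flipped exponent together with the one-row gap between ${\mathtt{y}}=-1$ and ${\mathtt{y}}=0$. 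I would therefore devote the bulk of the written proof to carefully composing $\sigma$ with the horizontal translation and reading off the induced action on \eqref{uinc}, and then simply invoke uniqueness of the scattered field plus linearity to conclude. Since the statement is explicitly made ``without proof'' in the excerpt, a clean paragraph establishing the symmetry map and one paragraph transporting the definitions of $\mathfrak{V}$ and $\mathfrak{U}$ through it should suffice, with the reduced-system computation relegated to a remark as an independent verification.
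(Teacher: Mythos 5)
First, a point of reference: the paper does not prove Claim \ref{claim21p}. The four claims of this subsection are explicitly ``stated without proof,'' recorded as numerically verified, with only the remark that a proof ``is possible by the execution of manipulations similar to those leading to the reduced algebraic equations'' --- i.e., the computational route through the reduced $|{\mathtt{M}}|\times|{\mathtt{M}}|$ systems that you keep in reserve as a cross-check. Your primary route (vertical flip composed with a horizontal translation, then uniqueness of the scattered field under \eqref{complexfreq}) is therefore a different and in principle cleaner strategy, and it is the one the paper itself gestures at around Fig.~\ref{flipschematic}; the involution $({\mathtt{x}},{\mathtt{y}})\mapsto({\mathtt{x}},{\mathtt{N}}-1-{\mathtt{y}})$ followed by the appropriate shift ${\mathtt{x}}\mapsto{\mathtt{x}}+{\mathtt{M}}$ does carry the offset-$(-{\mathtt{M}})$ crack configuration onto the offset-${\mathtt{M}}$ one, so the skeleton is sound.

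The gap is that the one step you defer as ``bookkeeping'' is where the entire content of the claim lives, and the flip alone cannot close it. Transporting the left-hand side through your map gives $\mathfrak{V}_{{\mathtt{x}}+1,0}(-{\mathtt{M}},\dotsc)={\su}^{t}_{{\mathtt{x}}+1,0}-{\su}^{t}_{{\mathtt{x}}+1,-1}\mapsto -({\su}^{t}_{{\mathtt{x}}+1+{\mathtt{M}},{\mathtt{N}}}-{\su}^{t}_{{\mathtt{x}}+1+{\mathtt{M}},{\mathtt{N}}-1})$, i.e., minus a bond \emph{difference} across the upper crack of the offset-${\mathtt{M}}$ problem at a shifted horizontal index. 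The right-hand side of the claim is instead the \emph{sum} ${\su}_{{\mathtt{x}},1}+{\su}_{{\mathtt{x}},-1}$ of scattered-field values on the rows bracketing ${\mathtt{y}}=0$, at the unshifted index ${\mathtt{x}}$. No symmetry of the geometry converts one combination into the other; bridging them requires the equations of motion \eqref{crackt1}--\eqref{crackt2} on the cracked rows restricted to the overlap strip $\mathbb{Z}_{{\mathtt{M}}}^{-1}$ (where the lower bonds are intact and the upper ones broken), together with the incident/scattered splitting \eqref{utsplit} --- which is presumably why the paper points to the ``manipulations leading to the reduced algebraic equations'' rather than to the flip. The same missing identity must also supply the residual phase: direct transport of \eqref{uinc} yields amplitude ${{\mathrm{A}}}e^{-i{\upkappa}_x{\mathtt{M}}+i{\upkappa}_y({\mathtt{N}}-1)}$, one factor of $e^{i{\upkappa}_y}$ short of the stated ${{\mathrm{A}}}e^{-i{\upkappa}_x{\mathtt{M}}+i{\upkappa}_y{\mathtt{N}}}$. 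Until the difference-to-sum conversion and the index/phase reconciliation are written out, your argument establishes only that some identity of this general shape holds, not the one asserted.
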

Suppose that $\mathtt{w}_{\mathtt{N}}$ that satisfies 
\eqref{exactwmsol} for $\mathtt{M}>0$ and its counterpart for $\mathtt{M}<0$, and the corresponding scattered field $\mathtt{u}_{\mathtt{x},\mathtt{y}}$ is denoted by
$\mathfrak{W}(\mathtt{M},\mathtt{N},{\mathrm{A}},{\upkappa}_x,{\upkappa}_y)$
and
$\mathfrak{U}_{\mathtt{x},\mathtt{y}}(\mathtt{M},\mathtt{N},{\mathrm{A}},{\upkappa}_x,{\upkappa}_y)$, respectively.
The following are stated without proof.
\begin{claim}
For $\mathtt{M}>0,$
$\mathfrak{W}_{-\mathtt{x}-1,{0}}(-\mathtt{M},\mathtt{N},{\mathrm{A}},{\upkappa}_x,{\upkappa}_y)=\mathfrak{U}_{\mathtt{x},1}(\mathtt{M},\mathtt{N},{{\mathrm{A}}}e^{-i{\upkappa}_x{\mathtt{M}}+i{\upkappa}_y{\mathtt{N}}},{\upkappa}_x,-{\upkappa}_y)
+\mathfrak{U}_{\mathtt{x},-1}(\mathtt{M},\mathtt{N},{{\mathrm{A}}}e^{-i{\upkappa}_x{\mathtt{M}}+i{\upkappa}_y{\mathtt{N}}},{\upkappa}_x,-{\upkappa}_y), \mathtt{x}\in\mathbb{Z}_0^{\mathtt{M}-1}.$
\label{claim31}
\end{claim}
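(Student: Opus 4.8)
\textbf{Proof proposal for Claim \ref{claim31}.}

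The plan is to exploit the reflection (vertical flip) symmetry of the lattice, exactly as already invoked schematically in Fig. \ref{flipschematic} and as carried out for the analogous crack statements in Claim \ref{claim21} and Claim \ref{claim21p}. First I would set up the flip map: consider the isometry of $\mathfrak{S}$ sending $({\mathtt{x}}, {\mathtt{y}}) \mapsto ({\mathtt{x}}, {\mathtt{N}} - {\mathtt{y}})$. Under this map the pair of rigid constraints located at ${\mathtt{y}} = 0$ (sites ${\mathtt{x}} \ge 0$) and ${\mathtt{y}} = {\mathtt{N}}$ (sites ${\mathtt{x}} \ge {\mathtt{M}}$) is carried to constraints at ${\mathtt{y}} = {\mathtt{N}}$ (sites ${\mathtt{x}} \ge 0$) and ${\mathtt{y}} = 0$ (sites ${\mathtt{x}} \ge {\mathtt{M}}$), i.e.\ to the configuration with offset $-{\mathtt{M}}$ up to relabelling of which row is ``lower''. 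Since the discrete Helmholtz operator $\triangle + {\upomega}^2$ in \eqref{dHelmholtz} is invariant under ${\mathtt{y}} \mapsto {\mathtt{N}} - {\mathtt{y}}$ and the Dirichlet condition \eqref{rigidconstraintdef} is preserved, the scattering problem for offset $-{\mathtt{M}}$ with incident wave ${\su}^{{\mathrm{inc}}}$ is identical to the flipped version of the problem for offset ${\mathtt{M}}$ with the flipped incident wave.

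Next I would track the incident data through the flip. Applying $({\mathtt{x}}, {\mathtt{y}}) \mapsto ({\mathtt{x}}, {\mathtt{N}} - {\mathtt{y}})$ to \eqref{uinc}, $e^{i{\upkappa}_x {\mathtt{x}} + i{\upkappa}_y({\mathtt{N}} - {\mathtt{y}})} = e^{i{\upkappa}_y {\mathtt{N}}} e^{i{\upkappa}_x {\mathtt{x}} - i{\upkappa}_y {\mathtt{y}}}$, so the flipped incident wave has amplitude ${{\mathrm{A}}} e^{i{\upkappa}_y {\mathtt{N}}}$, horizontal wavenumber ${\upkappa}_x$ unchanged, and vertical wavenumber $-{\upkappa}_y$; an extra phase $e^{-i{\upkappa}_x {\mathtt{M}}}$ enters because after the flip the ``second'' constraint row sits at horizontal offset $-{\mathtt{M}}$, i.e.\ the origin used to index the reduced unknowns is shifted by ${\mathtt{M}}$ columns (this is the same bookkeeping that produces $e^{-i{\upkappa}_x {\mathtt{M}}}$ in Claim \ref{claim21}). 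This is precisely the argument list ${{\mathrm{A}}}e^{-i{\upkappa}_x{\mathtt{M}}+i{\upkappa}_y{\mathtt{N}}}, {\upkappa}_x, -{\upkappa}_y$ appearing on the right-hand side of the claim. Then I would identify the quantity $\mathfrak{W}_{-{\mathtt{x}}-1, 0}(-{\mathtt{M}}, \ldots)$: by definition $\mathfrak{W}$ collects $\{{\mathrm{w}}^{{t}}_{{\mathtt{x}}, {\mathtt{N}}}\}$, and for offset $-{\mathtt{M}} < 0$ these are indexed by ${\mathtt{x}} \in \mathbb{Z}_{-{\mathtt{M}}}^{-1}$; relabelling ${\mathtt{x}} \mapsto -{\mathtt{x}} - 1$ with ${\mathtt{x}} \in \mathbb{Z}_0^{{\mathtt{M}}-1}$ and applying the flip turns ${\mathrm{w}}^{{t}}_{\cdot, {\mathtt{N}}} = {\su}^{{t}}_{\cdot, {\mathtt{N}}+1} + {\su}^{{t}}_{\cdot, {\mathtt{N}}-1}$ (recall \eqref{wNdefAbra}) into ${\su}^{{t}}_{{\mathtt{x}}, 1} + {\su}^{{t}}_{{\mathtt{x}}, -1}$ of the offset-${\mathtt{M}}$ problem, which is the sum $\mathfrak{U}_{{\mathtt{x}}, 1} + \mathfrak{U}_{{\mathtt{x}}, -1}$ on the right.

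Concretely I would verify the identity at the level of the reduced linear system rather than the full field: by Remark \ref{u0uNFrem} and the expansion \eqref{ubulkC}, the entire scattered field is determined by the finitely many reduced unknowns, so it suffices to check that the ${\mathtt{M}} \times {\mathtt{M}}$ system \eqref{eqnMbyMrigid} (together with the two supplementary equations \eqref{un10pre}--\eqref{uMn10Ppre}) for offset $-{\mathtt{M}}$ is carried, under the stated change of arguments and the index relabelling ${\mathtt{x}} \mapsto -{\mathtt{x}}-1$, onto the corresponding system for offset ${\mathtt{M}}$; uniqueness of the solution (guaranteed by invertibility of $[a_{{{\mu}}\nu}]$, which holds because the Wiener--Hopf kernel ${{\mathcal{L}}}_{{c}}$ is regular on ${{\mathscr{A}}}$) then forces the two solutions to agree, and \eqref{exactwmsol} gives the explicit formula on both sides. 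The main obstacle, and the part requiring genuine care rather than routine substitution, is the phase/index bookkeeping in the two scalar Wiener--Hopf factors ${\upalpha}_\pm, {\upbeta}_\pm$ and the kernel ${{\mathcal{L}}}_{{c}}$ under ${\mathtt{M}} \mapsto -{\mathtt{M}}$: one must confirm that the replacement of $\phi_{{\mathtt{x}}}^+, \psi_{{\mathtt{x}}}^+$ (which depend on ${\mathtt{x}} \ge 0$) by their negative-offset analogues, combined with the factor ${z}^{-{\mathtt{M}}}$ appearing in $\widehat{\saux}_{{\mathtt{N}}}$ \eqref{WhP} and in $\mathcal{K}$ \eqref{Grigid}, reproduces exactly the shifted incident data ${{\mathrm{A}}}e^{-i{\upkappa}_x{\mathtt{M}}}$ and the reflected ${\upkappa}_y \mapsto -{\upkappa}_y$ with no stray phase; the parity properties of the square-root-based factors (the convention $f_+(1/z) = f_-(z)$ noted in Footnote \ref{multfacsnote}) are the technical lever that makes this work. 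Given that Claim \ref{claim21} for cracks is asserted (without proof) in identical form and the rigid-constraint kernel differs only by the scalar factor ${\mathtt{Q}}/({\mathtt{r}}{\mathtt{h}})$ versus ${\mathtt{h}}/{\mathtt{r}}$ (both even in the relevant sense), the same verification applies verbatim, which is presumably why the paper states it without proof.
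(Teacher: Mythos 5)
The first thing to say is that the paper contains no proof of Claim \ref{claim31} to compare you against: it is one of four statements explicitly ``stated without proof'', reported only as ``verified numerically'', with the remark that a proof ``is possible by the execution of manipulations similar to those leading to the reduced algebraic equations''. Your overall strategy --- the vertical flip plus horizontal translation of Fig.~\ref{flipschematic}, tracking the incident wave through it, and closing the argument by uniqueness of the reduced linear system --- is exactly the mechanism the paper gestures at, and your computation of the transformed incident data (amplitude ${\mathrm{A}}e^{-i{\upkappa}_x{\mathtt{M}}+i{\upkappa}_y{\mathtt{N}}}$, wavevector $({\upkappa}_x,-{\upkappa}_y)$) is correct. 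So the route is the right one.

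As a proof, however, the proposal defers precisely the steps where the content of the claim lies, and two of them are not routine. First, the index bookkeeping: the composite map $({\mathtt{x}},{\mathtt{y}})\mapsto({\mathtt{x}}+{\mathtt{M}},{\mathtt{N}}-{\mathtt{y}})$ carries the offset-$(-{\mathtt{M}})$ geometry onto the offset-${\mathtt{M}}$ geometry and sends the site $({\mathtt{x}}',{\mathtt{N}})$ to $({\mathtt{x}}'+{\mathtt{M}},0)$, so the reduced unknown ${\mathrm{w}}_{{\mathtt{x}}',{\mathtt{N}}}$ of the $-{\mathtt{M}}$ problem is identified with ${\su}_{{\mathtt{x}}'+{\mathtt{M}},1}+{\su}_{{\mathtt{x}}'+{\mathtt{M}},-1}$ of the $+{\mathtt{M}}$ problem, i.e.\ the correspondence is ${\mathtt{x}}'={\mathtt{x}}-{\mathtt{M}}$, whereas the claim's left-hand side carries the index pair $(-{\mathtt{x}}-1,0)$. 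Your asserted relabelling ``${\mathtt{x}}\mapsto-{\mathtt{x}}-1$'' is not what the flip produces, and the second subscript is $0$ rather than ${\mathtt{N}}$; reconciling the claim's stated indices with the actual flip image (a reversed enumeration of the list $\mathfrak{W}$, a typo, or a genuinely different identity) is the whole point and cannot be waved through. Second, $\mathfrak{W}$ is built from \eqref{exactwmsol}, i.e.\ from the \emph{total}-field quantities ${\mathrm{w}}^{{t}}_{\cdot,{\mathtt{N}}}$ of \eqref{wNdefAbra}, while $\mathfrak{U}$ is by definition the \emph{scattered} field; since the flip also changes the incident wave, ${\su}^{{t}}_{{\mathtt{x}},1}+{\su}^{{t}}_{{\mathtt{x}},-1}$ and $\mathfrak{U}_{{\mathtt{x}},1}+\mathfrak{U}_{{\mathtt{x}},-1}$ differ by an incident-wave contribution that your argument never accounts for. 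Finally, the uniqueness step rests on invertibility of the $({\mathtt{M}}+2)\times({\mathtt{M}}+2)$ coefficient matrix, which does not follow from regularity of the scalar kernel ${\mathcal{L}}_{c}$ on the annulus and is nowhere established (the paper assumes it tacitly as well); if you argue instead by symmetry of the governing equations and uniqueness of the scattering problem itself, you can bypass the reduced system entirely, which would be cleaner. In short: right idea, but the proposal is a plan whose hard parts are exactly the ones left undone.
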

\begin{claim}
For $\mathtt{M}<0,$
$\mathfrak{W}_{\mathtt{x}+1,{0}}(-\mathtt{M},\mathtt{N},{\mathrm{A}},{\upkappa}_x,{\upkappa}_y)=\mathfrak{U}_{\mathtt{x},1}(\mathtt{M},\mathtt{N},{{\mathrm{A}}}e^{-i{\upkappa}_x{\mathtt{M}}+i{\upkappa}_y{\mathtt{N}}},{\upkappa}_x,-{\upkappa}_y)
+\mathfrak{U}_{\mathtt{x},-1}(\mathtt{M},\mathtt{N},{{\mathrm{A}}}e^{-i{\upkappa}_x{\mathtt{M}}+i{\upkappa}_y{\mathtt{N}}},{\upkappa}_x,-{\upkappa}_y), \mathtt{x}\in\mathbb{Z}_{\mathtt{M}}^{-1}.$
\label{claim31p}
\end{claim}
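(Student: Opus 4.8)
The plan is to derive all four claims at once from a single geometric symmetry of the discrete scattering problem, closed off by uniqueness. Fix $\mathtt{M}>0$; the case $\mathtt{M}<0$ is identical after replacing $\mathtt{M}$ by $-\mathtt{M}$. Introduce the map $\Psi$ of $\mathbb{Z}^{2}$ given, for the pair of rigid constraints, by $\Psi(\mathtt{x},\mathtt{y})=(\mathtt{x}+\mathtt{M},\,\mathtt{N}-\mathtt{y})$, and, for the pair of cracks, by $\Psi(\mathtt{x},\mathtt{y})=(\mathtt{x}+\mathtt{M},\,\mathtt{N}-1-\mathtt{y})$; in each case $\Psi$ is the reflection about the horizontal midline lying exactly between the two defect rows followed by the horizontal translation that re-registers the edges. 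First I would check the geometry: $\Psi$ is a bijection carrying the defect set \eqref{defectC} (resp.\ \eqref{defectK}) of the problem with offset $-\mathtt{M}$ onto that of the problem with offset $+\mathtt{M}$, since the reflection interchanges the ``lower'' and ``upper'' rows and the translation puts the lower defect back at $\mathtt{x}\ge0$ and the upper one at $\mathtt{x}\ge\mathtt{M}$. For the rigid case I would also note that the two scalar boundary-layer unknowns $\su^{t}_{-1,0}$ and $\su^{t}_{\mathtt{M}-1,\mathtt{N}}$ are merely interchanged by $\Psi$, so no new unknowns are created.

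Next I would verify that every ingredient of the problem is $\Psi$-invariant except the incident wave. The nearest-neighbour Laplacian $\triangle$ in \eqref{dimnewtoneq}, hence the discrete Helmholtz equation \eqref{dHelmholtz}, is invariant under any composition of a lattice translation with a reflection; the complex frequency $\upomega$ with $\Im\upomega>0$ from \eqref{complexfreq} is untouched, so the annulus $\mathscr{A}$ and the decay conditions $\su^{\mathrm{F}}_{\mathtt{y}}\to0$ as $\mathtt{y}\to\pm\infty$ are preserved; and the defect conditions (zero displacement on $\Sigma_{c}$, broken bonds on $\Sigma_{k}$) are mapped onto each other by $\Psi$. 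Only the incident plane wave \eqref{uinc} changes: requiring $\su^{\mathrm{inc}}\circ\Psi$ to equal the prescribed incident wave $\mathrm{A}e^{i\upkappa_x\mathtt{x}+i\upkappa_y\mathtt{y}}$ of the $-\mathtt{M}$-problem forces the incident wave of the $\mathtt{M}$-problem to be $\mathrm{A}\,e^{-i\upkappa_x\mathtt{M}+i\upkappa_y\mathtt{N}}\,e^{i\upkappa_x\mathtt{x}-i\upkappa_y\mathtt{y}}$ for the rigid midline (for the crack reflection one gets instead the factor $e^{-i\upkappa_x\mathtt{M}+i\upkappa_y(\mathtt{N}-1)}$, which I would reconcile with the constant written in Claims~\ref{claim21}--\ref{claim21p} by a direct check), i.e.\ a plane wave with the flipped vertical wavenumber $-\upkappa_y$; this still satisfies the dispersion relation \eqref{dispersion} (even in $\upkappa_y$) and stays in the pass band. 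Hence if $\su^{t}$ solves the $\mathtt{M}$-problem with incident data $(\mathrm{A}e^{-i\upkappa_x\mathtt{M}+i\upkappa_y\mathtt{N}},\upkappa_x,-\upkappa_y)$, then $\su^{t}\circ\Psi$ solves the $-\mathtt{M}$-problem with incident data $(\mathrm{A},\upkappa_x,\upkappa_y)$.

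Finally I would invoke uniqueness. Since $\Im\upomega>0$ the homogeneous discrete Helmholtz operator is invertible on sequences decaying at infinity — equivalently the matrix Wiener--Hopf equation \eqref{WHKAbra} (resp.\ \eqref{WHCAbra}) posed on $\mathscr{A}$ has a unique solution, which is the one constructed in \S\ref{reductionK}--\S\ref{reductionC} — so the scattering problem has exactly one solution with the imposed decay. Therefore $\su^{t}\circ\Psi$ \emph{is} the $-\mathtt{M}$-solution, and every field value of the $-\mathtt{M}$-problem equals, via $\Psi$, the corresponding value of the $\mathtt{M}$-problem. It remains to translate this into the stated identities by unwinding the definitions \eqref{v0defAbra} of $\mathrm{v}$ and \eqref{w0defAbra} of $\mathrm{w}$: a bond/row quantity of the $-\mathtt{M}$-problem attached to the upper row $\mathtt{N}$ becomes, under $\Psi$, a quantity of the $\mathtt{M}$-problem attached to the lower row $0$ — hence built from rows $1$ and $-1$, which is exactly where the combination $\mathfrak{U}_{\mathtt{x},1}+\mathfrak{U}_{\mathtt{x},-1}$ in Claims~\ref{claim21}--\ref{claim31p} comes from — while the translation by $\mathtt{M}$ carries the index set $\mathbb{Z}_{-\mathtt{M}}^{-1}$ onto $\mathbb{Z}_{0}^{\mathtt{M}-1}$ (and conversely for $\mathtt{M}<0$). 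The main obstacle I anticipate is purely the book-keeping in this last step: pinning down the exact phase constant, getting right the sign that the reflection induces on the difference $\su_{\cdot,0}-\su_{\cdot,-1}$ in the crack case, matching the ``overhang'' sites and the adjacent rows precisely as written in Fig.~\ref{flipschematic}, and tracking the interchange of the two scalar unknowns in the rigid case. (Alternatively the claims can be checked directly from the closed forms \eqref{exactvmsol} and \eqref{exactwmsol}, following how the scalar factors $\upalpha_\pm,\upbeta_\pm$ and the datum ${z}_{{P}}$ transform under $(\mathtt{M},\upkappa_y,\mathrm{A})\mapsto(-\mathtt{M},-\upkappa_y,\mathrm{A}e^{i\upkappa_x\mathtt{M}-i\upkappa_y\mathtt{N}})$, but the symmetry argument above is both shorter and more transparent.)
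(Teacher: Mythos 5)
You should first be aware that the paper itself offers no proof of Claim \ref{claim31p}: the four claims in \S\ref{briefnumerics} are explicitly ``stated without proof'', are only checked numerically, and the author merely remarks that a proof ``is possible by the execution of manipulations similar to those leading to the reduced algebraic equations'' --- i.e., by tracking how the scalar factors ${\upalpha}_\pm,{\upbeta}_\pm$, the pole ${z}_{{P}}$ and the data $\mathcal{F}^{{\mathrm{inc}}},\mathcal{J},\mathcal{K}$ transform under $(\mathtt{M},\upkappa_y,\mathrm{A})\mapsto(-\mathtt{M},-\upkappa_y,\mathrm{A}e^{-i\upkappa_x\mathtt{M}+i\upkappa_y\mathtt{N}})$ in \eqref{exactwmsol} and its negative-offset counterpart. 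Your route is therefore genuinely different: you promote the ``flip the lattice'' heuristic that the paper states just before the claims (Fig.\ \ref{flipschematic}) to an actual proof by (i) exhibiting the lattice isometry $(\mathtt{x},\mathtt{y})\mapsto(\mathtt{x}-\mathtt{M},\mathtt{N}-\mathtt{y})$ carrying the defect set \eqref{defectC} with offset $\mathtt{M}$ onto that with offset $-\mathtt{M}$, (ii) computing the transformed incident wave \eqref{uinc}, which correctly produces the flipped wavenumber $-\upkappa_y$ and exactly the phase $e^{-i\upkappa_x\mathtt{M}+i\upkappa_y\mathtt{N}}$ appearing in the claim, and (iii) closing with uniqueness of the decaying solution under \eqref{complexfreq}. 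This buys a proof of all four claims simultaneously, independent of the Wiener--Hopf machinery; the paper's suggested algebraic route instead stays internal to the reduced system and needs no separate uniqueness statement.

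Two soft spots remain. First, the uniqueness you invoke is asserted rather than proved; it is standard for $\Im\upomega>0$ and is implicitly assumed throughout the paper, but identifying it with unique solvability of the matrix Wiener--Hopf equation (whose reduced $|\mathtt{M}|\times|\mathtt{M}|$ system is nowhere shown to be invertible) trades one unproved statement for another, so it is cleaner to posit uniqueness of the decaying scattered field directly. Second, the ``book-keeping'' you defer is precisely where the printed statement must be confronted: your flip sends $\su_{\mathtt{x},1}+\su_{\mathtt{x},-1}=\mathrm{w}_{\mathtt{x},0}$ of the offset-$\mathtt{M}$ problem to $\mathrm{w}_{\mathtt{x}-\mathtt{M},\mathtt{N}}$ of the offset-$(-\mathtt{M})$ problem --- the upper-edge overhang data delivered by \eqref{exactwmsol} --- whereas the left-hand side of Claim \ref{claim31p} carries the subscript $(\mathtt{x}+1,0)$, a lower-edge label with a shifted, order-reversed index; in the crack analogues the reflection additionally flips the sign of $\mathrm{v}$ in \eqref{v0defAbra} and shifts the phase to $e^{i\upkappa_y(\mathtt{N}-1)}$, as you yourself note. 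Your argument thus establishes the substantive flip identity, but until the paper's subscript conventions are decoded and that final matching is written out, it does not yet yield the claim verbatim as printed.
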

Above four claims have been verified numerically though the graphical plots have been omitted in the paper. The proof of these claims is possible by the execution of manipulations similar to those leading to the reduced algebraic equations.
}

\section{Concluding remarks}
As an extension of the analysis of a discrete analogue of Sommerfeld diffraction by a semi-infinite crack or a rigid constraint, and the zero-offset case that has been analyzed recently \cite{Bls8pair1}, this paper presents an analysis of the discrete scattering problem associated with a pair of semi-infinite cracks or rigid constraints on square lattice. 
The multiple-diffraction problem considered in the present paper is an analogue of acoustic wave scattering due to parallel semi-infinite screens, where the edges of the screens are offset or 'staggered'. 
${\mathtt{M}}$ is the `horizontal' offset between the edges, while ${\mathtt{N}}$ is the vertical spacing between the edges.
The stagger in the alignment of the defect edges in the discrete framework leads to a matrix {{Wiener--Hopf}} {kernel which belongs to a class that is well-known to} lie outside the solvable cases.
In the corresponding continuum models, {this class of problems are} known to possess certain exponentially growing elements on the diagonal. The paper provides a way to reduce the complexity of the discrete analogue of the same issue as the non-trivial $2\times2$ {{Wiener--Hopf}} kernel factorization is replaced with the inversion of $|{\mathtt{M}}|\times|{\mathtt{M}}|$ coefficient matrix for case of cracks and the inversion of $|{\mathtt{M}}|+2$ coefficient matrix for case of rigid constraints; the elements in the coefficient matrix depend on scalar {{Wiener--Hopf}} {factorization} of {certain} functions.
Some natural generalizations of the problem are under investigation \cite{sharma2019wienerhopf} {as evident by a recent application to the problem of discrete scattering due to a cohesive crack \cite{Bls32}}.
{Further analysis of the results in this paper shall be presented elsewhere for the purpose of exploring the choice of various parameters in the context of far-field and near-tip field behaviour as well as incidence from waveguide involved. }
{It is worth mentioning from a mathematical viewpoint, pending a rigorous investigation, that the numerical evaluation of the analytically reduced problem suggests that the two signs of offset behave differently sometimes in relation to the variations in the imaginary part of the frequency (possibly due to difference in the structure of scalar Wiener--Hopf factors involved that appear as coefficients of the final reduced equations).}

\section*{Acknowledgement}
The support of SERB MATRICS grant MTR/2017/000013 is gratefully acknowledged.
This work has been available free of peer review on the arXiv since 9/2019.
The author sincerely thanks all three anonymous reviewers for their constructive comments and useful suggestions.

\renewcommand*{\bibfont}{\footnotesize}
\printbibliography

\appendix
\section{Application of Fourier Transform}
\label{applyFT}
Let, ${\upkappa}$, the {\em lattice wave number} of incident lattice wave ${\su}^{{\mathrm{inc}}}$, and, ${\Theta}\in(-\pi, \pi]$, the {\em angle of incidence} of ${\su}^{{\mathrm{inc}}}$ be defined by the relations
\begin{equation}\begin{split}
{\upkappa}_x&={\upkappa}\cos{\Theta}, \quad
{\upkappa}_y={\upkappa}\sin{\Theta}, \quad
{\upkappa}={\upkappa}_1+i{\upkappa}_2, \quad
{\upkappa}_1\ge0.
\label{complexk}
\end{split}\end{equation}
As a consequence of \eqref{complexfreq}, ${\upkappa}_2>0$. 
It is stated without proof, reasoning is analogous to the case of single defect \cite{Bls0,Bls1}, that ${\su}_{{\mathtt{y}}}^{{\mathrm{F}}}$ ($={\su}_{{\mathtt{y}}}^{+}+{\su}_{{\mathtt{y}}}^{-}$), as defined by \eqref{discreteFT}), 
is analytic inside the annulus
\begin{equation}\begin{split}
{{\mathscr{A}}}_u{\,:=}\{{{z}}\in\mathbb{C}: {\mathrm{R}}_+< |{{z}}|< {\mathrm{R}}_-\}, 
\label{annAu}
\end{split}\end{equation}
\begin{equation}\begin{split}
\text{where }
{\mathrm{R}}_+=e^{-{\upkappa}_2}, {\mathrm{R}}_-=e^{{\upkappa}_2\cos{\Theta}}.
\label{Rpm}
\end{split}\end{equation}

Based on above discussion, the Fourier transform \eqref{discreteFT} ${\su}_{{\mathtt{y}}}^{{\mathrm{F}}}$ of the sequence $\{{\su}_{{\mathtt{x}}, {\mathtt{y}}}\}_{{\mathtt{x}}\in\mathbb{Z}}$ is well defined for all ${\mathtt{y}}\in\mathbb{Z}.$ Therefore, the discrete Helmholtz equation \eqref{dHelmholtz} is expressed as
\begin{equation}\begin{split}
({{\mathtt{H}}}({{z}})+2){\su}_{{\mathtt{y}}}^{{\mathrm{F}}}({{z}})-({\su}_{{\mathtt{y}}+1}^{{\mathrm{F}}}({{z}})+{\su}_{{\mathtt{y}}-1}^{{\mathrm{F}}}({{z}}))=0, 
\quad\quad{{z}}\in{{\mathscr{A}}}_u, 
\label{maineq}
\end{split}\end{equation}
for all ${\mathtt{y}}\in\mathbb{Z}$ with ${\mathtt{y}}\ne0, -1$. In \eqref{maineq} the complex function ${{\mathtt{H}}}$ is defined by
\begin{equation}\begin{split}
{{\mathtt{H}}}({{z}})&{\,:=}2-{{z}}-{{z}}^{-1}-{\upomega}^2, 
{{z}}\in\mathbb{C}, \text{ and further define }{{\mathtt{R}}}{\,:=}{{\mathtt{H}}}+4.
\label{h2}
\end{split}\end{equation}
Note that both functions, ${{\mathtt{H}}}$ and ${{\mathtt{R}}}$, are analytic on ${{\mathscr{A}}}_u$. 
The zeros of ${{\mathtt{H}}}$ are ${{z}}_{{\mathtt{h}}}$ and $1/{{z}}_{{\mathtt{h}}}$ while the zeros of ${{\mathtt{R}}}$ are ${{z}}_{{\mathtt{r}}}$ and $1/{{z}}_{{\mathtt{r}}}$.
Using an elementary technique of solving a second order difference equation \cite{Levy}, the general solution of \eqref{maineq} is given by the expression
\begin{equation}\begin{split}
{\su}_{{\mathtt{y}}}^{{\mathrm{F}}}({{z}})=P({{z}}){{\lambda}}({{z}})^{{\mathtt{y}}}+Q({{z}}){{\lambda}}({{z}})^{-{\mathtt{y}}}, 
\quad\quad{{z}}\in{{\mathscr{A}}}, 
\label{gensol}
\end{split}\end{equation}
where $P, Q$ are arbitrary analytic functions on ${{\mathscr{A}}}$. The annulus ${{\mathscr{A}}}$ is defined by
\begin{equation}\begin{split}
{{\mathscr{A}}}{\,:=}{{\mathscr{A}}}_u\cap{{\mathscr{A}}}_L, \quad{{\mathscr{A}}}_L{\,:=}\{{{z}}\in\mathbb{C} : {\mathrm{R}}_L< |{{z}}|< {\mathrm{R}}_L^{-1}\}, \quad
{\mathrm{R}}_L{\,:=}\max\{|{{z}}_{{\mathtt{h}}}|, |{{z}}_{{\mathtt{r}}}|\}.\label{annAAL}
\end{split}\end{equation}
In \eqref{gensol}, following \cite{Slepyanbook}, the function ${{\lambda}}$ is defined by
\begin{subequations}\begin{eqnarray}
{{\lambda}}({{z}})&{\,:=}&\frac{{\mathtt{r}}({{z}})-{\mathtt{h}}({{z}})}{{\mathtt{r}}({{z}})+{\mathtt{h}}({{z}})}, 
\quad\quad{{z}}\in\mathbb{C}\setminus{\mathscr{B}}, \label{lam12}\\
\text{where }&&{{\mathtt{h}}}({{z}}){\,:=}\sqrt{{{\mathtt{H}}}({{z}})}, {{\mathtt{r}}}({{z}}){\,:=}\sqrt{{{\mathtt{R}}}({{z}})}, 
\quad\quad{{z}}\in\mathbb{C}\setminus{\mathscr{B}},\label{Lz}
\end{eqnarray}\label{lamL}\end{subequations}
and ${\mathscr{B}}$ denotes the union of branch cuts for ${{\lambda}}$ (see below), borne out of the chosen branch \eqref{branch} 
for ${{\mathtt{h}}}$ and ${{\mathtt{r}}}$ such that 
$|{{\lambda}}({{z}})|\le1, {{z}}\in\mathbb{C}\setminus{\mathscr{B}}.$
Indeed, as discussed in \cite{Slepyanbook}, with ${\upomega}_2>0$, for all ${{z}}\in\mathbb{C}\setminus{\mathscr{B}}$, the conditions
\begin{equation}\begin{split}
-\pi<\arg {{\mathtt{H}}}({{z}}) <\pi, \Re {{\mathtt{h}}}({{z}})>0, \Re {{\mathtt{r}}}({{z}})>0, {\text{\rm sgn}} \Im {{\mathtt{h}}}({{z}})={\text{\rm sgn}} \Im {{\mathtt{r}}}({{z}}),
\label{branch}
\end{split}\end{equation}
are sufficient to conclude that $|{{\mathtt{r}}}({{z}})-{{\mathtt{h}}}({{z}})|<|{{\mathtt{r}}}({{z}})+{{\mathtt{h}}}({{z}})|, 
{{z}}\in\mathbb{C}\setminus{\mathscr{B}}.$ 

\section{{{Wiener--Hopf}} Factorization}
\subsection{General summary}
\label{appWHfacsgen}
As a major step in any application of the {{Wiener--Hopf}} technique \cite{Noble, Wiener}, the multiplicative factorization of the kernel function is required. The multiplicative factorization of a function $f$ is \cite{Noble}
\begin{equation}\begin{split}
f({{z}})=f_{+}({{z}})f_{-}({{z}}), 
 \quad\quad{{z}}\in{{\mathscr{A}}}_L, 
\label{Lfac}
\end{split}\end{equation}
where the factors $f_\pm$ are given by the Cauchy projectors \cite{Mikhlin, Noble}. Indeed,
\begin{equation}\begin{split}
f_\pm ({{z}})=\exp(\pm\frac{1}{2\pi i} \oint_{{{\mathcal{C}}}_{{{z}}}}\frac{\log f(\zeta)}{{{z}}-\zeta}d\zeta), 
 \quad\quad{{z}}\in\mathbb{C}\text{ such that }|{{z}}|\gtrless {\mathrm{R}}_L^{\pm1}, 
\label{Lpm}
\end{split}\end{equation}
where ${{\mathcal{C}}}_{{{z}}}$ is any rectifiable, closed, counterclockwise contour that lies in the annulus of analyticity for $f$, that is ${{\mathscr{A}}}_L$ defined by \eqref{annAAL}. In \eqref{Lpm}, it has been implicitly assumed that $f_\pm({{z}})=f_\mp({{z}}^{-1})$, which makes the representation unique. 
In the multiplicative factorization of $f$, described by \eqref{Lfac} and \eqref{Lpm}, the function $f_{+}$ (resp. $f_{-}$) is analytic, in fact it has neither poles nor zeros, in the exterior (resp. interior) of a disk centered at $0$ in $\mathbb{C}$ with radius ${\mathrm{R}}_L$ (resp. ${\mathrm{R}}_L^{-1}$). This means that $1/{f_{+}}$ (resp. $1/{f_{-}}$) is analytic in the same region as $f_{+}$ (resp. $f_{-}$). 

Clearly, $\delta_{D}^{+}({z})$ is analytic outside the unit disk in $\mathbb{C}$. Note that $|{{z}}_{{P}}|=e^{{\upkappa}_2\cos{\Theta}}={\mathrm{R}}_-$ and the only singularity of $\delta_{D}^{+}({{z}} {{z}}_{{P}}^{-1})$ is a simple pole at ${{z}}={{z}}_{{P}}$, which lies inside the annulus ${{\mathscr{A}}}$ \eqref{annAAL}.

\subsection{Factorization of ${\tilde{\mathbf{D}}}$}
\label{appDfacs}
In {both kinds of defects the common $2\times2$ matrix is $\tilde{\mathbf{D}}({z})$, where}
\begin{equation}\begin{split}
\tilde{\mathbf{D}}({z})=\begin{bmatrix}
1-{{\lambda}}^{{\mathtt{N}}}({z})&0\\
0&1+{{\lambda}}^{{\mathtt{N}}}({z})
\end{bmatrix}=\begin{bmatrix}
{\tilde{\upalpha}}&0\\
0&{\tilde{\upbeta}}
\end{bmatrix},
\end{split}\end{equation}
using the Chebyshev polynomials \cite{Bls9}, it can be shown that 
\begin{equation}\begin{split}
{\tilde{\upalpha}}
=\frac{{{\mathtt{h}}}}{{{\mathtt{r}}}}\frac{2^{{\mathtt{N}}}\prod_{j=1}^{\lfloor \frac{{{\mathtt{N}}}-1}{2}\rfloor}({{\mathtt{H}}}+4\sin^2\frac{j}{{\mathtt{N}}}\pi)}{({{\mathtt{H}}}+4)^{\lfloor \frac{{{\mathtt{N}}}-1}{2}\rfloor}(1+\frac{{{\mathtt{h}}}}{{{\mathtt{r}}}})^{{\mathtt{N}}}},
{\tilde{\upbeta}}=\frac{2^{{\mathtt{N}}}\prod_{j=1}^{\lfloor \frac{{\mathtt{N}}}{2}\rfloor}({{\mathtt{H}}}+4\sin^2\frac{2j-1}{2{\mathtt{N}}}\pi)}{({{\mathtt{H}}}+4)^{\lfloor \frac{{\mathtt{N}}}{2}\rfloor}(1+\frac{{{\mathtt{h}}}}{{{\mathtt{r}}}})^{{\mathtt{N}}}}.
\end{split}\end{equation}
The symbol $\lfloor\cdot\rfloor$ denotes the integer just less than or equal to the argument.
In fact, 
when ${\mathtt{N}}=2N$, ${\lfloor \frac{{{\mathtt{N}}}-1}{2}\rfloor}=N-1$, ${\lfloor \frac{{{\mathtt{N}}}}{2}\rfloor}=N$,
\begin{equation}\begin{split}
{\tilde{\upalpha}}
={\mathtt{r}}{\mathtt{h}}{\lambda}^N{\mathtt{U}}_{N-1}, \quad
{\tilde{\upbeta}}
=2{\lambda}^N{\mathtt{T}}_{N},
\end{split}\end{equation}
whereas for ${\mathtt{N}}=2N-1$, ${\lfloor \frac{{{\mathtt{N}}}-1}{2}\rfloor}=N-1$, ${\lfloor \frac{{{\mathtt{N}}}}{2}\rfloor}=N-1$,
\begin{equation}\begin{split}
{\tilde{\upalpha}}
=\frac{{{\mathtt{h}}}}{{{\mathtt{r}}}}{\lambda}^{N-1}(1+{\lambda}){\mathtt{W}}_{N-1}, \quad
{\tilde{\upbeta}}
={\lambda}^{N-1}(1+{\lambda}){\mathtt{V}}_{N-1}.
\end{split}\end{equation}
Let ${\tilde{\upalpha}}={\tilde{\upalpha}}_{-}{\tilde{\upalpha}}_{+}, {\tilde{\upbeta}}={\tilde{\upbeta}}_{-}{\tilde{\upbeta}}_{+}$. 

Let (in case of any such need, the sign in front of the radical is decided by the condition $|g^{(1)}_j|<1$)
\begin{equation}\begin{split}
g^{(1)}_{j}={\frac{1}{2}}(2+4\sin^2\frac{j}{{\mathtt{N}}}\pi-{\upomega}^2-\sqrt{(2+4\sin^2\frac{j}{{\mathtt{N}}}\pi-{\upomega}^2)^2-4}),
\end{split}\end{equation}
for all $j=1, \dotsc, {\lfloor \frac{{{\mathtt{N}}}-1}{2}\rfloor}.$
Then
\begin{equation}\begin{split}
{\tilde{\upalpha}}
&=\frac{{{\mathtt{h}}}}{{{\mathtt{r}}}}\frac{2^{{\mathtt{N}}}\prod_{j=1}^{\lfloor \frac{{{\mathtt{N}}}-1}{2}\rfloor}\frac{1}{g^{(1)}_{j}}(1-{{z}}^{-1}{g^{(1)}_{j}})(1-{{z}} g^{(1)}_{j})}{({{{z}}_{{\mathtt{r}}}^{-1}}(1-{{z}}^{-1}{{{z}}_{{\mathtt{r}}}})(1-{{z}} {{z}}_{{\mathtt{r}}}))^{\lfloor \frac{{{\mathtt{N}}}-1}{2}\rfloor}(1+\frac{{{\mathtt{h}}}}{{{\mathtt{r}}}})^{{\mathtt{N}}}},
\end{split}\end{equation}
and
\begin{equation}\begin{split}
{\tilde{\upalpha}}_{+}&=\frac{{{\mathtt{h}}}_+}{{{\mathtt{r}}}_+}\frac{2^{{\frac{1}{2}}{\mathtt{N}}}\prod_{j=1}^{\lfloor \frac{{{\mathtt{N}}}-1}{2}\rfloor}{\sqrt{g^{(1)}_{j}}}{{z}}^{-1}(1-{{z}}^{-1}{g^{(1)}_{j}})}{({\sqrt{{{z}}_{{\mathtt{r}}}}}{{z}}^{-1}(1-{{z}}^{-1}{{{z}}_{{\mathtt{r}}}}))^{\lfloor \frac{{{\mathtt{N}}}-1}{2}\rfloor}((1+\frac{{{\mathtt{h}}}}{{{\mathtt{r}}}})^+)^{{\mathtt{N}}}}, \\
{\tilde{\upalpha}}_{-}&=\frac{{{\mathtt{h}}}_-}{{{\mathtt{r}}}_-}\frac{2^{{\frac{1}{2}}{\mathtt{N}}}\prod_{j=1}^{\lfloor \frac{{{\mathtt{N}}}-1}{2}\rfloor}{\sqrt{g^{(1)}_{j}}}^{-1}(1-{g^{(1)}_{j}}{{z}})}{({\sqrt{{{z}}_{{\mathtt{r}}}}}^{-1}(1-{{{z}}_{{\mathtt{r}}}}{{z}}))^{\lfloor \frac{{{\mathtt{N}}}-1}{2}\rfloor}((1+\frac{{{\mathtt{h}}}}{{{\mathtt{r}}}})^-){}^{{\mathtt{N}}}}.
\end{split}\end{equation}
Let (in case of any such need, the sign in front of the radical is decided by the condition $|g^{(2)}_j|<1$)
\begin{equation}\begin{split}
g^{(2)}_{j}={\frac{1}{2}}(2+4\sin^2\frac{2j-1}{2{\mathtt{N}}}\pi-{\upomega}^2-\sqrt{(2+4\sin^2\frac{2j-1}{2{\mathtt{N}}}\pi-{\upomega}^2)^2-4}),
\end{split}\end{equation}
for all $ j=1, \dotsc, {\lfloor \frac{{\mathtt{N}}}{2}\rfloor}.$
Similarly, 
\begin{equation}\begin{split}
{\tilde{\upbeta}}
&=\frac{2^{{\mathtt{N}}}\prod_{j=1}^{\lfloor \frac{{\mathtt{N}}}{2}\rfloor}\frac{1}{g^{(2)}_{j}}(1-{{z}}^{-1}{g^{(2)}_{j}})(1-{{z}} g^{(2)}_{j})}{({{{z}}_{{\mathtt{r}}}}^{-1}(1-{{z}}^{-1}{{{z}}_{{\mathtt{r}}}})(1-{{z}} {{z}}_{{\mathtt{r}}}))^{\lfloor \frac{{\mathtt{N}}}{2}\rfloor}(1+\frac{{{\mathtt{h}}}}{{{\mathtt{r}}}})^{{\mathtt{N}}}},
\end{split}\end{equation}
and
\begin{equation}\begin{split}
{\tilde{\upbeta}}_{+}&=\frac{2^{{\frac{1}{2}}{\mathtt{N}}}\prod_{j=1}^{\lfloor \frac{{\mathtt{N}}}{2}\rfloor}{\sqrt{g^{(2)}_{j}}}^{-1}(1-{{z}}^{-1}{g^{(2)}_{j}})}{({\sqrt{{{z}}_{{\mathtt{r}}}}}^{-1}(1-{{z}}^{-1}{{{z}}_{{\mathtt{r}}}}))^{\lfloor \frac{{\mathtt{N}}}{2}\rfloor}((1+\frac{{{\mathtt{h}}}}{{{\mathtt{r}}}})^+){}^{{\mathtt{N}}}}, \\
{\tilde{\upbeta}}_{-}&=\frac{2^{{\frac{1}{2}}{\mathtt{N}}}\prod_{j=1}^{\lfloor \frac{{\mathtt{N}}}{2}\rfloor}{\sqrt{g^{(2)}_{j}}}^{-1}(1-{g^{(2)}_{j}}{{z}})}{(\frac{1}{\sqrt{{{z}}_{{\mathtt{r}}}}}(1-{{{z}}_{{\mathtt{r}}}}{{z}}))^{\lfloor \frac{{\mathtt{N}}}{2}\rfloor}((1+\frac{{{\mathtt{h}}}}{{{\mathtt{r}}}})^-){}^{{\mathtt{N}}}}.
\end{split}\end{equation}
Also,
$(1+\frac{{{\mathtt{h}}}}{{{\mathtt{r}}}})^{{\mathtt{N}}}=\sum\limits_{k=0}^{{\mathtt{N}}}\binom{{\mathtt{N}}}{k}(\frac{{{\mathtt{h}}}}{{{\mathtt{r}}}})^k
=F+\frac{{{\mathtt{h}}}}{{{\mathtt{r}}}}G,$
with $F$ and $G$ as rational functions. {However, despite all these expansions there is no closed form expression for the multiplicative factors $(1+\frac{{{\mathtt{h}}}}{{{\mathtt{r}}}})_\pm$ of $(1+\frac{{{\mathtt{h}}}}{{{\mathtt{r}}}})$ and they need to be evaluated numerically via contour integrals \eqref{Lpm}}.

\section{Reduction to algebraic equation for two cracks: ${\mathtt{M}}<0$}
\label{AppcrackMN}
\begin{equation}\begin{split}
{\text{Let ${\mathbb{D}}$ denote the set $\mathbb{Z}_{{\mathtt{M}}}^{-1}$.}}
\label{defnDcrack2}
\end{split}\end{equation}
\text{Above is analgous to the definition provided in \eqref{defnDcrack}}.

According to the expressions provided in \eqref{ppiNAbra}, let
\begin{equation}\begin{split}
-\poly{Q}^-({z})={{{\mathtt{f}}}}^{-}({z})+{{{\mathtt{f}}}}^{{\mathrm{inc}} {-}}({z})
=-\sum\limits_{{\mathtt{x}}\in{\mathbb{D}}}{\mathrm{v}}^{{t}}_{{{\mathtt{x}}},{\mathtt{N}}}{z}^{-{\mathtt{x}}},
\label{defQcrack}
\end{split}\end{equation}
and that by \eqref{qPAbra},
$\boldsymbol{p}^{\sgnM}=\boldsymbol{p}^{-}
=\begin{bmatrix}0\\\poly{Q}^-({z})\end{bmatrix}.$
So the second term in \eqref{difficultK} can be written as
\begin{equation}\begin{split}
-{{\mathbf{D}}}_+({z}){\mathbf{J}}\boldsymbol{p}^{-}({z})
=\frac{1}{\sqrt{2}}\begin{bmatrix}
{\upalpha}_+\poly{Q}^-({z})\\
-{\upbeta}_+\poly{Q}^-({z})
\end{bmatrix}
\end{split}\end{equation}
Using the splitting suggested in \eqref{defFplus},
\begin{equation}\begin{split}
{\upalpha}_+\poly{Q}^-&=F_+\poly{Q}^-
=\sum\limits_{{\mathtt{x}}\in{\mathbb{D}}}{\mathrm{v}}^{{t}}_{{{\mathtt{x}}},{\mathtt{N}}}\Phi_{{\mathtt{x}}}=\sum\limits_{{\mathtt{x}}\in{\mathbb{D}}}{\mathrm{v}}^{{t}}_{{{\mathtt{x}}},{\mathtt{N}}}(\Phi_{{\mathtt{x}}}^{+}+\Phi_{{\mathtt{x}}}^{-}),\\
{\upbeta}_+\poly{Q}^-&=G_+\poly{Q}^-
=\sum\limits_{{\mathtt{x}}\in{\mathbb{D}}}{\mathrm{v}}^{{t}}_{{{\mathtt{x}}},{\mathtt{N}}}\Psi_{{\mathtt{x}}}=\sum\limits_{{\mathtt{x}}\in{\mathbb{D}}}{\mathrm{v}}^{{t}}_{{{\mathtt{x}}},{\mathtt{N}}}(\Psi_{{\mathtt{x}}}^{+}+\Psi_{{\mathtt{x}}}^{-}).
\end{split}\end{equation}
Thus, using the definition of $\boldsymbol{c}^{\redc}$ \eqref{ccincp}, its additive factors are given by
\begin{equation}\begin{split}
{\sqrt{2}}\boldsymbol{c}^{\redc}&={\sqrt{2}}\boldsymbol{c}^{\redc}{}^{+}+{\sqrt{2}}\boldsymbol{c}^{\redc}{}^-\\
&=\begin{bmatrix}
\sum\limits_{{\mathtt{x}}\in{\mathbb{D}}}{\mathrm{v}}^{{t}}_{{{\mathtt{x}}},{\mathtt{N}}}\Phi_{{\mathtt{x}}}^{+}\\
-\sum\limits_{{\mathtt{x}}\in{\mathbb{D}}}{\mathrm{v}}^{{t}}_{{{\mathtt{x}}},{\mathtt{N}}}\Psi_{{\mathtt{x}}}^{+}\\
\end{bmatrix}
+\begin{bmatrix}
\sum\limits_{{\mathtt{x}}\in{\mathbb{D}}}{\mathrm{v}}^{{t}}_{{{\mathtt{x}}},{\mathtt{N}}}\Phi_{{\mathtt{x}}}^{-}\\
-\sum\limits_{{\mathtt{x}}\in{\mathbb{D}}}{\mathrm{v}}^{{t}}_{{{\mathtt{x}}},{\mathtt{N}}}\Psi_{{\mathtt{x}}}^{-}
\end{bmatrix}+\begin{bmatrix}
\recip{{\upalpha}}_-&-\recip{{\upalpha}}_-\\
\recip{{\upbeta}}_-&\recip{{\upbeta}}_-
\end{bmatrix}\boldsymbol{p}^-({z}).
\label{cpfacsN}
\end{split}\end{equation}
{Let $\mathfrak{P}_{\mathbb{D}}$ denote the projection of Fourier coefficients of a typical $f^-({z})$ for $|{z}|<{\mathrm{R}}_-$ to the set ${\mathbb{D}}$.
Thus,
\begin{equation}\begin{split}
\mathfrak{P}_{{}\mathbb{D}}(f^-)=\sum\limits_{{\mathtt{x}}\in{{}\mathbb{D}}}f_{{\mathtt{x}}}{z}^{-{\mathtt{x}}}, \forall f^-=\sum\limits_{{\mathtt{x}}\in{\mathbb{Z}^-}}f_{{\mathtt{x}}}{z}^{-{\mathtt{x}}}, |{z}|<{\mathrm{R}}_-.
\label{defdomainNproj}
\end{split}\end{equation}}
Then the exact solution \eqref{WHsolK}${}_1$ yields equation involving the set of first ${\mathtt{M}}$ Fourier coefficients of the second component of $\boldsymbol{{\mathrm{v}}}^-$, i.e.,
\begin{equation}\begin{split}
{\mathfrak{P}_{{}\mathbb{D}}}(\ensuremath{\hat{\mathbf{e}}}_2\cdot\boldsymbol{{\mathrm{v}}}^-)={\mathfrak{P}_{{}\mathbb{D}}}(\ensuremath{\hat{\mathbf{e}}}_2\cdot\mathbf{L}^-\boldsymbol{c}^-)={\mathfrak{P}_{{}\mathbb{D}}}(\ensuremath{\hat{\mathbf{e}}}_2\cdot\mathbf{L}^-(\boldsymbol{c}^{\old}{}^-+\boldsymbol{c}^{\redc}{}^-)).
\label{reducedcrackN}
\end{split}\end{equation}
Above can be written in terms of a $|{\mathtt{M}}|\times|{\mathtt{M}}|$ coefficient matrix for $\{{\mathrm{v}}_{{{\mathtt{x}}},{\mathtt{N}}}\}_{{\mathtt{x}}\in{\mathbb{D}}}$.
Indeed,
${{\mathrm{v}}}_{{\mathtt{N}}}^-=\ensuremath{\hat{\mathbf{e}}}_2\cdot{{\mathbf{J}}^-}{{{\mathbf{D}}}^-}(\boldsymbol{c}^{\old}{}^-+\boldsymbol{c}^{\redc}{}^-)$.
The equation \eqref{reducedcrackN} is {the reduced algebraic problem} for ${\mathtt{M}}<0$.
Using \eqref{cincfacs} and \eqref{cpfacsN}, with 
\begin{equation}\begin{split}
{\boldsymbol{\tau}}^-=\frac{1}{\sqrt{2}}\begin{bmatrix}
\sum\limits_{{\mathtt{x}}\in{\mathbb{D}}}{\mathrm{v}}^{{t}}_{{{\mathtt{x}}},{\mathtt{N}}}\Phi_{{\mathtt{x}}}^{-}\\
-\sum\limits_{{\mathtt{x}}\in{\mathbb{D}}}{\mathrm{v}}^{{t}}_{{{\mathtt{x}}},{\mathtt{N}}}\Psi_{{\mathtt{x}}}^{-}
\end{bmatrix},
\label{taucrackN}
\end{split}\end{equation}
it is found that
\begin{equation}\begin{split}
{\mathbf{L}^-(\boldsymbol{c}^{\old}{}^-+\boldsymbol{c}^{\redc}{}^-)}
&=-(\recip{{\mathbf{J}}}{{{\mathbf{D}}}^-}\recip{{{\mathbf{D}}}}_-({z}_{{P}}){\mathbf{J}}-\mathbf{I})\boldsymbol{q}^{{\mathrm{inc}}}{}^++\recip{{\mathbf{J}}}{{{\mathbf{D}}}^-}{\boldsymbol{\tau}}^-+\boldsymbol{p}^-.
\end{split}\end{equation}
Indeed, using above and expanding and re-arranging the terms in \eqref{reducedcrackN} further,
\begin{equation}\begin{split}
&{\mathfrak{P}_{{}\mathbb{D}}}({{\mathrm{v}}}_{{\mathtt{N}}}^-({z})+{{{\mathtt{f}}}}^-({z})+{{{\mathtt{f}}}}^{{\mathrm{inc}}}{}^-({z}))\\
&={\mathfrak{P}_{{}\mathbb{D}}}({{\frac{1}{2}}}\mathcal{F}^{{\mathrm{inc}}}({z})+{{{\mathtt{f}}}}^{{\mathrm{inc}}}{}^-({z})-{{\frac{1}{2}}}\sum\limits_{{\mathtt{x}}\in{{}\mathbb{D}}}{\mathrm{v}}^{{t}}_{{{\mathtt{x}}},{\mathtt{N}}}\mathcal{A}_{{\mathtt{x}}}({z})+{\{{{\mathrm{v}}}^{{\mathrm{inc}}}_{{\mathtt{N}}}{}^+({z})-{{{\mathtt{f}}}}^{{\mathrm{inc}}}{}^-({z})\}}),
\label{eqnNMbyMpre}
\end{split}\end{equation}
\begin{equation}\begin{split}
\text{where }
\mathcal{A}_{{\mathtt{x}}}({z})&{\,:=}\big(\Phi_{{\mathtt{x}}}^{-}({z}){{\upalpha}_-}({z})+\Psi_{{\mathtt{x}}}^{-}({z}){{\upbeta}_-}({z})\big),
\end{split}\end{equation}
\begin{equation}\begin{split}
\text{and }
\mathcal{F}^{{\mathrm{inc}}}({z})&{\,:=}-\big(- (1-e^{i\upkappa_y{\mathtt{N}}})\recip{{\upalpha}}_-({z}_{{P}}){{\upalpha}_-}({z})\\
&+ (1+e^{i\upkappa_y{\mathtt{N}}})\recip{{\upbeta}}_-({z}_{{P}}){{\upbeta}_-}({z})\big)e^{-i\upkappa_y{\mathtt{N}}}{{\mathrm{v}}}_{{\mathtt{N}}}^{{\mathrm{inc}}}{}^+({z}).
\end{split}\end{equation}
In view of the definitions of ${{{\mathtt{f}}}}^-$ and ${{{\mathtt{f}}}}^{{\mathrm{inc}}}{}^-$ given by \eqref{ppiNAbra}, it is easy to see that above equation {\eqref{eqnNMbyMpre}} leads to \eqref{eqnMbyMcrack}, i.e., 
$\sum\limits_{{\mathtt{x}}\in{{}\mathbb{D}}}{\mathrm{v}}^{{t}}_{{\mathtt{x}},{\mathtt{N}}}\mathfrak{P}_{{}\mathbb{D}}(\mathcal{A}_{{\mathtt{x}}})
=\mathfrak{P}_{{}\mathbb{D}}\big(\mathcal{F}^{{\mathrm{inc}}}\big),$ which yields a $|{\mathtt{M}}|\times|{\mathtt{M}}|$ system of linear algebraic equations for $\{{\mathrm{v}}^{{t}}_{{\mathtt{x}},{\mathtt{N}}}\}_{{\mathtt{x}}\in{{}\mathbb{D}}}$, i.e., 
$\{{\mathrm{v}}_{{\mathtt{x}},{\mathtt{N}}}\}_{{{}\mathbb{D}}}$ since $\{{\mathrm{v}}^{{\mathrm{inc}}}_{{\mathtt{x}},{\mathtt{N}}}\}_{{{}\mathbb{D}}}$ are known {(by simply expanding the terms, it can be shown that $\mathfrak{P}_{{}\mathbb{D}}$ operating on the expression within curly brackets in \eqref{eqnNMbyMpre} results in zero; here recall \eqref{ppiNAbra}${}_2$)}.
Indeed, with the notation $\mathfrak{C}_{{{\mu}}}(p)$ to denote the coefficient of $z^{{{\mu}}}$ for polynomials $p$ of the form $\mathfrak{C}_1{z}+\mathfrak{C}_2{z}^{2}+\dotsc,$ it is easy to see that
\begin{equation}\begin{split}
{\sum_{\nu=1}^{{\mathtt{M}}}}a_{{{\mu}}\nu}\chi_\nu=b_{{\mu}}\quad\quad({{\mu}}=1 \dotsc, {\mathtt{M}})
\label{aknueqnNcrack}
\end{split}\end{equation}
where {(for ${{\mu}}, \nu=1, \dotsc, {\mathtt{M}}$)}
\begin{equation}\begin{split}
\begin{aligned}
a_{{{\mu}}\nu}&=\mathfrak{C}_{{{\mu}}}\big(\mathfrak{P}_{{}\mathbb{D}}(\mathcal{A}_{-\nu})\big),\quad
\chi_\nu&={\mathrm{v}}^{{t}}_{-\nu,{\mathtt{N}}},\quad
b_{{\mu}}&=\mathfrak{C}_{{{\mu}}}(\mathfrak{P}_{{}\mathbb{D}}\big(\mathcal{F}^{{\mathrm{inc}}}\big)).
\end{aligned}
\end{split}\end{equation}
Let ${\recip{a}_{\nu{{\mu}}}}$ denote the components of the inverse of $[a_{{{\mu}}\nu}]_{{{\mu}}, \nu=1, \dotsc, {\mathtt{M}}+2}$. Then
\begin{equation}\begin{split}
{\mathrm{v}}^{{t}}_{{\mathtt{x}},{\mathtt{N}}}={\sum_{{{\mu}}=1}^{{\mathtt{M}}}}{\recip{a}_{(-{\mathtt{x}}){{\mu}}}}\mathfrak{C}_{{{\mu}}}(\mathfrak{P}_{{}\mathbb{D}}\big(\mathcal{F}^{{\mathrm{inc}}}\big)), {\mathtt{x}}\in{{}\mathbb{D}}.
\label{exactvmsolN}
\end{split}\end{equation}
The expression \eqref{exactvmsolN} has been verified using numerical solution of the discrete Helmholtz equation.

\section{Auxiliary details for the equations for two rigid constraints: ${\mathtt{M}}>0$}
\label{ApprigidMPextra}
{
Now,
with
\begin{equation}\begin{split}
\boldsymbol{\tau}^-=\frac{1}{\sqrt{2}}\begin{bmatrix}
\sum\limits_{{\mathtt{x}}\in{\mathbb{D}}}{\mathrm{w}}^{{t}}_{{{\mathtt{x}}},{\mathtt{N}}}\phi_{{\mathtt{x}}}^{-}\\
-\sum\limits_{{\mathtt{x}}\in{\mathbb{D}}}{\mathrm{w}}^{{t}}_{{{\mathtt{x}}},{\mathtt{N}}}\psi_{{\mathtt{x}}}^{-}
\end{bmatrix},
\label{taurigidP2}
\end{split}\end{equation}
\begin{equation}\begin{split}
\boldsymbol{{\mathrm{w}}}^-
&=\recip{\mathbf{J}}{\mathbf{D}}^-(\boldsymbol{c}^{\old}{}^-+\boldsymbol{c}^{\saux}{}^-+\boldsymbol{c}^{\redc}{}^-)\\
&=-\mathbf{L}^-\big({\mathtt{Q}}\recip{{{\mathbf{D}}}}_--{\mathtt{Q}}({z}_{{P}})\recip{{{\mathbf{D}}}}_-({z}_{{P}})+({z}^{-1}-{z}_{{P}}^{-1})\recip{{{\mathbf{D}}}}_-(0)+({z}-{z}_{{P}}){{\mathbf{D}}}_+(\infty)\big)\mathbf{J}\boldsymbol{q}^{{\mathrm{inc}}}{}^+\\&-\mathbf{L}^-(\recip{{{\mathbf{D}}}}_--\recip{{{\mathbf{D}}}}_-(0)){\mathbf{J}}\boldsymbol{p}^{\saux0}-{z}\mathbf{L}^-(\recip{{{\mathbf{D}}}}_--{{\mathbf{D}}}_+(\infty)){\mathbf{J}}\boldsymbol{p}^{\saux1}\\&-\frac{1}{\sqrt{2}}{\su}^{{t}}_{{\mathtt{M}}-1, {\mathtt{N}}}\mathbf{L}^-\begin{bmatrix}\phi_{{\mathtt{M}}}^-\\-\psi_{{\mathtt{M}}}^-\end{bmatrix}+\mathbf{L}^-\boldsymbol{\tau}^-,
\label{wnFTPrigidpre}
\end{split}\end{equation}
which can be simplified to the form
\begin{equation}\begin{split}
\boldsymbol{{\mathrm{w}}}^-&=(-\frac{{z}}{{z}-{z}_{P}}{\mathtt{Q}}\mathbf{I}+\frac{{z}}{{z}-{z}_{P}}{\mathtt{Q}}({z}_{{P}})\recip{\mathbf{J}}{\mathbf{D}}^-\recip{{{\mathbf{D}}}}_-({z}_{{P}})\mathbf{J})\begin{bmatrix}-{\su}^{{\mathrm{inc}}}_{0, 0}\\-{\su}^{{\mathrm{inc}}}_{0, {\mathtt{N}}}\end{bmatrix}\\
&+({z}-{z}_{P}^{-1})\begin{bmatrix}
{\su}^{{\mathrm{inc}}}_{0, 0}\\
{\su}^{{\mathrm{inc}}}_{0, {\mathtt{N}}}
\end{bmatrix}+{\su}^{t}_{-1, 0}({\mathbf{I}}-\recip{\mathbf{J}}{\mathbf{D}}^-\recip{{{\mathbf{D}}}}_-(0){\mathbf{J}})\ensuremath{\hat{\mathbf{e}}}_1\\
&-\frac{1}{\sqrt{2}}{\su}^{{t}}_{{\mathtt{M}}-1, {\mathtt{N}}}
\recip{\mathbf{J}}{\mathbf{D}}^-\begin{bmatrix}\phi_{{\mathtt{M}}}^-\\-\psi_{{\mathtt{M}}}^-\end{bmatrix}+\recip{\mathbf{J}}{\mathbf{D}}^-\boldsymbol{\tau}^-.
\label{wnFTPrigid}
\end{split}\end{equation}
}
{
The expression of ${\su}_{-1, 0}$ \eqref{un10pre} can be simplified further. Following \cite{Bls1}, by deforming the contour ${\mathcal{C}}$ to circular contour of vanishing radius and using the residue calculus, the expression \eqref{un10pre} can be reduced to the condition for ${\su}^{t}_{-1, 0}$ as
\begin{equation}\begin{split}
({\saux}_0+\ensuremath{\hat{\mathbf{e}}}_1\cdot\boldsymbol{{\mathrm{w}}}^-)|_{{z}={z}_q}=0,
\label{condun10}
\end{split}\end{equation}
using the zeros ${z}_q, {z}_q^{-1}$ (with $|{z}_q|<1$) of ${\mathtt{Q}}$ (here, recall \eqref{defW0Q}).
Indeed, in the context of \eqref{un10pre},
\begin{equation}\begin{split}
{\saux}_0+\ensuremath{\hat{\mathbf{e}}}_1\cdot\boldsymbol{{\mathrm{w}}}^-&=-{\su}_{-1, 0}-{{z}} {\su}^{{\mathrm{inc}}}_{0, 0}\\
&+\frac{{z}}{{z}-{z}_{P}}\ensuremath{\hat{\mathbf{e}}}_1\cdot(-{\mathtt{Q}}\mathbf{I}+{\mathtt{Q}}({z}_{{P}})\recip{\mathbf{J}}{\mathbf{D}}^-\recip{{{\mathbf{D}}}}_-({z}_{{P}})\mathbf{J})\begin{bmatrix}
-{\su}^{{\mathrm{inc}}}_{0, 0}\\
-{\su}^{{\mathrm{inc}}}_{0, {\mathtt{N}}}
\end{bmatrix}\\
&+({z}-{z}_{P}^{-1}){\su}^{{\mathrm{inc}}}_{0, 0}+{\su}^{t}_{-1, 0}-{\su}^{t}_{-1, 0}\ensuremath{\hat{\mathbf{e}}}_1\cdot\recip{\mathbf{J}}{\mathbf{D}}^-\recip{{{\mathbf{D}}}}_-(0){\mathbf{J}}\ensuremath{\hat{\mathbf{e}}}_1\\
&-\frac{1}{\sqrt{2}}{\su}^{{t}}_{{\mathtt{M}}-1, {\mathtt{N}}}
\ensuremath{\hat{\mathbf{e}}}_1\cdot\recip{\mathbf{J}}{\mathbf{D}}^-\begin{bmatrix}
\phi_{{\mathtt{M}}}^-\\
-\psi_{{\mathtt{M}}}^-
\end{bmatrix}+\ensuremath{\hat{\mathbf{e}}}_1\cdot\recip{\mathbf{J}}{\mathbf{D}}^-\boldsymbol{\tau}^-.
\label{W0eqw1P}
\end{split}\end{equation}
Hence, \eqref{condun10} gives,
with ${z}={z}_q$ in \eqref{W0eqw1P},
\begin{equation}\begin{split}
{\su}^{t}_{-1, 0}\ensuremath{\hat{\mathbf{e}}}_1\cdot\recip{\mathbf{J}}{\mathbf{D}}^-\recip{{{\mathbf{D}}}}_-(0){\mathbf{J}}\ensuremath{\hat{\mathbf{e}}}_1+\frac{1}{\sqrt{2}}{\su}^{{t}}_{{\mathtt{M}}-1, {\mathtt{N}}}
\ensuremath{\hat{\mathbf{e}}}_1\cdot\recip{\mathbf{J}}{\mathbf{D}}^-\begin{bmatrix}\phi_{{\mathtt{M}}}^-\\-\psi_{{\mathtt{M}}}^-\end{bmatrix}\\
=\frac{{z}}{{z}-{z}_{P}}{\mathtt{Q}}({z}_{{P}})\ensuremath{\hat{\mathbf{e}}}_1\cdot\recip{\mathbf{J}}{\mathbf{D}}^-\recip{{{\mathbf{D}}}}_-({z}_{{P}})\mathbf{J}\begin{bmatrix}
-{\su}^{{\mathrm{inc}}}_{0, 0}\\
-{\su}^{{\mathrm{inc}}}_{0, {\mathtt{N}}}
\end{bmatrix}+\ensuremath{\hat{\mathbf{e}}}_1\cdot\recip{\mathbf{J}}{\mathbf{D}}^-\boldsymbol{\tau}^-,
\end{split}\end{equation}
i.e.,
\begin{equation}\begin{split}
&-\sum\limits_{{\mathtt{x}}\in{\mathbb{D}}}{\mathrm{w}}^{{t}}_{{{\mathtt{x}}},{\mathtt{N}}}
\frac{1}{2}\big(\phi_{{\mathtt{x}}}^{-}({z}_q){{\upalpha}_-}({z}_q)-\psi_{{\mathtt{x}}}^{-}({z}_q){{\upbeta}_-}({z}_q)\big)\\
&+\frac{1}{2}{\su}^{t}_{-1, 0}\big(\recip{{\upalpha}}_-(0){{\upalpha}_-}({z}_q)+\recip{{\upbeta}}_-(0){{\upbeta}_-}({z}_q)\big)\\
&+{\su}^{{t}}_{{\mathtt{M}}-1, {\mathtt{N}}}\frac{1}{2}\big(\phi_{{\mathtt{M}}}^{-}({z}_q){{\upalpha}_-}({z}_q)-\psi_{{\mathtt{M}}}^{-}({z}_q){{\upbeta}_-}({z}_q)\big)\\
&=-\frac{{z}_q}{{z}_q-{z}_{P}}{\frac{1}{2}}{\mathtt{Q}}({z}_{{P}})\big((1-e^{i\upkappa_y{\mathtt{N}}})\recip{{\upalpha}}_-({z}_{{P}}){{\upalpha}_-}({z}_q)\\
&+ (1+e^{i\upkappa_y{\mathtt{N}}})\recip{{\upbeta}}_-({z}_{{P}}){{\upbeta}_-}({z}_q)\big)e^{-i\upkappa_y{\mathtt{N}}}{\su}_{0,{\mathtt{N}}}^{{\mathrm{inc}}}.
\label{un10}
\end{split}\end{equation}
}
{
The expression of ${\su}_{{\mathtt{M}}-1, {\mathtt{N}}}$ \eqref{uMn10Ppre} can be, similarly, simplified further. 
An analogue of \eqref{condun10} also holds in the context of \eqref{uMn10Ppre} where it becomes $({\saux}_{\mathtt{N}}+\ensuremath{\hat{\mathbf{e}}}_2\cdot\boldsymbol{{\mathrm{w}}}^-+\poly{P}^{+})|_{{z}={z}_q}=0$ with
\begin{equation}\begin{split}
{\saux}_{\mathtt{N}}+\ensuremath{\hat{\mathbf{e}}}_2\cdot\boldsymbol{{\mathrm{w}}}^-+\poly{P}^{+}&={z}^{-{\mathtt{M}}}(-{\su}^{t}_{{\mathtt{M}}-1, {\mathtt{N}}})+{\su}^{{\mathrm{inc}}}_{-1, {\mathtt{N}}}-{{z}} {\su}^{{\mathrm{inc}}}_{0, {\mathtt{N}}}+\poly{P}^{+}\\
&+\frac{{z}}{{z}-{z}_{P}}\ensuremath{\hat{\mathbf{e}}}_2\cdot(-{\mathtt{Q}}\mathbf{I}+{\mathtt{Q}}({z}_{{P}})\recip{\mathbf{J}}{\mathbf{D}}^-\recip{{{\mathbf{D}}}}_-({z}_{{P}})\mathbf{J})\begin{bmatrix}
-{\su}^{{\mathrm{inc}}}_{0, 0}\\
-{\su}^{{\mathrm{inc}}}_{0, {\mathtt{N}}}
\end{bmatrix}\\
&+({z}-{z}_{P}^{-1}){\su}^{{\mathrm{inc}}}_{0, {\mathtt{N}}}-{\su}^{t}_{-1, 0}\ensuremath{\hat{\mathbf{e}}}_2\cdot\recip{\mathbf{J}}{\mathbf{D}}^-\recip{{{\mathbf{D}}}}_-(0){\mathbf{J}}\ensuremath{\hat{\mathbf{e}}}_1\\
&-\frac{1}{\sqrt{2}}{\su}^{{t}}_{{\mathtt{M}}-1, {\mathtt{N}}}
\ensuremath{\hat{\mathbf{e}}}_2\cdot\recip{\mathbf{J}}{\mathbf{D}}^-\begin{bmatrix}
\phi_{{\mathtt{M}}}^-\\
-\psi_{{\mathtt{M}}}^-
\end{bmatrix}+\ensuremath{\hat{\mathbf{e}}}_2\cdot\recip{\mathbf{J}}{\mathbf{D}}^-\boldsymbol{\tau}^-.
\end{split}\end{equation}
Therefore, $({\saux}_{\mathtt{N}}+\ensuremath{\hat{\mathbf{e}}}_2\cdot\boldsymbol{{\mathrm{w}}}^-+\poly{P}^{+})|_{{z}={z}_q}=0$ leads to the simplification of the condition \eqref{uMn10Ppre} for ${\su}_{{\mathtt{M}}-1, {\mathtt{N}}}$ as
\begin{equation}\begin{split}
{\su}^{t}_{-1, 0}\ensuremath{\hat{\mathbf{e}}}_2\cdot\recip{\mathbf{J}}{\mathbf{D}}^-\recip{{{\mathbf{D}}}}_-(0){\mathbf{J}}\ensuremath{\hat{\mathbf{e}}}_1+{\su}^{{t}}_{{\mathtt{M}}-1, {\mathtt{N}}}({z}^{-{\mathtt{M}}}+\frac{1}{\sqrt{2}}
\ensuremath{\hat{\mathbf{e}}}_2\cdot\recip{\mathbf{J}}{\mathbf{D}}^-\begin{bmatrix}
\phi_{{\mathtt{M}}}^-\\
-\psi_{{\mathtt{M}}}^-
\end{bmatrix})\\
=\frac{{z}}{{z}-{z}_{P}}{\mathtt{Q}}({z}_{{P}})\ensuremath{\hat{\mathbf{e}}}_2\cdot\recip{\mathbf{J}}{\mathbf{D}}^-\recip{{{\mathbf{D}}}}_-({z}_{{P}})\mathbf{J}\begin{bmatrix}
-{\su}^{{\mathrm{inc}}}_{0, 0}\\
-{\su}^{{\mathrm{inc}}}_{0, {\mathtt{N}}}
\end{bmatrix}+\ensuremath{\hat{\mathbf{e}}}_2\cdot\recip{\mathbf{J}}{\mathbf{D}}^-\boldsymbol{\tau}^-+\poly{P}^{+},
\end{split}\end{equation}
i.e.,
\begin{equation}\begin{split}
&-\sum\limits_{{\mathtt{x}}\in{\mathbb{D}}}{\mathrm{w}}^{{t}}_{{{\mathtt{x}}},{\mathtt{N}}}
({z}_q^{-{\mathtt{x}}}-\frac{1}{2}\big(\phi_{{\mathtt{x}}}^{-}({z}_q){{\upalpha}_-}({z}_q)+\psi_{{\mathtt{x}}}^{-}({z}_q){{\upbeta}_-}({z}_q)\big))\\
&+\frac{1}{2}{\su}^{t}_{-1, 0}\big(-\recip{{\upalpha}}_-(0){{\upalpha}_-}({z}_q)+\recip{{\upbeta}}_-(0){{\upbeta}_-}({z}_q)\big)\\
&+{\su}^{{t}}_{{\mathtt{M}}-1, {\mathtt{N}}}({z}_q^{-{\mathtt{M}}}-\frac{1}{2}\big(\phi_{{\mathtt{M}}}^{-}({z}_q){{\upalpha}_-}({z}_q)+\psi_{{\mathtt{M}}}^{-}({z}_q){{\upbeta}_-}({z}_q)\big))\\
&=-\frac{{z}_q}{{z}_q-{z}_{P}}{\frac{1}{2}}{\mathtt{Q}}({z}_{{P}})\big(- (1-e^{i\upkappa_y{\mathtt{N}}})\recip{{\upalpha}}_-({z}_{{P}}){{\upalpha}_-}({z}_q)\\
&+ (1+e^{i\upkappa_y{\mathtt{N}}})\recip{{\upbeta}}_-({z}_{{P}}){{\upbeta}_-}({z}_q)\big)e^{-i\upkappa_y{\mathtt{N}}}{\su}_{0,{\mathtt{N}}}^{{\mathrm{inc}}}.
\label{uMn10P}
\end{split}\end{equation}
}

\section{Reduction to algebraic equation for two rigid constraints: ${\mathtt{M}}<0$}
\label{ApprigidMN}
{For the scattering due to two rigid constraints with negative offset, the definition of ${\mathbb{D}}$ is same as that stated by \eqref{defnDcrack2}. Following the definition of $\poly{P}^+$ for positive ${\mathtt{M}}$ \eqref{polyPC}, and similar to the case of cracks \eqref{defQcrack}, let}
\begin{equation}\begin{split}
-\poly{Q}^-({z})={{{\mathtt{f}}}}^{-}({z})+{{{\mathtt{f}}}}^{{\mathrm{inc}} {-}}({z})
=-\sum\limits_{{\mathtt{x}}\in{\mathbb{D}}}{\mathrm{w}}^{{t}}_{{{\mathtt{x}}},{\mathtt{N}}}{z}^{-{\mathtt{x}}}.
\label{defQrigid}
\end{split}\end{equation}
{Then, according to the definition of $\boldsymbol{p}^{-}$ in \eqref{qPcAbra}},
$\boldsymbol{p}^{\sgnM}=\boldsymbol{p}^{-}=\begin{bmatrix}
0\\
{-\poly{Q}^-({z})}
\end{bmatrix}.$
{Keeping in mind the expression of $\boldsymbol{c}$ in equation \eqref{cformC} and the need to additively factorize the term involving $\boldsymbol{p}^{-}$ (as well as a part of $\boldsymbol{p}^{\saux}$), it is pertinent to consider the term (also indicated earlier as the second term in \eqref{difficultC}) ${{\mathbf{D}}}_+({z}){\mathbf{J}}\boldsymbol{p}^{-}({z})$, upon expanding which equals}
\begin{equation}\begin{split}
&\frac{1}{\sqrt{2}}\begin{bmatrix}
{\upalpha}_+\poly{Q}^-({z})\\
{-}{\upbeta}_+\poly{Q}^-({z})
\end{bmatrix}
\end{split}\end{equation}
{Using the splitting suggested in \eqref{defFplus} to additively factorize each of the two components in above expression, it is found that}
\begin{equation}\begin{split}
{\upalpha}_+\poly{Q}^-&=F^+\poly{Q}^-
=\sum\limits_{{\mathtt{x}}\in{\mathbb{D}}}{\mathrm{w}}^{{t}}_{{{\mathtt{x}}},{\mathtt{N}}}\Phi_{{\mathtt{x}}}=\sum\limits_{{\mathtt{x}}\in{\mathbb{D}}}{\mathrm{w}}^{{t}}_{{{\mathtt{x}}},{\mathtt{N}}}(\Phi_{{\mathtt{x}}}^{+}+\Phi_{{\mathtt{x}}}^{-}),\\
{\upbeta}_+\poly{Q}^-&=G^+\poly{Q}^-
=\sum\limits_{{\mathtt{x}}\in{\mathbb{D}}}{\mathrm{w}}^{{t}}_{{{\mathtt{x}}},{\mathtt{N}}}\Psi_{{\mathtt{x}}}=\sum\limits_{{\mathtt{x}}\in{\mathbb{D}}}{\mathrm{w}}^{{t}}_{{{\mathtt{x}}},{\mathtt{N}}}(\Psi_{{\mathtt{x}}}^{+}+\Psi_{{\mathtt{x}}}^{-}),
\label{phipsirigidN}
\end{split}\end{equation}
{so that the third term in the expression of $\boldsymbol{c}$ \eqref{cformC}, i.e., $\boldsymbol{c}^{\redc}$, admits the following additive factorization}
\begin{equation}\begin{split}
{\sqrt{2}}\boldsymbol{c}^{\redc}&={\sqrt{2}}\boldsymbol{c}^{\redc}{}^{+}+{\sqrt{2}}\boldsymbol{c}^{\redc}{}^-
=\begin{bmatrix}
\sum\limits_{{\mathtt{x}}\in{\mathbb{D}}}{\mathrm{w}}^{{t}}_{{{\mathtt{x}}},{\mathtt{N}}}\Phi_{{\mathtt{x}}}^{+}\\
{-}\sum\limits_{{\mathtt{x}}\in{\mathbb{D}}}{\mathrm{w}}^{{t}}_{{{\mathtt{x}}},{\mathtt{N}}}\Psi_{{\mathtt{x}}}^{+}\\
\end{bmatrix}
+{{\sqrt{2}}\boldsymbol{\tau}^-}
-{\recip{{{\mathbf{D}}}}_-{\mathbf{J}}}
\boldsymbol{p}^-({z}),
\label{cpfacsNrigid}
\end{split}\end{equation}
{
\begin{equation}\begin{split}
\text{where }\boldsymbol{\tau}^-=\frac{1}{\sqrt{2}}\begin{bmatrix}
\sum\limits_{{\mathtt{x}}\in{\mathbb{D}}}{\mathrm{w}}^{{t}}_{{{\mathtt{x}}},{\mathtt{N}}}\Phi_{{\mathtt{x}}}^{-}\\
-\sum\limits_{{\mathtt{x}}\in{\mathbb{D}}}{\mathrm{w}}^{{t}}_{{{\mathtt{x}}},{\mathtt{N}}}\Psi_{{\mathtt{x}}}^{-}
\end{bmatrix}.
\label{taurigidN1}
\end{split}\end{equation}
}
{In the context of \eqref{cformC} and \eqref{ccincC}, as the only term remaining after additive factorization presented in \eqref{cincfacsC} and \eqref{cpfacsNrigid}, again by the splitting suggested in \eqref{defFplus} and according to \eqref{phipsirigidN},
it is easy to see that (the same can be compared with \eqref{cauxfacs} for analogous factors when ${\mathtt{M}}>0$)
\begin{equation}\begin{split}
\boldsymbol{c}^{\saux}&=\boldsymbol{c}^{\saux-}+\boldsymbol{c}^{\saux+}\\
&=-(\recip{{{\mathbf{D}}}}_--\recip{{{\mathbf{D}}}}_-(0)){\mathbf{J}}\boldsymbol{p}^{\saux0}-{z}(\recip{{{\mathbf{D}}}}_--{{\mathbf{D}}}_+(\infty)){\mathbf{J}}\boldsymbol{p}^{\saux1}\\
&+{z}({{\mathbf{D}}}_+-{{\mathbf{D}}}_+(\infty)){\mathbf{J}}\boldsymbol{p}^{\saux1}+({{\mathbf{D}}}_+-\recip{{{\mathbf{D}}}}_-(0)){\mathbf{J}}\boldsymbol{p}^{\saux0}\\
&-(-{z}^{-{\mathtt{M}}}{\su}^{{t}}_{{\mathtt{M}}-1, {\mathtt{N}}})\recip{{{\mathbf{D}}}}_-{\mathbf{J}}\ensuremath{\hat{\mathbf{e}}}_2
+\frac{1}{\sqrt{2}}{\su}^{{t}}_{{\mathtt{M}}-1, {\mathtt{N}}}\begin{bmatrix}
\Phi_{{\mathtt{M}}}^-+\Phi_{{\mathtt{M}}}^+\\
-\Psi_{{\mathtt{M}}}^--\Psi_{{\mathtt{M}}}^+
\end{bmatrix},
\label{cauxfacsN}
\end{split}\end{equation}
where (similar to \eqref{cauxfacsphiM})
\begin{equation}\begin{split}
\Phi_{{\mathtt{M}}}={{\upalpha}_+}{z}^{-{\mathtt{M}}}&=F^+{z}^{-{\mathtt{M}}}
=\Phi_{{\mathtt{M}}}^{+}+\Phi_{{\mathtt{M}}}^{-},\\
\Psi_{{\mathtt{M}}}={{\upbeta}_+}{z}^{-{\mathtt{M}}}&=G^+{z}^{-{\mathtt{M}}}
=\Psi_{{\mathtt{M}}}^{+}+\Psi_{{\mathtt{M}}}^{-}.
\label{cauxfacsNphiM}
\end{split}\end{equation}
} 
Then, {using the formal solution \eqref{WHsolC} of the Wiener--Hopf equation (and the definition of $\mathfrak{P}_{{}\mathbb{D}}$ \eqref{defdomainNproj})}, the set of first ${\mathtt{M}}$ Fourier coefficients of the second component of $\boldsymbol{{\mathrm{w}}}^-$ yields
\begin{equation}\begin{split}
{\mathfrak{P}_{{}\mathbb{D}}}(\ensuremath{\hat{\mathbf{e}}}_2\cdot\boldsymbol{{\mathrm{w}}}^-)={\mathfrak{P}_{{}\mathbb{D}}}(\ensuremath{\hat{\mathbf{e}}}_2\cdot\mathbf{L}^-\boldsymbol{c}^-)={\mathfrak{P}_{{}\mathbb{D}}}(\ensuremath{\hat{\mathbf{e}}}_2\cdot\mathbf{L}^-(\boldsymbol{c}^{\old}{}^-+\boldsymbol{c}^{\saux}{}^-+\boldsymbol{c}^{\redc}{}^-)),
\label{reducedrigidN}
\end{split}\end{equation}
{which involves} a $|{\mathtt{M}}|\times|{\mathtt{M}}|$ coefficient matrix for $\{{\mathrm{w}}_{{{\mathtt{x}}},{\mathtt{N}}}\}_{{\mathtt{x}}={{\mathtt{M}}}}^{-1}$ {(this equation is the counterpart of \eqref{reducedrigidP}).}

{However, as noted before for the case of the case of positive offset too, there are also} two more unknowns {$\su^{t}_{-1, 0}$ and $\su^{t}_{{\mathtt{M}}-1, {\mathtt{N}}}$, which need to be determined by the expressions of $\su_0^{{\mathrm{F}}}$ and $\su_{{\mathtt{N}}}^{{\mathrm{F}}}$ stated in \eqref{u0expn} and \eqref{uNexpn}, i.e., \eqref{un10pre} and slightly altered form of \eqref{uMn10Ppre}}
\begin{equation}\begin{split}
\su_{{\mathtt{M}}-1, {\mathtt{N}}}&=\frac{1}{2\pi i}\int_{{\mathcal{C}}}(\recip{{\mathtt{Q}}}({\saux}_{{\mathtt{N}}}+\ensuremath{\hat{\mathbf{e}}}_2\cdot\mathbf{L}^-(\boldsymbol{c}^{\old}{}^-+\boldsymbol{c}^{\saux}{}^-+\boldsymbol{c}^{\redc}{}^-))\\
&-\recip{{\mathtt{Q}}}\poly{Q}^{-}-{\su}^{{\mathrm{inc}}}_{0, {\mathtt{N}}}\delta_{D}^{+}({{z}} {z}_{{P}}^{-1})){z}^{{\mathtt{M}}-2}d{z};
\label{uMn10Npre}
\end{split}\end{equation}
{
in combination with this statement, the equation \eqref{reducedrigidN} is {the reduced algebraic problem} for ${\mathtt{M}}<0$ for discrete scattering due to two staggered rigid constraints.}

{
Using the factors of $\boldsymbol{c}^{\old}$ stated in \eqref{cincfacsC}, $\boldsymbol{c}^{\saux}$ in \eqref{cauxfacsN}, and $\boldsymbol{c}^{\redc}$ in \eqref{cpfacsNrigid}, in the context of \eqref{reducedrigidN}, with $\boldsymbol{\tau}^-$ of \eqref{taurigidN1},
it is found that
\begin{equation}\begin{split}
\boldsymbol{{\mathrm{w}}}^-
&=\recip{\mathbf{J}}{\mathbf{D}}^-(\boldsymbol{c}^{\old}{}^-+\boldsymbol{c}^{\saux}{}^-+\boldsymbol{c}^{\redc}{}^-)\\
&=-({\mathtt{Q}}\mathbf{I}-{\mathtt{Q}}({z}_{{P}})\recip{\mathbf{J}}{{{\mathbf{D}}}^-}\recip{{{\mathbf{D}}}}_-({z}_{{P}})\mathbf{J})\boldsymbol{q}^{{\mathrm{inc}}}{}^+\\
&+({z}_{{P}}^{-1}-{z}^{-1})\recip{\mathbf{J}}{{{\mathbf{D}}}^-}\recip{{{\mathbf{D}}}}_-(0)\mathbf{J}\boldsymbol{q}^{{\mathrm{inc}}}{}^++({z}_{{P}}-{z})\recip{\mathbf{J}}{{{\mathbf{D}}}^-}{{\mathbf{D}}}_+(\infty)\mathbf{J}\boldsymbol{q}^{{\mathrm{inc}}}{}^+\\
&-(\mathbf{I}-\recip{\mathbf{J}}{{{\mathbf{D}}}^-}\recip{{{\mathbf{D}}}}_-(0){\mathbf{J}})\boldsymbol{p}^{\saux0}-{z}({\mathbf{I}}-\recip{\mathbf{J}}{{{\mathbf{D}}}^-}{{\mathbf{D}}}_+(\infty){\mathbf{J}})\boldsymbol{p}^{\saux1}\\
&+{\su}^{{t}}_{{\mathtt{M}}-1, {\mathtt{N}}}({z}^{-{\mathtt{M}}}\ensuremath{\hat{\mathbf{e}}}_2+\frac{1}{\sqrt{2}}\recip{\mathbf{J}}{{{\mathbf{D}}}^-}
\begin{bmatrix}
\Phi_{{\mathtt{M}}}^-\\
-\Psi_{{\mathtt{M}}}^-
\end{bmatrix})\\
&+\recip{{\mathbf{J}}}{{{\mathbf{D}}}^-}\boldsymbol{\tau}^--\boldsymbol{p}^-.
\label{wnFTNrigidpre}
\end{split}\end{equation}
Also, as part of above expression in the context of \eqref{reducedrigidN} 
(recall \eqref{relatedtoidentity1inc} and the statement preceding it),
\begin{equation}\begin{split}
-\ensuremath{\hat{\mathbf{e}}}_2\cdot({\mathtt{Q}}\boldsymbol{q}^{{\mathrm{inc}}}{}^++\boldsymbol{p}^{\saux0}+{z}\boldsymbol{p}^{\saux1}+\boldsymbol{p}^-({z}))&={\mathtt{Q}}{\su}^{{\mathrm{inc}}}_{{\mathtt{N}}}{}^{+}-{\su}^{{\mathrm{inc}}}_{-1, {\mathtt{N}}}+{z}{\su}^{{\mathrm{inc}}}_{0, {\mathtt{N}}}-{{\mathtt{f}}}^-({z})-{{\mathtt{f}}}^{{\mathrm{inc}}}{}^-({z})\\
&={{\mathtt{w}}}_{{\mathtt{N}}}^{{\mathrm{inc}}}{}^+({z})-{{\mathtt{f}}}^-({z})-{{\mathtt{f}}}^{{\mathrm{inc}}}{}^-({z}).
\label{relatedtoidentity1incrigid}
\end{split}\end{equation}
Indeed, using above and expanding and re-arranging the terms in \eqref{reducedrigidN} further,
\begin{subequations}
\begin{equation}\begin{split}
&\mathfrak{P}_{{}\mathbb{D}}({{\mathrm{w}}}_{{\mathtt{N}}}^-({z})+{{\mathtt{f}}}^-({z})+{{\mathtt{f}}}^{{\mathrm{inc}}}{}^-({z}))\\
&=\mathfrak{P}_{{}\mathbb{D}}({\frac{1}{2}}{\mathtt{Q}}({z}_{{P}})\mathcal{F}^{{\mathrm{inc}}}({z})+{{\mathtt{f}}}^{{\mathrm{inc}}}{}^-({z})-{\frac{1}{2}}\sum\limits_{{\mathtt{x}}\in{{}\mathbb{D}}}{\mathrm{w}}^{{t}}_{{{\mathtt{x}}},{\mathtt{N}}}\mathcal{A}_{{\mathtt{x}}}({z})\\
&-{\su}^{{t}}_{-1, 0}\mathcal{J}(z)-{\su}^{{t}}_{{\mathtt{M}}-1, {\mathtt{N}}}(\mathcal{K}(z)-{z}^{-{\mathtt{M}}})+\mathcal{G}^{{\mathrm{inc}}}(z)+\{{{\mathtt{w}}}_{{\mathtt{N}}}^{{\mathrm{inc}}}{}^+({z})-{{\mathtt{f}}}^{{\mathrm{inc}}}{}^-({z})\}),
\label{eqnNMbyMrigidpre}
\end{split}\end{equation}
\begin{equation}\begin{split}
\text{where }
\mathcal{A}_{{\mathtt{x}}}({z})&{\,:=}\big(\Phi_{{\mathtt{x}}}^{-}({z}){{\upalpha}_-}({z})+\Psi_{{\mathtt{x}}}^{-}({z}){{\upbeta}_-}({z})\big),
\end{split}\end{equation}
\begin{equation}\begin{split}
\mathcal{F}^{{\mathrm{inc}}}({z})&{\,:=}-\big(- (1-e^{i\upkappa_y{\mathtt{N}}})\recip{{\upalpha}}_-({z}_{{P}}){{\upalpha}_-}({z})\\
&+ (1+e^{i\upkappa_y{\mathtt{N}}})\recip{{\upbeta}}_-({z}_{{P}}){{\upbeta}_-}({z})\big)e^{-i\upkappa_y{\mathtt{N}}}{\su}_{{\mathtt{N}}}^{{\mathrm{inc}}}{}^+({z}),
\end{split}\end{equation}
\begin{equation}\begin{split}
\mathcal{J}(z)&{\,:=}-
\frac{1}{2}\big(-\recip{{\upalpha}}_-({0}){{\upalpha}_-}({z})+\recip{{\upbeta}}_-({0}){{\upbeta}_-}({z})\big),
\label{KrigidN}
\end{split}\end{equation}
\begin{equation}\begin{split}
\mathcal{K}(z)&{\,:=}-
\frac{1}{2}(\Phi_{{\mathtt{M}}}^-{{\upalpha}_-}({z})+\Psi_{{\mathtt{M}}}^-{{\upbeta}_-}({z})),
\label{GrigidN}
\end{split}\end{equation}
\text{and }
\begin{equation}\begin{split}
\mathcal{G}^{{\mathrm{inc}}}(z)
&{\,:=}-\frac{1}{2}\big(-(1-e^{i\upkappa_y{\mathtt{N}}})\recip{{\upalpha}}_-({0}){{\upalpha}_-}({z})\\
&+(1+e^{i\upkappa_y{\mathtt{N}}})\recip{{\upbeta}}_-({0}){{\upbeta}_-}({z})\big)e^{-i\upkappa_y{\mathtt{N}}}\{({z}_{{P}}^{-1}-{z}^{-1}){\su}_{{\mathtt{N}}}^{{\mathrm{inc}}}{}^+({z})-{\su}_{-1,{\mathtt{N}}}^{{\mathrm{inc}}}\}\\
&-\frac{1}{2}\big(-(1-e^{i\upkappa_y{\mathtt{N}}}){{\upalpha}_+}({\infty}){{\upalpha}_-}({z})\\
&+(1+e^{i\upkappa_y{\mathtt{N}}}){{\upbeta}_+}({\infty}){{\upbeta}_-}({z})\big)e^{-i\upkappa_y{\mathtt{N}}}\{({z}_{{P}}-{z}){\su}_{{\mathtt{N}}}^{{\mathrm{inc}}}{}^+({z})+z{\su}_{0,{\mathtt{N}}}^{{\mathrm{inc}}}\}.
\label{GincrigidN}
\end{split}\end{equation}
\label{eqnNMbyMrigidprefull}
\end{subequations}
It can be noticed that the details provided in \eqref{eqnNMbyMrigidprefull} possess the same form as that in \eqref{eqnMbyMrigidprefull}.
It can be also observed that $\mathfrak{P}_{{}\mathbb{D}}({{\mathtt{w}}}_{{\mathtt{N}}}^{{\mathrm{inc}}+}({z})-{{\mathtt{f}}}^{{\mathrm{inc}}}{}^-({z}))=0$ (recall \eqref{defp1incN}) in \eqref{eqnNMbyMrigidpre}, i.e., contribution of the terms in the curly brackets are zero.}
{Moreover, the expression of $\mathcal{G}^{{\mathrm{inc}}}$ is further simplified to be zero in the same way as \eqref{Gincrigid}; see the sentence following \eqref{Gincrigid}.
In fact, alternatively, this can be also obtained as \eqref{wnFTNrigidpre} can be simplified in the same way as \eqref{wnFTPrigidpre}, so that
\begin{equation}\begin{split}
\boldsymbol{{\mathrm{w}}}^-&=(-\frac{{z}}{{z}-{z}_{P}}{\mathtt{Q}}\mathbf{I}+\frac{{z}}{{z}-{z}_{P}}{\mathtt{Q}}({z}_{{P}})\recip{\mathbf{J}}{\mathbf{D}}^-\recip{{{\mathbf{D}}}}_-({z}_{{P}})\mathbf{J})\begin{bmatrix}-{\su}^{{\mathrm{inc}}}_{0, 0}\\-{\su}^{{\mathrm{inc}}}_{0, {\mathtt{N}}}\end{bmatrix}\\
&+({z}-{z}_{P}^{-1})\begin{bmatrix}
{\su}^{{\mathrm{inc}}}_{0, 0}\\
{\su}^{{\mathrm{inc}}}_{0, {\mathtt{N}}}
\end{bmatrix}+{\su}^{t}_{-1, 0}({\mathbf{I}}-\recip{\mathbf{J}}{\mathbf{D}}^-\recip{{{\mathbf{D}}}}_-(0){\mathbf{J}})\ensuremath{\hat{\mathbf{e}}}_1\\
&+{\su}^{{t}}_{{\mathtt{M}}-1, {\mathtt{N}}}({z}^{-{\mathtt{M}}}\ensuremath{\hat{\mathbf{e}}}_2+\frac{1}{\sqrt{2}}\recip{\mathbf{J}}{{{\mathbf{D}}}^-}
\begin{bmatrix}
\Phi_{{\mathtt{M}}}^-\\
-\Psi_{{\mathtt{M}}}^-
\end{bmatrix})+\recip{{\mathbf{J}}}{{{\mathbf{D}}}^-}\boldsymbol{\tau}^--\boldsymbol{p}^-,
\label{wnFTNrigid2}
\end{split}\end{equation}
which has the terms in first two lines exactly same as those in \eqref{wnFTPrigid}.
It is easy to see that above equation \eqref{eqnNMbyMrigidpre} leads to an equation of the same form as \eqref{eqnMbyMrigid} except that ${\mathbb{D}}$ corresponds to \eqref{defnDcrack2}. In particular,
\begin{equation}\begin{split}
\sum\limits_{{\mathtt{x}}\in{{}\mathbb{D}}}{\mathrm{w}}^{{t}}_{{\mathtt{x}},{\mathtt{N}}}\mathfrak{P}_{{}\mathbb{D}}(\mathcal{A}_{{\mathtt{x}}})
&={\mathtt{Q}}({z}_{{P}})\mathfrak{P}_{{}\mathbb{D}}\big(\mathcal{F}^{{\mathrm{inc}}}\big)-2{\su}^{{t}}_{-1, 0}\mathfrak{P}_{{}\mathbb{D}}(\mathcal{J})-2{\su}^{{t}}_{{\mathtt{M}}-1, {\mathtt{N}}}\mathfrak{P}_{{}\mathbb{D}}(\mathcal{K}-{z}^{-{\mathtt{M}}}).
\label{eqnMbyMrigidN}
\end{split}\end{equation}
}

{
{\bf Equation based on evaluation of ${\su}^{{t}}_{-1, 0}$:} In the context of \eqref{un10pre}, the expression of ${\saux}_0+\ensuremath{\hat{\mathbf{e}}}_1\cdot\boldsymbol{{\mathrm{w}}}^-$ is found to be same as \eqref{W0eqw1P} except that $\phi_{{\mathtt{M}}}^-$ and $\psi_{{\mathtt{M}}}^-$ are replaced by $-\Phi_{{\mathtt{M}}}^-$ and $-\Psi_{{\mathtt{M}}}^-$, respectively. Using the zeros ${z}_q, {z}_q^{-1}$ (with $|{z}_q|<1$) of ${\mathtt{Q}}$, hence, $({\saux}_0+\ensuremath{\hat{\mathbf{e}}}_1\cdot\boldsymbol{{\mathrm{w}}}^-)|_{{z}={z}_q}=0$ gives \eqref{un10}.
}

{
{\bf Equation based on evaluation of ${\su}^{{t}}_{{\mathtt{M}}-1, {\mathtt{N}}}$:} In the context of \eqref{uMn10Npre},
\begin{equation}\begin{split}
{\saux}_{\mathtt{N}}+\ensuremath{\hat{\mathbf{e}}}_2\cdot\boldsymbol{{\mathrm{w}}}^--\poly{Q}^{-}&={z}^{-{\mathtt{M}}}(-{\su}^{t}_{{\mathtt{M}}-1, {\mathtt{N}}})+{\su}^{{\mathrm{inc}}}_{-1, {\mathtt{N}}}-{{z}} {\su}^{{\mathrm{inc}}}_{0, {\mathtt{N}}}-\poly{Q}^{-}\\
&+\frac{{z}}{{z}-{z}_{P}}\ensuremath{\hat{\mathbf{e}}}_2\cdot(-{\mathtt{Q}}\mathbf{I}+{\mathtt{Q}}({z}_{{P}})\recip{\mathbf{J}}{\mathbf{D}}^-\recip{{{\mathbf{D}}}}_-({z}_{{P}})\mathbf{J})\begin{bmatrix}
-{\su}^{{\mathrm{inc}}}_{0, 0}\\
-{\su}^{{\mathrm{inc}}}_{0, {\mathtt{N}}}
\end{bmatrix}\\
&+({z}-{z}_{P}^{-1}){\su}^{{\mathrm{inc}}}_{0, {\mathtt{N}}}-{\su}^{t}_{-1, 0}\ensuremath{\hat{\mathbf{e}}}_2\cdot\recip{\mathbf{J}}{\mathbf{D}}^-\recip{{{\mathbf{D}}}}_-(0){\mathbf{J}}\ensuremath{\hat{\mathbf{e}}}_1\\
&+{z}^{-{\mathtt{M}}}{\su}^{t}_{{\mathtt{M}}-1, {\mathtt{N}}}+\frac{1}{\sqrt{2}}{\su}^{{t}}_{{\mathtt{M}}-1, {\mathtt{N}}}
\ensuremath{\hat{\mathbf{e}}}_2\cdot\recip{\mathbf{J}}{\mathbf{D}}^-\begin{bmatrix}
\Phi_{{\mathtt{M}}}^-\\
-\Psi_{{\mathtt{M}}}^-
\end{bmatrix}+\ensuremath{\hat{\mathbf{e}}}_2\cdot\recip{\mathbf{J}}{\mathbf{D}}^-\boldsymbol{\tau}^-+\poly{Q}^{-}\\
&=\frac{{z}}{{z}-{z}_{P}}({\mathtt{Q}}{\su}^{{\mathrm{inc}}}_{0, {\mathtt{N}}}+{\mathtt{Q}}({z}_{{P}})\ensuremath{\hat{\mathbf{e}}}_2\cdot\recip{\mathbf{J}}{\mathbf{D}}^-\recip{{{\mathbf{D}}}}_-({z}_{{P}})\mathbf{J}\begin{bmatrix}
-{\su}^{{\mathrm{inc}}}_{0, 0}\\
-{\su}^{{\mathrm{inc}}}_{0, {\mathtt{N}}}
\end{bmatrix})\\
&-{\su}^{t}_{-1, 0}\ensuremath{\hat{\mathbf{e}}}_2\cdot\recip{\mathbf{J}}{\mathbf{D}}^-\recip{{{\mathbf{D}}}}_-(0){\mathbf{J}}\ensuremath{\hat{\mathbf{e}}}_1\\
&+\frac{1}{\sqrt{2}}{\su}^{{t}}_{{\mathtt{M}}-1, {\mathtt{N}}}
\ensuremath{\hat{\mathbf{e}}}_2\cdot\recip{\mathbf{J}}{\mathbf{D}}^-\begin{bmatrix}
\Phi_{{\mathtt{M}}}^-\\
-\Psi_{{\mathtt{M}}}^-
\end{bmatrix}+\ensuremath{\hat{\mathbf{e}}}_2\cdot\recip{\mathbf{J}}{\mathbf{D}}^-\boldsymbol{\tau}^-,
\end{split}\end{equation}
\text{so that }
\begin{equation}\begin{split}
{\su}_{{{\mathtt{N}}}}^{{\mathrm{F}}}&=\recip{\mathtt{Q}}({{\mathrm{w}}}^-_{{\mathtt{N}}}+{\saux}_{{\mathtt{N}}}-\poly{Q}^{-}-{\mathtt{Q}}{\su}^{{\mathrm{inc}}}_{0, {\mathtt{N}}}\delta_{D}^{+}({{z}} {z}_{{P}}^{-1}))\\
&=\recip{\mathtt{Q}}\bigg(\frac{{z}}{{z}-{z}_{P}}{\mathtt{Q}}({z}_{{P}})\ensuremath{\hat{\mathbf{e}}}_2\cdot\recip{\mathbf{J}}{\mathbf{D}}^-\recip{{{\mathbf{D}}}}_-({z}_{{P}})\mathbf{J}\begin{bmatrix}
-{\su}^{{\mathrm{inc}}}_{0, 0}\\
-{\su}^{{\mathrm{inc}}}_{0, {\mathtt{N}}}
\end{bmatrix}\\
&-{\su}^{t}_{-1, 0}\ensuremath{\hat{\mathbf{e}}}_2\cdot\recip{\mathbf{J}}{\mathbf{D}}^-\recip{{{\mathbf{D}}}}_-(0){\mathbf{J}}\ensuremath{\hat{\mathbf{e}}}_1+\frac{1}{\sqrt{2}}{\su}^{{t}}_{{\mathtt{M}}-1, {\mathtt{N}}}
\ensuremath{\hat{\mathbf{e}}}_2\cdot\recip{\mathbf{J}}{\mathbf{D}}^-\begin{bmatrix}
\Phi_{{\mathtt{M}}}^-\\
-\Psi_{{\mathtt{M}}}^-
\end{bmatrix}+\ensuremath{\hat{\mathbf{e}}}_2\cdot\recip{\mathbf{J}}{\mathbf{D}}^-\boldsymbol{\tau}^-\bigg),
\end{split}\end{equation}
i.e., in terms of the definitions in \eqref{eqnNMbyMrigidprefull},
\begin{equation}\begin{split}
{\su}_{{{\mathtt{N}}}}^{{\mathrm{F}}}&=\recip{\mathtt{Q}}\big({\frac{1}{2}}{\mathtt{Q}}({z}_{{P}})\mathcal{F}^{{\mathrm{inc}}}({z})
-{\frac{1}{2}}\sum\limits_{{\mathtt{x}}\in{\mathbb{D}}}{\mathrm{w}}^{{t}}_{{{\mathtt{x}}},{\mathtt{N}}}\mathcal{A}_{{\mathtt{x}}}({z})\\
&-{\su}^{t}_{-1, 0}\mathcal{J}({z})-{\su}^{{t}}_{{\mathtt{M}}-1, {\mathtt{N}}}(\mathcal{K}({z})-{z}^{-{\mathtt{M}}})\big)-\recip{\mathtt{Q}}{\su}^{{t}}_{{\mathtt{M}}-1, {\mathtt{N}}}{z}^{-{\mathtt{M}}}.
\end{split}\end{equation}
Hence, by virtue of \eqref{eqnMbyMrigidN} (which is part of the set of equations to be solved), it is found that the Fourier expansion of the round bracket term contains none of the powers of ${z}$ in $-\mathbb{D}$. Hence, as ${z}\to0$, ${\su}_{{{\mathtt{N}}}}^{{\mathrm{F}}}{z}^{{\mathtt{M}}-2}\sim{\su}^{{t}}_{{\mathtt{M}}-1, {\mathtt{N}}}{z}^{-1}$ so that the (counter-clockwise) contour integral after deforming it to a circle of vanishing radius evaluates to $({\saux}_{\mathtt{N}}+\ensuremath{\hat{\mathbf{e}}}_2\cdot\boldsymbol{{\mathrm{w}}}^--\poly{Q}^{-}){z}^{{\mathtt{M}}-2}/{\mathtt{Q}}'(z)|_{{z}={z}_q}=0.$
}
{
The final equation is obtained as 
\begin{equation}\begin{split}
&\sum\limits_{{\mathtt{x}}\in{\mathbb{D}}}{\mathrm{w}}^{{t}}_{{{\mathtt{x}}},{\mathtt{N}}}
\frac{1}{2}\mathcal{A}_{{\mathtt{x}}}({z}_q)+{\su}^{t}_{-1, 0}\mathcal{J}({z}_q)+{\su}^{{t}}_{{\mathtt{M}}-1, {\mathtt{N}}}\mathcal{K}({z}_q)={\frac{1}{2}}{\mathtt{Q}}({z}_{{P}})\mathcal{F}^{{\mathrm{inc}}}({z}_q).
\label{uMn10N}
\end{split}\end{equation}
}

{
The remaining symbolism leading to a matrix formulation follows that presented for positive offset, i.e., \eqref{aknueqnPrigidcon}.
}

\end{document}